\documentclass[usenames,dvipsnames]{article}
\usepackage{blindtext,inputenc,graphicx,overpic}
\usepackage{amsmath,amssymb,amsthm,bm,mathrsfs,mathtools,enumitem}

\usepackage{multicol}
\usepackage{esint,braket}
\usepackage{fullpage,setspace}
\usepackage{authblk}
\usepackage{verbatim} 
\usepackage{caption}
\usepackage{subcaption}

\usepackage[
    backend=biber,
    backref=true,
    style=alphabetic,
    maxbibnames=99,
  ]{biblatex}
\addbibresource{biblio.bib}

\DefineBibliographyStrings{english}{
  backrefpage={p.},
  backrefpages={p.}
}

\usepackage{tikz}

\usepackage{hyperref}
\hypersetup{
    colorlinks=true,
    linkcolor=BrickRed,
    filecolor=red,      
    urlcolor=Periwinkle,
    citecolor=teal,
}
\urlstyle{same}


\theoremstyle{definition}
\newtheorem{defn}{Definition}[section]
\newtheorem{theorem}{Theorem}[section]
\newtheorem{lemma}{Lemma}[section]
\newtheorem{remark}{Remark}[section]
\newtheorem{prop}{Proposition}[section]
\newtheorem{cor}{Corollary}[section]
\newtheorem{conjecture}{Conjecture}
\newtheorem{ex}{Example}[section]

\DeclareMathOperator{\tr}{tr}

\DeclareMathOperator{\CC}{\mathbb{C}}

\DeclareMathOperator{\PP}{\mathbb{P}}
\DeclareMathOperator{\ZZ}{\mathbb{Z}}

\DeclareMathOperator{\OO}{\mathcal{O}}

\DeclareMathOperator*{\Ress}{Res}
\newcommand{\Res}{\displaystyle\Ress}

\numberwithin{equation}{section}
\numberwithin{figure}{section}

\usepackage{todonotes}
\usepackage{lineno}

\usepackage{titlesec}
\titleformat{\section}{\centering\normalfont\scshape}{\thesection .}{1em}{}
\titleformat{\subsection}{\normalfont\scshape}{\thesubsection .}{1em}{}
\titleformat{\subsubsection}{\normalfont\scshape}{\thesubsubsection .}{1em}{}
\usepackage{abstract}


\title{\bf The Ising Model Coupled to 2D Gravity: Higher-order Painlev\'{e} Equations/The $(3,4)$ String Equation}
\author[1]{\scshape Nathan Hayford\thanks{\href{mailto:nhayford@kth.se}{nhayford@kth.se}}}
\affil[1]{\small\textit{Department of Mathematics, Royal Institute of Technology, Lindstedtsv\"{a}gen 25, SE 10044, Stockholm, Sweden}}
\date{\today}

\begin{document}

\maketitle
\vspace{10mm}
\begin{abstract}
    We study a higher-order Painlev\'{e}-type equation, arising as a string equation of the $3^{rd}$ order 
    reduction of the KP hierarchy.  This equation appears at the multi-critical point of the $2$-matrix model with quartic interactions, and describes the 
    Ising phase transition coupled to 2D gravity, cf. \cite{CGM,Douglas1}, and the forthcoming \cite{DHL1,DHL3}. We characterize this equation in terms of the isomonodromic deformations of a particular rational connection on $\PP^{1}$. We also identify the (nonautonomous) Hamiltonian structure associated to 
    this equation, and write a suitable $\tau$-differential for this system. This $\tau$-differential can be extended to the canonical coordinates of the associated Hamiltonian system, allowing us to verify Conjectures 1. and 2. of \cite{IP}. We also present a fairly general formula for the $\tau$-differential of a special class of resonant connections, which is somewhat simpler than that of \cite{BM}.
\end{abstract}

\tableofcontents

\section{Introduction.}
In this work, we mainly study the following pair of equations for two functions $U = U(t_{5},t_{2},x), V= V(t_{5},t_{2},x)$:
    \begin{equation} \label{string-equation}
        \begin{cases}
             0 = \frac{1}{2}V'' - \frac{3}{2}UV + \frac{5}{2}t_{5} V + t_{2},\\
             0 = \frac{1}{12} U^{(4)} -\frac{3}{4}U''U -\frac{3}{8}(U')^2+\frac{3}{2}V^2 + \frac{1}{2}U^3 - \frac{5}{12}t_{5}\left(3U^2 - U''\right) + x.
        \end{cases}
    \end{equation}
Here (and throughout the present work), $' = \frac{\partial}{\partial x}$. We will also sometimes instead write $x:=t_1$, as this notation is more convenient
in certain instances. We have tried to keep our notations for this equation close to those of \cite{DFGZJ}. The above, along with a collection of equations specifying the dependence of $U,V$ on $t_5,t_2$ (see Equations \eqref{U_mu}--\eqref{V_eta}), is known as the $(3,4)$ \textit{string equation}, and appears in the study of the Ising model coupled to 2D gravity, as we shall now make apparent.

This work is partially motivated by the recent work \cite{DHL1}, in which we set up a Riemann-Hilbert analysis of the $2$-matrix model with
quartic interactions, corresponding to the Ising model on random quadrangulations. In \cite{DHL1}, we replicated the results of \cite{Kazakov1,Kazakov2} for the genus-zero partition function, thus providing a fully rigorous proof of their formula. The next task in our program is to investigate the multi-critical point of this model, which corresponds to the Ising $(3,4)$ minimal model of conformal field theory coupled to $2D$-gravity. At the level of the steepest descent analysis, this amounts to finding the ``right'' model Riemann-Hilbert problem at the turning points, for which the matching condition is satisfied. Such a parametrix is presently absent from the literature; the current work aims to fill this gap. 

In finding such a parametrix, we are not completely in the dark; as usual, physicists have already provided us the foundations. An equation characterizing this critical point was first derived in \cite{Kazakov3,CGM,Douglas1}, and recognized to be a string equation to a $3^{rd}$ order reduction of the KP hierarchy. This equation is precisely \eqref{string-equation}. More generally, it is conjectured that \textit{all} critical points of the $2$-matrix model
are characterized by the so-called $(q,p)$-string equations (see the discussion in \S2), which arise as symmetry constraints of the KP hierarchy. We will not make any general statements about these string equations here. We continue with a more detailed description of the connection of Equation 
\eqref{string-equation} with the $2$-matrix model. 

We also are motivated from the works of Okamoto \cite{Okamoto,Okamoto2,Okamoto3} on the Hamiltonian structure of equations of Painlev\'{e} type. Indeed, one
can easily check that any meromorphic solution to \eqref{string-equation}, considered as an ODE in $x$, has no finite branch points, and possibly moveable
poles as its singularity set, and so this equation is of Painlev\'{e} type. It thus should admit a Hamiltonian representation, which should be 
consistent with its formulation in terms of an isomonodromy problem. Part of the aim of this paper is to make this statement precise.

\subsection{Connection to the $2$-matrix model.}
As previously mentioned, the above equation arises when studying the triple scaling limit of the $2$-matrix model with quartic interactions. The partition function
for this model is
    \begin{equation}\label{2-matrix-partition-function}
        Z_n(\tau,t,H;N) := \iint \exp\left\{N\tr\left[\tau X Y -\frac{1}{2}X^2-\frac{t\,e^H}{4}X^4 -\frac{1}{2}Y^2-\frac{t\, e^{-H}}{4}Y^4\right]\right\}dXdY,
    \end{equation}
where the integration here is carried out over the Cartesian product of the space of $n\times n$ Hermitian matrices with itself. This matrix model can be 
identified with the Ising model on random quadrangulations \cite{Kazakov1,Kazakov2}. The multicritical point of this model, which characterizes the Ising 
spin-ordering transition coupled to gravity, occurs at
    \begin{equation}
        t = t_c = -\frac{5}{72}, \qquad\qquad \tau = \tau_c = \frac{1}{4}, \qquad\qquad H = H_c = 0.
    \end{equation}
Evidently, since $t_c<0$, the matrix integral \eqref{2-matrix-partition-function} is non-convergent. We must therefore make an appropriate analytic continuation of
this integral in order to make sense of the multicritical point. This construction is demonstrated in \cite{DHL1}: here, by a slight abuse of notation, we shall
denote both the partition function and its analytic continuation by $Z_n(\tau,t,H;N)$. From \cite{DHL1}, it follows that this multicritical point arises from a special degeneration of the spectral curve, see in particular Figure 2.3, as well as the discussion in Section 2.2. Define
    \begin{align*}
        \delta t &:= -\frac{5}{108}t_5N^{-2/7}+\frac{25}{648}t_5^2N^{-4/7} + \frac{125}{1944}t_5^3N^{-6/7} -\frac{9}{164}x N^{-6/7},\\
        \delta H &:= \frac{2}{3}N^{-5/7}t_2,\qquad \qquad \delta \tau := -\frac{5}{12}t_5N^{-2/7}+\frac{1}{164}x N^{-6/7},
    \end{align*}
and put $\varkappa:=\frac{n}{N}$. The coupling parameters $x,t_2,t_5$ characterize deviations along the normal ($x$), and tangential ($t_2,t_5$) directions to the multicritical point, and come with their own scalings. In \cite{CGM,Douglas1} (cf. the earlier work \cite{Kazakov3} for the model without
the external field or temperature parameters), the following triple scaling limit is introduced:
    \begin{equation}
        \varkappa = 1-\frac{\delta t}{t_c}, \qquad H = H_c + \delta H,\qquad \tau = \tau_c + \delta\tau.
    \end{equation}
After scaling (and appropriate normalization), one finds that the partition function converges to (see the works \cite{GGPZ,DFGZJ}), as $n\to \infty$,
    \begin{equation}
        C^2\frac{d^2}{dx^2}\log Z_n(\tau,t,H;N) \to -U(t_5,t_2,x),
    \end{equation}
for some constant $C^2 >0$. Here, $U(t_5,t_2,x)$ is a solution to the string equation \eqref{string-equation}. This suggests that the multicritical 
partition function for this model is in fact a $\tau$-function of equation \eqref{string-equation}. Note that the representation of the partition function for the $2$-matrix model (resp. $1$-matrix model) as an isomonodromic $\tau$-function has been mathematically established \cite{Bertola-Marchal} (resp. \cite{BEH}). However, the fact that this statement remains valid after the multi-scaling limit is indeed nontrivial. For the equivalent statement for the $1$-matrix model, see \cite{BleherDeano}. It is the purpose of Part III of this series
of works \cite{DHL3} to make rigorous sense of this scaling limit in the case of the $2$-matrix model. In this work, we study the limiting object, i.e. the equation that results after performing
this scaling limit. 

One of the shortcomings of the work in the physics literature is that one is only able to identify that the multicritical partition function solves a particular
integrable equation; there is no indication from this analysis \textit{which} solution one has convergence to, or what properties the resulting solution has.
An important consequence of our analysis is that one can identify the particular solution of \eqref{string-equation} arising from the triple scaling limit of the 
2-matrix model, and, since we furnish a Riemann-Hilbert formulation of the equation, this solution is amenable to asymptotic analysis.

\subsection{Outline and Statement of Results.}
The remainder of this work is organized as follows. In \S2, we write the string equation
as a $3 + 3$ dimensional Hamiltonian system, in which the coordinates $(Q_U,Q_V,Q_W;P_U,P_V,P_W)$ are canonical. The induced flows
along the $t_{5},t_{2},$ and $x :=t_{1}$ directions are generated by (nonautonomous) Hamiltonians $H_{5}$, $H_{2}$, $H_{1}$, which pairwise 
commute with respect to the following Poisson bracket: if $f,g$ are functions of the variables $(Q_U,Q_V,Q_W;P_U,P_V,P_W)$, we define
    \begin{equation}\label{Poisson-Bracket}
        \{f,g\} := \sum_{a \in \{U,V,W\}} \left( \frac{\partial f}{\partial Q_{a}}\frac{\partial g}{\partial P_{a}} - \frac{\partial f}{\partial P_{a}}\frac{\partial g}{\partial Q_{a}}\right).
    \end{equation}
This is the essence of our next Proposition, which we will prove in \S2:
\begin{theorem}\label{Hamiltonian-Prop}
    Given a solution $U = U(t_{5},t_{2},x), V = V(t_{5},t_{2},x)$ of the equations \eqref{string-equation}, 
    \eqref{U_mu}--\eqref{V_eta}, define functions
        \begin{align}
            Q_U :&= U - \frac{4}{3}t_{5}, \qquad\qquad\qquad\qquad\qquad\qquad\, Q_V := V, \qquad\qquad  Q_W := U', \label{Darboux-Q}\\
            P_U :&= \frac{1}{4}\left( 3UU' - \frac{1}{3}U''' - \frac{7}{3}t_{5} U'\right), \qquad\quad\,\, P_V := V', \qquad\quad\,\,\,\,\, P_W := \frac{1}{12}U''-\frac{1}{6}t_{5} U + \frac{7}{18}t_{5}^2.\label{Darboux-P}
        \end{align}
    (Here, we recall that $x:= t_1$). Then, there exist functions $H_{5}$, $H_{2}$, $H_{1}$, polynomially
    dependent on $(Q_U,Q_V,Q_W,P_U,P_V,P_W)$, and on $t_{5},t_{2},t_{1}$, such that
        \begin{equation} \label{Hamilton-eq}
            \frac{\partial Q_{a}}{\partial t_k } = \frac{ \partial H_k }{\partial P_{a}}, \qquad\qquad 
            \frac{\partial P_{a}}{\partial t_k } = -\frac{ \partial H_k }{\partial Q_{a}},
        \end{equation}
    for $a \in \{U,V,W\}$, $k=1,2,5$. These functions are defined up to the addition of an explicit function of
    the variables $t_{5},t_{2},t_{1}$; these ``integration constants'' can be chosen so that\footnote{Some caution must be taken here; the symbol $\frac{\partial}{\partial t_k}$ is taken to mean $\frac{\partial}{\partial t_k}\bigg|_{P_{a},Q_{a} = const.}$ here.}
        \begin{equation}
            \left\{H_k,H_j \right\} + \frac{\partial H_k}{\partial t_j} - \frac{\partial H_j}{\partial t_k} = 0,
        \end{equation}
    for $k,j =1,2,5$. Here, the Poisson bracket $\{\cdot,\cdot\}$ is defined by Equation \eqref{Poisson-Bracket}. Furthermore, the Hamiltonians satisfy the stronger condition
        \begin{equation}
            \left\{H_k,H_j \right\} = \frac{\partial H_k}{\partial t_j} - \frac{\partial H_j}{\partial t_k} = 0.
        \end{equation}
    Explicitly, these functions are given by
\begin{align}
        H_1&= P_{U} Q_{W}+6 P_{W}^{2}-\frac{3}{8} Q_{U} Q_{W}^{2}+\frac{1}{2} P_{V}^{2}-\frac{1}{8} Q_{U}^{4}-\frac{3}{2} Q_{U} Q_{V}^{2}-t_1Q_U + 2t_2Q_V\nonumber\\
    &+ \frac{1}{8}t_5 (16 Q_{U} P_{W}-2 Q_{U}^{3}+4 Q_{V}^{2}-Q_{W}^{2}) - \frac{1}{2}t_5^2(4P_W-Q_U^2)+\frac{19}{27}t_5^3Q_U+\frac{41}{54}t_5^4-\frac{4}{3}t_5t_1\label{Hamiltonian-nu}\\
    H_2&=\frac{1}{2} P_{V} Q_{U} Q_{W}+\frac{1}{4} Q_{V} Q_{W}^{2}-2P_{U} P_{V}-6 P_{W} Q_{U} Q_{V}+Q_{V}^{3}+Q_{U}^{3} Q_{V} +2t_1Q_V\nonumber\\
    &+t_2(4P_W-Q_U^2) + \frac{1}{2}t_5(Q_{V} Q_{U}^{2}-P_{V} Q_{W}+4 Q_{V} P_{W})-2t_5t_2Q_U-\frac{65}{27}t_5^3Q_V-\frac{22}{9}t_5^2t_2\label{Hamiltonian-mu}\\
    H_5&=\frac{1}{2} Q_{W} P_{V} Q_{U} Q_{V}-\frac{3}{4} P_{U} Q_{W} Q_{U}^{2}-P_{U} P_{V} Q_{V}+P_{U} P_{W} Q_{W}+\frac{3}{8} Q_{V}^{4}-\frac{1}{128} Q_{W}^{4}+4 P_{W}^{3}\nonumber\\
    &-\frac{1}{16} Q_{U}^{6}-P_{W} P_{V}^{2}+P_{W} Q_{U}^{4}+P_{U}^{2} Q_{U}-\frac{9}{2} P_{W}^{2} Q_{U}^{2}-\frac{1}{8} Q_{U}^{3} Q_{V}^{2}+\frac{1}{8} Q_{U}^{2} P_{V}^{2}+\frac{3}{32} Q_{W}^{2} Q_{U}^{3}-\frac{1}{16} Q_{W}^{2} Q_{V}^{2}\nonumber\\
    &+t_1\left(2 Q_{U} P_{W}-\frac{1}{8} Q_{W}^{2}-\frac{1}{4} Q_{U}^{3}+\frac{1}{2} Q_{V}^{2}\right) + \frac{1}{2}t_2\left(Q_{V} Q_{U}^{2}-P_{V} Q_{W}+4 Q_{V} P_{W}\right)\nonumber\\
    &+t_5\bigg(\frac{3}{16} Q_{U}^{5}-2 P_{U}^{2}-\frac{1}{16} Q_{U}^{2} Q_{W}^{2}-\frac{1}{4} P_{W} Q_{W}^{2}+5 P_{W}^{2} Q_{U}-2 P_{W} Q_{U}^{3}-5 P_{W} Q_{V}^{2}+\frac{3}{4} Q_{U}^{2} Q_{V}^{2}\nonumber\\
    &-\frac{1}{4} P_{V}^{2} Q_{U}+\frac{1}{2} P_{V} Q_{V} Q_{W}+P_{U} Q_{U} Q_{W}\bigg) -t_2^2Q_U-t_1t_2(4P_W-Q_U^2)\nonumber\\
    &+ t_5^2\bigg(\frac{47}{12} Q_{U} Q_{V}^{2}-\frac{29}{18} P_{U} Q_{W}-\frac{3}{2} P_{W} Q_{U}^{2}+\frac{29}{48} Q_{U} Q_{W}^{2}+\frac{7}{18} Q_{U}^{4}-\frac{14}{9} P_{V}^{2}-\frac{20}{3} P_{W}^{2}\bigg)\nonumber\\
    &-\frac{1}{108}t_5^3(284 Q_{U} P_{W}-49 Q_{U}^{3}+152 Q_{V}^{2}-11 Q_{W}^{2})+\frac{19}{9}t_5^2t_1Q_U-\frac{65}{9}t_5^2t_2Q_V\nonumber\\
    &+\frac{1}{216}t_5^4(1304P_W-299Q_U^2)-\frac{2173}{972}t_5^5Q_U-\frac{2}{3}t_1^2-\frac{22}{9}t_5t_2^2+\frac{82}{27}t_5^3t_1-\frac{556}{243}t_5^6\label{Hamiltonian-eta}
\end{align}
    Conversely, if one starts with the functions $H_{5}$, $H_{2}$, $H_{1}$, Hamilton's equations \eqref{Hamilton-eq}
    for these functions are equivalent to the string equation \eqref{string-equation}, \eqref{U_mu}--\eqref{V_eta}.
\end{theorem}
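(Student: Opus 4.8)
\emph{Proof strategy.} The plan is to treat the three commuting flows one at a time and then establish their compatibility. First note that \eqref{string-equation}, read as a system of ODEs in $x$, is fourth order in $U$ and second order in $V$, so its natural phase space is six-dimensional, coordinatized by $(U,U',U'',U''',V,V')$. The substitution \eqref{Darboux-Q}--\eqref{Darboux-P} is, for each fixed $(t_5,t_2)$, a polynomial change of coordinates with polynomial inverse: one has $U=Q_U+\tfrac43 t_5$, $U'=Q_W$, $V=Q_V$, $V'=P_V$, while \eqref{Darboux-P} solves triangularly for $U''$ in terms of $(P_W,Q_U)$ and then for $U'''$ in terms of $(P_U,Q_U,Q_W)$. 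I would record these inversion formulas explicitly at the outset, as they are used in every subsequent computation.

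\textbf{The $x$-flow.} Structurally, the $x$-dynamics of \eqref{string-equation} is the Euler--Lagrange system of an explicit non-autonomous Lagrangian $L(U,U',U'',V,V';x)$ (quadratic in $U''$ and in $V'$, polynomial otherwise), so an Ostrogradsky--Legendre construction already shows that \emph{some} polynomial Hamiltonian structure exists for this flow, and \eqref{Darboux-Q}--\eqref{Darboux-P} is a convenient canonically-equivalent modification of the raw Ostrogradsky coordinates (adapted so that the $t_2,t_5$-flows below are also polynomially Hamiltonian). Rather than tracking that modification, I would simply verify directly: compute $\partial H_1/\partial P_a$ and $-\partial H_1/\partial Q_a$ from \eqref{Hamiltonian-nu}, substitute the inversion formulas, and check that \eqref{Hamilton-eq} with $k=1$ reduces to the tautologies $\partial_x U=U'$, $\partial_x U'=U''$, $\partial_x U''=U'''$, $\partial_x V=V'$ together with the two equations \eqref{string-equation}. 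For instance $\partial_x Q_U=\partial H_1/\partial P_U=Q_W$ is $U'=U'$; $\partial_x Q_W=\partial H_1/\partial P_W=12P_W+2t_5Q_U-2t_5^2$ reproduces $U''$; $\partial_x P_V=-\partial H_1/\partial Q_V=3Q_UQ_V-t_5Q_V-2t_2$ is exactly the first equation of \eqref{string-equation}; and $\partial_x P_U=-\partial H_1/\partial Q_U$ is where the fourth-order equation enters. Reading this equivalence backwards gives the converse direction for $k=1$.

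\textbf{The $t_2$- and $t_5$-flows.} For these I would invoke the auxiliary equations \eqref{U_mu}--\eqref{V_eta}, which express $\partial_{t_2}U,\partial_{t_5}U,\partial_{t_2}V,\partial_{t_5}V$ polynomially in $U,V$ and their $x$-derivatives (these are the flows of the $3^{\mathrm{rd}}$-order reduction of KP, and are where the integrable structure enters). Differentiating them in $x$ and using \eqref{string-equation} to eliminate $x$-derivatives of $U$ of order $\ge4$ (and of $V$ of order $\ge2$) yields polynomial evolution equations for $(U,U',U'',U''')$ and $(V,V')$ along $t_2$ and $t_5$; translating these through the inversion formulas produces the left-hand sides of \eqref{Hamilton-eq} for $k=2,5$, and one checks term by term that they match the right-hand sides computed from \eqref{Hamiltonian-mu}--\eqref{Hamiltonian-eta}. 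This is the computational core of the argument, and I would carry it out with computer algebra; the $H_5$ check in particular involves a degree-six polynomial with many monomials and a long list of non-autonomous corrections ($t_5^k$, $t_1t_2$, $t_5^2t_1$, \dots), so careful bookkeeping of the ``$\partial/\partial t_k$ at fixed $(Q,P)$'' terms (as in the theorem's footnote) is essential. Again the reverse reading yields the converse.

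\textbf{Compatibility and the integration constants.} The overdetermined system \eqref{string-equation}, \eqref{U_mu}--\eqref{V_eta} is compatible --- its solutions genuinely depend on all of $t_1,t_2,t_5$ --- this being inherited from the zero-curvature property of the KP reduction; hence $Q_a,P_a$ have commuting mixed partials in the times. A standard non-autonomous Hamiltonian computation then shows $E_{kj}:=\{H_k,H_j\}+\partial_{t_j}H_k-\partial_{t_k}H_j$ has vanishing Hamiltonian vector field on the (six-dimensional, hence Zariski-dense) solution locus --- using that a function of the times alone has zero Hamiltonian vector field and that $[X_f,X_g]=X_{\{f,g\}}$ --- so $E_{kj}$ depends only on $t_1,t_2,t_5$; the Jacobi identity makes the $2$-form $\sum_{k<j}E_{kj}\,dt_k\wedge dt_j$ closed, hence exact on the contractible time-space, and subtracting the components of a primitive $1$-form from the $H_k$ --- this is precisely the ambiguity by ``an explicit function of $t_5,t_2,t_1$'' in the statement --- kills all $E_{kj}$. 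Finally, with the explicit normalization \eqref{Hamiltonian-nu}--\eqref{Hamiltonian-eta} one verifies directly, by the same polynomial/derivative computations, the stronger split identities $\{H_k,H_j\}=0$ and $\partial_{t_j}H_k=\partial_{t_k}H_j$. \textbf{The main obstacle} is the size and non-autonomous bookkeeping of the $t_5$-flow verification against $H_5$, together with obtaining \eqref{U_mu}--\eqref{V_eta} in exactly the normalization consistent with \eqref{Darboux-Q}--\eqref{Darboux-P}, on which the entire identification hinges.
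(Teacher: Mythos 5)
Your proposal is correct and follows essentially the same route as the paper: Hamilton's equations are matched against the string equation and the KP flow equations \eqref{U_mu}--\eqref{V_eta} in the coordinates \eqref{Darboux-Q}--\eqref{Darboux-P} by direct (computer-assisted) polynomial computation, and the integration constants are fixed so that the bracket identities hold. The only cosmetic difference is that the paper constructs $H_1,H_2,H_5$ by integrating Hamilton's equations up to functions of the times, whereas you verify the stated formulas directly (and add an unused Ostrogradsky remark and an abstract closedness argument for the constants, which the paper replaces by an explicit computation); the mathematical content is the same.
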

The Darboux coordinates given above seemingly arise from thin air. We give an algorithm for how to construct a suitable set of coordinates for similar
systems in Appendix \ref{AppendixB}. It is tempting to think that this algorithm always produces a set of Darboux coordinates. However, we were unable to
prove this, and so have delegated the discussion of this algorithm to an appendix, in the hope that it clarifies the origin of these coordinates.

The above theorem is new in that there are (to the knowledge of the author) no known examples of Darboux coordinates for `twisted' rank $3$ isomonodromic systems,
although similar work in the case of rank $2$ connections has been performed by J. Dou\c{c}out and G. Rembado \cite{DR}. Although we have by no means solved the problem of finding such coordinates for general isomonodromic deformations, we hope that the above result can provide some insight in constructing such coordinates for more general systems, a program currently being pursued by O. Marchal, N. Orantin, and M. Alameddine, \cite{Marchal1,Marchal2,Marchal3}, among others.

Our next result demonstrates that the string equation arises as the isomonodromic deformations of a linear differential equation with rational coefficients.
The statement of this result relies on some terminology from integrable systems; we refer the reader to Appendix \ref{AppendixA} for the details.
\begin{prop} \label{PropA}
    The string equation \eqref{string-equation}, and its compatibility with the (reduced) KP-flows $Q^{2/3}_+,Q^{5/3}_+$, 
    is equivalent to the isomonodromic deformations with respect to $t_{5},t_{2},x$ of the following linear differential equation for a function $\Psi = \Psi(\lambda;t_{5},t_{2},x)$:
        \begin{equation}\label{L-problem}
            \frac{\partial \Psi}{\partial \lambda} = L(\lambda;t_{5},t_{2},x)\Psi,
        \end{equation}
    where
        \begin{align} \label{L-connection}
		&L(\lambda;t_{5},t_{2},x) = 
			\begin{psmallmatrix}
				0 & 0 & 1\\
				0 & 0 & 0\\
				0 & 0 & 0
			\end{psmallmatrix}\lambda^2 + 
			\begin{psmallmatrix}
				0 & 2t_{5} + \frac{1}{4}Q_U & -Q_V\\
				1 & 0 & 2t_{5} + \frac{1}{4}Q_U\\
				0 & 1 & 0
			\end{psmallmatrix}\lambda \\
			&+\begin{psmallmatrix}
				\frac{1}{8}Q_U^2 - P_W + \frac{1}{2}P_V-\frac{1}{4}t_{5} Q_U - \frac{1}{6}t_{5}^2 & L_{12} & L_{13}\\
                \frac{1}{2}Q_V - \frac{1}{4}Q_W & 2P_W - \frac{1}{4}Q_U^2 + \frac{1}{2}t_{5} Q_U +\frac{1}{3}t_{5}^2 & L_{23}\\
				t_{5}-\frac{1}{2}Q_U & \frac{1}{2}Q_V + \frac{1}{4}Q_W &  \frac{1}{8}Q_U^2 - P_W - \frac{1}{2}P_V-\frac{1}{4}t_{5} Q_U - \frac{1}{6}t_{5}^2
			\end{psmallmatrix} ,\nonumber
	\end{align}
where
    \begin{align*}
        L_{12} &:= \frac{5}{16}Q_U Q_W - P_U +\frac{1}{4}t_{5} Q_W -\frac{3}{8}Q_UQ_V-\frac{1}{2}t_{5} Q_V + t_{2},\\
        L_{13} &:= \frac{1}{16}Q_W^2 + \frac{7}{32}Q_U^3+\frac{3}{4}Q_V^2-\frac{3}{2}P_WQ_U+\frac{5}{16}t_{5} Q_U^2-2t_{5} P_W+\frac{1}{4}t_{5}^2Q_U + x+ \frac{8}{27}t_{5}^3,\\
        L_{23} &:= -\frac{5}{16}Q_U Q_W + P_U -\frac{1}{4}t_{5} Q_W -\frac{3}{8}Q_UQ_V-\frac{1}{2}t_{5} Q_V + t_{2}.
    \end{align*}
\end{prop}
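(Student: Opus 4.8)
The plan is to establish Proposition \ref{PropA} by the standard ``zero-curvature'' route: exhibit a companion linear problem in the deformation variables $t_5, t_2, x$, and show that the compatibility (Frobenius integrability) of the overdetermined system reproduces exactly the string equation together with the reduced KP flows. Concretely, I would first recall from Appendix \ref{AppendixA} the scalar Lax operator $Q$ of order $3$ associated to the $3^{rd}$ reduction of KP, with $Q = \partial^3 + \ldots$ built from $U, V$, and the flows $\partial_{t_j}Q = [Q^{j/3}_+, Q]$ for $j = 2, 5$ (and $j=1$ giving translation in $x$). The string equation \eqref{string-equation} is the constraint $[\mathcal{P}, Q] = \text{const}$ for a suitable order-$4$ operator $\mathcal{P}$ (the ``$(3,4)$'' data). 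The first step is then purely algebraic: gauge-transform the scalar third-order operator into a first-order $3\times 3$ system $\partial_\lambda \Psi = L \Psi$ by passing to the companion-matrix form in the spectral variable $\lambda$, and track how $U, V$ and their $x$-derivatives get repackaged. Using the Darboux substitution \eqref{Darboux-Q}--\eqref{Darboux-P} from Theorem \ref{Hamiltonian-Prop}, one rewrites all entries of $L$ in terms of $(Q_U, Q_V, Q_W, P_U, P_V, P_W)$; this is the bookkeeping that produces the specific matrix \eqref{L-connection} with the stated $L_{12}, L_{13}, L_{23}$.

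Next I would introduce the deformation matrices $A_j = A_j(\lambda; t_5, t_2, x)$ for $j = 1, 2, 5$, polynomial in $\lambda$ of degrees $0, 1, 2$ respectively, defined as the ``positive parts'' of $L^{j/3}$ in the appropriate sense (equivalently, the matrix avatars of $Q^{j/3}_+$ acting on the companion vector), and posit the auxiliary linear system $\partial_{t_j}\Psi = A_j \Psi$. The core computation is then to verify the zero-curvature equations
\begin{equation}
    \frac{\partial L}{\partial t_j} - \frac{\partial A_j}{\partial \lambda} + [A_j, L] = 0, \qquad \frac{\partial A_j}{\partial t_k} - \frac{\partial A_k}{\partial t_j} + [A_j, A_k] = 0,
\end{equation}
for $j, k \in \{1, 2, 5\}$. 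Matching powers of $\lambda$ in the first family of equations: the top-degree terms vanish by the choice of leading coefficients of $A_j$; the intermediate terms recover precisely the flow equations \eqref{U_mu}--\eqref{V_eta} expressing $\partial_{t_5} U, \partial_{t_2} U, \partial_{t_5} V, \partial_{t_2} V$; and the $\lambda^0$ coefficient — the genuinely nontrivial one — collapses, after using the already-extracted flows, to the two lines of \eqref{string-equation}. The second family (pairwise compatibility of the $A_j$) encodes the commutativity of the KP flows and should follow formally from the Lax representation, or can be checked directly once the $A_j$ are written out; in fact it is essentially equivalent, via the Hamiltonian reformulation, to the commutation $\{H_j, H_k\} = 0$ already asserted in Theorem \ref{Hamiltonian-Prop}, so I can invoke that.

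For the converse direction — that isomonodromy implies the string equation — I would run the same zero-curvature identities in reverse: given that $L$ and the $A_j$ satisfy the compatibility conditions, reading off the $\lambda$-coefficients yields \eqref{U_mu}--\eqref{V_eta} and \eqref{string-equation} as a consequence, so no extra work beyond what was done above is needed. It remains to confirm that the isomonodromy deformation in the sense of Appendix \ref{AppendixA} (constancy of the generalized monodromy data of \eqref{L-problem}, which has an irregular singularity at $\lambda = \infty$ of the relevant Poincaré rank and is ``twisted'' because of the non-semisimple leading term $\begin{psmallmatrix} 0&0&1\\0&0&0\\0&0&0 \end{psmallmatrix}$) is the same thing as the existence of the auxiliary system $\partial_{t_j}\Psi = A_j\Psi$ with polynomial $A_j$; this is the Jimbo--Miwa--Ueno correspondence, adapted to the twisted/resonant setting, and I would cite the relevant statement from the appendix rather than reprove it.

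I expect the main obstacle to be the $\lambda^0$ matching in the zero-curvature equation for $A_1 = A_x$: this is where a large polynomial identity in the six canonical variables and $t_5$ must reduce to exactly \eqref{string-equation}, and getting the leading coefficients and gauge normalization of $L$ and the $A_j$ consistent (so that the would-be spurious lower-order terms genuinely cancel) is delicate — a wrong normalization produces an $L$ differing by conjugation and a closed one-form, which changes the off-diagonal entries $L_{12}, L_{13}, L_{23}$. A secondary subtlety is the twisted nature of the connection: because the leading term at infinity is nilpotent, the formal solution involves a ramified local parameter (a cube root of $\lambda$), so one must be careful that the $A_j$ built from fractional powers $L^{j/3}$ are nonetheless single-valued rational (indeed polynomial) in $\lambda$; this is what forces the particular degrees $0, 1, 2$ and is the reason the construction works only for the specific $(3,4)$ combination. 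Once these normalization points are pinned down, the remaining verifications are mechanical (if lengthy) polynomial algebra, which I would either relegate to a computer-algebra check or summarize.
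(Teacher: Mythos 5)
Your zero-curvature skeleton coincides with the paper's Lax-pair computations: the companion-form matrices you describe are exactly the $\mathcal{P},\mathcal{Q},M,E$ of Appendix \ref{AppendixA} (with $\mathcal{P}$ becoming \eqref{L-connection} after the Darboux substitution \eqref{Darboux-Q}--\eqref{Darboux-P}), and matching powers of $\lambda$ in the zero-curvature equations does reproduce \eqref{U_mu}--\eqref{V_eta} together with \eqref{string-equation}. A small slip: the deformation matrices have $\lambda$-degrees $1,1,2$ for the $t_1,t_2,t_5$ flows, not $0,1,2$ --- the $x$-flow matrix already carries a term $E_{13}\lambda$ because $\psi'''$ must be reduced using $Q\psi=\lambda\psi$.

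The genuine gap is your last step, where you identify ``isomonodromic deformations of \eqref{L-problem}'' with ``existence of the compatible polynomial auxiliary system'' by invoking a Jimbo--Miwa--Ueno correspondence ``adapted to the twisted/resonant setting'' that you propose to cite from the appendix. No such statement exists in Appendix \ref{AppendixA} (which is purely the formal KP/pseudodifferential computation), and the off-the-shelf JMU theorem does not apply here precisely because the leading coefficient of $L$ at $\lambda=\infty$ is nilpotent and, after regularization, the problem acquires a resonant Fuchsian singularity at the origin with an attendant gauge ambiguity --- the paper states this explicitly. Supplying that correspondence is the actual content of the paper's proof: it performs the gauge transformation $\Psi=g\Phi$, $\lambda=\xi^3$ with $g$ as in \eqref{gauge-matrix}, constructs the sectorial solutions and the $42$ Stokes matrices with the relations \eqref{Stokes-equations} and the connection matrix $\mathcal{C}$ (this is what ``isomonodromic deformation'' means here, and it yields the $6+3$ monodromy parameters), and proves the converse by solving the Riemann--Hilbert problem of Proposition \ref{RHP-Phi-prop} and extracting the string equation from the zero-curvature conditions satisfied by the RHP solution, via the recursive determination of the expansion coefficients $\Phi_k$ and the symmetry \eqref{coeff-symmetry}. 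Your plan to ``run the zero-curvature identities in reverse'' presupposes that a compatible system of polynomial $A_j$'s is already given, which in the converse direction is exactly what must be derived from constancy of the monodromy data; without the Stokes/RHP construction (or an equivalent substitute valid for this defective connection), the backward implication of the proposition is not proved.
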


We then set about defining a $\tau$-function for this system; as it will turn out, the differential equation \eqref{L-problem} shares the same problem as the 
equivalent problem for the linear system associated to Painlev\'{e} I (PI): either the leading coefficient of the pole of $L$ is not diagonalizable, or (as we shall 
see) a transformed version of it does have diagonalizable leading coefficient at infinity, but carries a resonant Fuchsian singularity at the origin. Thus, the standard definition of the $\tau$-differential as given in \cite{JMU1} does not apply. If we try to ignore the contribution from the resonant singularity (as is done for PI, cf. \cite{JMU2,LR,ILP}), it turns out the $\tau$-differential is not closed. Thus, we must provide an alternate definition of the $\tau$-differential; this is established in Section \ref{tau-extension}.
Although most of this work is dedicated to the study of the string equation \eqref{string-equation}, we were able to derive a fairly
general formula for the $\tau$-differential of a linear differential equation with polynomial coefficients whose leading term is not diagonalizable. The motivation for the class of equations we study arises from the so-called $(p,q)$ string equations (see \cite{GGPZ} for an overview). An alternative formula was derived by Bertola and Mo \cite{BM} in terms of spectral invariants; the formula we present is in terms of a residue in the local gauge, and thus may merit interest, as it gives an alternative way to compute the $\tau$-differential which is amenable to Deift-Zhou analysis. 
We thus present our result as a theorem:
\begin{theorem}\label{tau-theorem.}
    Fix $q\geq 2$, and consider the differential equation
        \begin{equation}\label{psi-ODE-thm}
            \frac{\partial \Psi}{\partial \lambda} = A(\lambda;{\bf t})\Psi,
        \end{equation}
    where $A(\lambda;{\bf t})$ is a $q\times q$ matrix, polynomial in $\lambda$, whose leading term is the nondiagonalizable matrix
       \begin{equation}
        A(\lambda;{\bf t}) = 
        \Lambda^r \lambda^k + \cdots,
    \end{equation}
for some $0 < r < q$, $k\geq 0$, where $\Lambda = \Lambda(\lambda)$ is
    \begin{equation*}
        \Lambda(\lambda) := 
            \begin{pmatrix}
                0 & 0 & \cdots & 0 & 0 & \lambda\\
                1 & 0 & \cdots & 0 & 0 & 0\\
                0 & 1 & \cdots & 0 & 0 & 0\\
                \vdots & \vdots & \ddots & \vdots & \vdots & \vdots\\
                0 & 0 & \cdots & 1 & 0 & 0\\
                0 & 0 & \cdots & 0 & 1 & 0
            \end{pmatrix}.
    \end{equation*}
Equation \eqref{psi-ODE-thm} admits a formal solution in a neighborhood of $\lambda = \infty$ of the form
    \begin{equation}
        \Psi(\lambda;{\bf t}) = \underbrace{g(\lambda)\left[\mathbb{I} + \frac{\Psi_1({\bf t})}{\lambda^{1/q}} + \OO(\lambda^{-2/q})\right]}_{\mathfrak{G}(\lambda;{\bf t})}e^{\Theta(\lambda;{\bf t})},
    \end{equation}
    where $\Theta(\lambda;{\bf t})$ is a diagonal matrix, polynomial in $\lambda^{1/q}$, and $g(\lambda) = \lambda^{\Delta_q}\mathcal{U}_q$, for some constant, diagonal, traceless matrix $\Delta_q$, and constant matrix $\mathcal{U}_q$ (see Equations \eqref{Delta-def}, \eqref{U-def} for the exact definitions).
    Let $t_{\ell}$ be the collection of isomonodromic times. If we define
    \begin{equation}
         \hat{\boldsymbol{\omega}}_{JMU} := \sum_{\ell}\left(\left\langle A(\lambda;{\bf t}) \frac{d \mathfrak{G}}{dt_{\ell}} \mathfrak{G}^{-1}\right\rangle - \left\langle\frac{\Delta_q }{\lambda} \frac{d\mathfrak{G}}{dt_{\ell}} \mathfrak{G}^{-1}\right\rangle \right)dt_{\ell},
    \end{equation}
    then we have that
    \begin{equation}
        {\bf d }\, \hat{\boldsymbol{\omega}}_{JMU} = 0.
    \end{equation}
    Moreover, this formula is gauge-invariant, in the following sense: If we replace $\mathfrak{G}(\lambda;{\bf t})$ by $\mathfrak{h}({\bf t})\mathfrak{G}(\lambda;{\bf t})$ in above expression, where $\mathfrak{h}({\bf t})$ is an upper-triangular matrix with $1$'s on the diagonal (and depending smoothly on ${\bf t}$), then the resulting differential $\hat{\boldsymbol{\omega}}_{JMU}$ does not change.
\end{theorem}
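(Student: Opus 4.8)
The plan is to run the Jimbo--Miwa--Ueno argument for the closedness of the isomonodromic one-form (\cite{JMU1}), modified to accommodate the two features that obstruct the naive construction: the leading coefficient $\Lambda^r\lambda^k$ of $A$ is not diagonalizable over $\mathbb C[\lambda]$, so the normalizing gauge $g(\lambda)=\lambda^{\Delta_q}\mathcal U_q$ is \emph{ramified}, and conjugating by it turns $\lambda=0$ into a resonant Fuchsian singularity. Write $\mathfrak S:=\mathbb I+\Psi_1\lambda^{-1/q}+\cdots$ so that $\mathfrak G=g(\lambda)\,\mathfrak S$, set $\widetilde A:=A-\Delta_q/\lambda$ and $W_\ell:=\frac{d\mathfrak G}{dt_\ell}\mathfrak G^{-1}$, so that $\hat{\boldsymbol\omega}_{JMU}=\sum_\ell\langle\widetilde A\,W_\ell\rangle\,dt_\ell$. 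First I would record the isomonodromy equations: constancy of the Stokes data and formal monodromy of \eqref{psi-ODE-thm} is equivalent to the existence of $B_\ell(\lambda;{\bf t})$, rational in $\lambda$, with $\partial_{t_\ell}\Psi=B_\ell\Psi$; since the only finite singularity of $\widetilde A$ is the apparent one at $0$, one checks $B_\ell$ is in fact polynomial in $\lambda$, and one obtains the zero-curvature relations $\partial_{t_\ell}A-\partial_\lambda B_\ell+[A,B_\ell]=0$ and $\partial_{t_j}B_\ell-\partial_{t_\ell}B_j+[B_\ell,B_j]=0$, together with $\partial_\lambda\mathfrak G=A\mathfrak G-\mathfrak G\,\partial_\lambda\Theta$ and $\partial_{t_\ell}\mathfrak G=B_\ell\mathfrak G-\mathfrak G\,\partial_{t_\ell}\Theta$.

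The second step is to bring $\omega_\ell:=\langle\widetilde A\,W_\ell\rangle$ into a shape to which the JMU manipulations apply. Using $\partial_\lambda\mathfrak G\,\mathfrak G^{-1}=A-\mathfrak G\,\partial_\lambda\Theta\,\mathfrak G^{-1}$ together with the factorization $\mathfrak G=\lambda^{\Delta_q}\mathcal U_q\,\mathfrak S$, which gives $\partial_\lambda\mathfrak G\,\mathfrak G^{-1}-\Delta_q/\lambda=\OO(\lambda^{-2/q})$, one has $\widetilde A=\mathfrak G\,\partial_\lambda\Theta\,\mathfrak G^{-1}+\OO(\lambda^{-2/q})$; feeding this back, using cyclicity of the trace, the fact that $\operatorname{tr}(\partial_\lambda\Theta\,\partial_{t_\ell}\Theta)$ carries no $\lambda^{-1}$ term, and $\mathfrak G^{-1}W_\ell\mathfrak G=\mathfrak G^{-1}\partial_{t_\ell}\mathfrak G=\mathfrak S^{-1}\partial_{t_\ell}\mathfrak S$, one rewrites $\omega_\ell$ as the JMU integrand for the regularized connection $\widetilde A$, namely $\operatorname{res}_{\lambda=\infty}\operatorname{tr}(\partial_\lambda\Theta\cdot\mathfrak S^{-1}\partial_{t_\ell}\mathfrak S)\,d\lambda$, corrected by the term forced by the resonant singularity at the origin. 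With this in hand, the closedness ${\bf d}\,\hat{\boldsymbol\omega}_{JMU}=0$ follows the familiar route: antisymmetrize, i.e.\ compute $\partial_{t_j}\omega_\ell-\partial_{t_\ell}\omega_j$; substitute the deformation equation for $\partial_{t_\ell}\mathfrak G$; integrate by parts in $\lambda$ (residues of $\partial_\lambda(\cdot)$ vanish, including for formal Laurent series in $\lambda^{1/q}$); and collapse everything using the two zero-curvature identities. The purely-$\infty$ part cancels exactly as in \cite{JMU1}; what remains is to check that the contribution of the singularity at $0$ — equivalently the effect of subtracting $\Delta_q/\lambda$ — is accounted for, which uses that the local exponent of $\widetilde A$ at the origin is integer-spaced (the resonance) so that, after a suitable normalization of the local solution there, the residual obstruction is $\lambda$-exact.

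For the gauge-invariance statement, note that $A=\partial_\lambda\mathfrak G\,\mathfrak G^{-1}+\mathfrak G\,\partial_\lambda\Theta\,\mathfrak G^{-1}$ is a functional of $(\mathfrak G,\Theta)$, so replacing $\mathfrak G$ by $\mathfrak h({\bf t})\mathfrak G$ sends $A\mapsto\mathfrak h A\mathfrak h^{-1}$ while leaving $\Delta_q$ (and $\Theta$) unchanged: indeed $\lambda^{-\Delta_q}\mathfrak h\lambda^{\Delta_q}=\mathbb I+\OO(\lambda^{-1/q})$ gets absorbed into the series factor $\mathfrak S$, because $\mathfrak h$ is upper triangular and $\Delta_q$ has strictly decreasing diagonal entries. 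Then $W_\ell\mapsto\dot{\mathfrak h}\mathfrak h^{-1}+\mathfrak h W_\ell\mathfrak h^{-1}$, and the variation of $\hat{\boldsymbol\omega}_{JMU}$ reduces to $\langle\mathfrak h A\mathfrak h^{-1}\dot{\mathfrak h}\mathfrak h^{-1}\rangle-\langle\tfrac{\Delta_q}{\lambda}\dot{\mathfrak h}\mathfrak h^{-1}\rangle+\langle\tfrac1\lambda\,\mathfrak h^{-1}[\Delta_q,\mathfrak h]\,W_\ell\rangle$. The first term vanishes because $\mathfrak h A\mathfrak h^{-1}\dot{\mathfrak h}\mathfrak h^{-1}$ is polynomial in $\lambda$; the second because $\Delta_q$ is diagonal and $\dot{\mathfrak h}\mathfrak h^{-1}$ is strictly upper triangular; and the third because $\mathfrak h^{-1}[\Delta_q,\mathfrak h]$ is strictly upper triangular while, in the expansion $W_\ell=\lambda^{\Delta_q}\mathcal U_q\,(\partial_{t_\ell}\mathfrak S\cdot\mathfrak S^{-1})\,\mathcal U_q^{-1}\lambda^{-\Delta_q}$, conjugation by $\lambda^{\Delta_q}$ promotes only the strict superdiagonals of $\partial_{t_\ell}\mathfrak S\cdot\mathfrak S^{-1}$ to order $\lambda^0$, so the $\lambda^0$-coefficient of $W_\ell$ is itself strictly upper triangular and the trace pairing is zero. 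Hence $\hat{\boldsymbol\omega}_{JMU}$ does not change.

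I expect the main obstacle to be the bookkeeping in the second step, at the resonant singularity: establishing that $-\langle\tfrac{\Delta_q}{\lambda}W_\ell\rangle$ is \emph{exactly} the correction needed to close the form. In the unramified, non-resonant JMU setting there is no such term, and as remarked in the introduction, simply discarding the resonant contribution leaves $\hat{\boldsymbol\omega}_{JMU}$ non-closed. Making precise that the resonant local data at $\lambda=0$ enters the closedness obstruction only through $\Delta_q/\lambda$ — equivalently, choosing the local formal solution at $0$ so that the residual piece of the putative residue there is $\lambda$-exact, which is where the tracelessness and integer-spaced resonance of $\Delta_q$ come in — is the crux; the rest is a careful but essentially routine adaptation of \cite{JMU1}, and this is the content of \S\ref{tau-extension}.
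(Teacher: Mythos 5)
Your gauge-invariance argument is sound (the upper-triangularity bookkeeping with $\Delta_q$ having strictly decreasing diagonal entries is exactly what is needed, and is in fact spelled out in more detail than the paper records). The problem is the closedness claim, which is the actual content of the theorem: you defer ``establishing that $-\langle\tfrac{\Delta_q}{\lambda}W_\ell\rangle$ is exactly the correction needed'' to a heuristic — that the resonance at the origin is integer-spaced and traceless, so the ``residual obstruction is $\lambda$-exact after a suitable normalization of the local solution'' — and you flag this yourself as the crux. That heuristic does not describe the mechanism and I do not see how to make it run. In the $\lambda$-gauge $A(\lambda;{\bf t})$ is a polynomial, so there is no singularity of the connection at $\lambda=0$ at all; the resonant Fuchsian point only appears after the substitution $\lambda=\zeta^{q}$ and conjugation by $g_q$, and the true source of the failure of the naive JMU form to close is that the deformation matrices $\hat B_\ell=g_q^{-1}B_\ell g_q$ in that gauge acquire poles of order up to $q-1$ at $\zeta=0$, so the usual ``residue of a polynomial vanishes'' steps in the JMU antisymmetrization break down. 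The resulting obstruction is a concrete nonzero $2$-form in the deformation times (for $(q,p)=(3,4)$ it is displayed explicitly at the start of Subsection \ref{tau-modification}), so no choice of normalization of a local formal solution at the origin can render it exact; it has to be cancelled by an equally concrete residue computation, which your outline never performs.

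For comparison, the paper's proof proceeds entirely in the $\zeta$-gauge: Lemma \ref{new-tau-lem-0} rewrites $\hat{\boldsymbol\omega}_{JMU}=\boldsymbol\omega_{JMU}+\sum_\ell\langle\hat A\,\hat B_\ell\rangle\,dt_\ell$; Lemma \ref{new-tau-lem-1} shows ${\bf d}\,\boldsymbol\omega_{JMU}=\sum\langle\partial_\zeta\hat B_a\,\hat B_b\rangle\,dt_a\wedge dt_b$ (nonzero precisely because the $\hat B_\ell$ are not polynomial); and Lemma \ref{new-tau-lem-2} shows ${\bf d}\sum_\ell\langle\hat A\,\hat B_\ell\rangle\,dt_\ell=-\sum\langle\partial_\zeta\hat B_a\,\hat B_b\rangle\,dt_a\wedge dt_b$, using the zero-curvature equations, Ad-invariance of $\langle\cdot\rangle$, polynomiality of $A,B_\ell$ in $\lambda$, and the explicit identity $\lim_{\zeta\to0}\zeta\hat A=-q\,\mathcal U_q^{-1}\Delta_q\mathcal U_q$ coming from $g_q^{-1}\tfrac{dg_q}{d\lambda}$ — this, not any resonance-spacing argument, is where $\Delta_q$ enters. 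Until you carry out these two computations (or an equivalent pair in the $\lambda$-gauge, tracking the $\OO(\lambda^{-2/q})$ tails you discard, which for $q\geq 4$ do contribute to the residues) and exhibit the exact cancellation, the proposal is an outline with its decisive step missing rather than a proof.
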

We can then formally define a $\tau$-function by $\hat{\boldsymbol{\omega}}_{JMU} = {\bf d }\,\log \tau({\bf t})$.
Using this definition in the case of \eqref{L-connection}, we obtain the following proposition:
\begin{prop} \label{Okamoto-JMU-equivalence}
    The (modified) JMU isomonodromic tau function for the isomonodromic system defined by \eqref{L-connection} is given by
        \begin{equation} \label{JMU-tau}
            {\bf d }\log \tau(t_{5},t_{2},t_{1}) = \frac{1}{2} \left(H_{5} dt_{5} +  H_{2} dt_{2} + H_{1} dt_{1} \right),
        \end{equation}
    where $H_{5}, H_{2}$, and $H_{1}$ are the Hamiltonians of Theorem \ref{Hamiltonian-Prop}.
\end{prop}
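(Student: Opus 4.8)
The plan is to obtain Proposition~\ref{Okamoto-JMU-equivalence} by specializing Theorem~\ref{tau-theorem.} to the rational connection $L$ of Proposition~\ref{PropA}, computing $\hat{\boldsymbol{\omega}}_{JMU}$ explicitly, and matching the coefficient of each $dt_\ell$ against $\tfrac12 H_\ell$. The first step is to bring \eqref{L-connection} into the normal form required by Theorem~\ref{tau-theorem.} with $q=3$: the leading coefficient of $L$ is the rank-one nilpotent $E_{13}\lambda^2$, which is \emph{not} of the form $\Lambda^r\lambda^k$ even up to a constant conjugation, so one must apply a $\lambda$-dependent gauge transformation. This is precisely the transformation referred to in the discussion preceding the statement: it straightens the leading term at $\infty$ into $\Lambda^r\lambda^k$ (a short Newton polygon computation — the spectral curve of $L$ satisfies $\mu^3\sim\lambda^4$ near $\infty$ — fixes $r$, $k$ and the ramification index $3$), at the cost of creating a resonant Fuchsian point at $\lambda=0$, which is exactly why the correction term $\langle\tfrac{\Delta_3}{\lambda}\tfrac{d\mathfrak{G}}{dt_\ell}\mathfrak{G}^{-1}\rangle$ in Theorem~\ref{tau-theorem.} cannot be dropped. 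Since the modified differential of Theorem~\ref{tau-theorem.} is insensitive to the residual gauge freedom (the upper-triangular $\mathfrak{h}({\bf t})$ clause), one is free to carry out the computation in whichever representative is most convenient.

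Next I would construct the formal solution $\Psi=\mathfrak{G}(\lambda)e^{\Theta(\lambda)}$ at $\lambda=\infty$ to the order demanded by the residue, via the standard recursion: pass to $z=\lambda^{1/3}$, diagonalize the leading term over $\mathbb{C}((z^{-1}))$, and solve order by order for the formal gauge factor $\mathbb{I}+\Psi_1\lambda^{-1/3}+\cdots$ and the diagonal polynomial $\Theta$ (together with the logarithmic term carried by $\Delta_3$). In parallel I would record the isomonodromic deformation matrices $B_1,B_2,B_5$ generating the $t_1,t_2,t_5$ flows — these are exactly the reduced KP flows $Q^{1/3}_+,Q^{2/3}_+,Q^{5/3}_+$ of Proposition~\ref{PropA} — since the compatibility equations $\partial_{t_\ell}L-\partial_\lambda B_\ell=[B_\ell,L]$ together with the normalization of $\mathfrak{G}$ determine $\tfrac{d\mathfrak{G}}{dt_\ell}\mathfrak{G}^{-1}$. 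Only finitely many coefficients of $\mathfrak{G}$ enter, since the pairing $\langle\cdot\rangle$ extracts a single residue at $\infty$.

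With this data in hand, the evaluation of $\hat{\boldsymbol{\omega}}_{JMU}=\sum_\ell\big(\langle L\,\tfrac{d\mathfrak{G}}{dt_\ell}\mathfrak{G}^{-1}\rangle-\langle\tfrac{\Delta_3}{\lambda}\tfrac{d\mathfrak{G}}{dt_\ell}\mathfrak{G}^{-1}\rangle\big)\,dt_\ell$ becomes a finite algebraic computation: substitute the entries of $L$ from \eqref{L-connection}, the explicit $B_\ell$, and the formal-solution coefficients, and collect the relevant power of $\lambda$ (equivalently $z$). The result is, for each $\ell\in\{1,2,5\}$, a polynomial in the Darboux coordinates $(Q_U,Q_V,Q_W,P_U,P_V,P_W)$ and $t_1,t_2,t_5$, which one then checks termwise to be $\tfrac12 H_\ell$ with $H_\ell$ as in \eqref{Hamiltonian-nu}--\eqref{Hamiltonian-eta}. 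A partial cross-check is available: by Theorem~\ref{tau-theorem.} the differential $\hat{\boldsymbol{\omega}}_{JMU}$ is closed, by Theorem~\ref{Hamiltonian-Prop} so is $\tfrac12\sum_\ell H_\ell\,dt_\ell$, and both generate the same isomonodromic flow on phase space; hence $\hat{\boldsymbol{\omega}}_{JMU}-\tfrac12\sum_\ell H_\ell\,dt_\ell$ is a closed one-form in the times alone, so it suffices to pin down the ${\bf t}$-only discrepancy, which — with the particular choice of additive constants in \eqref{Hamiltonian-nu}--\eqref{Hamiltonian-eta} — should vanish. I do not expect this cross-check to shorten the work much, however, since it cannot by itself produce the factor $\tfrac12$; that factor must come out of the residue, and I expect it to reflect a normalization of the natural (isomonodromic) symplectic form relative to the bracket \eqref{Poisson-Bracket} — equivalently, that it is $\tau^2$, not $\tau$, whose logarithmic derivatives are the Okamoto Hamiltonians, consistent with the constant $C$ appearing in the matrix-model limit.

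The main obstacle is the size and delicacy of the expansion at $\infty$: enough coefficients of $\mathfrak{G}$ and $\Theta$ must be carried in fractional powers of $\lambda$ for the residue to be computed, and small errors propagate badly through the match with the lengthy expressions \eqref{Hamiltonian-nu}--\eqref{Hamiltonian-eta}. The second genuinely subtle point is the resonance at $\lambda=0$: one must verify that, with the correction term of Theorem~\ref{tau-theorem.} in place, the Fuchsian point genuinely contributes nothing further, and that the residual gauge freedom has been used consistently. Everything else — the recursion for $\mathfrak{G}$, the identification of the $B_\ell$ with the $Q^{k/3}_+$ flows, and the final termwise comparison — is bookkeeping, albeit of a substantial kind.
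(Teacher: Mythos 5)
Your proposal is correct and follows essentially the same route as the paper: the paper likewise proves this by specializing the modified differential of Theorem \ref{tau-theorem.} to this system (working in the $\xi$-gauge and computing $S(\xi)$ up to order $\xi^{-8}$, i.e.\ the coefficients $\Phi_1,\dots,\Phi_7$ of Remark \ref{a-calculations}) and then matching the resulting coefficients termwise with the Hamiltonians, which is exactly where the factor $\tfrac12$ emerges. One small correction: $L$ in \eqref{L-connection} already has the leading structure $\Lambda(\lambda)\lambda$ required by Theorem \ref{tau-theorem.} (here $q=3$, $p=4$, so $k=r=1$), so no preliminary ``straightening'' of the leading term is needed beyond the gauge $g(\lambda)=\lambda^{\Delta/3}\mathcal{U}$ with $\lambda=\xi^3$ that is already built into the theorem's formal solution and is responsible for the resonant Fuchsian point at the origin.
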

We shall see that this definition coincides (up to an overall multiplicative factor) with the $\tau$-function as defined by Okamoto \cite{Okamoto2,Okamoto},
justifying our modification of the isomonodromic $\tau$-function.

A ``dressed'' version of the $\tau$-function as defined here will be what appears as the critical partition function for the quartic 2-matrix model; this will be
the main result of the forthcoming work \cite{DHL3}.
This work is the analogy of the analyses of Painlev\'{e} I \cite{Okamoto2,FIKN}, which were subsequently used for the analysis of 
the critical points of the quartic and cubic $1$-matrix models \cite{DK0,BleherDeano}. 

There has been much interest in recent years concerning the dependence of the isomonodromic $\tau$-function on the monodromy data (equivalently, on any set of initial conditions for the isomonodromy equations) \cite{Bertola1,ILP,LR,IP}, in particular due to its applications in determining the
constant factors in the asymptotics of $\tau$-functions. Building on earlier works, in \cite{IP} the authors greatly simplify the procedure for calculating these constant factors for the $6$ Painlev\'{e} equations. They proposed two conjectures to this end, which we give the full statement of in Section \ref{tau-extension}.
In our situation, these conjectures are equivalent to the following proposition, which we prove in Section \ref{tau-extension}:
    \begin{prop}\label{tau-extended-theorem}
    The extended $\tau$-differential $\boldsymbol{\omega}_{0}$ for the system defined by \eqref{L-connection} is given by
        \begin{equation}
            \boldsymbol{\omega}_{0} = \frac{1}{2}\boldsymbol{\omega}_{cla} + dG,
        \end{equation}
    where $\boldsymbol{\omega}_{cla}$ is
        \begin{equation}
            \boldsymbol{\omega}_{cla} = \sum_{a \in \{U,V,W\}} P_{a} dQ_{a} - \sum_{k\in\{1,2,5\}} H_k dt_k,
        \end{equation}
    and $G$ is the polynomial
        \begin{equation}
            G = \frac{1}{7}\left[3t_1H_1 + \frac{5}{2}t_2 H_2 + t_5H_5 - P_UQ_U - \frac{3}{2}P_VQ_V - \frac{3}{2}P_WQ_W\right].
        \end{equation}
\end{prop}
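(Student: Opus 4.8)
The plan is to verify the claimed identity $\boldsymbol{\omega}_0 = \frac{1}{2}\boldsymbol{\omega}_{cla} + dG$ directly, by computing both sides against the definition of the extended $\tau$-differential established in Section \ref{tau-extension} and comparing coefficients. The key structural fact is that $\boldsymbol{\omega}_0$ is defined as an extension of the (modified) JMU differential $\hat{\boldsymbol{\omega}}_{JMU}$ to the full set of canonical coordinates $(Q_a,P_a)$ together with the times $t_k$, so it lives on a space with coordinates $(Q_U,Q_V,Q_W,P_U,P_V,P_W,t_1,t_2,t_5)$. By Proposition \ref{Okamoto-JMU-equivalence}, the restriction of $\boldsymbol{\omega}_0$ to the isomonodromic leaf (where the $Q_a,P_a$ are the specific functions of $t_k$ coming from a solution of the string equation) equals $\frac{1}{2}(H_5\,dt_5 + H_2\,dt_2 + H_1\,dt_1)$. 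So the content of the proposition is entirely about the \emph{transverse} behavior of $\boldsymbol{\omega}_0$: I need to pin down the $dQ_a$ and $dP_a$ components.

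First I would recall, from the construction in Section \ref{tau-extension}, the precise formula for the extended differential: it should be of the form $\boldsymbol{\omega}_0 = \sum_a (\text{something}_a)\,dQ_a + \sum_a(\text{something}_a)\,dP_a + \sum_k(\text{something}_k)\,dt_k$, where the transverse coefficients are read off from the $\lambda$-expansion of $\mathfrak{G}$ (equivalently, from derivatives of $\Psi_1$, $\Psi_2$, $\dots$ with respect to the canonical coordinates), exactly as in the Malgrange–Bertola extension recipe. The heart of the computation is then to express those coefficients in terms of $P_a, Q_a, H_k, t_k$. Here the explicit form of the connection $L$ in Proposition \ref{PropA} is essential: differentiating the compatibility relations $\partial_\lambda \mathfrak{G} \cdot \mathfrak{G}^{-1} = \cdots$ and the isomonodromy equations with respect to $Q_a$ and $P_a$ (rather than $t_k$) gives the transverse derivatives of the formal solution, and feeding these into $\langle L\,\partial\mathfrak{G}\,\mathfrak{G}^{-1}\rangle - \langle \Delta_q\lambda^{-1}\partial\mathfrak{G}\,\mathfrak{G}^{-1}\rangle$ produces the transverse components. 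Because $\boldsymbol{\omega}_{cla} = \sum_a P_a\,dQ_a - \sum_k H_k\,dt_k$ has $dQ_a$-coefficient $P_a$ and no $dP_a$-coefficient, while $dG$ contributes an \emph{exact} correction, the identity to be checked is: the $dP_a$-component of $\boldsymbol{\omega}_0$ equals $\partial G/\partial P_a$; the $dQ_a$-component equals $\frac{1}{2}P_a + \partial G/\partial Q_a$; and the $dt_k$-component equals $-\frac{1}{2}H_k + \partial G/\partial t_k$, the last of which is consistent with Proposition \ref{Okamoto-JMU-equivalence} upon restriction (since on the leaf $\partial_{t_k}G$ along the solution cancels against the total-derivative terms by Euler's relation applied to the weighted-homogeneous Hamiltonians). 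The specific coefficients $\tfrac17, 3, \tfrac52, 1$ and $1,\tfrac32,\tfrac32$ appearing in $G$ should emerge as the quasi-homogeneity weights of the coordinates: indeed $x=t_1, t_2, t_5$ and $Q_U,Q_V,Q_W$ (resp. $P_U,P_V,P_W$) carry definite scaling weights under the natural $\CC^\times$-action on the string equation, and the combination $3t_1H_1 + \tfrac52 t_2H_2 + t_5 H_5$ is precisely the Euler vector field contracted into $\sum H_k dt_k$.

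The main obstacle, and where the real work lies, is the transverse differentiation of the \emph{resonant} formal solution at $\lambda=0$: because the transformed connection has a resonant Fuchsian singularity at the origin (this is exactly the pathology that forced the modified definition of the $\tau$-differential in the first place), the coefficients in the local expansion there depend on the canonical coordinates in a way that is not simply algebraic, and one must check that the resonant contributions to $\partial_{Q_a}\boldsymbol{\omega}_0$ and $\partial_{P_a}\boldsymbol{\omega}_0$ assemble into an \emph{exact} form — i.e., that they are captured by $dG$ with the stated polynomial $G$. I expect this to reduce, after using the gauge-invariance clause of Theorem \ref{tau-theorem.} to fix a convenient normalization, to verifying a finite list of polynomial identities among the entries of $L$ and the Hamiltonians; these are routine but lengthy, and are best relegated to a computer-algebra check. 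A secondary point to be careful about is the footnote convention in Theorem \ref{Hamiltonian-Prop}: all $\partial/\partial t_k$ appearing here are at fixed $(Q_a,P_a)$, so when restricting $\boldsymbol{\omega}_0$ to the isomonodromic leaf one must correctly account for the implicit $t_k$-dependence of the $Q_a,P_a$, and confirm that the resulting expression collapses to $\frac12(H_5dt_5+H_2dt_2+H_1dt_1)$ in agreement with Proposition \ref{Okamoto-JMU-equivalence}, which serves as a strong consistency check on the whole computation.
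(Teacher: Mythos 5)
Your overall route is the paper's: take the canonical coordinates $(Q_a,P_a)$ as coordinates on the monodromy manifold $\mathcal{M}$, compute the components of $\boldsymbol{\omega}_{0}$ from the residue formula \eqref{omega0-def} applied to the formal solution, and compare componentwise with $\tfrac{1}{2}\boldsymbol{\omega}_{cla}+dG$, accepting that the final matching is a finite, computer-assisted polynomial verification. Your quasi-homogeneity remark is a genuinely nice way to \emph{guess} $G$ (the coefficients $3,\tfrac{5}{2},1$ and $1,\tfrac{3}{2},\tfrac{3}{2}$ are indeed half the scaling weights of $t_1,t_2,t_5$ and $Q_U,Q_V,Q_W$), but it is not how the paper obtains it: there one checks that $\boldsymbol{\omega}_{0}-\tfrac{1}{2}\boldsymbol{\omega}_{cla}$ is closed and integrates it directly.

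The genuine problem with your plan is where you locate ``the real work.'' The extended differential \eqref{omega0-def} is a residue at $\zeta=\infty$ of $\hat{A}\,{\bf d}S\,S^{-1}$, with $S(\zeta)=\mathbb{I}+\Psi_1\zeta^{-1}+\cdots$ the formal series at infinity; no expansion at the resonant point $\zeta=0$ enters the computation at all, because the resonance has already been absorbed into the modified definition (the $\Delta_q/\lambda$ subtraction in the $\lambda$-gauge, equivalently the $\zeta$-gauge residue formula). Moreover the coefficients $\Psi_k$ are differential polynomials in $U,V$, hence honest polynomials in $(Q_a,P_a,t_k)$, so there is nothing ``not simply algebraic'' to control and no separate argument that ``resonant contributions assemble into an exact form'' is needed. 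The actual content, which your proposal misplaces, is the computation at infinity: since $\mathcal{L}$ has degree $7$, one needs the expansion of $S$ to order $\zeta^{-8}$, reduced to $\zeta^{-7}$ by the $\mathcal{S}$-symmetry, i.e. the coefficients $\Phi_1,\dots,\Phi_7$ with off-diagonal entries determined to order $13$ via the recursion in the proof of Proposition \ref{RHP-Phi-prop}; with these in hand one evaluates $({\bf d}_{\mathcal{T}}+{\bf d}_{\mathcal{M}})\boldsymbol{\omega}_{0}$, finds constant $dP_a\wedge dQ_a$ coefficients and cross-terms proportional to $\partial H_k/\partial Q_a$, $\partial H_k/\partial P_a$, subtracts $\tfrac{1}{2}\boldsymbol{\omega}_{cla}$, and integrates the resulting closed $1$-form to get the stated $G$. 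If you executed your plan as written, the detour at the origin would either stall or be discovered to be vacuous; the rest of your outline, once redirected to the expansion at infinity, coincides with the paper's proof.
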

The full definition of $\boldsymbol{\omega}_{0}$ is given in Section \ref{tau-extension}. This result is in agreement with the conjectures of \cite{IP}, extending these conjectures to higher rank systems. This sheds some light on the hamiltonian structure of such equations.

Finally, we would like to comment that we have accompanying Maple worksheets that we can provide upon request to supplement some of the proofs.

\subsection{Notations.}
Throughout this work, we will frequently make use of several notations without comment. We list some of these notations here, for the convenience of the reader.
    \begin{itemize}
        \item $\omega = e^{\frac{2\pi i}{3}}$ denotes the principal third root of unity,
        \item $E_{ij}$ will denote the $3\times 3$ matrix with a $1$ in the $ij^{th}$ position, and zeros elsewhere,
        \item If $A$ is a square matrix, the notation $\lambda^A$ is defined to mean $\lambda^A := \exp( A\log \lambda)$, where ``$\exp$'' here is the usual matrix exponential.
        \item Throughout, we make the identification of coordinates $t_1 \equiv x$.
    \end{itemize}
    
\subsection{Acknowledgements.}
This research was partially supported by the European Research Council (ERC), Grant Agreement No. 101002013. The author would also like to thank Marco Bertola, Maurice Duits and Seung-Yeop Lee for valuable discussions during the preparation of this manuscript. We would also like to thank an anonymous referee for pointing out a number of relevant references, and providing helpful improvements to the present work.

\section{Hamiltonian Structure of the $(3,4)$ String Equation.}
Here, we develop a Hamiltonian formulation of the string equation \eqref{string-equation}, \eqref{U_mu}--\eqref{V_eta}. We develop this formalism 
before moving to the isomonodromy setting, as the notations of this section will serve as convenient coordinates for parameterizing the solution to the isomonodromy problem. We will first define a set of Darboux coordinates, and show that the corresponding Hamilton equations are equivalent to the string equation. We further show that one can define a $\tau$-function in the sense of Okamoto \cite{Okamoto2, Okamoto} via this construction. 

In this section, we shall revert to the notation
    \begin{equation}
        x =t_1,
    \end{equation}
as it will be more convenient here when indexing sums.

\subsection{Hamiltonian Structure and Okamoto $\tau$-function.}
Here, we prove Theorem \eqref{Hamiltonian-Prop}, and define the Okamoto $\tau$-function. We assume that the function $U,V$ defining the Darboux coordinates
\ref{Darboux-Q}, \ref{Darboux-P} satisfy \ref{string-equation}, as well as the `compatibility conditions' coming from the reduced KP flows 
    \begin{align}
            \frac{\partial U}{\partial t_{2}} &= -2V',\\
            \frac{\partial V}{\partial t_{2}} &= \frac{1}{6}U''' - UU',\\
            \frac{\partial U}{\partial t_{5}} &=\frac{\partial}{\partial x}\left[-\frac{1}{6}UU'' + \frac{1}{8}(U')^2 + \frac{1}{4}U^3 - 
            \frac{1}{2}V^2 - \frac{5}{9}t_{5}\left(3U^2-U''\right) + \frac{4}{3}x\right],\\
            \frac{\partial V}{\partial t_{5}} &= \frac{\partial}{\partial x}\left[ \frac{1}{12}U''V - \frac{1}{4}U'V' + \frac{5}{16}U^2V - \left(\frac{5}{3}t_{5} + \frac{1}{4}U\right)^2V - t_{2} U\right].
        \end{align}
The precise origin/meaning of these flows is explained in Appendix \ref{AppendixA}.
    \begin{proof}
        Let us prove that the equations
            \begin{equation*}
                \frac{\partial Q_{a}}{\partial t_{1} } = \frac{ \partial H_{1} }{\partial P_{a}}, \qquad\qquad 
                \frac{\partial P_{a}}{\partial t_{1} } = -\frac{ \partial H_{1} }{\partial Q_{a}},
            \end{equation*}
        $a \in \{U,V,W\}$, can be integrated to a function $H_{1}$. By direct calculation, 
            \begin{equation*}
                \frac{\partial Q_U}{\partial t_{1}} = U' = Q_W.
            \end{equation*}
        On the other hand, Hamilton's equations tell us that
            \begin{equation*}
                Q_W = \frac{\partial Q_U}{\partial t_{1}} = \frac{\partial H_{1}}{\partial P_U}.
            \end{equation*}
        Integrating, we find that
            \begin{equation*}
                H_{1} = P_U Q_W + f(Q_U,Q_V,Q_W,P_V,P_W;t_{5},t_{2},t_{1}).
            \end{equation*}
        (note that $f$ is independent of the variable $P_U$). Next, we have that
            \begin{align*}
                \frac{\partial P_U}{\partial t_{1}} &= \frac{1}{4}\left( 3(U')^2 + 3UU'' - \frac{1}{3}U'''' - \frac{7}{3}t_{5} U''\right)\\
                    &= \frac{3}{8} (U')^2 + \frac{1}{2}U^3 + \frac{3}{2}V^2 - \frac{5}{4}t_{5} U^2-\frac{1}{6}t_{5} U'' + t_{1}\\
                    &= \frac{3}{8}Q_W^2 - 2t_{5} P_{W} + \frac{1}{2}Q_{U}^3+\frac{3}{4}t_{5} Q_{U}^2 - t_{5}^2Q_{U} + \frac{3}{2} Q_{V}^2 + t_{1} - \frac{19}{27}t_{5}^3,
            \end{align*}
        where we have used the string equation \eqref{string-equation} to rewrite $\frac{\partial P_U}{\partial t_{1}}$ in terms of
        the Hamiltonian variables $\{Q_{a},P_{a}\}$, and $t_{5},t_{2}$, $t_{1}$. Hamilton's equations tell us that
            \begin{equation*}
                \frac{3}{8}Q_W^2 - 2t_{5} P_{W} + \frac{1}{2}Q_{U}^3+\frac{3}{4}t_{5} Q_{U}^2 - t_{5}^2Q_{U} + \frac{3}{2} Q_{V}^2 + t_{1} - \frac{19}{27}t_{5}^3
                = \frac{\partial P_U}{\partial t_{1}}
                = -\frac{\partial H_{1}}{\partial Q_U} = -\frac{\partial f}{\partial Q_U}.
            \end{equation*}
        The left hand side of the above is independent of $P_U$, and so both sides can be integrated to obtain that
            \begin{equation*}
                f =-\frac{3}{8} Q_{U} Q_{W}^{2}+2 t_{5} Q_{U} P_{W}-\frac{1}{8} Q_{U}^{4}-\frac{1}{4} t_{5}  Q_{U}^{3}+\frac{1}{2} t_{5}^{2} Q_{U}^{2}-\frac{3}{2} Q_{U} Q_{V}^{2}-x Q_{U} +\frac{19}{27} t_{5}^{3} Q_{U} + \tilde{f}(Q_{V},Q_{W},P_{V},P_{W}).
            \end{equation*}
        Our expression for $H_{1}$ now reads
            \begin{equation*}
                H_{1} = P_U Q_W -\frac{3}{8} Q_{U} Q_{W}^{2}+2 t_{5}  Q_{U} P_{W}-\frac{1}{8} Q_{U}^{4}-\frac{1}{4} t_{5}  Q_{U}^{3}+\frac{1}{2} t_{5}^{2} Q_{U}^{2}-\frac{3}{2} Q_{U} Q_{V}^{2} - t_{1} Q_{U}  +\frac{19}{27} t_{5}^{3} Q_{U} + \tilde{f},
            \end{equation*}
        i.e. we have completely determined the dependence of $H_{1}$ on $Q_U, P_U$. Continuing in this fashion, one is able to determine the function $H_{1}$ up to an explicit function of the variables $t_{5},t_{2}, t_{1}$; similar calculations for the Hamilton equations in the variables $t_{2}$, $t_{5}$
        result in functions $H_{2}$, $H_{5}$, also defined up to the addition of an explicit function of the variables $t_{5},t_{2}, t_{1}$. Denote these 
        functions, which we will call ``integration constants'', by $c_k(t_{5},t_{2},t_{1})$, $k=1,2,5$. Calculating the Poisson brackets of the Hamiltonians in pairs, we obtain the equations
            \begin{align*}
                \left\{ H_{5}, H_{2}\right\} + \frac{\partial H_{5}}{\partial t_{2}} - \frac{\partial H_{2}}{\partial t_{5}} &= \frac{\partial c_5}{\partial t_{2}} - \frac{\partial c_2}{\partial t_{5}},\\
                \left\{ H_{5}, H_{1}\right\} + \frac{\partial H_{5}}{\partial t_{1}} - \frac{\partial H_{1}}{\partial t_{5}} &= \frac{\partial c_5}{\partial t_{1}} - \frac{\partial c_1}{\partial t_{5}} + \frac{4}{3}t_{1} - \frac{82}{27}t_{5}^3,\\
                \left\{ H_{2}, H_{1}\right\} + \frac{\partial H_{2}}{\partial t_{1}} - \frac{\partial H_{1}}{\partial t_{2}} &= \frac{\partial c_2}{\partial t_{1}} - \frac{\partial c_1}{\partial t_{2}}.
            \end{align*}
        So, for example, we can take 
            \begin{align*}
                c_5 =-\frac{2}{3} t_{1}^{2}+\frac{82}{27} t_{5}^{3} t_{1} -\frac{556}{243} t_{5}^{6}-\frac{22}{9} t_{5}  t_{2}^{2}, \qquad\qquad c_2 = -\frac{22}{9} t_{5}^{2} t_{2}, \qquad\qquad c_1 = -\frac{4}{3} t_{5}  t_{1} +\frac{41}{54} t_{5}^{4}.
            \end{align*}
        From this calculation one can see that the condition $\left\{ H_k, H_j\right\} = \frac{\partial H_k}{\partial t_j} - \frac{\partial H_j}{\partial t_k} = 0$ holds, for $k,j =1,2,5$.

        Conversely, suppose we start with the functions $H_{5}, H_{2},$ and $H_{1}$. We check only that the first Hamiltonian flow is equivalent to
        the string equation \eqref{string-equation}; the remaining equations can be obtained in an identical manner. Given $H_{1}$, Hamilton's equations in the 
        variable $t_{1}$ read
            \begin{align*}
                Q_U' &= \frac{\partial H_{1} }{\partial P_U} = Q_W, \qquad\qquad\qquad Q_V' = \frac{\partial H_{1} }{\partial P_V} = P_V,
                \qquad\qquad\qquad Q_W' = \frac{\partial H_{1} }{\partial P_W} = 12 P_W + 2t_{5} Q_U - 2t_{5}^2,\\
                P_U' &= -\frac{\partial H_{1} }{\partial Q_U} =\frac{1}{2}Q_U^3 + \frac{3}{2}Q_V^2 +\frac{3}{8}Q_W^2 +\frac{3}{4}t_{5} Q_U^2 -2t_{5} P_W-
                t_{5}^2Q_U -\frac{19}{27}t_{5}^3 + t_{1},\\
                P_V' &= -\frac{\partial H_{1} }{\partial Q_V} = 3Q_UQ_V - t_{5} Q_V - 2t_{2},
                \qquad\qquad P_W' = -\frac{\partial H_{1} }{\partial Q_W} = 12P_W - 2t_{5} Q_U-2t_{5}^2.
            \end{align*}
        If we define $U := Q_U+\frac{4}{3}t_{5}$, $V := Q_V$, then the first three equations tell us that $Q_W = U'$, $P_V = V'$, and $P_W = \frac{1}{12}U'' -\frac{1}{6}t_{5} U + \frac{7}{18}t_{5}^2$. 
        Making these substitutions into the equation $P_V' = -\frac{\partial H_{1} }{\partial Q_V}$, we obtain
            \begin{equation*}
                V'' = 3UV - 5t_{5} V - 2t_{2},
            \end{equation*}
        which is the second part of the string equation. Differentiating the equation $P_W' = -\frac{\partial H_{1} }{\partial Q_W}$ once more with respect to
        $t_{1}$, and inserting the expression for $P_U'$, we obtain the first part of the string equation.
        \end{proof}
        \begin{remark} (\textit{Homogeneous changes of coordinate.})
            Although the explicit equations for the Hamiltonians are rather unwieldy, the Hamiltonians themselves enjoy some nice properties, 
            as we shall see in the subsequent remarks. The first observation one can make is that $H_1,H_2$, and $H_5$ are weighted homogeneous polynomials, in the following sense.
            \begin{prop}
                Fix $\kappa \in \CC\setminus\{0\}$. Under the change of variables
                \begin{equation}
                    (Q_U,Q_V,Q_W,P_U,P_V,P_W,t_1,t_2,t_5)\mapsto (\kappa^2 Q_U,\kappa^3 Q_V,\kappa^3 Q_W,\kappa^5 P_U,\kappa^4 P_V,\kappa^4 P_W,\kappa^6 t_1,\kappa^5 t_2,\kappa^2 t_5),
                \end{equation}
            the Hamiltonians $H_1,H_2,H_5$ transform as
                \begin{equation}\label{Hamiltonian-Homogeneity}
                    (H_1,H_2,H_5)\mapsto (\kappa^8 H_1,\kappa^9 H_2,\kappa^{12}H_5).
                \end{equation}
            \end{prop}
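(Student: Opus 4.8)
The plan is to recast the assertion as the statement that $H_1,H_2,H_5$ are eigenfunctions of a single Euler (grading) vector field, and then to deduce this directly from the way the Hamiltonians were constructed in the proof of Theorem \ref{Hamiltonian-Prop}. Assign weights $w(Q_U)=2$, $w(Q_V)=w(Q_W)=3$, $w(P_U)=5$, $w(P_V)=w(P_W)=4$, $w(t_1)=6$, $w(t_2)=5$, $w(t_5)=2$, and let
\begin{equation*}
    E := 2Q_U\partial_{Q_U}+3Q_V\partial_{Q_V}+3Q_W\partial_{Q_W}+5P_U\partial_{P_U}+4P_V\partial_{P_V}+4P_W\partial_{P_W}+6t_1\partial_{t_1}+5t_2\partial_{t_2}+2t_5\partial_{t_5}.
\end{equation*}
By Euler's theorem, a polynomial is a sum of monomials of weighted degree $d$ if and only if it is an eigenfunction of $E$ with eigenvalue $d$, so \eqref{Hamiltonian-Homogeneity} is equivalent to $E(H_1)=8H_1$, $E(H_2)=9H_2$, $E(H_5)=12H_5$. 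The first thing I would record is that this grading is visibly compatible with all the structures entering the construction: with the Darboux definitions \eqref{Darboux-Q}--\eqref{Darboux-P} provided one also sets $w(U)=2$, $w(V)=3$ and lets $\partial/\partial x$ \emph{raise} weight by $1$ (rather than lower it by $6$); with the string equation \eqref{string-equation} and the reduced KP flows, each of whose terms is then weighted-homogeneous (the two scalar equations of \eqref{string-equation} are homogeneous of weights $5$ and $6$); and with the two numerical identities $w(Q_a)+w(P_a)=7$ for every $a\in\{U,V,W\}$ and $w(H_k)+w(t_k)=14$ for every $k\in\{1,2,5\}$, which are what make the bookkeeping close up.

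The argument itself would then follow the proof of Theorem \ref{Hamiltonian-Prop} step by step. In that proof $H_1$ is built by successively integrating Hamilton's equations \eqref{Hamilton-eq}: each equation reads $\partial_{P_a}H_1=(\text{r.h.s.})$ or $-\partial_{Q_a}H_1=(\text{r.h.s.})$, where the right-hand side, after using the string equation, is a polynomial in the $Q$'s, $P$'s and $t$'s that one checks is weighted-homogeneous of weight $8-w(P_a)$, respectively $8-w(Q_a)$. Using the commutator $[E,\partial_v]=-w(v)\,\partial_v$, each such relation forces $\partial_v\big((E-8)H_1\big)=0$ for the corresponding $v$; running through all six relations shows that $(E-8)H_1$ depends on $t_1,t_2,t_5$ alone. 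Since every monomial of $H_1$ contains at least one $Q$ or $P$ except for those in the integration constant $c_1=-\tfrac{4}{3}t_5t_1+\tfrac{41}{54}t_5^4$, and since $E(c_1)=8c_1$ by inspection, one concludes $(E-8)H_1=0$. The identical scheme, applied to the $t_2$- and $t_5$-flows, gives $E(H_2)=9H_2$ and $E(H_5)=12H_5$ once one also checks $E(c_2)=9c_2$ and $E(c_5)=12c_5$ for the chosen constants $c_2=-\tfrac{22}{9}t_5^2t_2$ and $c_5=-\tfrac{2}{3}t_1^2+\tfrac{82}{27}t_5^3t_1-\tfrac{556}{243}t_5^6-\tfrac{22}{9}t_5t_2^2$. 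As a fully independent check (and the quickest route in practice), one may instead simply read off \eqref{Hamiltonian-nu}--\eqref{Hamiltonian-eta} and verify that the exponents in each monomial sum to $8$, $9$, $12$ respectively; this is mechanical and is of the kind confirmed in the accompanying Maple worksheets.

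The main obstacle is not conceptual but organizational. For the structural argument it is the need to verify, term by term, that each right-hand side appearing in the proof of Theorem \ref{Hamiltonian-Prop} has exactly the predicted weight --- equivalently, to check the quasi-homogeneity of the string equation, of the reduced KP flows, and of the Darboux formulas --- and for the brute-force check it is simply the length of \eqref{Hamiltonian-eta}. One should also be careful about a point that could create confusion: the map $v\mapsto\kappa^{w(v)}v$ does \emph{not} literally preserve Hamilton's equations \eqref{Hamilton-eq} (the two sides rescale by weights differing by $7$, since $w(Q_a)+w(P_a)=7$ while $w(H_k)+w(t_k)=14$); the system is only \emph{quasi}-homogeneous, and it is precisely the commutator shift $[E,\partial_v]=-w(v)\partial_v$, rather than any naive scaling invariance, that drives the proof. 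Modulo this bookkeeping, the statement --- and its proof --- is a routine unwinding of the construction of the Hamiltonians.
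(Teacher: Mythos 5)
Your argument is correct, but it is more structural than what the paper does: the paper states this proposition as an observation inside a remark, with no proof beyond (implicitly) reading off the explicit formulas \eqref{Hamiltonian-nu}--\eqref{Hamiltonian-eta} term by term --- indeed the surrounding text only notes that the integration constants were \emph{chosen} so that weighted homogeneity holds, and the verification is of the Maple-worksheet kind. Your Euler-operator derivation, which extracts $E(H_k)=(14-w(t_k))H_k$ from the way $H_1,H_2,H_5$ are built in the proof of Theorem \ref{Hamiltonian-Prop} (homogeneity of each reduced right-hand side of \eqref{Hamilton-eq}, the commutator $[E,\partial_v]=-w(v)\partial_v$, and the explicit check on the integration constants $c_1,c_2,c_5$), is sound: the weight bookkeeping you set up for \eqref{Darboux-Q}--\eqref{Darboux-P}, for \eqref{string-equation} and the reduced KP flows, and the identities $w(Q_a)+w(P_a)=7$, $w(H_k)+w(t_k)=14$ all check out, and your caveat that the rescaling is not a literal symmetry of Hamilton's equations (the two sides scale with a discrepancy $\kappa^7$) is accurate and worth keeping. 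What your route buys over the paper's is robustness and explanation: it shows the grading is forced by the construction rather than an accident of the printed formulas. In fact, if you run your ``quickest route'' on \eqref{Hamiltonian-eta} as printed you will find that every monomial has weight $12$ \emph{except} $-t_1t_2(4P_W-Q_U^2)$, which has weight $15$; comparing mixed partials ($\partial H_5/\partial t_1=\partial H_1/\partial t_5$ forces the term $-t_5(4P_W-Q_U^2)$ coming from $-\tfrac12 t_5^2(4P_W-Q_U^2)$ in $H_1$) shows this term should read $-t_1t_5(4P_W-Q_U^2)$, i.e.\ it is a typo in the displayed Hamiltonian, not a counterexample to \eqref{Hamiltonian-Homogeneity}. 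Your structural argument is unaffected by this, which is precisely its advantage over pure inspection.
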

                    
            One should note that the calculation of the integration constants in the above does not determine them uniquely; we have made a choice which is consistent with the formulae we shall meet later, and the requirement that the Hamiltonians are weighted homogeneous polynomials.
        \end{remark}
        \begin{remark}(\textit{$t_2\to 0$ limit.})
            There is a well-defined Hamiltonian system which emerges in the $t_2 \to 0$ limit, obtained by simultaneously sending $(Q_V,P_V,t_2)$ to zero. The result is a 
            $2+2$-dimensional completely integrable non-autonomous Hamiltonian system, in the variables $(Q_U,Q_W,P_U,P_W,t_1,t_5)$. The corresponding Hamiltonians are obtained
            by directly setting $Q_V=P_V=t_2 = 0$ in Formulas \eqref{Hamiltonian-nu}--\eqref{Hamiltonian-eta}. This Hamiltonian system corresponds to the $\ZZ_2$-symmetric
            reduction that we shall study in Section \ref{Z2-Symmetry-Breaking}.
        \end{remark}

        \begin{remark} (\textit{Okamoto $\tau$-function and a stronger integrability condition.})
        The above proposition also provides us with another useful object: since 
            \begin{equation}\label{mixed-der}
                \frac{\partial H_j}{\partial t_k} - \frac{\partial H_k}{\partial t_j} = 0,\qquad\qquad k,j =1,2,5,
            \end{equation}
        we have the following corollary:
            \begin{cor}
                Consider the differential
                    \begin{equation}\label{Okamoto-tau-function}
                        \boldsymbol{\omega}_{Okamoto} := H_{5}dt_{5} + H_{2}dt_{2} + H_{1}dt_{1}.
                    \end{equation}
                and let ${\bf d}$ denote the exterior differential in the variables $t_{5},t_{2},t_{1}$. Then, $\boldsymbol{\omega}_{Okamoto}$ is closed:
                    \begin{equation}
                        {\bf d }\,\boldsymbol{\omega}_{Okamoto} = 0.
                    \end{equation}
            \end{cor}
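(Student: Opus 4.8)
The plan is to reduce the claim directly to the two identities already established in Theorem~\ref{Hamiltonian-Prop}. Since $\boldsymbol{\omega}_{Okamoto}$ is a $1$-form on the $3$-dimensional space of times $(t_5,t_2,t_1)$, expanding the exterior derivative gives
\begin{equation*}
  {\bf d}\,\boldsymbol{\omega}_{Okamoto} = \sum_{j<k}\left(\frac{dH_k}{dt_j} - \frac{dH_j}{dt_k}\right)dt_j\wedge dt_k,
\end{equation*}
where the $H_k$ are now regarded as functions of $(t_5,t_2,t_1)$ alone, obtained by substituting a fixed solution $(Q_a(t),P_a(t))$ of Hamilton's equations \eqref{Hamilton-eq} into the polynomials \eqref{Hamiltonian-nu}--\eqref{Hamiltonian-eta}. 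In particular $\frac{d}{dt_j}$ here is the \emph{total} derivative along the flow, which must be distinguished from the symbol $\frac{\partial}{\partial t_j}$ of \eqref{Hamilton-eq} (holding the canonical coordinates fixed). It therefore suffices to show that each of the three coefficients vanishes.

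First I would compute the total derivative by the chain rule and then insert Hamilton's equations \eqref{Hamilton-eq}:
\begin{align*}
  \frac{dH_k}{dt_j} &= \frac{\partial H_k}{\partial t_j} + \sum_{a\in\{U,V,W\}}\left(\frac{\partial H_k}{\partial Q_a}\frac{\partial Q_a}{\partial t_j} + \frac{\partial H_k}{\partial P_a}\frac{\partial P_a}{\partial t_j}\right)\\
  &= \frac{\partial H_k}{\partial t_j} + \sum_{a\in\{U,V,W\}}\left(\frac{\partial H_k}{\partial Q_a}\frac{\partial H_j}{\partial P_a} - \frac{\partial H_k}{\partial P_a}\frac{\partial H_j}{\partial Q_a}\right) = \frac{\partial H_k}{\partial t_j} + \{H_k,H_j\}.
\end{align*}
By Theorem~\ref{Hamiltonian-Prop}, in its stronger form (with the integration constants $c_k$ chosen as in the proof above), the Hamiltonians pairwise Poisson-commute, $\{H_k,H_j\}=0$, so on solutions the total derivative collapses to the explicit one, $\frac{dH_k}{dt_j}=\frac{\partial H_k}{\partial t_j}$. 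Consequently the coefficient of $dt_j\wedge dt_k$ equals $\frac{\partial H_k}{\partial t_j}-\frac{\partial H_j}{\partial t_k}$, which also vanishes by Theorem~\ref{Hamiltonian-Prop}, cf.\ \eqref{mixed-der}. Hence ${\bf d}\,\boldsymbol{\omega}_{Okamoto}=0$.

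I do not anticipate any genuine obstacle: this is a short corollary of the Hamiltonian theorem. The only point deserving care is exactly the total-versus-partial distinction, namely recognizing that the term created by differentiating through $(Q_a(t),P_a(t))$ is precisely $\{H_k,H_j\}$ and that it is annihilated by Poisson-commutativity. It is worth noting that the weaker compatibility identity $\{H_k,H_j\}+\frac{\partial H_k}{\partial t_j}-\frac{\partial H_j}{\partial t_k}=0$ would \emph{not} by itself suffice, since it only reduces the coefficient to $-\big(\frac{\partial H_k}{\partial t_j}-\frac{\partial H_j}{\partial t_k}\big)$; it is genuinely the full strength of \eqref{mixed-der} together with $\{H_k,H_j\}=0$ that is used.
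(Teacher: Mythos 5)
Your proposal is correct and follows essentially the same route as the paper: the corollary is an immediate consequence of the stronger identity $\{H_k,H_j\}=\frac{\partial H_k}{\partial t_j}-\frac{\partial H_j}{\partial t_k}=0$ from Theorem \ref{Hamiltonian-Prop}, with your chain-rule computation simply making explicit the total-versus-partial derivative bookkeeping that the paper handles via its footnote and closing remark. Your observation that the weaker compatibility identity alone would not suffice is also accurate.
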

        Thus, we can locally integrate this differential up to a function on the parameter space, $\tau = e^{\int \boldsymbol{\omega}_{Okamoto}}$.
        This observation for similar Painlev\'{e} systems was made by Okamoto in \cite{Okamoto2,Okamoto}, and was used as the \textit{definition} of 
        the $\tau$
        -differential for such equations. This definition of the $\tau$-function is perhaps less familiar to the readership than the usual isomonodromic $\tau$-function defined by Jimbo, Miwa, and Ueno (JMU) \cite{JMU1}. As we shall see in Section \ref{tau-function-section}, the Okamoto definition coincides (up to an overall multiplicative constant) with the JMU isomonodromic $\tau$-function. 
        \begin{remark}
            In Appendix \ref{AppendixA} (see the discussion just above formula \eqref{P-operator}), we set a time labeled $t_4$ to $0$ in the definition of the string equation, as we claimed it could be eliminated
            by a translation $V\to V+ ct_4$. If we had left this time in throughout the above calculations, we would find that the string equation is also
            hamiltonian in the variable $t_4$, with Hamiltonian $H_4$ satisfying
                \begin{equation}
                    H_4 + \frac{5}{3}t_5 H_2 + \frac{4}{3}t_4 H_1 = 0.
                \end{equation}
            In other words, the flow along the direction $t_4$ is not independent of the other flows, although the equations arising from it are still
            integrable.
        \end{remark}        
        As a final remark, one can calculate that the coordinates we use here actually satisfy a stronger condition than \eqref{mixed-der} still: we have that
            \begin{equation}
               \frac{\partial}{\partial t_k}\bigg|_{P,Q = const.}H_j = \frac{\partial H_j}{\partial t_k},
            \end{equation}
        for any $k,j =1,2,5$.
        \end{remark}

\section{The Isomonodromy Approach.}
In this section, we study the isomonodromy approach to the $(3,4)$ string equation. A Riemann-Hilbert formulation of this equation is given, and the various
symmetries of the Riemann-Hilbert problem are studied.

We now want to study the monodromy preserving deformations of the equation
    \begin{equation} \label{Equation-1}
        \frac{\partial \Psi}{\partial \lambda} = L(\lambda;t_{5},t_{2},x)\Psi,
    \end{equation}
where $L(\lambda;t_{5},t_{2},x)$ is given by the expression
\begin{align} 
		&L(\lambda;t_{5},t_{2},x) = 
			\begin{psmallmatrix}
				0 & 0 & 1\\
				0 & 0 & 0\\
				0 & 0 & 0
			\end{psmallmatrix}\lambda^2 + 
			\begin{psmallmatrix}
				0 & 2t_{5} + \frac{1}{4}Q_U & -Q_V\\
				1 & 0 & 2t_{5} + \frac{1}{4}Q_U\\
				0 & 1 & 0
			\end{psmallmatrix}\lambda \\
			&+\begin{psmallmatrix}
				\frac{1}{8}Q_U^2 - P_W + \frac{1}{2}P_V-\frac{1}{4}t_{5} Q_U - \frac{1}{6}t_{5}^2 & L_{12} & L_{13}\\
                \frac{1}{2}Q_V - \frac{1}{4}Q_W & 2P_W - \frac{1}{4}Q_U^2 + \frac{1}{2}t_{5} Q_U +\frac{1}{3}t_{5}^2 & L_{23}\\
				t_{5}-\frac{1}{2}Q_U & \frac{1}{2}Q_V + \frac{1}{4}Q_W &  \frac{1}{8}Q_U^2 - P_W - \frac{1}{2}P_V-\frac{1}{4}t_{5} Q_U - \frac{1}{6}t_{5}^2
			\end{psmallmatrix} ,\nonumber
	\end{align}
and
    \begin{align*}
        L_{12} &:= \frac{5}{16}Q_U Q_W - P_U +\frac{1}{4}t_{5} Q_W -\frac{3}{8}Q_UQ_V-\frac{1}{2}t_{5} Q_V + t_{2},\\
        L_{13} &:= \frac{1}{16}Q_W^2 + \frac{7}{32}Q_U^3+\frac{3}{4}Q_V^2-\frac{3}{2}P_WQ_U+\frac{5}{16}t_{5} Q_U^2-2t_{5} P_W+\frac{1}{4}t_{5}^2Q_U + x+ \frac{8}{27}t_{5}^3,\\
        L_{23} &:= -\frac{5}{16}Q_U Q_W + P_U -\frac{1}{4}t_{5} Q_W -\frac{3}{8}Q_UQ_V-\frac{1}{2}t_{5} Q_V + t_{2}.
    \end{align*}
(Note that this expression for $L$ coincides with the definition of $\mathcal{P}$ in Appendix \ref{AppendixA}, with the definitions \eqref{Darboux-Q}, \eqref{Darboux-P} taken into account). 
\begin{remark} \textit{A formula for the spectral curve.}
    Similarly to the works \cite{BHH,BM}, one can recover the Hamiltonians of the previous section from the spectral curve. The spectral curve corresponding to $L(\lambda;t_5,t_2,x)$ admits an explicit representation in terms of these Hamiltonians:
    \begin{equation}
        0=\det(w \mathbb{I} - L(\lambda;t_5,t_2,x)) = w^3 - \left[5t_5\lambda^2+2t_2\lambda +\frac{1}{2}H_1 + \frac{5}{3}t_5x\right]w - \ell_0(\lambda;t_5,t_2,x),
    \end{equation}
where $\ell_0$ is the degree $4$ polynomial
    \begin{equation}
        \ell_0(\lambda;t_5,t_2,x) = \lambda^4 + \left(\frac{125}{27}t_5^3+x\right)\lambda^2 + \left(\frac{1}{2}H_2+\frac{50}{9}t_5^2t_2\right)\lambda
        + \frac{1}{2}H_5 + \frac{25}{18}t_5^2H_1 + \frac{20}{9}t_5t_2^2+\frac{1}{3}x^2.
    \end{equation}
\end{remark}

However, as an ODE, Equation \eqref{Equation-1} is ``defective''; the leading coefficient at the only singularity of $L$ ($\lambda = \infty$) is not diagonalizable. Thus, the usual technology used for linear differential equations with rational coefficients \cite{Wasow,JMU1,FIKN} does not directly apply. This 
situation is reminiscent of the situation for the so called ``Fuchs-Garnier'' Lax pair for Painlev\'{e} I (see C2 of \cite{JMU2}). The resolution in the case of Painlev\'{e} I, discovered in \cite{JMU2}, is to make an appropriate gauge transformation which (after a change of variables $\lambda = \xi^2$) diagonalizes the leading term at infinity, at the price of introducing a resonant Fuchsian singularity at the origin (see C5 of \cite{JMU2}). 

The first goal of this section is to try and find an analogous transformation for the equation \eqref{Equation-1}.  We have the following Proposition:
\begin{prop}
    Define the matrix 
    \begin{equation}\label{gauge-matrix}
        g(\lambda) = 
        \frac{i}{\sqrt{3}}\underbrace{\begin{pmatrix}
            \lambda^{1/3} & 0 & 0\\
            0 & 1 & 0\\
            0 & 0 & \lambda^{-1/3}
        \end{pmatrix}}_{\lambda^{\Delta/3}}
        \underbrace{\begin{pmatrix}
            1 & \omega & \omega^2\\
            1 & 1 & 1\\
            1 & \omega^2 & \omega
        \end{pmatrix}}_{-i\sqrt{3}\mathcal{U}},
    \end{equation}
    and set $\Psi := g \Phi$ (note that $\det g(\lambda) = 1$, $\Delta = \text{diag }(1, 0, -1)$, and that $\mathcal{U}^{\dagger}\mathcal{U} = \mathcal{U}\mathcal{U}^{\dagger} = \mathbb{I}$). Then, if $\Psi$ satisfies the ODE \eqref{Equation-1}, after the change of variables $\lambda = \xi^3$, the function $\Phi := \Phi(\xi;t_{5},t_{2},x)$ satisfies the ODE
        \begin{equation} \label{Phi-ODE}
            \frac{\partial \Phi}{\partial \xi} = \mathcal{L}(\xi;t_{5},t_{2},x)\Phi,
        \end{equation}
    where 
        \begin{equation} \label{L-zeta}
            \mathcal{L}(\xi) = 3
            \begin{pmatrix}
                1 & 0 & 0\\
                0 & \omega & 0\\
                0 & 0 & \omega^2
            \end{pmatrix} \xi^6 + \sum_{k=0}^4 \mathcal{L}_k \xi^k + 
            \frac{ \frac{i}{\sqrt{3}}\begin{psmallmatrix}
                0 & -1 & 1\\
                1 & 0 & -1\\
                -1 & 1 & 0
            \end{psmallmatrix}}{\xi}.
        \end{equation}
\end{prop}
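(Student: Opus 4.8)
The plan is to compute $\mathcal{L}(\xi)$ directly, using the factorization $g(\lambda)=\lambda^{\Delta/3}\mathcal{U}$ — the scalar factors $\tfrac{i}{\sqrt{3}}$ and $-i\sqrt{3}$ displayed in \eqref{gauge-matrix} multiply to $1$ — where $\mathcal{U}=\tfrac{i}{\sqrt{3}}\begin{psmallmatrix}1&\omega&\omega^2\\1&1&1\\1&\omega^2&\omega\end{psmallmatrix}$ is a \emph{constant} matrix with $\mathcal{U}^\dagger\mathcal{U}=\mathbb{I}$, and $\Delta=\text{diag}(1,0,-1)$. Since $\Psi=g\Phi$ solves \eqref{Equation-1}, $\Phi$ solves $\Phi_\lambda=(g^{-1}Lg-g^{-1}g_\lambda)\Phi$, and substituting $\lambda=\xi^3$ (so that $\partial_\xi=3\xi^2\partial_\lambda$) gives
\[
\mathcal{L}(\xi)=\Big(3\xi^2\,\mathcal{U}^\dagger\lambda^{-\Delta/3}L(\lambda)\lambda^{\Delta/3}\mathcal{U}\Big)\Big|_{\lambda=\xi^3}\;-\;3\xi^2\,g^{-1}g_\lambda .
\]
So the Proposition amounts to two checks: (i) the second term produces exactly the $\xi^{-1}$ pole term of \eqref{L-zeta}, and (ii) the first term is a polynomial in $\xi$ of degree $6$, with vanishing $\xi^5$-coefficient and leading coefficient $3\,\text{diag}(1,\omega,\omega^2)$.

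For (i), since $\Delta$ is diagonal it commutes with $\lambda^{\Delta/3}$, so $g^{-1}g_\lambda=\tfrac{1}{3\lambda}\mathcal{U}^\dagger\Delta\mathcal{U}$ and hence $-3\xi^2g^{-1}g_\lambda=-\xi^{-1}\mathcal{U}^\dagger\Delta\mathcal{U}$. Evaluating $\mathcal{U}^\dagger\Delta\mathcal{U}$ directly — the only arithmetic needed is $\omega-\omega^2=i\sqrt{3}$ — gives $\tfrac{i}{\sqrt{3}}\begin{psmallmatrix}0&1&-1\\-1&0&1\\1&-1&0\end{psmallmatrix}$, whose negative is precisely the residue displayed in \eqref{L-zeta}. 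As a consistency check, this matrix is traceless with eigenvalues $1,0,-1$, so $\xi=0$ is a resonant Fuchsian point, as anticipated in the discussion preceding the Proposition.

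For (ii), write $L(\lambda)=L_2\lambda^2+L_1\lambda+L_0$ with $L_2=E_{13}$, and note that $L_1$ has zero diagonal and $(L_1)_{21}=(L_1)_{32}=1$. Conjugation by $\lambda^{\Delta/3}$ multiplies the $(a,b)$-entry of a matrix by $\lambda^{(\Delta_b-\Delta_a)/3}$, which under $\lambda=\xi^3$ is the \emph{integer} power $\xi^{\Delta_b-\Delta_a}$ with $\Delta_b-\Delta_a\in\{-2,\dots,2\}$; and subsequently conjugating by the constant $\mathcal{U}$ mixes matrix entries but not powers of $\xi$. Hence every entry of the first term above is a linear combination of monomials $\xi^{N}$ with $N=3m+(\Delta_b-\Delta_a)+2$, over $m\in\{0,1,2\}$ and over $(a,b)$ with $(L_m)_{ab}\neq0$. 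All such $N$ lie in $\{0,\dots,6\}$: they are $\geq0$ (so no pole survives from this term, and the pole of \eqref{L-zeta} is entirely accounted for in step (i)), while $N=5$ is impossible, since $3m+(\Delta_b-\Delta_a)=3$ would force either $\Delta_b-\Delta_a=3$ (excluded) or a nonzero diagonal entry of $L_1$ (of which there are none). Finally, the $\xi^6$-coefficient picks up only $(L_2)_{13}$ and $(L_1)_{21},(L_1)_{32}$, i.e.\ the matrix $3(E_{13}+E_{21}+E_{32})=3C$ with $C$ the cyclic shift; a direct $3\times3$ computation (the Fourier matrix diagonalizes $C$) gives $\mathcal{U}^\dagger C\mathcal{U}=\text{diag}(1,\omega,\omega^2)$, which is the asserted leading term.

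Combining (i) and (ii) gives the form \eqref{L-zeta}; the coefficient matrices $\mathcal{L}_0,\dots,\mathcal{L}_4$ are then read off by collecting, for each $k\le4$, the monomials with $N=k$ coming from $L_0,L_1,L_2$ and conjugating by $\mathcal{U}$. I do not expect a conceptual obstacle: the structural content of the Proposition — diagonal $\xi^6$ leading term, no $\xi^5$ term, polynomial part of degree at most $4$, and a single resonant pole at $\xi=0$ with the stated residue — is exactly the degree bookkeeping above, the one mildly delicate point being the vanishing of the $\xi^5$-coefficient, which is where the zero diagonal of $L_1$ enters. The only genuinely laborious part is writing the $\mathcal{L}_k$ explicitly in terms of $Q_U,Q_V,Q_W,P_U,P_V,P_W,t_5,t_2,x$, a mechanical substitution that I would carry out and cross-check with the accompanying symbolic worksheet.
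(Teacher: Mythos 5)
Your proposal is correct and follows the same route as the paper: the paper's proof is exactly the direct computation $\mathcal{L}(\xi)=3\xi^2\tilde{L}(\xi^3)$ with $\tilde{L}=g^{-1}Lg-g^{-1}\frac{dg}{d\lambda}$, which is what you carry out. Your residue computation $-\mathcal{U}^{\dagger}\Delta\,\mathcal{U}$, the degree bookkeeping ruling out $\xi^5$ and any pole from $g^{-1}Lg$, and the identity $\mathcal{U}^{\dagger}(E_{13}+E_{21}+E_{32})\mathcal{U}=\mathrm{diag}(1,\omega,\omega^2)$ all check out, merely making explicit what the paper leaves as ``a direct calculation.''
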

\begin{proof}
    The proof is a direct calculation. Explicitly, one has that $\mathcal{L}(\xi) = 3\xi^2 \tilde{L}(\xi^3)$, 
    where 
        \begin{equation*}
            \tilde{L}(\lambda) = \left[g^{-1}L(\lambda)g - g^{-1}\frac{dg}{d\lambda}\right].
        \end{equation*}
\end{proof}
\begin{remark}\label{gauge-remark}
    Although the proof of the above proposition is straightforward, some remarks are in order.
        \begin{enumerate}
            \item Note that the matrix $\mathcal{U}$ conjugates $\mathcal{L}$ from the outside; if we had instead simply defined the gauge transformation simply by
            $g(\lambda) := \lambda^{\Delta/3}$, and subsequently made the change of variables $\lambda=\xi^3$, the effect we set out for (making the leading coefficient at infinity diagonalizable) would still be achieved. In other words, after an appropriate change of variables, $\lambda^{\Delta/3}$ makes the leading coefficient of $\mathcal{L}$ at infinity \textit{diagonalizable}, and $\mathcal{U}$ makes this coefficient \textit{diagonal}.
            \item The choice of such $g$ is not unique; one may also additionally multiply $g(\lambda)$ on the right by upper triangular matrix $\mathfrak{h}({\bf t})$ with $1$'s on the diagonal, and obtain an operator $\mathcal{L}$ with the same form. This gauge freedom will become important later.
            \item Note the appearance of a resonant singularity at the origin. The residue at $0$ of $\mathcal{L}$ has eigenvalues $\pm 1$, $0$. Thus, there is no 
            monodromy around this singularity: the solution will have a first order pole at zero. The form of the solution near $\xi = 0$ is
                \begin{equation} \label{phi-origin-solution}
                    \Phi(\xi) = \left[\mathcal{U}^{-1}+\OO(\xi)\right]\xi^{-\Delta}=\frac{\mathcal{U}^{-1} E_{11}}{\xi} + \OO(1),\qquad \xi\to 0,
                \end{equation}
        \end{enumerate}
    \end{remark}

We now return to the analysis of the equation \eqref{Phi-ODE}. Since $\mathcal{L}$ has diagonal leading coefficient at infinity, a standard theorem in the
theory of linear ODEs with rational coefficients states that
    \begin{prop}
        The ODE \eqref{Phi-ODE} admits the formal series solution at $\xi = \infty$
            \begin{equation} \label{Phi-expansion}
                \Phi(\xi) = \left[\mathbb{I} + \frac{\Phi_1}{\xi} + \frac{\Phi_2}{\xi^2} + \OO(\xi^{-3})\right] e^{\Theta(\xi;t_{5},t_{2},x)},
            \end{equation}
        where $\Theta(\xi;t_{5},t_{2},x) = \text{diag }(\vartheta_1(\xi;t_{5},t_{2},x),\vartheta_2(\xi;t_{5},t_{2},x),\vartheta_3(\xi;t_{5},t_{2},x))$, and
            \begin{equation} \label{xi-exponents}
                \vartheta_j(\xi;t_{5},t_{2},x) = \frac{3}{7}\omega^{j-1}\xi^7 + \omega^{1-j}t_{5}\xi^5 + \omega^{1-j}t_{2}\xi^2 + \omega^{j-1}x\xi,
            \end{equation}
        $j=1,2,3$.
    \end{prop}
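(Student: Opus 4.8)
The plan is to invoke the classical existence theorem for formal fundamental solutions at an irregular singular point (see \cite{Wasow}, or \S2 of \cite{JMU1}), whose hypothesis holds here because the leading coefficient of $\mathcal{L}$ at $\xi=\infty$, which I will write as $\mathcal{L}_6:=3\,\mathrm{diag}(1,\omega,\omega^2)$, is diagonal with pairwise distinct diagonal entries. Concretely, I would substitute the ansatz $\Phi(\xi)=\hat Y(\xi)\,e^{\Theta(\xi)}$, with $\hat Y(\xi)=\mathbb{I}+\sum_{k\geq 1}\Phi_k\,\xi^{-k}$ a formal power series and $\Theta(\xi)=\mathrm{diag}(\vartheta_1,\vartheta_2,\vartheta_3)$ a diagonal matrix polynomial normalized by $\Theta(0)=0$, into \eqref{Phi-ODE}. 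Since diagonal matrices commute, this produces the formal identity $\partial_\xi\hat Y=\mathcal{L}\hat Y-\hat Y\,\partial_\xi\Theta$, which I would then expand in descending powers of $\xi$ and solve order by order.

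The mechanism is the usual one: at each order the off-diagonal part of the equation is solved for the off-diagonal entries of the relevant $\Phi_k$ by inverting $\ad_{\mathcal{L}_6}$ (invertible on off-diagonal matrices precisely because $1,\omega,\omega^2$ are distinct), while the diagonal part of the equation fixes the matching coefficient of $\partial_\xi\Theta$; the diagonal parts of the $\Phi_k$ and all coefficients of $\Theta$ are then pinned down uniquely by the lower-order equations together with the normalizations $\hat Y=\mathbb{I}+\OO(\xi^{-1})$ and $\Theta(0)=0$ (here one also uses that the $\vartheta_j$ have pairwise distinct leading terms $\omega^{j-1}\xi^7$, so no residual constant-diagonal ambiguity survives). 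This yields the whole formal series, in particular the claimed $\Phi_1$ and $\Phi_2$. To exhibit $\Theta$ explicitly, I would carry the recursion far enough to read off every coefficient of $\partial_\xi\Theta$: the top order gives $(\partial_\xi\Theta)_6=\mathcal{L}_6$, i.e. the $\tfrac{3}{7}\omega^{j-1}\xi^7$ terms, and the remaining coefficients follow from the explicit $\mathcal{L}_k$ — obtained from $\mathcal{L}(\xi)=3\xi^2\tilde L(\xi^3)$ with $\tilde L=g^{-1}Lg-g^{-1}\tfrac{dg}{d\lambda}$ — after checking that the $\xi^3$ and $\xi^2$ contributions to $\partial_\xi\Theta$ cancel (the $\xi^5$ coefficient vanishes already since $\mathcal{L}$ carries no $\xi^5$ term). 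This should produce $\partial_\xi\Theta=3\,\mathrm{diag}(1,\omega,\omega^2)\,\xi^6+5t_5\,\mathrm{diag}(1,\omega^{-1},\omega^{-2})\,\xi^4+2t_2\,\mathrm{diag}(1,\omega^{-1},\omega^{-2})\,\xi+x\,\mathrm{diag}(1,\omega,\omega^2)$, and integrating with $\Theta(0)=0$ recovers \eqref{xi-exponents}. As an independent check, the polynomial part of $\Theta$ equals the antiderivative of the formal eigenvalues of $\mathcal{L}$, and to the relevant order these coincide with the roots $w$ of the spectral curve of $L$ computed in the Remark above; substituting $\lambda=\xi^3$ and using $w=\omega^{j-1}\lambda^{4/3}+\tfrac{5}{3}t_5\,\omega^{1-j}\lambda^{2/3}+\cdots$ in $\vartheta_j=\int w\,d\lambda$ reproduces the stated expansion term by term.

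The one genuinely delicate point — and hence what I expect to be the main obstacle, though it is still a finite computation — is verifying that no formal monodromy is generated at $\xi=\infty$: that the diagonalized connection has vanishing $\xi^{-1}$ diagonal coefficient, so the formal solution really has the shape $[\mathbb{I}+\OO(\xi^{-1})]e^{\Theta}$ in \eqref{Phi-expansion}, with no accompanying factor $\xi^{L}$ and no logarithm. This is consistent with, and I would argue is essentially forced by, the fact that the residue of $\mathcal{L}$ at $\xi=0$ is traceless with vanishing diagonal part; one then checks directly that the correction to the $\xi^{-1}$ coefficient induced by the diagonalization also vanishes. Everything else is routine bookkeeping in the order-by-order matching, which the accompanying Maple worksheets can carry out.
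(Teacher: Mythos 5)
Your proposal is correct and follows essentially the same route as the paper: the paper likewise invokes the standard formal-solution theory of \cite{Wasow,FIKN,JMU1} for the diagonal, distinct-eigenvalue leading coefficient of $\mathcal{L}$ at $\xi=\infty$, and identifies the exponents $\vartheta_j$ as (antiderivatives of) the principal parts of the eigenvalues of $\mathcal{L}$, which is exactly your spectral-curve cross-check. Your additional order-by-order bookkeeping and the verification that no formal monodromy factor $\xi^{D}$ survives are consistent with (and implicit in) the paper's citation of the standard theory.
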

\begin{proof}
    We refer to \cite{Wasow,FIKN,JMU1} for the details. The exact form the exponential part of the asymptotics can be inferred by considering the 
    eigenvalues of the matrix $\mathcal{L}(\xi;t_{5},t_{2},x)$; indeed, one can readily check that the expressions $\vartheta_j(\xi;t_{5},t_{2},x)$ are the
    principal part of the eigenvalues of $\mathcal{L}(\xi;t_{5},t_{2},x)$ at $\xi = \infty$.
\end{proof}
\begin{remark}
    The previous proposition implies in turn that (by ``undoing'' the gauge transformation) that $\Psi$ admits the formal expansion 
    \begin{equation}
        \Psi(\lambda) = g(\lambda)\left[\mathbb{I} + \frac{\Phi_1}{\lambda^{1/3}} + \frac{\Phi_2}{\lambda^{2/3}} + \OO(\lambda^{-3})\right] e^{\Theta(\lambda^{1/3};t_{5},t_{2},x)}.
    \end{equation}
Note that the coefficients in the subexponential part of the expansion agree with the corresponding coefficients of $\Phi$.
These asymptotics are precisely what appear in the local parametrices of the critical quartic $2$-matrix model. Thus, we can use results about $\Phi$ to
construct our model Riemann-Hilbert problem.
\end{remark}

The explicit form of the coefficient matrices $\mathcal{L}_k$ is not so important at this stage; the only immediately relevant information is the form of the 
formal asymptotic expansion for $\Phi(\xi)$, as in Equation \eqref{Phi-expansion}.

\begin{remark}
    For completeness, we record the form of the ``regularized'' spectral curve here:
        \begin{equation}
            0=\det\left[w\mathbb{I} - \mathcal{L}(\xi;t_5,t_2,x)\right] = w^3 - \left[45t_5\xi^{10} + 18t_2\xi^7 + \left(\frac{9}{2}H_1+15t_5t_1\right)\xi^4 - 3\frac{\partial H_1}{\partial P_V}\xi + \frac{1}{\xi^2}\right]w - \tilde{\ell}_0,
        \end{equation}
    where $\tilde{\ell}_0 = \tilde{\ell}_0(\lambda;t_5,t_2,t_1)$ is
        \begin{align}
            \tilde{\ell}_0(\lambda;t_5,t_2,t_1) &= 27\xi^{18} + (125t_5^3+27x)\xi^{12} + \left(\frac{27}{2}H_2+150t_5^2t_2\right)\xi^9
            + \bigg(\frac{27}{2}H_5 + \frac{75}{2}t_5^2H_1 - 9\frac{\partial H_2}{\partial P_V} \nonumber\\
            &+ 60t_5t_2^2 + 9x^2\bigg)\xi^6 - \frac{\partial}{\partial P_V}\left(9H_5 + 25t_5^2H_1\right)\xi^3 - 6P_W + \frac{3}{4}Q_U^2-\frac{3}{2}t_5Q_{U}-t_5^2.
        \end{align}
\end{remark}

Before proceeding to the construction of an appropriate Riemann-Hilbert problem for $\Phi$, we first study some of the symmetries of the equation.

\subsection{Symmetry of $\Phi(\xi)$.} \label{Phi-symmetry-section}
\begin{prop}\label{N-sym}
    Define the matrix
    \begin{equation}
        \mathcal{S} := 
            \begin{psmallmatrix}
                0 & 1 & 0\\
                0 & 0 & 1\\
                1 & 0 & 0
            \end{psmallmatrix}.
    \end{equation}
Then, $\mathcal{L}(\xi;t_{5},t_{2},x)$ satisfies the symmetry condition
    \begin{equation}
        \mathcal{L}(\xi;t_{5},t_{2},x) = \omega \mathcal{S}^T \mathcal{L}(\omega\xi;t_{5},t_{2},x)\mathcal{S}.
    \end{equation}
\end{prop}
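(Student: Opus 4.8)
The plan is to verify the claimed relation $\mathcal{L}(\xi) = \omega\,\mathcal{S}^T\mathcal{L}(\omega\xi)\,\mathcal{S}$ by a direct term-by-term comparison, using the structure of $\mathcal{L}$ as displayed in \eqref{L-zeta}. Since $\mathcal{S}$ is the cyclic permutation matrix, conjugation $X\mapsto \mathcal{S}^T X\mathcal{S}$ cyclically permutes both the rows and the columns of $X$; concretely it sends the diagonal matrix $D=\mathrm{diag}(d_1,d_2,d_3)$ to $\mathrm{diag}(d_2,d_3,d_1)$. So the first step is to check the three ``building block'' pieces of \eqref{L-zeta} separately.

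First I would treat the leading term. The coefficient of $\xi^6$ in $\mathcal{L}(\xi)$ is $3\,\mathrm{diag}(1,\omega,\omega^2)$; replacing $\xi$ by $\omega\xi$ multiplies it by $\omega^6=1$, and applying $\mathcal{S}^T(\cdot)\mathcal{S}$ cyclically permutes the diagonal to $3\,\mathrm{diag}(\omega,\omega^2,1)=3\omega\,\mathrm{diag}(1,\omega,\omega^2)$; multiplying by the overall $\omega$ on the right-hand side gives $3\omega^2\,\mathrm{diag}(1,\omega,\omega^2)$ --- wait, this forces the bookkeeping to be done carefully, since one must also track that $\xi^6\mapsto\omega^6\xi^6$ contributes a factor and the prefactor $\omega$ must absorb it correctly; I expect everything to reconcile because $\omega^3=1$, but this is exactly the kind of check that must be carried out cleanly. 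Next, the residue term at $\xi=0$: the nilpotent matrix $\tfrac{i}{\sqrt3}\begin{psmallmatrix}0&-1&1\\1&0&-1\\-1&1&0\end{psmallmatrix}$ divided by $\xi$ picks up a factor $\omega^{-1}$ from $\xi\mapsto\omega\xi$, and one checks that $\mathcal{S}^T(\cdot)\mathcal{S}$ fixes this particular antisymmetric matrix (its entries are constant along the cyclic orbits), so together with the prefactor $\omega$ the pole term is reproduced exactly.

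The substantive part is the middle sum $\sum_{k=0}^4 \mathcal{L}_k\xi^k$. Here I would not compute the $\mathcal{L}_k$ explicitly; instead I would argue structurally. By construction $\mathcal{L}(\xi)=3\xi^2\tilde L(\xi^3)$ with $\tilde L(\lambda)=g^{-1}L(\lambda)g - g^{-1}g'$, and $L(\lambda)$ is polynomial in $\lambda$. The cleanest route is to establish the symmetry at the level of $L(\lambda)$ and $g(\lambda)$, then transport it through the gauge transformation and the substitution $\lambda=\xi^3$. One checks that $g(\omega\xi^{\,\cdot})$ relates to $g$ via $\mathcal{S}$ and a diagonal factor coming from $\lambda^{\Delta/3}$ under $\lambda\mapsto\omega^3\lambda$ versus $\xi\mapsto\omega\xi$; the Fourier-like matrix $\mathcal{U}$ with entries $\omega^{(i-1)(j-1)}$ intertwines the cyclic shift $\mathcal{S}$ with the diagonal $\mathrm{diag}(1,\omega,\omega^2)$, which is precisely the source of the symmetry. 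The main obstacle will be keeping the various powers of $\omega$ consistent across three different gradings simultaneously --- the explicit $\xi$-degree, the implicit scaling hidden in $g(\lambda)=\lambda^{\Delta/3}(-i\sqrt3\,\mathcal{U})^{-1}\cdot$const, and the $-g^{-1}g'$ correction term, which transforms with an extra shift because $d(\omega\xi)=\omega\,d\xi$. Once the intertwining identity $\mathcal{U}^{-1}\mathrm{diag}(1,\omega,\omega^2)\,\mathcal{U}=\mathcal{S}$ (or its transpose analogue) is isolated, the rest is a routine verification, which in the paper is presumably deferred to the accompanying Maple worksheet.
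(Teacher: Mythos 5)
Your overall route is the same as the paper's: the paper proves the identity in one stroke from the relation $g(\omega\xi)=g(\xi)\mathcal{S}^T$ (which is exactly the intertwining of the cyclic shift with $\mathrm{diag}(\omega,1,\omega^2)$ by the Fourier-type matrix that you point to), conjugating $\tilde L = g^{-1}Lg - g^{-1}g_\lambda$ and noting that the prefactor $3(\omega\xi)^2$ supplies the factor $\omega^2$ that cancels against the overall $\omega$. However, as written your proposal has concrete gaps. First, your stated action of the conjugation is in the wrong direction: $\mathcal{S}^T\,\mathrm{diag}(d_1,d_2,d_3)\,\mathcal{S} = \mathrm{diag}(d_3,d_1,d_2)$, not $\mathrm{diag}(d_2,d_3,d_1)$. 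With the correct direction the leading term works out cleanly ($\omega\cdot\omega^2=1$); with your direction it does not, which is precisely the unresolved ``wait'' in your check --- and since an appeal to ``$\omega^3=1$ will fix it'' is exactly what distinguishes the true statement from a false one, this must be settled, not postponed.

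Second, the heart of the argument --- the identity $g(\omega\xi)=g(\xi)\mathcal{S}^T$, equivalently $\mathrm{diag}(\omega,1,\omega^2)\,(-i\sqrt3\,\mathcal{U}) = (-i\sqrt3\,\mathcal{U})\,\mathcal{S}^T$ --- is never stated precisely or verified, and the transport through the gauge transformation is deferred as ``routine/Maple.'' In fact it is a three-line computation, not a symbolic one: since $(\omega\xi)^3=\xi^3$, both $L(\xi^3)$ and $\frac{dg}{d\lambda}$ are evaluated at the \emph{same} $\lambda$, and differentiating $g(\omega\xi)=g(\xi)\mathcal{S}^T$ in $\lambda$ shows the correction term $-g^{-1}g_\lambda$ conjugates exactly like the main term; there is no ``extra shift'' from $d(\omega\xi)=\omega\,d\xi$ as you suggest. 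Once this is done, the entire matrix $\mathcal{L}$ inherits the symmetry at once, so your separate term-by-term checks of the $\xi^6$ coefficient and the residue at $\xi=0$ are redundant (though the residue check you did perform is correct).
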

\begin{proof}
    The proof of this fact follows almost immediately from the fact that $g(\omega\xi(\lambda)) = g(\xi(\lambda)) \mathcal{S}^T$. By the definition of $\mathcal{L}(\xi)$, we have that:
        \begin{align*}
            \omega \mathcal{L}(\omega\xi) &= 3\xi^2 \left[g^{-1}(\omega\xi)L(\xi^3)g(\omega\xi) - g^{-1}(\omega\xi) \frac{dg}{d\lambda}(\omega\xi)\right]\\
                &= 3\xi^2 \mathcal{S}\left[g^{-1}(\xi)L(\xi^3)g(\xi) - g^{-1}(\xi) \frac{dg}{d\lambda}(\xi)\right]\mathcal{S}^T\\
                &= \mathcal{S}\mathcal{L}(\xi;t_{5},t_{2},x) \mathcal{S}^{T};
        \end{align*}
    multiplication on the left by $\mathcal{S}^T$ and the right by $\mathcal{S}$ yields the result.
\end{proof}
As an immediate corollary,
\begin{cor}
    The formal expansion $\Phi(\xi;t_{5},t_{2},x)$ satisfies the symmetry condition
    \begin{equation}
        \Phi(\xi;t_{5},t_{2},x) = \mathcal{S}^T\Phi(\omega\xi;t_{5},t_{2},x)\mathcal{S}.
    \end{equation}
    Furthermore, the coefficients $\Phi_k$ of any asymptotic solution satisfy
    \begin{equation} \label{coeff-symmetry}
        \Phi_k = \omega^{-k} \mathcal{S}^T \Phi_k \mathcal{S}.
    \end{equation}
\end{cor}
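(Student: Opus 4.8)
The plan is to combine the symmetry of the connection matrix from Proposition \ref{N-sym} with the uniqueness of the normalized formal solution \eqref{Phi-expansion}. First I would introduce the candidate $\tilde{\Phi}(\xi) := \mathcal{S}^T\Phi(\omega\xi;t_5,t_2,x)\,\mathcal{S}$ and check that it again solves \eqref{Phi-ODE}. Differentiating in $\xi$ and inserting $\mathcal{S}\mathcal{S}^T = \mathbb{I}$ (recall $\mathcal{S}$ is a permutation matrix), one gets $\partial_\xi\tilde{\Phi} = \omega\,\mathcal{S}^T\mathcal{L}(\omega\xi)\,\mathcal{S}\cdot\mathcal{S}^T\Phi(\omega\xi)\,\mathcal{S}$, and the first factor equals $\mathcal{L}(\xi)$ by Proposition \ref{N-sym}; hence $\partial_\xi\tilde{\Phi} = \mathcal{L}(\xi)\tilde{\Phi}$.

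Next I would verify that $\tilde{\Phi}$ carries precisely the normalization of \eqref{Phi-expansion}, i.e. $\tilde{\Phi}(\xi) = [\mathbb{I} + \OO(\xi^{-1})]e^{\Theta(\xi)}$. The one point that takes a short computation is the interaction of the cyclic permutation $\mathcal{S}$ with the exponent matrix: from \eqref{xi-exponents} and $\omega^3 = 1$ one finds $\vartheta_j(\omega\xi) = \vartheta_{j+1}(\xi)$ with indices read mod $3$, which is exactly the identity $\mathcal{S}^T\Theta(\omega\xi)\,\mathcal{S} = \Theta(\xi)$. Using this and pushing $\mathcal{S}^T(\cdot)\mathcal{S}$ through the bracket in \eqref{Phi-expansion}, one obtains $\tilde{\Phi}(\xi) = [\mathbb{I} + \omega^{-1}\mathcal{S}^T\Phi_1\mathcal{S}\,\xi^{-1} + \omega^{-2}\mathcal{S}^T\Phi_2\mathcal{S}\,\xi^{-2} + \OO(\xi^{-3})]e^{\Theta(\xi)}$, which is of the required form.

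Finally I would invoke uniqueness of the normalized formal solution. Since the leading coefficient of $\mathcal{L}$ at $\xi = \infty$ is $3\,\mathrm{diag}(1,\omega,\omega^2)$, diagonal with distinct entries, and $\Theta$ is a diagonal polynomial with no logarithmic term, any two solutions of the form \eqref{Phi-expansion} differ on the right by a constant matrix; the absence of exponential growth forces that matrix to be diagonal, and matching the leading coefficient $\mathbb{I}$ forces it to be the identity. Hence $\tilde{\Phi} = \Phi$, which is the asserted functional equation $\Phi(\xi;t_5,t_2,x) = \mathcal{S}^T\Phi(\omega\xi;t_5,t_2,x)\mathcal{S}$. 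Equating the coefficient of $\xi^{-k}$ on the two sides, using the normalized expansion from the previous step, then gives $\Phi_k = \omega^{-k}\mathcal{S}^T\Phi_k\mathcal{S}$. The main ``obstacle'' is purely bookkeeping: getting the permutation action on $\Theta$ right and recording the uniqueness argument; once these are in place the rest is immediate.
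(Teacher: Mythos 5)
Your proof is correct and follows exactly the route the paper intends: the paper treats the corollary as immediate from Proposition \ref{N-sym}, namely that $\mathcal{S}^T\Phi(\omega\xi)\mathcal{S}$ solves \eqref{Phi-ODE} again, combined with the identity $\mathcal{S}^T\Theta(\omega\xi)\mathcal{S}=\Theta(\xi)$ and uniqueness of the normalized formal solution, which is precisely what you spelled out. The coefficient relation $\Phi_k=\omega^{-k}\mathcal{S}^T\Phi_k\mathcal{S}$ then follows by comparing powers of $\xi^{-1}$, as you did.
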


\subsection{Proof of Proposition \eqref{PropA}.}
In this subsection, we prove a version of Proposition \eqref{PropA}. What is contained here is the direct analog of Proposition 5.6, 5.7 and Theorem 5.3 of \cite{FIKN} for the Painlev\'{e} I system. Before formulating the Proposition, we state a technical lemma:
    \begin{lemma}\label{xi-comparison-lemma}
        Consider the functions $\vartheta_j(\xi) = \vartheta_j(\xi;t_{5},t_{2},x)$ defined by Equation \eqref{xi-exponents}, and fix $\epsilon>0$. For any $t_{5},t_{2},x$ in some fixed compact set $K \subset \CC^3$, there exists a constant $M = M_K$ such that, for all $|\xi| > M_K$, and for any $\ell \in \ZZ$,
            \begin{align}
                \text{Re } \vartheta_1(\xi) &< \text{Re } \vartheta_2(\xi), \qquad \qquad \frac{\pi}{21}(6\ell + 2)+\epsilon < \arg\xi < \frac{\pi}{21}(6\ell + 5)-\epsilon,\\
                \text{Re } \vartheta_2(\xi) &< \text{Re } \vartheta_3(\xi), \qquad \qquad \frac{\pi}{21}(6\ell)+\epsilon < \arg\xi < \frac{\pi}{21}(6\ell + 3)-\epsilon,\\
                \text{Re } \vartheta_1(\xi) &< \text{Re } \vartheta_3(\xi), \qquad \qquad \frac{\pi}{21}(6\ell+1)+\epsilon < \arg \xi < \frac{\pi}{21}(6\ell+4)-\epsilon.
            \end{align}
    \end{lemma}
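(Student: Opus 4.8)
The plan is to reduce each of the three inequalities to a statement about the sign of the real part of a single monomial, the $\xi^{7}$-term of the relevant difference $\vartheta_i(\xi)-\vartheta_j(\xi)$, and to use the $\epsilon$-margin to make the subleading $\xi^{5},\xi^{2},\xi$ contributions negligible for $|\xi|$ large. First I would expand, using \eqref{xi-exponents},
\[
\vartheta_i(\xi)-\vartheta_j(\xi)=\tfrac{3}{7}\bigl(\omega^{i-1}-\omega^{j-1}\bigr)\xi^{7}+\bigl(\omega^{1-i}-\omega^{1-j}\bigr)\bigl(t_{5}\xi^{5}+t_{2}\xi^{2}\bigr)+\bigl(\omega^{i-1}-\omega^{j-1}\bigr)x\,\xi ,
\]
and compute the three leading coefficients: since $\omega=e^{2\pi i/3}$ one has $1-\omega=\sqrt{3}\,e^{-i\pi/6}$, $\omega-\omega^{2}=i\sqrt{3}=\sqrt{3}\,e^{i\pi/2}$, and $1-\omega^{2}=\sqrt{3}\,e^{i\pi/6}$. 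Writing $\xi=\rho e^{i\phi}$, this yields
\[
\operatorname{Re}\bigl(\vartheta_i(\xi)-\vartheta_j(\xi)\bigr)=\tfrac{3\sqrt{3}}{7}\,\rho^{7}\cos\bigl(7\phi+\alpha_{ij}\bigr)+R_{ij}(\xi),\qquad(\alpha_{12},\alpha_{23},\alpha_{13})=\Bigl(-\tfrac{\pi}{6},\tfrac{\pi}{2},\tfrac{\pi}{6}\Bigr),
\]
where $R_{ij}(\xi)$ collects the $\xi^{5},\xi^{2},\xi$ contributions. Solving $\cos(7\phi+\alpha_{ij})<0$, i.e.\ $7\phi+\alpha_{ij}\in(\tfrac{\pi}{2},\tfrac{3\pi}{2})\bmod 2\pi$, reproduces exactly the three families of open sectors in the statement; the $\tfrac{\pi}{21}(6\ell+c)$ normalization arises because the period in $\phi$ is $2\pi/7=6\pi/21$ and the negative window of $\cos$ has $\phi$-width $\pi/7=3\pi/21$ (e.g.\ $\alpha_{12}=-\pi/6$ gives $\phi\in(\tfrac{\pi}{21}(6\ell+2),\tfrac{\pi}{21}(6\ell+5))$, matching the first inequality, and the other two check out similarly).

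The second step is to absorb $R_{ij}$ uniformly. For $(t_{5},t_{2},x)\in K$ with $K\subset\CC^{3}$ compact and $\rho=|\xi|\ge 1$, the coefficients in $R_{ij}$ are bounded on $K$, hence $|R_{ij}(\xi)|\le C_{K}\,\rho^{5}$ with $C_{K}$ depending only on $K$. On the closed subsector obtained by shrinking each endpoint of one of the sectors above inward by $\epsilon$, the argument $7\phi+\alpha_{ij}$ remains in a fixed closed subinterval of $(\tfrac{\pi}{2},\tfrac{3\pi}{2})$, so $\cos(7\phi+\alpha_{ij})\le -c(\epsilon)<0$ for some $c(\epsilon)>0$ depending only on $\epsilon$ (for small $\epsilon$, $c(\epsilon)=\sin(7\epsilon)$ works). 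Therefore, on this subsector,
\[
\operatorname{Re}\bigl(\vartheta_i(\xi)-\vartheta_j(\xi)\bigr)\ \le\ -\tfrac{3\sqrt{3}}{7}\,c(\epsilon)\,\rho^{7}+C_{K}\,\rho^{5},
\]
which is $<0$ as soon as $\rho^{2}>\frac{7C_{K}}{3\sqrt{3}\,c(\epsilon)}$. Taking $M_{K}$ to be the largest of the three such thresholds (and $M_{K}\ge 1$) gives one constant that works for all three inequalities and every $\ell\in\ZZ$ at once, which is the claim.

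I do not anticipate a genuine obstacle: this is Stokes-sector bookkeeping for the $\xi^{7}$ leading behaviour of a rank-$3$, $\ZZ_{3}$-twisted linear system. The only delicate points are (i) pinning down the phases $\alpha_{ij}$ — hence the precise sector endpoints — and aligning the $\bmod\,2\pi$ indexing with the $\tfrac{\pi}{21}(6\ell+c)$ normalization of the statement, and (ii) checking that the bound on $R_{ij}$ is uniform over $K$, which is immediate since only finitely many coefficients, all bounded on $K$, occur. One could alternatively try to derive two of the inequalities from the first using the rotational symmetry $\vartheta_{j}(\omega\xi)=\vartheta_{j+1}(\xi)$ (indices mod $3$), proved exactly as in Proposition \ref{N-sym}; but that symmetry trades an inequality for its reverse, so it is cleaner to run the elementary estimate above directly for each of $(\vartheta_1,\vartheta_2)$, $(\vartheta_2,\vartheta_3)$, $(\vartheta_1,\vartheta_3)$.
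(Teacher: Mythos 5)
Your proof is correct and is essentially the paper's argument: the paper likewise notes that $\tfrac{1}{|\xi|^{7}}\operatorname{Re}\vartheta_j(\xi)=\tfrac{3}{7}\cos\bigl[7\arg\xi+\tfrac{2\pi}{3}(j-1)\bigr]+\OO(|\xi|^{-2})$ uniformly on compacts and reads off the sectors by comparing these cosines. Your version merely organizes the same leading-order dominance through the differences $\vartheta_i-\vartheta_j$ with explicit phases $\alpha_{ij}$ and a quantitative $\epsilon$-margin, which is a slightly more detailed write-up of the paper's sketch rather than a different route.
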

\begin{proof}
   The lemma follows from straightforward calculation; one has that
        \begin{equation*}
            \frac{1}{|\xi|^7}\text{Re } \vartheta_j(\xi) = \frac{3}{7}\cos\left[7\arg\xi + \frac{2\pi}{3}(j-1)\right] \left(1 + \OO(|\xi|^{-2})\right),
        \end{equation*}
    and so it is clear that for $|\xi|$ taken to be sufficiently large, $\text{Re } \vartheta_j(\xi)$ is dominated by the first term. Comparison of the values of $\cos(7\theta + \frac{2\pi}{3}(j-1))$ functions for different values of the argument $\theta$ yields the result.
\end{proof}
We can now formulate and prove the following Proposition, which is a more precise statement of \eqref{PropA}.

\begin{prop}
    Let $U(t_{5},t_{2},t_1), V(t_{5},t_{2},t_1)$ solve the string equation \eqref{string-equation}, \eqref{U_mu}--\eqref{V_eta}, and assume $(t_5,t_2,t_1)$ is not a singular point of $U,V$. Let
    $\Phi^{(k)}(\xi;t_{5},t_{2},t_1)$ be solutions to the linear ODE \eqref{Phi-ODE}, which are uniquely determined by the condition that
        \begin{equation}\label{sectorial-phi}
            \Phi^{(k)}(\xi;t_{5},t_{2},x) = \left[\mathbb{I} + \OO(\xi^{-1}) \right] e^{\Theta(\xi;t_{5},t_{2},x)}, \qquad\qquad \xi \to \infty, \qquad \xi\in \Omega_k,
        \end{equation}
where the open sectors $\Omega_k$ are defined to be
        \begin{equation}\label{sector-def}
            \Omega_k := \left\{\xi\in \CC : \frac{\pi}{21}(k-2) < \arg \xi < \frac{\pi}{21}(k+1)\right\},\qquad\qquad k = 1,...,42.
        \end{equation}
The functions $\Phi^{(k)}$ are related by
        \begin{equation*}
            \Phi^{(k+1)}(\xi;t_{5},t_{2},x) = \Phi^{(k)}(\xi;t_{5},t_{2},x)S_k, \qquad k = 1,...,41, \qquad\qquad \Phi^{(1)}(\xi;t_{5},t_{2},x) = \Phi^{(42)}(e^{2\pi i}\xi;t_{5},t_{2},x) S_{42}
        \end{equation*}
where the matrices $S_k$ have the form
    \begin{equation} \label{Stokes-structure}
        S_{k} =  \mathbb{I} +
            s_k\begin{cases}
                E_{32}, & k \equiv 0 \mod 6,\\
                E_{31}, & k \equiv 1 \mod 6,\\
                E_{21}, & k \equiv 2 \mod 6,\\
                E_{23}, & k \equiv 3 \mod 6,\\
                E_{13}. & k \equiv 4 \mod 6,\\
                E_{12}, & k \equiv 5 \mod 6.
            \end{cases}
    \end{equation}
    Furthermore, the $S_k$ satisfy the identities
        \begin{equation} \label{Stokes-equations}
            S_{k+14} = \mathcal{S}^T S_k \mathcal{S},\qquad\qquad S_1\cdots S_{14} = \mathcal{S}^T.
        \end{equation}
    In particular it follows from the above that, $s_{k+14} = s_k$, and that generically there are only $6$ independent Stokes parameters. Furthermore, denote $\Phi^{(0)}(\xi;t_{5},t_{2},x)$ 
    to be the solution of \eqref{Phi-ODE} near $\xi = 0$, normalized as follows:
        \begin{equation}\label{connection-A}
            \Phi^{(0)}(\xi;t_{5},t_{2},x) = \left[\mathcal{U}^{-1} + \OO(\xi)\right]\xi^{-\Delta}.
        \end{equation}
    The functions $\Phi^{(1)}(\xi;t_{5},t_{2},x)$ and $\Phi^{(0)}(\xi;t_{5},t_{2},x)$ are related by the unimodular constant matrix $\mathcal{C}$:
        \begin{equation}\label{connection-B}
            \Phi^{(1)}(\xi;t_{5},t_{2},x) = \Phi^{(0)}(\xi;t_{5},t_{2},x)\mathcal{C}, \qquad\qquad \det \mathcal{C} = 1.
        \end{equation}
    The equations \eqref{connection-A},\eqref{connection-B} imply that $\mathcal{C}$ has three free parameters. Thus, the string equation \eqref{string-equation}, \eqref{U_mu}--\eqref{V_eta}
    are associated with $6 + 3 = 9$ constant monodromy data.
\end{prop}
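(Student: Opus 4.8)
The plan is to run the classical Riemann--Hilbert analysis of the linear system \eqref{Phi-ODE}, organized around its two singular points --- $\xi=\infty$ (irregular, of Poincar\'e rank $7$) and $\xi=0$ (the resonant Fuchsian point produced by the gauge transformation \eqref{gauge-matrix}) --- in complete parallel with the treatment of the Painlev\'e~I Lax pair in \cite{FIKN}. First I would set up the sectorial solutions and Stokes data. Since $\mathcal{L}$ has the diagonal leading coefficient $3\,\mathrm{diag}(1,\omega,\omega^{2})\xi^{6}$ at infinity and $\operatorname{tr}\mathcal{L}\equiv0$ (because $\mathcal{L}=3\xi^{2}[g^{-1}Lg-g^{-1}g']$ with $\det g=1$ and $\operatorname{tr}L=0$), the standard existence--uniqueness theorem for linear ODEs at an irregular singularity \cite{Wasow,FIKN} provides, on each maximal sector of the appropriate type, a unique solution with the asymptotics \eqref{sectorial-phi}; the relevant maximal sectors are exactly the $\Omega_{k}$, which have opening $3\cdot\frac{\pi}{21}=\frac{\pi}{7}=\frac{\pi}{r}$ with $r=7$ and whose consecutive overlaps each contain precisely one of the $42$ anti-Stokes rays $\arg\xi=\frac{k\pi}{21}$. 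This yields the $\Phi^{(k)}$ and the patching relations $\Phi^{(k+1)}=\Phi^{(k)}S_{k}$. To pin down the form of $S_{k}$ I would use Lemma~\ref{xi-comparison-lemma}: along the ray $\arg\xi=\frac{k\pi}{21}$ exactly one pair $(\vartheta_{i},\vartheta_{j})$ exchanges dominance, the pair being $\{2,3\}$, $\{1,3\}$ or $\{1,2\}$ according to $k\bmod 6$; hence $S_{k}$ is an elementary unipotent $\mathbb{I}+s_{k}E_{ij}$, and matching the slot to the convention $\Phi^{(k+1)}=\Phi^{(k)}S_{k}$ via the dominance pattern gives \eqref{Stokes-structure} (a finite check over the six residue classes). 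Finally $\det\Phi^{(k)}$ is constant in $\xi$ and equals $\lim_{\xi\to\infty}(1+\OO(\xi^{-1}))e^{\operatorname{tr}\Theta(\xi)}=1$ since $\operatorname{tr}\Theta=\vartheta_{1}+\vartheta_{2}+\vartheta_{3}\equiv0$, so $\det S_{k}=1$.

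Next I would treat the resonant point and the symmetries. By Remark~\ref{gauge-remark}(3) the residue of $\mathcal{L}$ at $\xi=0$ has eigenvalues $1,0,-1$ and $e^{-2\pi i\Delta}=\mathbb{I}$, so the resonance is apparent, the normalized local solution $\Phi^{(0)}$ of \eqref{connection-A} exists, and $\det\Phi^{(0)}=\det(\mathcal{U}^{-1})\det(\xi^{-\Delta})=1$ (using $\det\mathcal{U}=1$ and $\operatorname{tr}\Delta=0$), whence $\Phi^{(1)}=\Phi^{(0)}\mathcal{C}$ forces $\det\mathcal{C}=1$. For the Stokes relations \eqref{Stokes-equations}: the identity $\mathcal{L}(\xi)=\omega\,\mathcal{S}^{T}\mathcal{L}(\omega\xi)\mathcal{S}$ of Proposition~\ref{N-sym} shows that $\mathcal{S}^{T}\Phi^{(m)}(\omega\xi)$ again solves \eqref{Phi-ODE}; comparing asymptotics via $\mathcal{S}^{T}\Theta(\omega\xi)\mathcal{S}=\Theta(\xi)$ (which follows from $\vartheta_{j}(\omega\xi)=\vartheta_{j+1}(\xi)$) and using uniqueness identifies $\mathcal{S}^{T}\Phi^{(m)}(\omega\xi)=\Phi^{(m-14)}(\xi)\mathcal{S}^{T}$, and peeling off a single Stokes matrix gives the conjugation identity $S_{k+14}=\mathcal{S}^{T}S_{k}\mathcal{S}$. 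The cyclic identity needs one more ingredient, namely a $\ZZ_3$-symmetric normalization of $\Phi^{(0)}$: the finite linear-algebra fact $\mathcal{U}\mathcal{S}^{T}\mathcal{U}^{-1}=\omega^{\Delta}$ (equivalently $M_{0}\mathcal{S}^{T}=\omega^{\Delta}M_{0}$, where $M_{0}=-i\sqrt3\,\mathcal{U}$ is the matrix appearing in \eqref{gauge-matrix}) shows that $\mathcal{S}^{T}\Phi^{(0)}(\omega\xi)$ is again of the normalized form \eqref{connection-A}; since \eqref{connection-A} determines $\Phi^{(0)}$ only up to a $3$-dimensional unipotent group (the kernels at the two resonant orders $n=1,2$), an averaging argument over $\ZZ_3$ produces a $\ZZ_3$-invariant representative, which I would take as $\Phi^{(0)}$.

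With this choice, writing $\mathcal{C}_{m}$ for the connection matrices $\Phi^{(m)}=\Phi^{(0)}\mathcal{C}_{m}$ (so $\mathcal{C}_{1}=\mathcal{C}$ and $\mathcal{C}_{m+1}=\mathcal{C}_{m}S_{m}$), the two symmetry relations combine into $\mathcal{C}_{m}=\mathcal{C}_{m-14}\mathcal{S}^{T}$; taking $m=15$ and using $\mathcal{C}_{15}=\mathcal{C}_{1}S_{1}\cdots S_{14}$ yields $\mathcal{C}_{1}(S_{1}\cdots S_{14})\mathcal{S}=\mathcal{C}_{1}$, i.e. $S_{1}\cdots S_{14}=\mathcal{S}^{T}$. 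The remaining assertions ($s_{k+14}=s_{k}$, and generically $6$ independent Stokes parameters) are then bookkeeping: the map $(s_{1},\dots,s_{14})\mapsto S_{1}\cdots S_{14}$ has generic fibre of dimension $14-8=6$ over $\mathcal{S}^{T}\in\mathrm{SL}_{3}$. For the final count, \eqref{connection-A}--\eqref{connection-B} together with the $\ZZ_3$-symmetry constrain $\mathcal{C}\in\mathrm{SL}_{3}$ to a $6$-dimensional subvariety, and modding out by the residual $3$-dimensional unipotent freedom in $\Phi^{(0)}$ leaves $\mathcal{C}$ with $3$ essential parameters; with the $6$ Stokes parameters this gives the stated $6+3=9$ pieces of monodromy data.

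The hard part will be the two points touching the resonant singularity at $\xi=0$: (i) verifying that the resonance really is apparent and constructing the $\ZZ_3$-invariant representative of $\Phi^{(0)}$ --- this is precisely what upgrades the bare $\ZZ_3$-symmetry into the closed relation $S_{1}\cdots S_{14}=\mathcal{S}^{T}$ rather than merely $(S_{1}\cdots S_{14}\,\mathcal{S})^{3}=\mathbb{I}$; and (ii) making the dimension count for $\mathcal{C}$ precise. The identification of the Stokes slot in each $S_{k}$ is fiddly but routine given Lemma~\ref{xi-comparison-lemma}, and everything else is the standard Wasow/FIKN machinery together with direct computation using Proposition~\ref{N-sym}.
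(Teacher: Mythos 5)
Your treatment of the Stokes side follows the paper's own route essentially verbatim: sectorial solutions via the standard irregular-singularity theory, identification of the single nonzero off-diagonal slot of each $S_k$ from the dominance orderings of Lemma \ref{xi-comparison-lemma}, the relation $S_{k+14}=\mathcal{S}^TS_k\mathcal{S}$ from Proposition \ref{N-sym} together with $\vartheta_j(\omega\xi)=\vartheta_{j+1}(\xi)$, and the cyclic identity by comparing $\Phi^{(15)}$ with $\Phi^{(1)}$ through the local solution at the origin. One simplification you are missing, which makes your ``averaging'' detour unnecessary: \emph{every} solution normalized as in \eqref{connection-A} is automatically $\ZZ_3$-equivariant. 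Indeed, $\sigma(\Phi)(\xi):=\mathcal{S}^T\Phi(\omega\xi)$ maps normalized local solutions to normalized local solutions (your identity $\mathcal{U}\mathcal{S}^T\mathcal{U}^{-1}=\omega^{\Delta}$), the ambiguity in \eqref{connection-A} is right multiplication by a constant lower-triangular \emph{unipotent} $D$, $\sigma$ commutes with this action, and $\sigma^3=\mathrm{id}$ (trivial local monodromy, \eqref{phi-origin-solution}) forces $D^3=\mathbb{I}$, hence $D=\mathbb{I}$. So the relation $\Phi^{(0)}(\omega\xi)=\mathcal{S}\Phi^{(0)}(\xi)$ that the paper uses holds for any choice of $\Phi^{(0)}$, and your fear that one only gets $(S_1\cdots S_{14}\mathcal{S})^3=\mathbb{I}$ is unfounded; your invariant-representative construction is harmless but redundant.

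The genuine gap is in your final count for $\mathcal{C}$. The $\ZZ_3$-symmetry does \emph{not} cut $\mathcal{C}$ down to a $6$-dimensional subvariety: with the equivariant $\Phi^{(0)}$, the symmetry yields $\mathcal{C}_{15}=\mathcal{C}_1\mathcal{S}^T$, which is exactly the cyclic Stokes relation and imposes no condition on $\mathcal{C}=\mathcal{C}_1$ itself. Moreover, under the strict normalization \eqref{connection-A} the ambiguity in $\Phi^{(0)}$ is only the $3$-dimensional lower-triangular unipotent group, so your proposed quotient would produce $8-3=5$ parameters, not $3$. The missing ingredient is the residual gauge freedom of Remark \ref{gauge-remark}, point 2 (replacing $g$ by $g\,\mathfrak{h}({\bf t})$ with $\mathfrak{h}$ upper-triangular unipotent), which acts nontrivially on the connection data; the paper uses precisely this to eliminate two further parameters below the diagonal of the ambiguity matrix $K$ and arrive at the stated three essential parameters for $\mathcal{C}$ (and hence $6+3=9$ monodromy data). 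As written, your count either gives the wrong number or rests on the unsubstantiated ``$6$-dimensional subvariety'' claim, so this step needs to be replaced by the $K$-plus-gauge-freedom argument.
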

\begin{proof}
    Standard ODE theory \cite{Wasow,FIKN,JMU1} establishes that the functions $\Phi^{(k)}(\xi;t_{5},t_{2},x)$ are indeed uniquely specified by the asymptotic condition 
    \eqref{sectorial-phi}. The structure of the Stokes matrices $S_k$ can be inferred as follows. Note that $\Phi^{(k)}$, $\Phi^{(k+1)}$ are both defined on the 
    sector
        \begin{equation*}
            \delta \Omega_k := \Omega_k \cap \Omega_{k+1} = \left\{ \xi\in\CC : \frac{\pi}{21}(k-2) < \arg \xi < \frac{\pi}{21}(k+1)\right\}. 
        \end{equation*}
    Since both $\Phi^{(k)}$, $\Phi^{(k+1)}$ satisfy equation \eqref{Phi-ODE}, their ratio $\left(\Phi^{(k)}\right)^{-1}\Phi^{(k+1)} =: S_k$ is a constant matrix. We have that
        \begin{equation*}
            S_k = \lim_{\substack{\xi\to \infty\\ \xi\in \delta \Omega_k}} \left(\Phi^{(k)}\right)^{-1}\Phi^{(k+1)} = \lim_{\substack{\xi\to \infty\\ \xi\in \delta \Omega_k}} e^{-\Theta(\xi)}\left[\mathbb{I} + \OO(\xi^{-1})\right] e^{\Theta(\xi)}.
        \end{equation*}
    Equivalently, component-wise,
        \begin{equation*}
            (S_k)_{ij} = \lim_{\substack{\xi\to \infty\\ \xi\in \delta \Omega_k}} e^{\vartheta_j(\xi) - \vartheta_i(\xi)}\left[\delta_{ij} + \OO(\xi^{-1})\right],
        \end{equation*}
    where the functions $\vartheta_j(\xi) = \vartheta_j(\xi;t_{5},t_{2},x)$ are defined by Equation \eqref{xi-exponents}. The above formula implies immediately that
    the diagonal components of $S_k$ are all identically $1$. The all but one of the remaining entries can be determined by taking the above limit in 
    various parts of the sector. We furnish the proof here for the case $k = 6\ell  + 1$; the structure of the Stokes matrices in the other cases may be obtained in an identical manner.

    If $k = 6\ell+1$, consider first the sector $\left\{\frac{\pi}{21}(6\ell-1) < \arg\xi < \frac{\pi}{21}(6\ell)\right\} \subset \delta \Omega_k$. Using Lemma \ref{xi-comparison-lemma}, we see that there is a definite ordering of $\text{Re } \vartheta_j(\xi)$ in this sector for
    $|\xi|$ sufficiently large, given by
            $\text{Re } \vartheta_3(\xi) <  \text{Re } \vartheta_1(\xi) <  \text{Re } \vartheta_2(\xi)$.
    This implies that $\left(S_{k}\right)_{21} = \left(S_{k}\right)_{23} = \left(S_{k}\right)_{13} = 0$. On the other hand, in the
    sector $\left\{\frac{\pi}{21}(6\ell) < \arg\xi < \frac{\pi}{21}(6\ell+1)\right\} \subset \delta \Omega_k$, for $|\xi|$ sufficiently large 
    the ordering
            $\text{Re } \vartheta_3(\xi) <  \text{Re } \vartheta_2(\xi) <  \text{Re } \vartheta_1(\xi)$
    holds, and so we find that $\left(S_k\right)_{12} = 0$ in addition. Finally, in the sector $\left\{\frac{\pi}{21}(6\ell+1) < \arg\xi < \frac{\pi}{21}(6\ell+2)\right\} \subset \delta \Omega_k$, for $|\xi|$ sufficiently large the ordering
            $\text{Re } \vartheta_2(\xi) <  \text{Re } \vartheta_3(\xi) <  \text{Re } \vartheta_1(\xi)$
    holds, and so we see that $(S_k)_{32} = 0$. The only entry which cannot be determined by the above line of argumentation is $(S_k)_{31}$; thus, the 
    two solutions are related by
        \begin{equation*}
            \Phi^{(6\ell+2)} = \Phi^{(6\ell+1)} \left[\mathbb{I} + s_{6j+1}E_{31}\right].
        \end{equation*}
    Now, Proposition \ref{N-sym} implies that, if $\Phi(\xi;t_{5},t_{2},x)$ is a solution to the linearization equations, 
    then so is $\mathcal{S}^{T}\Phi(\omega \xi;t_{5},t_{2},x) \mathcal{S}$. Since $\xi \in \Omega_k$ implies that $\omega\xi \in \Omega_{k+14}$,
    we obtain the relations
        \begin{equation*}
            \Phi^{(k)}(\xi;t_{5},t_{2},x) = \mathcal{S}^T \Phi^{(k+14)}(\omega\xi;t_{5},t_{2},x)\mathcal{S}\qquad
            \Longleftrightarrow \qquad \mathcal{S}\Phi^{(k)}(\omega^2\xi;t_{5},t_{2},x)\mathcal{S}^T = \Phi^{(k+14)}(\xi;t_{5},t_{2},x).
        \end{equation*}
    This implies the relation $\mathcal{S}^T S_k \mathcal{S} = S_{k+14}$; one can further check that this is consistent with the formula \eqref{Stokes-structure} for the Stokes matrices, i.e. the only relation that this implies is the following one among the parameters: $s_{k + 14} = s_{k}$.
    Furthermore, we have that
        \begin{equation*}
            \mathcal{S} \Phi^{(1)}(\xi;t_{5},t_{2},x) \mathcal{S}^T = \Phi^{(15)}(\omega\xi;t_{5},t_{2},x) = \Phi^{(1)}(\omega\xi;t_{5},t_{2},x)S_1\cdots S_{14},
        \end{equation*}
    which implies the following identity for the solution $\Psi^{(0)}(\xi;t_{5},t_{2},x)$ in a neighborhood of $\xi = 0$:
        \begin{equation*}
            \mathcal{S} \Phi^{(0)}(\xi;t_{5},t_{2},x) \mathcal{C} \mathcal{S}^T = \Phi^{(0)}(\omega\xi;t_{5},t_{2},x) \mathcal{C} S_1\cdots S_{14}
        \end{equation*}
    Using Equation \eqref{phi-origin-solution}, we further see that $\Phi^{(0)}(\omega\xi;t_{5},t_{2},x) = \mathcal{S} \Phi^{(0)}(\xi;t_{5},t_{2},x)$, and so
        \begin{equation*}
            \mathcal{S} \Phi^{(0)}(\xi;t_{5},t_{2},x) \mathcal{C} \mathcal{S}^T = \Phi^{(0)}(\omega\xi;t_{5},t_{2},x) \mathcal{C} S_1\cdots S_{14} = \mathcal{S}\Phi^{(0)}(\xi;t_{5},t_{2},x) \mathcal{C} S_1\cdots S_{14}.
        \end{equation*}
    Since $\mathcal{S} \Phi^{(0)}(\xi;t_{5},t_{2},x) \mathcal{C}$ is invertible, we obtain the identity $\mathcal{S}^T = S_1\cdots S_{14}$.
    Equations \eqref{Stokes-equations} imply that there are only $6$ independent Stokes parameters.

    Now, let us show that the matrix $\mathcal{C}$ depends only on $3$ independent parameters. Suppose $\Phi^{(0)}$, $\tilde{\Phi}^{(0)}$ are two different solutions in a neighborhood of $\xi = 0$
    which connect to $\Phi^{(1)}$ through the matrices $\mathcal{C}$, $\tilde{\mathcal{C}}$. In other words,
        \begin{equation*}
            \Phi^{(1)} = \Phi^{(0)}\mathcal{C} = \tilde{\Phi}^{(0)}\tilde{\mathcal{C}}.
        \end{equation*}
    Now, the functions $\Phi^{(0)}(\xi)\xi^{\Delta}$, $\tilde{\Phi}^{(0)}(\xi)\xi^{\Delta}$ are holomorphic and invertible in a neighborhood of zero, and so it follows that the matrix
        \begin{equation*}
            \mathcal{J}(\xi) := \xi^{-\Delta}\mathcal{C}\tilde{\mathcal{C}}^{-1}\xi^{\Delta}
        \end{equation*}
    must be holomorphic and invertible as well. This places constraints on the matrix $K := \mathcal{C}\tilde{\mathcal{C}}^{-1}$; we find that
        \begin{equation*}
            K = 
            \begin{psmallmatrix}
                k_{11} & 0 & 0 \\
                k_{21} & k_{22} & 0 \\
                k_{31} & k_{32} & k_{33}
            \end{psmallmatrix},
        \end{equation*}
    where the diagonal elements are subject to the constraint $k_{11}k_{22}k_{33} = 1$. Furthermore, one can utilize the gauge freedom (cf. Remark \ref{gauge-remark}, point 2.) to further eliminate two
    of the parameters below the diagonal. This leaves three free parameters.
\end{proof}

\begin{remark}
    One can show that the generic solution to the constraint equations \eqref{Stokes-equations} is given by
    \begin{align}
        s_{7} &= \frac{s_{1} s_{3} s_{6}-s_{2} s_{6}+s_{1}+1}{W_1},\qquad s_{8} = \frac{-s_{1} s_{3} W_{2}+s_{2} W_{2}+W_{1}-s_{2} s_{4}-s_{3}}{W_{1}W_{2}}, \nonumber\\
        s_{9} &= \frac{W_{1} - s_{2}s_{4} - s_{3} }{W_{2}}, \qquad s_{10} = - W_{1}, \qquad s_{11} = -W_{2},\qquad
        s_{12} = \frac{-W_{2}-s_{3}s_{5} + s_{4}}{W_{1}},\\s_{13} &= \frac{-s_{4}s_{6}W_{1} - s_{5}W_{1} + W_{2} + s_{3}s_{5}-s_{4}}{W_{1}W_{2}} \qquad s_{14} = \frac{-s_{1} s_{4} s_{6}-s_{1} s_{5}-s_{6}+1}{W_2},\nonumber
    \end{align}
with $s_{1},...,s_{6}$ free parameters, if 
    \begin{equation}
        W_1 := s_{1} s_{3} s_{5}-s_{1} s_{4}-s_{2} s_{5}-1 \neq 0 \qquad \text{ and }\qquad W_2 := s_{2} s_{4} s_{6}+s_{2} s_{5}+s_{3} s_{6}+1 \neq 0.
    \end{equation}
There are many subcases if $W_1$ or $W_2$ vanish; we shall save the study of these for later.
We now see that there are generically $6$ free Stokes parameters, which is consistent with the fact that the string equation \eqref{string-equation} is of order $4 + 2 = 6$. 

On the other hand, the space of solutions to the Stokes equation contains $7$ special planes (the corresponding Stokes parameters satisfy $s_k = -s_{k+8}$, $k=-7,...,-1$, and we parameterize in terms of the remaining $7$ Stokes parameters $(s_1,....,s_7)$:
        \begin{align*}
            \Pi_0 &:= \{(x+1,-1,0,0,1,x,y)\},\\
            \Pi_1 &:= \{(0,-1,x,y,1-x,-1,0)\},\\
            \Pi_2 &:= \{(x,y-1,1,0,0,-1,y)\},\\
            \Pi_3 &:= \{(0,0,1,x,y,-x-1,1)\},\\
            \Pi_4 &:= \{(x,y,1-x,-1,0,0,1)\},\\
            \Pi_5 &:= \{(1,0,0,-1,1-y,x,y)\},\\
            \Pi_6 &:= \{(1,-1-y,x,y,1,0,0)\}.
        \end{align*}
        These planes intersect pairwise: $\Pi_k \cap \Pi_{k\pm 1} \neq \emptyset$ for $k\in \ZZ_7$, and otherwise are completely disjoint. As these planes are embedded in $\CC^7$, their intersection is a point in each case. The solution corresponding to one of these special intersection points appears in 
        the multicritical quartic $2$-matrix model \cite{DHL1,DHL3,Nathan2}.
\end{remark}

We now state the ``converse'' to the above: we formulate a Riemann-Hilbert problem associated to the string equation.

\begin{prop} \label{RHP-Phi-prop}
    Let $\{S_k\}_{k=1}^{42}$ be the constant $3\times 3$ matrices defined by \eqref{Stokes-structure}, satisfying relations \eqref{Stokes-equations}. Furthermore,
    let $\mathcal{C}$ be a constant $3\times 3$ matrix satisfying
        \begin{equation*}
            \det \mathcal{C} = 1, \qquad\qquad K\mathcal{C} \sim \mathcal{C},
        \end{equation*}
    where $K$ is any unimodular lower-triangular matrix as prescribed by the previous proposition, with $\sim$ denoting similarity equivalence. For 
    $\xi,t_{5},t_{2},x \in \CC$, define the $3\times 3$ sectionally analytic function $X(\xi;t_{5},t_{2},x)$ as follows:
        \begin{equation}
            X(\xi;t_{5},t_{2},x) :=
                \begin{cases}
                    X^{(0)}(\xi;t_{5},t_{2},x), & |\lambda| < 1,\\
                    X^{(k)}(\xi;t_{5},t_{2},x), & \lambda \in \Omega_k \cap \{ |\lambda| > 1\},\qquad k = 1,...,42,
                \end{cases}
        \end{equation}
    where the sectors $\Omega_k$ are defined as in \eqref{sector-def}. Finally, let $X(\xi;t_{5},t_{2},x)$ solve the following Riemann-Hilbert problem:
        \begin{align}
            X^{(k+1)}(\xi;t_{5},t_{2},x) &= X^{(k)}(\xi;t_{5},t_{2},x)e^{\Theta(\xi;t_{5},t_{2},x)}S_k e^{-\Theta(\xi;t_{5},t_{2},x)}, \qquad k = 1,...,42, \qquad X_{43} = X_1. \nonumber\\
            X^{(1)}(\xi;t_{5},t_{2},x) &= X_0(\xi;t_{5},t_{2},x)e^{\Theta(\xi;t_{5},t_{2},x)}\xi^{-\Delta} \mathcal{C} e^{-\Theta(\xi;t_{5},t_{2},x)},\\
            X^{(1)}(\xi;t_{5},t_{2},x) &= \mathbb{I} + \OO(\xi^{-1}), \qquad \xi \to\infty, \nonumber
        \end{align}
    where $\Theta(\xi;t_{5},t_{2},x)$, $\Delta$ are as previously defined. Then, the above Riemann-Hilbert problem defined a unique matrix $X(\lambda;t_{5},t_{2},x)$ which is meromorphic in $t_{5},t_{2},x$. Furthermore, if we denote
        \begin{equation}
            X^{(1)}(\xi;t_{5},t_{2},x) = \mathbb{I} + \frac{X^{(1)}_1(t_{5},t_{2},x)}{\xi} + \frac{X^{(1)}_2(t_{5},t_{2},x)}{\xi^2} + \OO(\xi^{-3}),
        \end{equation}
    then
        \begin{align*}
            U(t_{5},t_{2},x) &:= 2\frac{d}{dx}\left[X^{(1)}_1(t_{5},t_{2},x)\right]_{11},\\
            V(t_{5},t_{2},x) &:= -2\frac{d}{dx}\left[X^{(1)}_2(t_{5},t_{2},x) - \frac{1}{2}X^{(1)}_1(t_{5},t_{2},x)^2\right]_{11} = -\frac{d}{dt_2}\left[X^{(1)}_1(t_{5},t_{2},x)\right]_{11},
        \end{align*}
    Then $U,V$ are meromorphic in $t_{5},t_{2},x$, and satisfy the string equation \eqref{string-equation}, \eqref{U_mu}--\eqref{V_eta}.
\end{prop}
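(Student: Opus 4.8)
The plan is to run the standard ``dressing'' argument that converts a Riemann--Hilbert problem with constant monodromy data into an isomonodromic family, and then to invoke Proposition \eqref{PropA} in the reverse direction to recognize the resulting deformation equations as the string equation. First I would settle the analytic points. Since every jump matrix $e^{\Theta}S_k e^{-\Theta}$ and the connection jump $e^{\Theta}\xi^{-\Delta}\mathcal{C}e^{-\Theta}$ is unimodular (note $\tr\Theta\equiv 0$ because $1+\omega+\omega^{2}=0$, so $\det e^{\Theta}=1$) and $X^{(1)}\to\mathbb{I}$, the hypotheses \eqref{Stokes-equations} and the conditions on $\mathcal{C}$ make the jump contour globally consistent, $\det X\equiv 1$, and a vanishing lemma yields uniqueness once a solution exists. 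Existence for at least one parameter value follows from a small-norm estimate in an appropriate asymptotic regime (as for Painlev\'e I in \cite{FIKN}); analytic Fredholm theory applied to the associated singular integral operator then gives solvability for $(t_{5},t_{2},x)$ outside a closed analytic divisor, with the solution depending meromorphically on the parameters. This already establishes the meromorphicity assertions of the proposition.

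Next I would pass from $X$ to a linear system. Set $\Phi^{(k)}:=X^{(k)}e^{\Theta}$ for $k=1,\dots,42$ and $\Phi^{(0)}:=X^{(0)}e^{\Theta}\xi^{-\Delta}$; the jump relations of the RHP then collapse to the \emph{constant} jumps $\Phi^{(k+1)}=\Phi^{(k)}S_k$ and $\Phi^{(1)}=\Phi^{(0)}\mathcal{C}$, with the sectorial behaviour \eqref{sectorial-phi} at infinity and the resonant behaviour \eqref{connection-A} at the origin. Consequently $\Phi_{\xi}\Phi^{-1}$ and $\Phi_{t_{\ell}}\Phi^{-1}$ (for $\ell=1,2,5$) have no jumps across the Stokes rays or across $|\xi|=1$, and, because $\det\Phi\equiv 1$, they are holomorphic on $\CC^{*}$. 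Controlling their growth at $\infty$ through $\Theta$ (a diagonal polynomial in $\xi$ of degree $7$, cf. \eqref{xi-exponents}) and at $0$ through \eqref{connection-A} (a simple pole whose residue is conjugate to $\mathrm{diag}(1,0,-1)$), Liouville's theorem forces $\Phi_{\xi}\Phi^{-1}$ to be a matrix polynomial of degree $6$ plus a simple pole at $0$, and each $\Phi_{t_{\ell}}\Phi^{-1}$ to be a matrix polynomial of the degree dictated by the $t_{\ell}$-part of $\Theta$, again with a simple pole at $0$.

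It then remains to identify $\Phi_{\xi}\Phi^{-1}$ with the connection $\mathcal{L}(\xi)$ of \eqref{L-zeta} (equivalently, its $\lambda$-avatar \eqref{L-connection}), to identify $\Phi_{t_{\ell}}\Phi^{-1}$ with the corresponding deformation matrices, and to read off $U,V$. The leading coefficient at $\infty$ is pinned down by the form of $\Theta$; the residue at $0$ is pinned down by \eqref{phi-origin-solution}--\eqref{connection-A}; and the pattern of nonzero entries of the Stokes matrices \eqref{Stokes-structure}, together with the $\mathcal{S}$-symmetry of Proposition \ref{N-sym} that $\Phi$ inherits from the equivariance \eqref{Stokes-equations}, cut the a priori free coefficients down to exactly the six functions $Q_U,Q_V,Q_W,P_U,P_V,P_W$ of \eqref{L-connection}. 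Inserting $\Phi^{(1)}=(\mathbb{I}+X_{1}^{(1)}/\xi+X_{2}^{(1)}/\xi^{2}+\cdots)e^{\Theta}$ into $\Phi_{\xi}=\mathcal{L}\Phi$ and matching powers of $\xi$ expresses the subleading coefficients of $\mathcal{L}$---and hence, via \eqref{Darboux-Q}, the functions $U=Q_U+\tfrac{4}{3}t_{5}$ and $V=Q_V$---as explicit polynomials in $X_{1}^{(1)},X_{2}^{(1)}$ and their $x$-derivatives, which gives the stated formulas for $U$ and $V$. Finally, since $\Phi$ simultaneously solves $\Phi_{\xi}=\mathcal{L}\Phi$ and $\Phi_{t_{\ell}}=\mathcal{A}_{\ell}\Phi$, the zero-curvature relations $\partial_{t_{\ell}}\mathcal{L}-\partial_{\xi}\mathcal{A}_{\ell}+[\mathcal{L},\mathcal{A}_{\ell}]=0$ hold identically; by Proposition \eqref{PropA} read in reverse, together with the description of the reduced KP flows $Q^{2/3}_{+},Q^{5/3}_{+}$ in Appendix \ref{AppendixA}, these PDEs are equivalent to the string equation \eqref{string-equation} and the compatibility equations \eqref{U_mu}--\eqref{V_eta} for $U,V$, which completes the proof.

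I expect the main obstacle to be the identification step of the third paragraph: showing that the rational connection manufactured from the RHP is \emph{precisely} \eqref{L-connection}, with its coefficients organized into the six Darboux functions subject to all of the prescribed interrelations (for instance that $L_{12}$ and $L_{23}$ differ only in the signs of the $P_U$ and $\tfrac{1}{4}t_{5}Q_W$ terms, that $x$ enters $L_{13}$ linearly, and so on). This requires carefully combining three inputs---the asymptotic normalization at $\infty$, the resonant structure at $0$ forced by \eqref{connection-A} and the nonzero pattern \eqref{Stokes-structure}, and the $\mathcal{S}$-symmetry---and is the rank-$3$ analogue of the somewhat intricate direct-monodromy bookkeeping carried out for Painlev\'e I in \cite[\S5]{FIKN}. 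By contrast, the analytic input of the first paragraph is routine given the Painlev\'e I template, and the passage to the string equation in the last step is essentially a citation of Proposition \eqref{PropA}.
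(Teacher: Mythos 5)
Your proposal follows essentially the same route as the paper's proof: uniqueness by the Liouville argument, existence and meromorphic dependence on $(t_5,t_2,x)$ by standard Riemann--Hilbert theory as in \cite{FIKN}, passage to $\Phi^{(k)}=X^{(k)}e^{\Theta}$, $\Phi^{(0)}=X^{(0)}e^{\Theta}\xi^{-\Delta}$ with constant jumps, a Liouville-type identification of $\Phi_\xi\Phi^{-1}$ and $\Phi_{t_\ell}\Phi^{-1}$ as rational matrices (which the paper carries out via the recursive determination of the coefficients $\Phi_k$ using the $\mathcal{S}$-symmetry), and finally the zero-curvature equations being equivalent to the string equation. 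Two minor cautions: to avoid apparent circularity you should cite the Appendix \ref{AppendixA} computation (compatibility of the matrix Lax pair $\Leftrightarrow$ string equation, with $L=\mathcal{P}$) rather than Proposition \ref{PropA}, which the paper itself deduces from this section, and the deformation matrices $\Phi_{t_\ell}\Phi^{-1}$ need not have exactly a simple pole at $\xi=0$ (in this normalization they are holomorphic there, while in the gauge-transformed form they can carry double poles), though this does not affect the structure of the argument.
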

    \begin{proof}
        Uniqueness of the solution to this problem follows from the usual Liouville argument.
        Observe that if we set
        \begin{align*}
            \Phi^{(k)} &= X^{(k)}e^{\Theta}, \qquad k = 1,...,42, \qquad\qquad\Phi^{(0)} = X^{(0)}e^{\Theta}\xi^{-\Delta},
        \end{align*}
        Then the functions $\Phi^{(k)}$ satisfy the relations \eqref{Stokes-structure}, \eqref{connection-B}. By construction, we can find a system of contours
        $\Gamma$ such that the jump matrix of the above Riemann-Hilbert problem is smooth, and decays exponentially for $\xi \to \infty$ (for example,
        one may take the unit circle unioned with the rays $\{\arg\xi = \frac{\pi}{21}k\}_{k=1}^{42} \cap \{|\xi| > 1\}$). By standard Riemann-Hilbert arguments
        (cf. \cite{FIKN}), we obtain that the solution to this RHP exists, and depends meromorphically on its parameters $t_{5},t_{2},x$. Our next task is to extract
        the string equation from the isomonodromy/zero-curvature conditions. Write the asymptotic expansion for $\Phi(\xi;t_{5},t_{2},x)$ as
            \begin{equation*}
                \Phi(\xi;t_{5},t_{2},x) = \left(\mathbb{I} + \sum_{k=1}^{\infty} \frac{\Phi_k(t_{5},t_{2},x)}{\xi^k}\right)e^{\Theta(\xi;t_{5},t_{2},x)}.
            \end{equation*}
        
 In general, we have the following procedure for determining the entries of the matrices $\Phi_k$:
    \begin{enumerate}
        \item First, observe that we only have to determine the first row of $\Phi_k$:
            \begin{equation*}
                [\Phi_k]_{1,\cdot} := [a_k(t_{5},t_{2},x),b_k(t_{5},t_{2},x),c_k(t_{5},t_{2},x)].
            \end{equation*}
        The rest of the entries are determined by the symmetry constraint \eqref{coeff-symmetry}.
        \item Using the formal expansion of $\Phi$, form the series
            \begin{equation*}
                \frac{d \Phi}{d\xi} \Phi^{-1} = \mathcal{L}(\xi;t_{5},t_{2},x) + \sum_{k=2}^{\infty} \frac{R_k(t_{5},t_{2},x)}{\xi^k};
            \end{equation*}
        where we use our previous expression for $\mathcal{L}$ \eqref{L-zeta} to parameterize the entries of the above.
        The condition that $\Phi(\xi)$ satisfies the differential equation $\frac{d \Phi}{d\xi} = \mathcal{L} \Phi$ determines the
        coefficients $\{b_k ,c_k\}_{k=1}^7$ as differential polynomials in the variables $U(t_{5},t_{2},x)$, $V(t_{5},t_{2},x)$, and (for the $k^{th}$ function, $k\geq 3$) the functions $\{a_j\}_{j=1}^{k-2}$; it also imposes the constraint 
            \begin{equation*}
                R_k(t_{5},t_{2},x) \equiv 0, \qquad\qquad k=2,3, ...
            \end{equation*}
        \item The condition that the coefficients $R_k$ vanish identically allows us to solve for the rest of the variables. More precisely, for
        $k=2,3,...$, we have that
            \begin{enumerate}
                \item $[R_k]_{11}$ can be solved for $a_{k-1}$ as a differential polynomial in $U(t_{5},t_{2},x)$, $V(t_{5},t_{2},x)$,
                \item $[R_k]_{12}$ can be solved for $b_{k+6}$ as a differential polynomial in $U(t_{5},t_{2},x)$, $V(t_{5},t_{2},x)$, and the functions $\{a_j\}_{j=1}^{k-2}$,
                \item $[R_k]_{13}$ can be solved for $c_{k+6}$ as a differential polynomial in $U(t_{5},t_{2},x)$, $V(t_{5},t_{2},x)$, and the functions $\{a_j\}_{j=1}^{k-2}$.
            \end{enumerate}
        the symmetry constraint \eqref{coeff-symmetry} implies that solving the above three equations makes $R_k \equiv 0$.
    \end{enumerate}
In particular, we obtain that
    \begin{equation}\label{first-coeff}
        \Phi_1 =
        \begin{psmallmatrix}
            -\frac{1}{2}H_{1} & 0 & 0\\
            0 & -\frac{\omega^2}{2} H_{1} & 0\\
            0 & 0 & -\frac{\omega}{2} H_{1}
        \end{psmallmatrix}
    \end{equation}
where $\frac{d}{dx} H_{1} = -U$ is the Hamiltonian for the $x$-variable. Also,
    \begin{equation}\label{second-coeff}
        \Phi_2 =
        \begin{psmallmatrix}
            \frac{1}{8}(H_1)^2-\frac{1}{4}H_2 & -\frac{i\omega^2\sqrt{3}}{12} U & \frac{i\omega\sqrt{3}}{12} U \\
            \frac{i\omega^2\sqrt{3}}{12} U& \omega \left(\frac{1}{8}(H_1)^2-\frac{1}{4}H_2\right) & -\frac{i\sqrt{3}}{12} U\\
            -\frac{i\omega\sqrt{3}}{12}  U &  \frac{i\sqrt{3}}{12} U& \omega^2 \left(\frac{1}{8}(H_1)^2-\frac{1}{4}H_2\right)
        \end{psmallmatrix},
    \end{equation}
where $\frac{d}{dx} H_{2} = 2V$ is the Hamiltonian for the $t_{2}$-variable.

Direct calculation then shows that:
    \begin{enumerate}
        \item The zero-curvature equation between the $\xi,x$ variables is equivalent to the string equation \eqref{string-equation};
        \item The zero-curvature equation between the $\xi,t_{2}$ variables is equivalent to the equations \eqref{U_mu}, \eqref{V_mu}, modulo the string equation\footnote{By ``modulo the string equation'' we mean that we must use the string equation to replace higher order derivatives of the functions $U,V$ in the variable $x$.},
        \item The zero-curvature equation between the $\xi,t_{5}$ variables is equivalent to the equations \eqref{U_eta}, \eqref{V_eta}, modulo the string equation,
        \item Modulo the string equation \eqref{string-equation}, \eqref{U_mu}--\eqref{V_eta}, the other zero-curvature equations between $(t_{5},t_{2}),(t_{2},x)$, and $(t_{5},x)$, vanish identically. In other words, these equations result in no new differential conditions on the functions $U,V$.
    \end{enumerate}
    \end{proof}
\begin{remark}\label{a-calculations}
    We list the 1-1 entries of the first few matrices $\Phi_k(t_{5},t_{2},x)$ here, for the convenience of the reader.
        \begin{align*}
            [\Phi_1]_{11} &= -\frac{1}{2}H_{1}, \\ 
            [\Phi_2]_{11} &= \frac{1}{8}(H_{1})^2-\frac{1}{4}H_{2},\\
            [\Phi_3]_{11} &= -\frac{1}{48}(H_{1})^3 + \frac{1}{2} H_{1}H_{2},\\
            [\Phi_4]_{11} &= \frac{1}{384}(H_{1})^4 + \frac{1}{32}(H_{1})^2H_{2} + \frac{1}{32}(H_{2})^2 -\frac{5}{24}t_{5} H_{2} + \frac{1}{12}V' - \frac{1}{96} U^2 + \frac{1}{6}t_{2} x ,\\
            [\Phi_5]_{11} &= -\frac{1}{38400}(H_{1})^5 + \frac{1}{192}(H_{1})^3 H_{2} -\frac{1}{64} H_{1}(H_{2})^2 
        - \frac{1}{24} \left(V' - \frac{1}{8} U^2 \right)H_{1}\\
        &-\frac{5}{48}t_{5} H_{1}H_{2} - \frac{1}{12}t_{2} x H_{1} +\frac{1}{90}U'''- \frac{1}{16} U U' -\frac{1}{24} UV - \frac{1}{10}H_{5},\\
            [\Phi_6]_{11} &= \frac{1}{46080}H_1^6-\frac{1}{1536}H_1^4H_2 + \frac{1}{256}H_1^2H_2^2 - \frac{1}{384}H_2^3 + \frac{1}{20}H_5H_1 +\frac{5}{192}t_5H_2H_1^2\\
            &-\frac{1}{768}\left(U^2-8V'-16t_2x\right)H_1^2 -\frac{5}{96}t_5H_2^2 + \frac{1}{180}\left(\frac{15}{4}UV + \frac{45}{8}UU'-U'''\right)H_1\\
            &+\frac{1}{384}\left(U^2-8V'-16t_2x\right)H_2 + \frac{5}{192}U^3 - \frac{1}{72}U''U + \frac{1}{12}V^2 + \frac{1}{144}(U')^2,\\
        \end{align*}
\end{remark}
As a final result of this subsection, we state without proof the equivalent Riemann-Hilbert formulation in the $\lambda$-plane.

\begin{figure}
    \begin{center}
    \begin{overpic}[scale=.5]{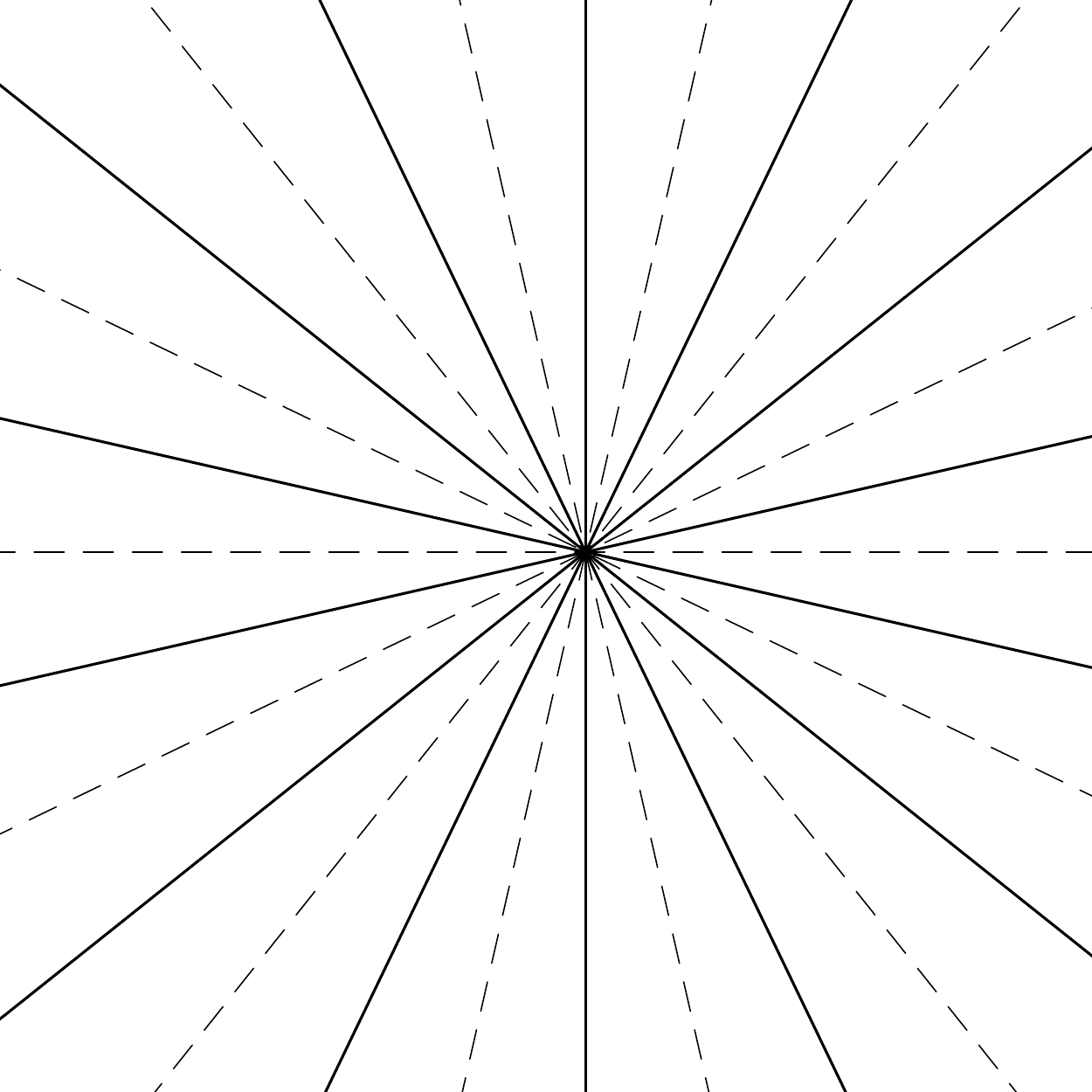}
                \put (100,50) {$\text{Re } \lambda$}
          	\put (103,60) {$\gamma_1$}
      		  \put (103,37) {$\gamma_{-1}$}
     		  \put (98,82) {$\gamma_2$}
      		  \put (98,15) {$\gamma_{-2}$}
                \put (78,95) {$\gamma_3$}
      		  \put (78,3) {$\gamma_{-3}$}
                \put (55,99) {$\gamma_4$}
      		  \put (55,1) {$\gamma_{-4}$}
                \put (34,98) {$\gamma_5$}
      		  \put (34,1) {$\gamma_{-5}$}
                \put (5,92) {$\gamma_6$}
      		  \put (5,6) {$\gamma_{-6}$}
                \put (-5,61) {$\gamma_7$}
      		  \put (-5,36) {$\gamma_{-7}$}
                \put (-4,48) {$\rho$}
                \put (80,57){\small \rotatebox{15}{\textcolor{BrickRed}{$\mathbb{I} + s_1E_{21}$}} } 
                \put (80,44){\small \rotatebox{-11}{\textcolor{BrickRed}{$\mathbb{I} + s_{-1}E_{31}$}} } 
                \put (76,70){\small \rotatebox{42}{\textcolor{BrickRed}{$\mathbb{I} + s_2E_{23}$}}  } 
                \put (79,30){\small \rotatebox{-36}{\textcolor{BrickRed}{$\mathbb{I} + s_{-2}E_{32}$}}  } 
                \put (66,83){\small \rotatebox{65}{\textcolor{BrickRed}{$\mathbb{I} + s_{3}E_{13}$}}  } 
                \put (69,19){\small \rotatebox{-61}{\textcolor{BrickRed}{$\mathbb{I} + s_{-3}E_{12}$}}  } 
                \put (50,85){\small \rotatebox{90}{\textcolor{BrickRed}{$\mathbb{I} + s_{4}E_{12}$}}  } 
                \put (54,20){\small \rotatebox{-90}{\textcolor{BrickRed}{$\mathbb{I} + s_{-4}E_{13}$}}  } 
                \put (35,90){\small \rotatebox{-63}{\textcolor{BrickRed}{$\mathbb{I} + s_{5}E_{32}$}}  } 
                \put (30,8){\small \rotatebox{65}{\textcolor{BrickRed}{$\mathbb{I} + s_{-5}E_{23}$}}  } 
                \put (18,80){\small \rotatebox{-38}{\textcolor{BrickRed}{$\mathbb{I} + s_{6}E_{31}$}}  } 
                \put (12,19){\small \rotatebox{42}{\textcolor{BrickRed}{$\mathbb{I} + s_{-6}E_{21}$}}  } 
                \put (6,61){\small \rotatebox{-12}{\textcolor{BrickRed}{$\mathbb{I} + s_{7}E_{21}$}}  } 
                \put (6,40){\small \rotatebox{12}{\textcolor{BrickRed}{$\mathbb{I} + s_{-7}E_{31}$}}  } 
                \put (6,50){\small \rotatebox{0}{\textcolor{BrickRed}{$\mathcal{S}$}}  } 
    		\end{overpic}
    \end{center}
    \caption{The Stokes lines $\gamma_j$, for the Riemann-Hilbert problem for $\Psi(\lambda)$. Each of the Stokes sectors is bisected by an anti-Stokes line, depicted by a dashed line. All contours are oriented {\bf \textit{outwards}} from the origin. Note that we have also labelled the anti-Stokes line $(-\infty,0]$ by $\rho$. The Stokes matrix $S_k$ is the matrix associated to the parameter $s_k$; these parameters are not all independent, and must satisfy the equation
    $S_{-7}\cdots S_{-1}S_{1}\cdots S_{7} = \mathcal{S}^T$.
    The equivalent diagram for the $\Phi(\xi)$-Riemann-Hilbert problem in the $\xi$ plane has $42$ rays. }
    \label{fig:StokesRays}
\end{figure}

\begin{prop}
    Define Stokes rays $\{\gamma_k\}$, $k=\pm 1,...,\pm 7$, as shown in Figure \ref{fig:StokesRays}. Explicitly, these rays are defined as
    \begin{align*}
        \gamma_{\pm k} := \left\{\lambda \big| \arg \lambda = \pm \frac{\pi}{14} \pm \frac{\pi}{7}(k-1) \right\}, \qquad k = 1,...,7.
    \end{align*}
    Furthermore, set $\rho := (-\infty,0)$; orient all of these rays outwards from the origin. Let $\{S_k,S_{-k}\}_{k=1,...,7}$ be a collection of 
    constant matrices of the form given in \ref{fig:StokesRays}, subject to the constraint 
        \begin{equation}
            S_{-7}\cdots S_{-1} S_{1} \cdots S_{7} = \mathcal{S}^T.
        \end{equation}
    Consider the following Riemann-Hilbert problem for a $3\times 3$ sectionally analytic function $\Psi(\lambda;t_{5},t_{2},x)$:
        \begin{equation}
            \begin{cases}
                \Psi_{+}(\lambda;t_{5},t_{2},x) =  \Psi_{-}(\lambda;t_{5},t_{2},x) S_k, & \lambda \in \gamma_k,\qquad k = \pm 1, ...,\pm 7,\\
                \Psi_{+}(\lambda;t_{5},t_{2},x) =  \Psi_{-}(\lambda;t_{5},t_{2},x) \mathcal{S}, & \lambda \in \rho,\\
                \Psi(\lambda;t_{5},t_{2},x) = g(\lambda)\left[\mathbb{I} + \frac{\Phi_1}{\lambda^{1/3}} + \frac{\Phi_2}{\lambda^{2/3}} + \OO(\lambda^{-1})\right]e^{\Theta(\lambda^{1/3};t_{5},t_{2},x)}, & \lambda \to \infty,
            \end{cases}
        \end{equation}
    where $g(\lambda)$ is as defined in \eqref{gauge-matrix}, and $\Phi_1$, $\Phi_2$ are as defined in Equations \eqref{first-coeff}, \eqref{second-coeff}.
    Then, the solution to this Riemann-Hilbert problem is unique, provided the asymptotics (this includes $\Phi_1,\Phi_2$!) above are specified. Furthermore,
    The functions
        \begin{align*}
            U(t_{5},t_{2},x) &:= 2\frac{d}{dx}\left[\Phi_1(t_{5},t_{2},x)\right]_{11},\\
            V(t_{5},t_{2},x) &:= -2\frac{d}{dx}\left[\Phi_2(t_{5},t_{2},x) - \frac{1}{2}\Phi_1(t_{5},t_{2},x)^2\right]_{11} = -\frac{d}{dt_2}\left[\Phi_1(t_{5},t_{2},x)\right]_{11},
        \end{align*}
    satisfy the string equation.
\end{prop}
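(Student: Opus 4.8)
\smallskip
The plan is to reduce the statement to the $\xi$-plane Riemann--Hilbert problem of Proposition~\ref{RHP-Phi-prop}, whose solvability and whose link to the string equation \eqref{string-equation} (together with its $t_2,t_5$-flows) have already been established, by literally undoing the gauge transformation \eqref{gauge-matrix} and the cubic substitution $\lambda=\xi^{3}$. Thus, given a solution $\Psi(\lambda;t_5,t_2,x)$ of the $\lambda$-plane problem I would set $\Phi(\xi):=g(\xi^{3})^{-1}\Psi(\xi^{3})$ on the triple cover; conversely, starting from the solution $\Phi$ of the $\xi$-problem I would define the candidate $\Psi(\lambda):=g(\lambda)\Phi(\lambda^{1/3})$. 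The core of the argument is to check that the monodromy data transport correctly under this correspondence; once that is done existence is immediate, and uniqueness follows from a Liouville argument.

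\smallskip
\emph{Transport of the data.} Under $\lambda=\xi^{3}$ each of the $14$ Stokes rays $\gamma_{\pm k}$ lifts to three rays in the $\xi$-plane, for a total of $42$, matching the sectors $\Omega_k$ of Proposition~\ref{RHP-Phi-prop}. Since $g$ is single-valued across every $\gamma_{\pm k}$ (none of these meets the cut of $\lambda^{\Delta/3}$, which I place along $\rho$), the jump $\Psi_{+}=\Psi_{-}S_k$ becomes $\Phi_{+}=\Phi_{-}S_k$, i.e.\ exactly the Stokes structure \eqref{Stokes-structure} after matching labels; the remaining $28$ $\xi$-Stokes matrices are then forced by the periodicity $S_{k+14}=\mathcal S^{T}S_k\mathcal S$, and the $\lambda$-plane constraint $S_{-7}\cdots S_{7}=\mathcal S^{T}$ is precisely the $\xi$-plane constraint $S_1\cdots S_{14}=\mathcal S^{T}$ of \eqref{Stokes-equations}. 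The ray $\rho$ is where the branch cut of $g(\lambda)=\tfrac{i}{\sqrt3}\lambda^{\Delta/3}(-i\sqrt3\,\mathcal U)$ lives: crossing it multiplies the chosen determination of $\lambda^{1/3}$ by a cube root of unity, and I would combine $g(\omega\xi)=g(\xi)\mathcal S^{T}$ (from the proof of Proposition~\ref{N-sym}) with $\Phi^{(0)}(\omega\xi)=\mathcal S\,\Phi^{(0)}(\xi)$ (a consequence of \eqref{phi-origin-solution}) to see that the prescribed jump $\Psi_{+}=\Psi_{-}\mathcal S$ on $\rho$ is exactly what makes $\Psi$ consistent. Crucially, the unit circle and the inner connection matrix $\mathcal C$ of the $\xi$-problem are absorbed here: using $(-i\sqrt3\,\mathcal U)\mathcal U^{-1}=-i\sqrt3\,\mathbb I$ one computes $g(\lambda)\Phi^{(0)}(\lambda^{1/3})=\lambda^{\Delta/3}\bigl(\mathbb I+\OO(\lambda^{1/3})\bigr)\lambda^{-\Delta/3}=\mathbb I+\OO(\lambda^{-1/3})$ near $\lambda=0$, so $\Psi$ is holomorphic on $0<|\lambda|$ off the finitely many rays, with only an integrable $\OO(\lambda^{-1/3})$ singularity at the origin, and no separate inner problem survives. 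I would record here the bookkeeping-heavy verification that the monodromy data of the two formulations are in bijection, in particular that a canonical $\mathcal C$ is recovered from the local behaviour of $\Psi$ at $\lambda=0$.

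\smallskip
\emph{Existence and uniqueness.} For existence, take the solution of the $\xi$-problem furnished by Proposition~\ref{RHP-Phi-prop} and set $\Psi:=g\Phi$; the computations above show it solves the stated problem, including the asymptotics at $\infty$ with coefficients $\Phi_1,\Phi_2$, since the subexponential coefficients of $\Phi=g^{-1}\Psi$ are by construction those of $\Psi(\lambda)=g(\lambda)\bigl[\mathbb I+\Phi_1\lambda^{-1/3}+\Phi_2\lambda^{-2/3}+\OO(\lambda^{-1})\bigr]e^{\Theta}$. For uniqueness, if $\Psi,\tilde\Psi$ both solve the problem set $R:=\Psi\tilde\Psi^{-1}$: all $S_k$- and $\mathcal S$-jumps cancel, so $R$ is holomorphic on $\CC\setminus\{0\}$; near $0$ it is $\mathbb I+\OO(\lambda^{-1/3})$, hence bounded and holomorphically extendable to $0$; near $\infty$, writing the subexponential parts of $\Psi$ and $\tilde\Psi$ as $g\bigl(\mathbb I+\Phi_1\lambda^{-1/3}+\Phi_2\lambda^{-2/3}+\cdots\bigr)$ with \emph{the same} $\Phi_1,\Phi_2$, the difference of the bracketed series is $\OO(\lambda^{-1})$, and after conjugation by $g\sim\tfrac{i}{\sqrt3}\lambda^{\Delta/3}(-i\sqrt3\mathcal U)$ the worst surviving growth is $\lambda^{2/3}\cdot\lambda^{-1}=\lambda^{-1/3}$, so $R=\mathbb I+\OO(\lambda^{-1/3})$ at $\infty$ as well. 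Thus $R$ is entire and tends to $\mathbb I$, whence $R\equiv\mathbb I$. This is exactly why the first two subexponential coefficients $\Phi_1,\Phi_2$ must both be prescribed: fixing $\Phi_1$ alone would leave an $\OO(1)$ ambiguity after conjugating by $g$.

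\smallskip
\emph{The string equation, and the main difficulty.} Finally, with $\Psi=g\Phi$ the coefficients $\Phi_1,\Phi_2$ in the $\lambda$-asymptotics coincide with the leading coefficients of the $\xi$-expansion of $\Phi$, for which Proposition~\ref{RHP-Phi-prop}, together with \eqref{first-coeff} and \eqref{second-coeff}, gives $[\Phi_1]_{11}=-\tfrac12 H_1$ and $[\Phi_2]_{11}=\tfrac18H_1^{2}-\tfrac14H_2$ with $\tfrac{d}{dx}H_1=-U$, $\tfrac{d}{dx}H_2=2V$; hence the definitions of $U,V$ here are literally those of Proposition~\ref{RHP-Phi-prop}, and therefore $U,V$ solve the string equation and its compatibility flows. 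The step I expect to be the main obstacle is the transport of the data across $\rho$: getting the cube-root and gauge branch cuts exactly right, and showing that the $42$ $\xi$-rays, the inner connection matrix $\mathcal C$, and the single cut $\rho$ fit together consistently so that the two Riemann--Hilbert problems really carry the same monodromy data. Everything else is either standard Riemann--Hilbert theory (existence, Liouville) or computations already carried out in the $\xi$-picture.
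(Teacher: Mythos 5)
The paper actually states this proposition without proof, as the $\lambda$-plane restatement of the $\xi$-plane results, and your plan — undo the gauge transformation \eqref{gauge-matrix} and the substitution $\lambda=\xi^{3}$ and quote Proposition \ref{RHP-Phi-prop} — is exactly the intended route. Your analysis at infinity is also the right heart of the uniqueness claim: prescribing both $\Phi_1$ and $\Phi_2$ makes the bracketed series of two solutions agree to $\OO(\lambda^{-1})$, and conjugation by $g\sim\lambda^{\Delta/3}\mathcal{U}$ costs at most $\lambda^{2/3}$, leaving $\mathbb{I}+\OO(\lambda^{-1/3})$; fixing $\Phi_1$ alone would leave an $\OO(1)$ ambiguity, which is the gauge freedom the paper's remark refers to.

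Two steps, however, are not right as written. First, the assertion that ``the unit circle and the inner connection matrix $\mathcal{C}$ are absorbed'' because $g(\lambda)\Phi^{(0)}(\lambda^{1/3})=\mathbb{I}+\OO(\lambda^{-1/3})$ near $\lambda=0$ does not do what you need: if you transplant the full piecewise solution of the $\xi$-problem (inner piece $\Phi^{(0)}$ inside, outer pieces $\Phi^{(k)}$ outside), then $\Psi=g\Phi$ still carries the constant jump $\Psi_{+}=\Psi_{-}\mathcal{C}$ across $|\lambda|=1$ (by \eqref{connection-B}), which is not part of the stated contour, and no integrability estimate at the origin removes a non-identity constant jump on a circle. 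The correct move is to discard the inner piece and define $\Psi$ inside the disc as the analytic continuation of the outer piece $g\Phi^{(1)}=g\Phi^{(0)}\mathcal{C}$ — this continuation exists because each $\Phi^{(k)}$ solves the linear ODE \eqref{Phi-ODE}, whose only singularities are $0$ and $\infty$ — after which the circle carries no jump, and $\mathcal{C}$ survives only in the (unconstrained) local behavior at $\lambda=0$; that is precisely why the $\lambda$-problem loses the three $\mathcal{C}$-parameters and must compensate by prescribing $\Phi_1,\Phi_2$. Second, in the Liouville step: the stated problem imposes nothing at $\lambda=0$, so the bound $R=\mathbb{I}+\OO(\lambda^{-1/3})$ there does not follow from the hypotheses; you must either build a mild growth condition at the origin into the problem or derive it from the construction. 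Moreover ``$\mathbb{I}+\OO(\lambda^{-1/3})$, hence bounded'' is false as stated (the singularity is nonetheless removable, since $\lambda R(\lambda)\to 0$), and you also need control of $\tilde\Psi^{-1}$ near $0$, e.g.\ by first showing $\det\tilde\Psi\equiv 1$ (no jumps, limit $1$ at infinity, mild behavior at the origin). With these repairs the argument goes through, and the identification of $U,V$ via \eqref{first-coeff}, \eqref{second-coeff} with the quantities of Proposition \ref{RHP-Phi-prop}, hence with solutions of the string equation, is as you say.
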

As a corollary of the results of this section, we obtain the result of  Proposition \ref{PropA}:
    \begin{cor}
        Proposition \ref{PropA} holds.
    \end{cor}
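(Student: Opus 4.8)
The plan is to obtain the corollary by transporting the $\xi$-plane results of this section back to the $\lambda$-plane. First I would note that the change of unknown $\Psi = g(\lambda)\Phi$ together with the substitution $\lambda = \xi^3$ is invertible and, crucially, independent of the times $t_5, t_2, x$. It therefore sets up a bijection between solutions of \eqref{Equation-1} and solutions of \eqref{Phi-ODE}, intertwines the Riemann-Hilbert problem for $\Psi(\lambda)$ with the one for $\Phi(\xi)$, and relates the monodromy data of the two systems (Stokes matrices, connection matrix) by time-independent conjugations. Consequently, the family $\Psi(\lambda;t_5,t_2,x)$ is isomonodromic in $(t_5,t_2,x)$, in the sense of Appendix \ref{AppendixA}, if and only if the family $\Phi(\xi;t_5,t_2,x)$ is.

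Second, I would invoke Proposition \ref{RHP-Phi-prop}, and in particular the four assertions established at the end of its proof: the zero-curvature equation between the $\xi$- and $x$-variables is equivalent to the string equation \eqref{string-equation}; between $\xi$ and $t_2$ it is equivalent to \eqref{U_mu}--\eqref{V_mu}, modulo the string equation; between $\xi$ and $t_5$ to \eqref{U_eta}--\eqref{V_eta}, modulo the string equation; and the remaining $(t_5,t_2)$, $(t_2,x)$, $(t_5,x)$ zero-curvature equations impose no further conditions on $U,V$. The $t_2$- and $t_5$-flows arising here are precisely the reduced KP flows $Q^{2/3}_+$ and $Q^{5/3}_+$, as made explicit in Appendix \ref{AppendixA}. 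Combining the two steps gives exactly the equivalence asserted in Proposition \ref{PropA}.

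The one point that requires genuine care -- and the main obstacle -- is reconciling the abstract notion of an isomonodromic deformation of the rational connection \eqref{L-connection} with the concrete zero-curvature computation carried out for $\Phi$. Specifically, one must check that differentiating the formal solution $\Phi(\xi;t_5,t_2,x)$ with respect to each of $t_5, t_2, x$ produces deformation matrices that are rational in $\xi$, with the pole structure anticipated for $\mathcal{N}, \mathcal{M}, \mathcal{E}$ (a second-order pole at $\xi = 0$ inherited from the resonant singularity, plus the expected polynomial part at $\xi = \infty$), and that conjugating these back through $g$ yields deformation matrices rational in $\lambda$; this is what legitimizes the phrase ``isomonodromic deformations of \eqref{L-problem}''. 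Once that is in place, the corollary follows with no computation beyond what already appears in the proof of Proposition \ref{RHP-Phi-prop}.
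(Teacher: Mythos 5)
Your proposal is correct and follows essentially the same route as the paper, which states the corollary as an immediate consequence of the section's results: the time-independent gauge/coordinate change $\Psi = g(\lambda)\Phi$, $\lambda=\xi^3$ transfers the $\xi$-plane analysis back to \eqref{L-problem}, and the zero-curvature computations in the proof of Proposition \ref{RHP-Phi-prop} supply exactly the equivalence with the string equation and the reduced KP flows \eqref{U_mu}--\eqref{V_eta}. Your added remark about the rationality of the deformation matrices is consistent with the paper's framework (the matrices $\mathcal{Q}$, $M$, $E$ of Appendix \ref{AppendixA} provide them explicitly), so no genuine gap remains.
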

\begin{remark}
    It is important to note that the solution to the above Riemann-Hilbert problem is not unique \textit{unless the coefficients $\Phi_1$, $\Phi_2$ in the asymptotic expansion are specified.} This is again a phenomenon shared by the Painlev\'{e} I system \cite{Kapaev,Kapaev-Kitaev,FIKN}, and is ultimately 
    due to the gauge freedom arising from the resonant singularity at the origin. Since all choices of gauge lead to the same integrability condition, we are
    free to fix a gauge, and work in it. Fixing a gauge is equivalent to making a choice of the form of $\Phi_1$, $\Phi_2$, up to multiplication by a 
    unimodular lower triangular matrix. This gauge freedom was first pointed out in the work of Drinfeld and Sokolov \cite{DrinfeldSokolov}.
\end{remark}

\subsection{A non-local $\ZZ_2$ symmetry reduction.}\label{Z2-Symmetry-Breaking}
As it turns out, there is an additional (and generically, non-local) symmetry of the above Riemann-Hilbert problem, which can be identified with the $\ZZ_2$-parity of the magnetic field. This is summarized in the following Proposition:
    \begin{prop}
        Let $\Phi(\xi;t_{5},t_{2},x)$ be a solution to the Riemann-Hilbert problem defined by Proposition \ref{RHP-Phi-prop}. Then, 
            \begin{equation}
                \Phi(-\xi;t_{5},-t_{2},x)^{-T} = \Phi(\xi;t_{5},t_{2},x),
            \end{equation}
        provided the Stokes parameters satisfy
            \begin{equation}\label{Z2-Stokes-symmetry}
                s_k(t_{5},t_{2},x;U,U',U'',U''',V,V') = -s_{k+7}(t_{5},-t_{2},x;U,U',U'',U''',-V,-V'),\qquad k\in \ZZ_{14}.
            \end{equation}
    \end{prop}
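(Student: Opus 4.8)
The plan is to interpret the map $\Phi(\xi;t_5,t_2,x)\mapsto\widehat\Phi(\xi;t_5,t_2,x):=\Phi(-\xi;t_5,-t_2,x)^{-T}$ as an operation on the Riemann--Hilbert data of Proposition~\ref{RHP-Phi-prop}, and then to recognize \eqref{Z2-Stokes-symmetry} as precisely the statement that this operation fixes the data of $\Phi$; once that is done, $\widehat\Phi$ and $\Phi$ solve one and the same Riemann--Hilbert problem, so they coincide by the uniqueness part of Proposition~\ref{RHP-Phi-prop}.

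I would begin with two algebraic inputs. First, a direct substitution into \eqref{string-equation}, \eqref{U_mu}--\eqref{V_eta} shows that $(U,V)(t_5,t_2,x)\mapsto(U,-V)(t_5,-t_2,x)$ maps solutions to solutions: $t_2$ and odd powers of $V$ occur only in the first equation of \eqref{string-equation} and in the $t_2$-flow, and they transform covariantly. In the Darboux coordinates of Theorem~\ref{Hamiltonian-Prop} this is the symplectic involution $\sigma$ flipping $(Q_V,P_V,t_2)\mapsto(-Q_V,-P_V,-t_2)$ and fixing $(Q_U,Q_W,P_U,P_W,t_1,t_5)$; from \eqref{Hamiltonian-nu}--\eqref{Hamiltonian-eta} one reads off $H_1\circ\sigma=H_1$, $H_2\circ\sigma=-H_2$, $H_5\circ\sigma=H_5$, so $\sigma$ genuinely preserves the nonautonomous Hamiltonian system. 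Second, I would verify the matching symmetry of the rational connection by a direct computation on the explicit matrix \eqref{L-connection}: that $L$ evaluated at the $\sigma$-conjugate coordinates equals $L(\lambda;\mathbf{Q},\mathbf{P},t)^{T}$ up to conjugation by an explicit constant matrix, and hence, after the gauge \eqref{gauge-matrix} and $\lambda=\xi^3$, that $\mathcal{L}(\xi)$ at the $\sigma$-conjugate solution equals $\mathcal{L}(-\xi;t_5,-t_2,x)^{T}$. Since a one-line computation gives $\partial_\xi\bigl(\Phi(-\xi;t_5,-t_2,x)^{-T}\bigr)=\mathcal{L}(-\xi;t_5,-t_2,x)^{T}\,\Phi(-\xi;t_5,-t_2,x)^{-T}$, this identifies $\widehat\Phi$ as a fundamental solution of the ODE \eqref{Phi-ODE} attached to the $\sigma$-conjugate solution.

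The remaining (and longest) step is to transcribe the monodromy data. The substitution $\xi\mapsto-\xi$ rotates the $42$-ray fan \eqref{sector-def} by $\pi$, i.e.\ by $21$ rays; as $21\equiv3\pmod{6}$ this permutes the elementary-matrix pattern of \eqref{Stokes-structure} by $E_{ij}\mapsto E_{ji}$, and as $21\equiv7\pmod{14}$ it shifts the fourteen-periodic Stokes index by $7$. Composing this with the rule $J\mapsto J^{-T}$ — the effect of $(\cdot)^{-T}$ on a jump $\Phi_+=\Phi_-J$ — sends $S_k=\mathbb{I}+s_kE_{ij}$ to $\mathbb{I}+(-s_{k+7})E_{ij}$, which is exactly the content of \eqref{Z2-Stokes-symmetry}. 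One then checks that the asymptotic normalization at $\xi=\infty$ is preserved: from \eqref{xi-exponents}, $\vartheta_j(-\xi;t_5,-t_2,x)=-\vartheta_j(\xi;t_5,t_2,x)$, hence $e^{\Theta(-\xi;t_5,-t_2,x)}=e^{-\Theta(\xi;t_5,t_2,x)}$ and $(e^{-\Theta})^{-T}=e^{\Theta}$ since $\Theta$ is diagonal, so $\widehat\Phi$ still has the form $\mathbb{I}+\OO(\xi^{-1})$ in the principal sector; using $H_1\circ\sigma=H_1$, $H_2\circ\sigma=-H_2$ and \eqref{first-coeff}--\eqref{second-coeff} one then confirms that the prescribed subleading coefficients $\Phi_1,\Phi_2$ come out correctly. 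Finally one must track the local model \eqref{connection-A}, \eqref{phi-origin-solution} at $\xi=0$: the factors $(-\xi)^{\pm\Delta}$ together with the transpose-inverse turn it into a model of the same shape with connection matrix a definite transform $\widehat{\mathcal{C}}$ of $\mathcal{C}$, and, using the constraint on $\mathcal{C}$ from Proposition~\ref{RHP-Phi-prop} together with the lower-triangular gauge freedom of Remark~\ref{gauge-remark}, one checks $\widehat{\mathcal{C}}=\mathcal{C}$ whenever the Stokes data agree. Thus, under \eqref{Z2-Stokes-symmetry}, $\widehat\Phi$ solves exactly the Riemann--Hilbert problem of Proposition~\ref{RHP-Phi-prop}, and $\widehat\Phi=\Phi$ by uniqueness.

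I expect the main obstacle to be the bookkeeping at the resonant singularity at $\xi=0$: keeping straight the interaction of $(-\xi)^{\pm\Delta}$, the transpose-inverse, the permutation exchanging the first and third components, and the connection matrix $\mathcal{C}$ — which is only defined modulo the lower-triangular gauge of Remark~\ref{gauge-remark} — and confirming that the induced transformation of $\mathcal{C}$ is indeed trivial in the allowed sense; this is also where the precise shape of \eqref{Z2-Stokes-symmetry} (the half-period shift $k\mapsto k+7$ and the overall sign) is ultimately pinned down. A secondary, purely computational point is the Lax-connection symmetry $\mathcal{L}(\xi;\sigma(\cdot))=\mathcal{L}(-\xi;\cdot)^{T}$, which is cleanest to verify with the accompanying Maple worksheets, and — since the claimed symmetry should be $\ZZ_2$-equivariant for the whole reduced hierarchy — checking that the $\sigma$-conjugate solution satisfies not only \eqref{string-equation} but also \eqref{U_mu}--\eqref{V_eta}.
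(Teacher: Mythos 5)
Your proposal is correct in its core and rests on the same mechanism as the paper's proof: observe that $-\Theta(-\xi;t_{5},-t_{2},x)^{T}=\Theta(\xi;t_{5},t_{2},x)$ so that $\widehat\Phi(\xi):=\Phi(-\xi;t_{5},-t_{2},x)^{-T}$ has the same normalization at infinity, match the ray jumps (the shift by $21$ sectors combined with $J\mapsto J^{-T}$, using \eqref{Stokes-structure} and \eqref{Stokes-equations}, exactly as you compute), and conclude by a Liouville/uniqueness argument. Where you genuinely diverge is in how the stated form \eqref{Z2-Stokes-symmetry} — with arguments $(U,\dots,-V,-V')$ rather than the functions evaluated at $-t_{2}$ — is reached. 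The paper first derives the intermediate relation \eqref{Z2-sym-2}, proves the equality of the two functions under it, and only then extracts the parity of $U,V,H_{1},H_{2},H_{5}$ by equating expansion coefficients, using this a posteriori to identify \eqref{Z2-sym-2} with \eqref{Z2-Stokes-symmetry}. You instead pre-establish that $(U,V,t_{2})\mapsto(U,-V,-t_{2})$ is a symmetry of \eqref{string-equation}, \eqref{U_mu}--\eqref{V_eta} (and of the Hamiltonians), together with the Lax-level identity $\mathcal{L}(-\xi;t_{5},-t_{2},x)^{T}=\mathcal{L}(\xi)$ at the $\sigma$-conjugate solution, so that $\widehat\Phi$ is identified from the outset as the solution attached to the $\sigma$-conjugate data and the hypothesis \eqref{Z2-Stokes-symmetry} enters in exactly the form stated. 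This costs you an extra (Maple-checkable) matrix identity that the paper never needs, but it buys a cleaner logical direction: you prove the implication actually asserted in the proposition, whereas the paper's passage from \eqref{Z2-sym-2} to \eqref{Z2-Stokes-symmetry} is argued in the opposite direction and then declared an equivalence.

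One caveat: your final step, that the induced connection matrix at the origin satisfies $\widehat{\mathcal{C}}=\mathcal{C}$ (up to the lower-triangular gauge of Remark \ref{gauge-remark}) whenever the Stokes data agree, is asserted rather than proved, and you rightly flag it as the main obstacle. Be aware that the paper's own proof simply does not engage with the jump on the unit circle or the resonant data at $\xi=0$ at all — it compares only the Stokes matrices and invokes Liouville — so you are not missing something the paper supplies; you are surfacing a bookkeeping point the paper glosses over, and to make your version fully rigorous you would need to carry out that check (the constant factor $(-\xi)^{\Delta}\xi^{\Delta}=(-1)^{\Delta}$ and the permutation exchanging the first and third components are the ingredients) rather than leave it as a claim.
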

    \begin{proof}
        Note that $-\Theta(-\xi;t_{5},-t_{2},x)^{T} = \Theta(\xi;t_{5},t_{2},x)$. This implies that the functions $\Phi(\xi;t_{5},t_{2},x)$ and
        $\Phi(-\xi;t_{5},-t_{2},x)^{-T}$ both have the same leading-order asymptotics at infinity. Thus, if these two functions have the same jumps,
        then their ratio is holomorphic, and equal to the identity at infinity; the usual Liouville argument then implies that 
        $\Phi(\xi;t_{5},t_{2},x) = \Phi(-\xi;t_{5},-t_{2},x)^{-T}$. Comparing the jumps of the these two functions, we see that the Stokes 
        matrices must satisfy
            \begin{equation*}
                S_k(t_{5},t_{2},x,U,U',U'',U''',V,V') = S_{k+21}(t_{5},-t_{2},x,\check{U},\check{U}',\check{U}'',\check{U}''',\check{V},\check{V}')^{-T} ,\qquad k\in \ZZ_{42},
            \end{equation*}
        where $\check{f}(t_{5},t_{2},x) = f(t_{5},-t_{2},x)$. Using the relations \eqref{Stokes-equations}, along with the formulae \eqref{Stokes-structure},
        this implies the following relation on the Stokes parameters:
            \begin{equation}\label{Z2-sym-2}
                s_k(t_{5},t_{2},x,U,U',U'',U''',V,V') = -s_{k+7}(t_{5},-t_{2},x,\check{U},\check{U}',\check{U}'',\check{U}''',\check{V},\check{V}')^T,\qquad k\in \ZZ_{14}.
            \end{equation}
        In fact, the above is equivalent to \eqref{Z2-Stokes-symmetry}. To see this, suppose that the relation \eqref{Z2-sym-2} holds, and expand the solutions
        $\Phi(\xi;t_{5},t_{2},x)$, $\Phi(-\xi;t_{5},-t_{2},x)^{-T}$ at infinity. One finds that
            \begin{align*}
                \Phi(\xi;t_{5},t_{2},x) &= \left[\mathbb{I} + \frac{\Phi_1(t_{5},t_{2},x)}{\xi} + \frac{\Phi_2(t_{5},t_{2},x)}{\xi^2} + \OO(\xi^{-3})\right]e^{\Theta(\xi;t_{5},t_{2},x)},\\
                \Phi(-\xi;t_{5},-t_{2},x)^{-T} &= \left[\mathbb{I} + \frac{\Phi_1^{T}(t_{5},-t_{2},x)}{\xi} + \frac{\Phi_1^{2T}(t_{5},-t_{2},x) - \Phi_2^T(t_{5},-t_{2},x)}{\xi^2} + \OO(\xi^{-3})\right]e^{\Theta(\xi;t_{5},t_{2},x)}
            \end{align*}
        Equating the coefficients\footnote{One should calculate the coefficients up to $\Phi_3$; from here, all of the relations stated can be inferred. This inference is direct for all of the relations except the one for $H_{5}$; this relation can be inferred from the rest of the relations, and the formula \eqref{Hamiltonian-eta} for $H_{5}$.}, one finds that
            \begin{equation*}
                H_{1}(t_{5},-t_{2},x) = H_{1}(t_{5},t_{2},x),\qquad H_{2}(t_{5},-t_{2},x) = -H_{2}(t_{5},t_{2},x), \qquad H_{5}(t_{5},-t_{2},x) = H_{5}(t_{5},t_{2},x),
            \end{equation*}
            \begin{equation*}
                U(t_{5},-t_{2},x) = U(t_{5},t_{2},x),\qquad\qquad V(t_{5},-t_{2},x) = -V(t_{5},t_{2},x).
            \end{equation*}
        In other words, $U,H_{1},H_{5}$ are even functions of $t_{2}$, and $H_{2}, V$ are odd functions of $t_{2}$. This justifies the equivalence of
        \eqref{Z2-sym-2} and \eqref{Z2-Stokes-symmetry}.
    \end{proof}
As a consequence of the above Proposition, we obtain a number of important corollaries:
\begin{cor}
    If the Stokes parameters satisfy Relation \eqref{Z2-Stokes-symmetry}, then functions $U,V$, and the Hamiltonians $H_{1}, H_{2}, H_{5}$ 
    satisfy the following relations:
        \begin{equation}\label{Z2-Hamiltonians}
            H_{1}(t_{5},-t_{2},x) = H_{1}(t_{5},t_{2},x),\quad H_{2}(t_{5},-t_{2},x) = -H_{2}(t_{5},t_{2},x), \quad H_{5}(t_{5},-t_{2},x) = H_{5}(t_{5},t_{2},x),
        \end{equation}
        \begin{equation}\label{Z2-UV}
            U(t_{5},-t_{2},x) = U(t_{5},t_{2},x),\qquad\qquad V(t_{5},-t_{2},x) = -V(t_{5},t_{2},x).
        \end{equation}
    Furthermore, the Okamoto $\tau$-function, defined by $d\log\tau_{Okamoto} = H_{5}dt_{5} + H_{2}dt_{2} + H_{1}dx$, satisfies
    \sloppy $\tau_{Okamoto}(t_{5},-t_{2},x) = \tau_{Okamoto}(t_{5},t_{2},x)$.
\end{cor}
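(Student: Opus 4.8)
The plan is to read off almost everything from the Proposition just proved. Indeed, the relations \eqref{Z2-Hamiltonians} and \eqref{Z2-UV} were already obtained in the course of that proof: under the symmetry hypothesis \eqref{Z2-Stokes-symmetry} one has $\Phi(-\xi;t_{5},-t_{2},x)^{-T} = \Phi(\xi;t_{5},t_{2},x)$, and equating the coefficients $\Phi_1,\Phi_2,\Phi_3$ in the expansions at $\xi=\infty$ of the two sides forces precisely $H_{1}(t_{5},-t_{2},x) = H_{1}(t_{5},t_{2},x)$, $H_{2}(t_{5},-t_{2},x) = -H_{2}(t_{5},t_{2},x)$, $H_{5}(t_{5},-t_{2},x) = H_{5}(t_{5},t_{2},x)$, $U(t_{5},-t_{2},x) = U(t_{5},t_{2},x)$, and $V(t_{5},-t_{2},x) = -V(t_{5},t_{2},x)$ (using $[\Phi_1]_{11} = -\tfrac12 H_{1}$, $[\Phi_2]_{11} = \tfrac18 H_{1}^2 - \tfrac14 H_{2}$, and the formula \eqref{Hamiltonian-eta} for $H_{5}$, as recorded in Remark \ref{a-calculations}). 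So the first two displays of the corollary require nothing new; I would simply quote them.

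For the $\tau$-function I would argue by pulling back the closed one-form $\boldsymbol{\omega}_{Okamoto}$ along the involution $\sigma\colon(t_{5},t_{2},x)\mapsto(t_{5},-t_{2},x)$ of the parameter space. Set $\tilde{\tau} := \tau_{Okamoto}\circ\sigma$, so that ${\bf d}\log\tilde{\tau} = \sigma^{*}\boldsymbol{\omega}_{Okamoto}$. Since $\sigma^{*}(dt_{5}) = dt_{5}$, $\sigma^{*}(dt_{2}) = -dt_{2}$, $\sigma^{*}(dx) = dx$, and since by \eqref{Z2-Hamiltonians} one has $\sigma^{*}H_{5} = H_{5}$, $\sigma^{*}H_{1} = H_{1}$ while $\sigma^{*}H_{2} = -H_{2}$, the middle term contributes $(-H_{2})(-dt_{2}) = H_{2}\,dt_{2}$, and therefore
    \begin{equation*}
        {\bf d}\log\tilde{\tau} = H_{5}\,dt_{5} + H_{2}\,dt_{2} + H_{1}\,dx = {\bf d}\log\tau_{Okamoto}.
    \end{equation*}
Hence $\tilde{\tau}/\tau_{Okamoto}$ is locally constant: $\tau_{Okamoto}(t_{5},-t_{2},x)$ and $\tau_{Okamoto}(t_{5},t_{2},x)$ agree up to a multiplicative constant.

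It remains to pin this constant to $1$, and this is the only step requiring any care. Restricting to the hyperplane $\{t_{2}=0\}$, the involution $\sigma$ is the identity, so $\tilde{\tau}$ and $\tau_{Okamoto}$ coincide there tautologically; since $\tau_{Okamoto}$ is defined up to a single overall constant on a connected parameter domain that meets $\{t_{2}=0\}$ — precisely the locus on which the $\ZZ_{2}$-symmetric reduction of the preceding Remark lives — the locally constant ratio is forced to equal $1$ throughout that domain. (Equivalently, one may simply normalize $\tau_{Okamoto}$ along $\{t_{2}=0\}$, which is consistent because that reduction is itself well-posed.) This gives $\tau_{Okamoto}(t_{5},-t_{2},x) = \tau_{Okamoto}(t_{5},t_{2},x)$. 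The genuine obstacle is thus only this normalization/connectedness bookkeeping, together with keeping track of the signs in the pullback; the heart of the matter — parity of $U,V,H_{1},H_{2},H_{5}$ — is inherited directly from the Proposition.
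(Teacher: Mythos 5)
Your proof is correct and follows the route the paper intends: the parity relations \eqref{Z2-Hamiltonians}, \eqref{Z2-UV} are exactly what is derived inside the proof of the preceding Proposition (the paper states the corollary without further argument), and the $\tau$-statement is the immediate pullback computation $\sigma^{*}\boldsymbol{\omega}_{Okamoto}=\boldsymbol{\omega}_{Okamoto}$ with the multiplicative constant fixed on the fixed locus $\{t_{2}=0\}$. Your extra care about normalization and connectedness is a fine point the paper glosses over, but it does not change the substance.
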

\begin{cor}
    If the Stokes parameters satisfy Relation \eqref{Z2-Stokes-symmetry}, and $t_{2} = 0$, then $V \equiv 0$, $H_{2} \equiv 0$, and the string equation
    \eqref{string-equation} reduces to
        \begin{equation}
            \frac{1}{12} U^{(4)} -\frac{3}{4}U''U -\frac{3}{8}(U')^2 + \frac{1}{2}U^3 - \frac{5}{12}t_{5}\left(3U^2 - U''\right) + x.
        \end{equation}
    The only other nonzero part of the string equation is then
        \begin{equation}
            \frac{\partial U}{\partial t_{5}} =\frac{\partial}{\partial x}\left[-\frac{1}{6}UU'' + \frac{1}{8}(U')^2 + \frac{1}{4}U^3 
            - \frac{5}{9}t_{5}\left(3U^2-U''\right) + \frac{4}{3}x\right].
        \end{equation}
    Furthermore, the generic dimension of the Stokes manifold is reduced from $6$ to $4$.
\end{cor}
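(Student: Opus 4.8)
The plan is to combine the parity relations of the preceding corollary with a direct substitution into the string equation, and then to read off the dimension of the symmetric Stokes manifold from the explicit parametrization recorded just above. First I would invoke \eqref{Z2-Hamiltonians}--\eqref{Z2-UV}: when the Stokes data obey \eqref{Z2-Stokes-symmetry} one has $V(t_{5},-t_{2},x)=-V(t_{5},t_{2},x)$ and $H_{2}(t_{5},-t_{2},x)=-H_{2}(t_{5},t_{2},x)$. Since $U,V,H_{1},H_{2},H_{5}$ are single-valued (indeed meromorphic) in $t_{2}$, evaluating these relations at $t_{2}=0$ forces $V\equiv 0$ and $H_{2}\equiv 0$ on the slice $\{t_{2}=0\}$; no finer analyticity is needed. (As a consistency check, setting $Q_{V}=P_{V}=t_{2}=0$ directly in \eqref{Hamiltonian-mu} also annihilates $H_{2}$, since every monomial there carries a factor of $Q_{V}$, $P_{V}$, or $t_{2}$.)

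Next I would substitute $V\equiv 0$ and $t_{2}=0$ into \eqref{string-equation} and into the compatibility flows \eqref{U_mu}--\eqref{V_eta}. The first line of \eqref{string-equation} becomes the tautology $0=0$; the second line loses only its $\tfrac{3}{2}V^{2}$ term and reduces exactly to the stated fourth-order ODE. Among the four flows, $\partial_{t_{2}}U=-2V'$ and the $t_{5}$-flow for $V$ reduce to $0=0$ on the slice (every term in them carries a factor of $V$, $V'$, or $t_{2}$), the $t_{5}$-flow for $U$ drops its $-\tfrac{1}{2}V^{2}$ term and becomes the stated evolution equation, while $\partial_{t_{2}}V=\tfrac{1}{6}U'''-UU'$ merely prescribes the transverse $t_{2}$-velocity of $V$ and imposes no constraint on $U$ within the slice. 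Thus on $\{t_{2}=0\}$ the effective content of \eqref{string-equation}, \eqref{U_mu}--\eqref{V_eta} is precisely the reduced fourth-order string equation together with the reduced $t_{5}$-flow for $U$, i.e.\ the $2+2$-dimensional completely integrable non-autonomous Hamiltonian system obtained in \S2 by setting $Q_{V}=P_{V}=t_{2}=0$ in \eqref{Hamiltonian-nu}--\eqref{Hamiltonian-eta}.

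It then remains to count the dimension of the symmetric Stokes manifold. At the fixed locus $t_{2}=0$, $V\equiv 0$, relation \eqref{Z2-Stokes-symmetry} collapses to $s_{k+7}=-s_{k}$, $k\in\mathbb{Z}_{14}$, that is, seven relations among the fourteen Stokes parameters. Inserting the explicit rational parametrization of the generic six-dimensional Stokes manifold (the formulas for $s_{7},\dots,s_{14}$ in terms of the free parameters $s_{1},\dots,s_{6}$ recorded above), the relations $s_{10}=-s_{3}$ and $s_{11}=-s_{4}$ become $W_{1}=s_{3}$ and $W_{2}=s_{4}$, and I would then verify that, once these two hold, the remaining five relations $s_{8}=-s_{1}$, $s_{9}=-s_{2}$, $s_{12}=-s_{5}$, $s_{13}=-s_{6}$, $s_{14}=-s_{7}$ are satisfied identically. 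Since $W_{1}=s_{3}$ and $W_{2}=s_{4}$ are generically independent (they can be solved for, say, $s_{1}$ and $s_{6}$ in terms of $s_{2},s_{3},s_{4},s_{5}$), the symmetric locus has dimension $6-2=4$. As an independent sanity check, the reduced fourth-order ODE has a four-dimensional solution space, which under the Riemann--Hilbert correspondence of Proposition \ref{RHP-Phi-prop} must match the dimension of the Stokes manifold.

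The step I expect to be the main obstacle is the last one: establishing that the seven symmetry constraints cut the Stokes manifold down by exactly two dimensions rather than seven. This forces one through the substitution into the (somewhat unwieldy) rational parametrization and the verification of the cancellations, and it is valid only on the generic stratum $W_{1}W_{2}\neq 0$ on which that parametrization applies; the behaviour on $\{W_{1}W_{2}=0\}$ and on the special planes $\Pi_{0},\dots,\Pi_{6}$ would have to be handled separately, which is precisely why the statement is phrased in terms of the \emph{generic} dimension.
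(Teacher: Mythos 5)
Your proposal is correct. The first half ($V\equiv 0$, $H_{2}\equiv 0$ by evaluating the odd parity relations \eqref{Z2-UV}, \eqref{Z2-Hamiltonians} at $t_{2}=0$, and the term-by-term reduction of \eqref{string-equation}, \eqref{U_mu}--\eqref{V_eta}) is essentially the paper's argument: there, the point is phrased as the nonlocal relations becoming local on the hyperplane $t_{2}=0$, so that $V(t_{5},0,x)=-V(t_{5},0,x)$ forces $V\equiv 0$, and likewise for $H_{2}$. Where you genuinely diverge is the dimension count. The paper notes that at $t_{2}=0$ the $\ZZ_{14}$-periodicity refines to $s_{k+7}=-s_{k}$, reducing to seven parameters $s_{1},\dots,s_{7}$, and then solves the cyclic constraint $S_{1}\cdots S_{14}=\mathcal{S}^{T}$ directly on those seven, exhibiting $s_{5},s_{6},s_{7}$ as rational functions of the four free parameters $s_{1},\dots,s_{4}$ (this is the remark immediately following the corollary). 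You instead start from the generic six-parameter solution of the unconstrained Stokes equations and impose the seven antisymmetry relations inside that parametrization, showing they amount to just the two conditions $W_{1}=s_{3}$, $W_{2}=s_{4}$ (the other five then cancel identically, which I confirm does check out, e.g. $s_{9}=(W_{1}-s_{2}s_{4}-s_{3})/W_{2}=-s_{2}$ once $W_{1}=s_{3}$, $W_{2}=s_{4}$), hence $6-2=4$. Both routes are valid and confined to the generic stratum ($W_{1}W_{2}\neq 0$, respectively the nonvanishing denominators $s_{1}s_{3}-s_{2}$, $s_{2}s_{4}+s_{3}$), which is why the statement speaks of the generic dimension; the paper's route is more direct and yields the explicit reduced parametrization as a byproduct, while yours buys a transparent count of how many of the symmetry constraints are independent, at the cost of the cancellation verification you correctly flag as the main labor.
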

\begin{proof}
    On the hyperplane $t_{2} = 0$, the nonlocal equations \eqref{Z2-Stokes-symmetry}, \eqref{Z2-Hamiltonians}, and \eqref{Z2-UV} become local. In particular,
    we see that
        \begin{equation*}
            V(t_{5},0,x) = V(t_{5},-0,x) = -V(t_{5},0,x),
        \end{equation*}
    i.e. $V(t_{5},0,x) \equiv 0$. Similarly, we obtain that $H_{2}(t_{5},0,x) \equiv 0$. Finally, when $t_{2} = 0$, the $\ZZ_{14}$-periodicity of the 
    Stokes parameters further reduces to a $\ZZ_{7} \times \ZZ_{2}$-periodicity:
        \begin{equation*}
            s_{k+7} = -s_k,
        \end{equation*}
    and consequently the relation $S_1\cdots S_{14} = \mathcal{S}^T$ implies that the Stokes manifold is generically of dimension $4$.
\end{proof}
\begin{remark}
    The generic solution to the constraint equations \eqref{Stokes-equations} on the $t_{2} = 0$ hyperplane (which further implies $s_{k+7} = -s_k$) 
    is given by
        \begin{align}
            s_{5} &= \frac{s_1s_4 + s_3 + 1}{s_1s_3 - s_2},\qquad s_6 = -\frac{s_1(s_2s_4+s_3)- s_4(s_1s_3 - s_2) -s_2s_3 }{(s_1s_3 - s_2)(s_2s_4 + s_3)},
            \qquad s_7 = \frac{s_1s_4 - s_2 + 1}{s_2s_4 + s_3}.
        \end{align}
    where $s_1,s_2,s_3,s_4$ are free parameters, and provided $s_1s_3 - s_2\neq 0$, $s_1s_3 - s_2 \neq 0$. This reduction of the dimension of the Stokes 
    manifold is consistent with the reduction of the order of the string equation from $6$ to $4$.
\end{remark}
This symmetry gives an interpretation to our previous statement that $V$ is responsible for the non-perturbative $\ZZ_2$ symmetry-breaking of the model: this function is only non-zero when $t_{2}$, the parameter that is identified with a shift in the magnetic field, is nonzero.
We point out that this symmetry also passes through to the $\lambda$-gauge, without issue.

\begin{remark}\textit{$\ZZ_{14}$ symmetry.}
    Finally, we remark that the above symmetry is a special case of a more general $\ZZ_{14}$ symmetry, which is the analog of 
    the $\ZZ_5$ symmetry possessed by Painlev\'{e} I \cite{Kapaev0}; it is in fact just a realization of the subgroup $\ZZ_2$ of $\ZZ_{14}$.
    Let $\beta := e^{\frac{2\pi i}{42}} = e^{\frac{\pi i}{21}}$ denote the principal $42^{nd}$ root of unity, and note that
        \begin{equation}
            \mathcal{S}^T\Theta(\beta^{-1}\xi;\beta^{12}t_{5},\beta^{9}t_{2},\beta^{-6}x)\mathcal{S} = -\Theta(\xi;t_{5},t_{2},x).
        \end{equation}
    An identical line of argumentation to the preceding section shows that $\Phi(\xi;t_{5},t_{2},x)$ satisfies the symmetry condition
        \begin{equation}
            \Phi(\xi;t_{5},t_{2},x) = \mathcal{S}^T \Phi(\beta^{-1}\xi;\beta^{12}t_{5},\beta^{9}t_{2},\beta^{-6}x)\mathcal{S},
        \end{equation}
    Provided that the Stokes parameters satisfy
        \begin{equation} \label{Z-42}
            s_k(t_{5},t_{2},x;U,U',U'',U''',V,V') = -s_{k+1}(\beta^{12}t_{5},\beta^{9}t_{2},\beta^{-6}x;\check{U},\check{U}',\check{U}'',\check{U}''',\check{V},\check{V}'),
        \end{equation}
    where here $\check{f}(t_{5},t_{2},x) = f(\beta^{12}t_{5},\beta^{9}t_{2},\beta^{-6}x)$.
    We also have the following relations:
        \begin{align*}
            \beta^2 U(\beta^{12}t_{5},\beta^{9}t_{2},\beta^{-6}x) &= U(t_{5},t_{2},x),\\
            \beta^3 V(\beta^{12}t_{5},\beta^{9}t_{2},\beta^{-6}x) &= V(t_{5},t_{2},x),
        \end{align*}
    and finally that the Okamoto $\tau$-function satisfies
        \begin{equation}
            \tau_{Okamoto}(\beta^{12}t_{5},\beta^{9}t_{2},\beta^{-6}x) = \tau_{Okamoto}(t_{5},t_{2},x).
        \end{equation}
    Based on the appearance of a $42^{nd}$ root of unity, one might be tempted to think that this system possesses a full $\ZZ_{42} = \ZZ_{7}\oplus \ZZ_3 \oplus \ZZ_2$ symmetry group.
    In fact, as we shall now show, the subgroup $\ZZ_3$ appears in a trivial manner, and thus does not play a role.
    Let us denote by $\chi$ the operation of acting by this symmetry on $\Phi$, i.e. the map 
        \begin{equation}
            \chi \left[\Phi(\xi;t_{5},t_{2},x)\right] := \mathcal{S}^T \Phi(\beta^{-1}\xi;\beta^{12}t_{5},\beta^{9}t_{2},\beta^{-6}x)\mathcal{S}.
        \end{equation}
    Clearly, $\chi^{42} = 1$, the identity map on $\Phi$. Note that $\chi^6$ is the generator of the subgroup $\ZZ_7$, 
    $\chi^{14}$ is the generator of the subgroup $\ZZ_3$, $\chi^{21}$ is the generator of the subgroup $\ZZ_2$. Let us first see that $\chi^{14} = 1$,
    the identity map. If we apply $\chi^{14}$, we obtain that
        \begin{equation*}
            \chi^{14}[\Phi(\xi;t_{5},t_{2},x)] = \mathcal{S}^T\Phi(\omega\xi;t_{5},t_{2},x)\mathcal{S} = \Phi(\xi;t_{5},t_{2},x),
        \end{equation*}
    as we already observed in Subsection \ref{Phi-symmetry-section}. So, the subgroup $\ZZ_3$ does not participate, and there is generically a $\ZZ_{14}$
    symmetry acting on the solutions.

    Note further that, if we apply the generator of the subgroup $\ZZ_2$ to $\Phi$, we obtain that
        \begin{equation}
            \chi^{21}\left[\Phi(\xi;t_{5},t_{2},x)\right] =  \Phi(-\xi;t_{5},-t_{2},x)^{-T},
        \end{equation}
    which is precisely the $\ZZ_2$ symmetry described in this subsection. The $\ZZ_7$ symmetry is nontrivial, i.e. the generator of this subgroup $\chi^{6}$
    acts nontrivially on $\Phi$. However, there is no clear simplification or physical interpretation of this symmetry, as was the case for the $\ZZ_2$ subgroup.
\end{remark}

\section{The Isomonodromic and Extended $\tau$-Function.}\label{tau-function-section}
Associated to \textit{almost} any linear differential equation with rational coefficients is an object called the isomonodromic $\tau$-function. The $\tau$-function
has many important properties. For example, its zeros determine where the inverse monodromy problem for the associated linear equation are not solvable \cite{Malgrange,Palmer}. Furthermore, the $\tau$-function itself is often the object that appears in many physical applications; this is also the
case for the multi-critical quartic $2$-matrix model. However, the word ``almost'' is the antagonist in this story. As we have seen, in many cases of interest,
the leading coefficient of the singularity of the connection matrix is not diagonalizable, or, if we make a change of gauge, a resonant Fuchsian singularity 
manifests at the origin. In either case, the theory introduced in \cite{JMU1} is not applicable. This motivates us to give a modified definition of the $\tau$-differential. Most of this section can be read completely independently of the rest of this work.

In this first part of this section, we will work in slightly more generality, in order to show that our definition is indeed a sensible extension of the $\tau$-function, as defined by Jimbo, Miwa, and Ueno. Our definition is meant to address the case of the general $(p,q)$ string equations, which all share the feature that $1.$ the leading term of the \textit{polynomial} connection matrix $A(\lambda)$ is not diagonalizable, and $2.$ in a suitably regularized gauge, the connection matrix develops a resonant Fuchsian singularity at the origin.

We divide this section into the following parts: in Subsection \ref{assumptions-subsection}, we lay out a set of assumptions for a model problem with a single non-diagonalizable singularity at infinity (or, equivalently, a resonant Fuchsian singularity at $0$), for which we will define a suitable $\tau$-differential. In Subsection \ref{tau-modification}, we will see the shortcoming of the original JMU definition, and
show that the modified definition of the $\tau$-function (up to an irrelevant constant factor) indeed makes sense, and coincides with the Okamoto $\tau$-function \eqref{Okamoto-tau-function} in the settings of the rest of this work. In Section \ref{tau-extension}, we extend the $\tau$-function
to the initial data of the associated Hamiltonian system (cf. Proposition \ref{Hamiltonian-Prop}), and verify Conjectures $1.$ and $2.$ of \cite{IP} for the system at hand.

We adopt the following set of notations. First, let $q\geq 2$. If $X:\CC\to M_q(\CC)$ is a matrix-valued function which admits a Laurent expansion at $\lambda = \infty$, we define
    \begin{equation}
        \langle X(\lambda)\rangle := \Res_{\lambda= \infty}\tr X(\lambda)d\lambda.
    \end{equation}
We list some of the key properties of $\langle \cdot \rangle$ below, which the reader may readily check:
    \begin{enumerate}
        \item \textit{(Cyclicity)} $\langle X(\lambda) Y(\lambda)\rangle = \langle  Y(\lambda) X(\lambda)\rangle$,
        \item \textit{(Integration by Parts)}$\langle \frac{\partial}{\partial \lambda} X(\lambda) \rangle = 0$, and, consequentially, 
        $\langle  X'(\lambda) Y(\lambda) \rangle = -\langle  X(\lambda) Y'(\lambda) \rangle$.
        \item If $X = X(\lambda; {\bf t})$ depends on additional parameters ${\bf t}$, and ${\bf d}$ denotes the exterior differential in these parameters, then
        ${\bf d} \langle X(\lambda; {\bf t})\rangle  = \langle {\bf d} X(\lambda; {\bf t})\rangle$.
        \item If $A$ is a constant (in $\lambda$) matrix, then $\langle A\lambda^{-k-1}\rangle = (\tr A) \delta_{k,0}$. In particular, this implies that if 
        $X(\lambda)$ is a polynomial, then $\langle X(\lambda) \rangle = 0$. 
        \item \textit{(Ad-invariance)} If $A,B,C$ are matrix-valued functions, then $\langle A [B,C]\rangle = \langle [A,B] C\rangle$.
        \item \textit{Changes of variable}. If we make a change of variables of the form $\lambda(\zeta) = \zeta^q$, and retain the notation 
        $\langle X(\zeta)\rangle = \Res_{\zeta= \infty}\tr X(\zeta)d\zeta$, then $\langle Y(\lambda)\rangle = q\langle Y(\lambda(\zeta))\rangle$.
    \end{enumerate}
We now list a set of assumptions on which the remainder of our calculation will be based. The motivation for this assumptions comes from what one should expect out
of the $(p,q)$ string equation in general. When restricted to the case $q=3,p=4$, these assumptions coincide with what we have derived in the preceding sections of the present work. 

\subsection{Main assumptions for the model problem.}\label{assumptions-subsection}
Given $q\geq 2$, we fix an integer $p$ coprime to $q$. Setting $\omega_q:=e^{\frac{2\pi i}{q}}$, we then define $q\times q$ matrices $\Delta_q, \mathcal{U}_q$ as follows:
    \begin{align}
        \left(\Delta_q\right)_{ij} &:= \frac{1}{2}\left(1-\frac{2j-1}{q}\right) \delta_{ij}, \label{Delta-def}\\
        \left(\mathcal{U}_q\right)_{ij} &:= 
                \begin{cases}
                    \omega_q^{\frac{1}{2}(q-2i+1)(j-1)}, & q \textit{ odd},\\
                    \omega_q^{\frac{1}{2}(q-2i)(j-1) + 1}, & q \textit{ even},
                \end{cases} \label{U-def}
    \end{align}
where $i,j = 1,...,q$.
Also define the \textit{shift matrix}
    \begin{equation}
        \mathcal{S}_q :=
        \begin{psmallmatrix}
            0 & 1 & 0 & \hdots & 0 & 0\\
            0 & 0 & 1 & \hdots & 0 & 0\\
            \vdots & \vdots &\vdots & \ddots & \vdots & \vdots \\
            0 & 0 & 0 & \hdots & 0 & 1\\
            1 & 0 & 0 & \hdots & 0 & 0\\
        \end{psmallmatrix} \omega_q^{\frac{1}{2}r_q},
    \end{equation}
where $r_q = 0$ if $q$ is odd, and $r_q = 1$ if $q$ is even. (Note that $\mathcal{S}_q$ is unitary: $\mathcal{S}_q^{\dagger} = \mathcal{S}_q^{-1}$).
Finally, we define functions $\vartheta^{(q,p)}_j(\lambda;{\bf t})$ as
    \begin{equation}
        \vartheta^{(q,p)}_j(\lambda;{\bf t}) := \frac{q}{p+q}\omega_q^{(j-1)p}\lambda^{\frac{p+q}{q}} + \sum_{\substack{\ell=1\\ \ell \mod q \neq 0}}^{p+q-1} t_{\ell}\omega_q^{(j-1)\ell} \lambda^{\ell/q},
    \end{equation}
for $j=1,...,q$.
Subsequently, we define the matrix-valued functions
    \begin{equation}
        g_q(\lambda):= \lambda^{\Delta_q} \mathcal{U}_q, \qquad\qquad \Theta(\lambda;{\bf t}) := \text{diag }(\vartheta^{(q,p)}_1(\lambda;{\bf t}),\cdots, \vartheta^{(q,p)}_{q}(\lambda;{\bf t})).
    \end{equation}
If we denote $\Theta_a := \frac{\partial \Theta}{\partial t_a}$, we can see that the conditions
\begin{equation}\label{Theta-integrable}
        \frac{\partial \Theta_a}{\partial t_b} - \frac{\partial \Theta_b}{\partial t_a} = [\Theta_a,\Theta_b] = 0;
    \end{equation}
hold trivially. By construction, these matrices have jumps on the negative real axis (with orientation taken outwards), given by:
    \begin{lemma} For $\lambda < 0$,
        \begin{equation}
            g_{q,+}(\lambda) = g_{q,-}(\lambda) \mathcal{S}_q, \qquad\qquad \Theta_+(\lambda;{\bf t}) = \mathcal{S}_q^{\dagger}\Theta_-(\lambda;{\bf t})\mathcal{S}_q
        \end{equation}
    \end{lemma}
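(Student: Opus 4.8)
The plan is to split the statement into its two jump relations and to exploit that both $g_q(\lambda)=\lambda^{\Delta_q}\mathcal U_q$ and $\Theta(\lambda;{\bf t})$ are assembled out of the single multivalued function $\lambda^{1/q}$, so that the whole lemma reduces to tracking how $\lambda^{1/q}$ changes across $(-\infty,0)$, together with two pieces of linear algebra. First I would fix the branch of $\lambda^{1/q}$ and the orientation of $(-\infty,0)$ exactly as in Section~3, so that passing from the $-$ side to the $+$ side replaces $\lambda^{1/q}$ by $\omega_q^{\varepsilon}\lambda^{1/q}$ for a definite sign $\varepsilon\in\{\pm1\}$. Since $\Delta_q$ is diagonal this immediately gives $\lambda^{\Delta_q}_{+}=e^{2\pi i\varepsilon\Delta_q}\lambda^{\Delta_q}_{-}$, and since each $\vartheta^{(q,p)}_{j}$ is a sum of monomials in $\lambda^{\ell/q}$ (with $\ell$ ranging over $1,\dots,p+q-1$, $\ell\not\equiv0\bmod q$, plus the leading term $\lambda^{(p+q)/q}$), the effect on $\Theta$ is that on the $+$ side every $\lambda^{\ell/q}$ acquires a factor $\omega_q^{\varepsilon\ell}$.

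For the $\Theta$-relation this is essentially immediate. In the explicit formula for $\vartheta^{(q,p)}_{j}$ the coefficient of $\lambda^{\ell/q}$ is $t_{\ell}\omega_q^{(j-1)\ell}$ and the coefficient of $\lambda^{(p+q)/q}$ is $\tfrac{q}{p+q}\omega_q^{(j-1)p}$; multiplying these monomials by $\omega_q^{\varepsilon\ell}$, resp.\ by $\omega_q^{\varepsilon(p+q)}=\omega_q^{\varepsilon p}$ (using $\omega_q^q=1$), turns $\omega_q^{(j-1)\ell}$ into $\omega_q^{(j-1+\varepsilon)\ell}$ and $\omega_q^{(j-1)p}$ into $\omega_q^{(j-1+\varepsilon)p}$. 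Hence $\bigl(\vartheta^{(q,p)}_{j}\bigr)_{+}=\bigl(\vartheta^{(q,p)}_{j+\varepsilon}\bigr)_{-}$, the index read modulo $q$, so $\Theta_{+}$ is $\Theta_{-}$ with its diagonal cyclically shifted by $j\mapsto j+\varepsilon$. That cyclic shift is precisely conjugation by the permutation part of $\mathcal S_q$; since $\Theta_{-}$ is diagonal the scalar prefactor $\omega_q^{r_q/2}$ drops out of the conjugation, and with the sign $\varepsilon$ fixed by Section~3 one lands on $\Theta_{+}=\mathcal S_q^{\dagger}\Theta_{-}\mathcal S_q$ (as opposed to $\mathcal S_q\Theta_-\mathcal S_q^\dagger$).

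For the $g_q$-relation the branch computation gives $g_{q,-}^{-1}(\lambda)\,g_{q,+}(\lambda)=\mathcal U_q^{-1}e^{2\pi i\varepsilon\Delta_q}\mathcal U_q$, so the entire content is the matrix identity $\mathcal U_q^{-1}e^{2\pi i\varepsilon\Delta_q}\mathcal U_q=\mathcal S_q$. The key observation is that $\mathcal U_q$ is, up to a single overall scalar, a Vandermonde matrix whose nodes are exactly the $q$-th roots of unity. Concretely, set $\mu_j:=\bigl(e^{2\pi i\Delta_q}\bigr)_{jj}$; a one-line computation from \eqref{Delta-def} gives $\mu_j=\omega_q^{(q-2j+1)/2}$, so that the numbers $\omega_q^{-r_q/2}\mu_j$ $(j=1,\dots,q)$ have integer exponents decreasing by one unit and run through all of the $q$-th roots of unity, while a comparison with \eqref{U-def} shows uniformly (both the even and odd cases) that $(\mathcal U_q)_{ij}=\omega_q^{r_q}(\omega_q^{-r_q/2}\mu_i)^{\,j-1}$ and $e^{2\pi i\Delta_q}=\omega_q^{r_q/2}\,\text{diag}\bigl(\omega_q^{-r_q/2}\mu_j\bigr)$. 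Conjugating the diagonal matrix of a full set of roots of unity by the Vandermonde matrix of those same roots produces the companion matrix of $\prod_j\bigl(z-\omega_q^{-r_q/2}\mu_j\bigr)=z^{q}-1$, i.e.\ a cyclic shift matrix; carrying the $\omega_q^{r_q/2}$-scalars through (and using the ordering of the roots noted above to fix the direction of the shift) yields exactly $\omega_q^{r_q/2}$ times the permutation matrix of $\mathcal S_q$, hence $\mathcal S_q^{\pm1}$, with the orientation convention of Section~3 (equivalently, the sign $\varepsilon$) singling out $\mathcal S_q$ over $\mathcal S_q^{-1}$. For the case $q=3$, $p=4$ that governs the rest of the paper, this last step is nothing but the explicit $3\times3$ check reproducing the matrix $\mathcal U$ of \eqref{gauge-matrix} and $\mathcal S_q=\mathcal S$ of Proposition~\ref{N-sym}. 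The only real obstacle is bookkeeping: the branch and orientation must be kept rigidly consistent with those of Section~3, and the $\omega_q^{r_q/2}$-type prefactors have to be threaded through the even and odd cases of \eqref{U-def} in parallel; beyond the Vandermonde/companion-matrix identity there is no conceptual difficulty.
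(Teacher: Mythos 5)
The paper gives no proof of this lemma at all: it is asserted as holding ``by construction'' immediately after the definitions of $g_q$, $\Theta$, and $\mathcal{S}_q$. Your argument therefore fills in a verification rather than paralleling an existing one, and its structure is sound: both jump relations are indeed governed by the single monodromy factor picked up by $\lambda^{1/q}$ across $(-\infty,0)$; the $\Theta$-relation reduces, exactly as you say, to the cyclic shift $\vartheta_j\mapsto\vartheta_{j\pm1}$ (the scalar $\omega_q^{r_q/2}$ in $\mathcal{S}_q$ cancels under conjugation of a diagonal matrix); and the $g_q$-relation reduces to conjugating $e^{\pm 2\pi i\Delta_q}$ by the root-of-unity Vandermonde matrix $\mathcal{U}_q$, which is the companion-matrix identity you invoke. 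For $q=3$ this reproduces precisely the identity $\mathrm{diag}(\omega,1,\omega^2)\,\mathcal{U}=\mathcal{U}\mathcal{S}^T$ used in the proof of Proposition \ref{N-sym}, so the case the paper actually needs is fully covered. Two soft spots are worth flagging. First, you never actually determine $\varepsilon$; note that the two halves of the lemma force the \emph{same} sign (with the principal branch and outward orientation, $\lambda^{1/q}_+=\omega_q^{-1}\lambda^{1/q}_-$ makes both come out as stated for odd $q$), whereas the convention $\lambda^{r/q}_+=\omega_q^{r}\lambda^{r/q}_-$ written in the paper's next proof would flip both to $g_{q,+}=g_{q,-}\mathcal{S}_q^{-1}$ and $\Theta_+=\mathcal{S}_q\Theta_-\mathcal{S}_q^{\dagger}$ — so pinning $\varepsilon$ down is not pure pedantry. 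Second, your claim that ``carrying the $\omega_q^{r_q/2}$-scalars through'' lands exactly on $\mathcal{S}_q$ is asserted rather than checked: with the paper's literal definitions \eqref{Delta-def}--\eqref{U-def} the conjugation gives $\omega_q^{\mp r_q/2}\Pi^{\pm1}$ (with $\Pi$ the bare permutation), while $\mathcal{S}_q=\omega_q^{r_q/2}\Pi$, so for even $q\geq 4$ the identity holds only up to a factor $\omega_q^{\pm1}$. This is most likely an artifact of the normalization conventions in \eqref{U-def}/$\mathcal{S}_q$ rather than a flaw in your method, and it is invisible in the odd case ($r_q=0$) and in particular for $q=3$; but since you state the exact identity in full generality, you should either verify the even case honestly or restrict the claim.
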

Consider the following lemma:
\begin{lemma}
    Let $g_q(\lambda)$ be an $\text{SL}_q(\CC)$-valued function on $\CC\setminus (-\infty,0]$ such that $g_{q,+}(\lambda) = g_{q,-}(\lambda) \mathcal{S}_q$, where $\mathcal{S}_{q}\in \text{SL}_d(\CC)$.
    Consider the series
        \begin{equation}
            R(\lambda) := \mathbb{I} + \sum_{m=1}^{\infty}\frac{\Psi_m}{\lambda^{m/q}}.
        \end{equation}
    Then, the function
        \begin{equation}
            \hat{R}(\lambda) := g_q(\lambda) R(\lambda) g_q^{-1}(\lambda)
        \end{equation}
    is holomorphic in a neighborhood of infinity if and only if the coefficients $\Psi_m$ satisfy the symmetry relation
        \begin{equation}
            \Psi_m = \omega_q^{-m} \mathcal{S}_q^{-1}\Psi_m \mathcal{S}_q.
        \end{equation} 
\end{lemma}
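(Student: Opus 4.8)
The plan is to pass to the explicit factorization $g_q(\lambda)=\lambda^{\Delta_q}\mathcal U_q$, reduce holomorphy of $\hat R$ at $\infty$ to a support condition on the conjugated coefficients $\mathcal U_q\Psi_m\mathcal U_q^{-1}$, and then recognize that condition as the stated symmetry relation using the fact that $\mathcal U_q$ conjugates the clock matrix to the shift matrix $\mathcal S_q$. First I would write
\[
\hat R(\lambda)=\lambda^{\Delta_q}\bigl(\mathcal U_q R(\lambda)\mathcal U_q^{-1}\bigr)\lambda^{-\Delta_q}=\sum_{m\ge 0}\lambda^{\Delta_q}\widetilde\Psi_m\,\lambda^{-\Delta_q}\,\lambda^{-m/q},\qquad \widetilde\Psi_m:=\mathcal U_q\Psi_m\mathcal U_q^{-1},\ \widetilde\Psi_0:=\mathbb I.
\]
Since $\Delta_q$ is diagonal, $\bigl(\lambda^{\Delta_q}X\lambda^{-\Delta_q}\bigr)_{ij}=\lambda^{(\Delta_q)_{ii}-(\Delta_q)_{jj}}X_{ij}=\lambda^{(j-i)/q}X_{ij}$, using $(\Delta_q)_{ii}-(\Delta_q)_{jj}=\tfrac{j-i}{q}$ from \eqref{Delta-def}; hence $(\hat R)_{ij}(\lambda)=\sum_{m\ge 0}(\widetilde\Psi_m)_{ij}\,\lambda^{(j-i-m)/q}$. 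As $1\le i,j\le q$ and $m\ge 1$, one has $j-i-m\le q-2<q$, so $q\mid(j-i-m)$ already forces $j-i-m\le 0$; consequently the formal expansion of $\hat R$ at $\infty$ has only non-positive integer powers of $\lambda$ (equivalently, $\hat R$ is holomorphic at $\infty$) if and only if, for every $m\ge 1$, $(\widetilde\Psi_m)_{ij}=0$ whenever $j\not\equiv i+m\pmod q$ — that is, each $\widetilde\Psi_m$ is supported on the $m$-th cyclic diagonal.

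Next I would record the elementary fact that, with $\Lambda_q:=\operatorname{diag}(\omega_q,\omega_q^2,\dots,\omega_q^{q})$, a $q\times q$ matrix $X$ is supported on the $m$-th cyclic diagonal if and only if $\Lambda_q^{-1}X\Lambda_q=\omega_q^{m}X$ (both directions being immediate from $(\Lambda_q^{-1}X\Lambda_q)_{ij}=\omega_q^{\,j-i}X_{ij}$). Applying this to $X=\widetilde\Psi_m=\mathcal U_q\Psi_m\mathcal U_q^{-1}$ and conjugating the resulting identity by $\mathcal U_q^{-1}$ turns it into $T^{-1}\Psi_m T=\omega_q^{m}\Psi_m$ with $T:=\mathcal U_q^{-1}\Lambda_q\mathcal U_q$. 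The one genuinely structural input is that $\mathcal U_q$, the discrete-Fourier-type matrix \eqref{U-def}, diagonalizes the shift: $T=\mathcal U_q^{-1}\Lambda_q\mathcal U_q=c_q\,\mathcal S_q$ for a nonzero scalar $c_q$, which one checks by a short index computation from the definitions of $\mathcal U_q$ and $\mathcal S_q$ (treating $q$ odd and $q$ even separately, the factor $\omega_q^{r_q/2}$ in $\mathcal S_q$ being absorbed into $c_q$; for $q=3$ this is the relation already implicit in \eqref{coeff-symmetry}). Since $c_q$ drops out of the conjugation, $T^{-1}\Psi_m T=\omega_q^{m}\Psi_m$ becomes $\mathcal S_q^{-1}\Psi_m\mathcal S_q=\omega_q^{m}\Psi_m$, i.e. $\Psi_m=\omega_q^{-m}\mathcal S_q^{-1}\Psi_m\mathcal S_q$; every implication above is reversible, which gives both directions of the lemma.

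Alternatively — and this is the form that explains the general phrasing of the statement — the same conclusion follows by working with jumps across the cut: $\hat R$ is holomorphic at $\infty$ iff it has no jump on $(-\infty,0)$, and since $g_{q,+}=g_{q,-}\mathcal S_q$ (the Lemma above) while the two boundary values of $R$ differ only by the replacement $\Psi_m\mapsto\omega_q^{\mp m}\Psi_m$ coming from the branch of $\lambda^{1/q}$ (taken consistently with that of $g_q$), the no-jump condition collapses to $\mathcal S_q R_+\mathcal S_q^{-1}=R_-$, and equating coefficients of $\lambda^{-m/q}$ gives the same relation. I do not expect a serious obstacle anywhere; the only points demanding care are the exact power of $\omega_q$ picked up by $\lambda^{1/q}$ across the branch cut — which fixes the sign in the exponent of $\omega_q$ — and the verification of the clock/shift intertwining property of $\mathcal U_q$ in the even-$q$ case, both of which are routine index computations consistent with the conventions fixed by \eqref{Delta-def}–\eqref{U-def}.
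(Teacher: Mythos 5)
Your proposal is correct, but your primary argument is not the route the paper takes. The paper never uses the explicit factorization $g_q=\lambda^{\Delta_q}\mathcal{U}_q$: it splits $R$ into its $q$ branch components $R_r(\lambda)=\sum_k\Psi_{kq+r}\lambda^{-k}$, observes that the symmetry relation is exactly the statement that each piece, hence $R$, has boundary values related by $R_+=\mathcal{S}_q^{-1}R_-\mathcal{S}_q$ across $(-\infty,0)$, and concludes that this jump cancels the jump $g_{q,+}=g_{q,-}\mathcal{S}_q$, so $\hat R$ has no cut near $\infty$ and extends holomorphically; the converse is the same chain read backwards. That is precisely your ``alternative'' paragraph, and its advantage is that it works verbatim for an arbitrary $g_q$ with the prescribed jump and an arbitrary $\mathcal{S}_q$, which is how the lemma is actually phrased. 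Your main route --- conjugating entrywise by $\lambda^{\Delta_q}$, using $(\Delta_q)_{ii}-(\Delta_q)_{jj}=(j-i)/q$ to reduce holomorphy to the condition that $\mathcal{U}_q\Psi_m\mathcal{U}_q^{-1}$ be supported on the $m$-th cyclic diagonal, and translating back via the intertwining $\mathcal{U}_q^{-1}\,\mathrm{diag}(\omega_q,\dots,\omega_q^{q})\,\mathcal{U}_q=c_q\mathcal{S}_q$ --- is also sound: the intertwining does hold (for $q$ odd one finds $c_q=\omega_q^{(q+1)/2}$, and the even case works out with the extra phases absorbed into $c_q$), the scalar drops out as you say, and since for a fixed entry $(i,j)$ distinct $m$ produce distinct exponents, no cancellation can hide a fractional power, so your support criterion is a genuine equivalence. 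What your route buys is a concrete description of the admissible coefficients and the sharper observation that the integer powers occurring are automatically non-positive; what it costs is generality, since it is tied to the specific $g_q=\lambda^{\Delta_q}\mathcal{U}_q$ and the shift matrix rather than to the jump data alone. If you keep the entrywise computation as the main proof, either restrict the statement accordingly or promote your jump-cancellation alternative (the paper's proof) to the main argument.
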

    \begin{proof}
        Set $\Psi_0:=\mathbb{I}$. For each $r=0,...,q-1$, set
            \begin{equation*}
                R_r(\lambda) := \sum_{k=0}^{\infty} \Psi_{kq+r} \lambda^{-k}.
            \end{equation*}
        Then, $R(\lambda)$ can be rewritten as
            \begin{equation*}
                R(\lambda) = \sum_{r=0}^{q-1} R_r(\lambda)\lambda^{r/q}.
            \end{equation*}
        The functions $R_r(\lambda)$ are analytic at infinity, and satisfy the relation $R_r(\lambda) = \omega_q^r \mathcal{S}_q^{-1}R_r(\lambda) \mathcal{S}_q$, since 
        $\omega_q^{kq+r} = \omega_q^r$. Since $\lambda^{r/q}_+ = \lambda^{r/q}_- \omega_q^r$, it follows that
            \begin{equation*}
                \left[R_r(\lambda)\lambda^{r/q}\right]_+ = \mathcal{S}_q^{-1}\left[R_r(\lambda)\lambda^{r/q}\right]_-\mathcal{S}_q,
            \end{equation*}
        for each $r = 0,...,q-1$, and thus that 
        \begin{equation*}
            R_+(\lambda) = \mathcal{S}_q^{-1}R_-(\lambda)\mathcal{S}_q.
        \end{equation*}
        Therefore, 
        \begin{equation*}
            \hat{R}_+(\lambda) = g_{+}(\lambda)R_+(\lambda)g_{+}^{-1}(\lambda) = g_{-}(\lambda)\mathcal{S}_q\mathcal{S}_q^{-1}R_-(\lambda)\mathcal{S}_q\mathcal{S}_q^{-1}g_{-}^{-1}(\lambda) = g_{-}(\lambda)R_-(\lambda)g_{-}^{-1}(\lambda),
        \end{equation*}
        and thus $\hat{R}(\lambda)$ has no jumps near $\lambda = \infty$. Thus, $\hat{R}(\lambda)$ extends to a holomorphic function in a neighborhood of infinity.
        Reading the above proof from bottom to top yields the other direction of the lemma.
    \end{proof}
With this lemma in mind, we are motivated to define the asymptotic series
    \begin{equation}\label{Psi-formal-expansion}
        \Psi(\lambda;{\bf t}) := g_q(\lambda)\left[\mathbb{I} + \sum_{k=1}^{\infty}\frac{\Psi_k({\bf t})}{\lambda^{k/q}}\right]e^{\Theta(\lambda;{\bf t})},
    \end{equation}
where the coefficients $\Psi_k({\bf t})$ satisfy the symmetry constraint $\omega_q^{-k}\mathcal{S}_q^{\dagger}\Psi_k({\bf t})\mathcal{S}_q = \Psi_k({\bf t})$. This formal
series therefore satisfies the jump condition
    \begin{equation*}
        \Psi_{+}(\lambda;{\bf t}) = \Psi_-(\lambda;{\bf t})\mathcal{S}_q,\qquad\qquad \lambda<0,
    \end{equation*}
as a consequence of the above lemma. Similarly, if we define the function $\mathfrak{G}(\lambda;{\bf t}) := \Psi(\lambda;{\bf t}) e^{-\Theta(\lambda;{\bf t})}$, we can 
see that $\mathfrak{G}_+(\lambda;{\bf t}) = \mathfrak{G}_-(\lambda;{\bf t})\mathcal{S}_q$, for $\lambda<0$. We assert that $\Psi(\lambda;{\bf t})$ is a (formal) solution to the 
following collection of differential equations
    \begin{equation}\label{Psi-formal-equations}
        \frac{\partial \Psi}{\partial \lambda} = A(\lambda;{\bf t}) \Psi(\lambda;{\bf t}),\qquad\qquad \frac{\partial \Psi}{\partial t_{\ell}} = B_{\ell}(\lambda;{\bf t}) \Psi(\lambda;{\bf t}), \quad \ell = 1,...,p+q-1,\quad \ell \mod q \not \equiv 0.
    \end{equation}
Here, all matrices $A(\lambda;{\bf t})$, $B_{\ell}(\lambda;{\bf t})$ are assumed to be \textit{polynomials} in $\lambda$. Furthermore, by formal differentiation of the series \eqref{Psi-formal-expansion}, one can deduce that the leading coefficient of $A(\lambda;{\bf t})$ is (for $p=kq+r$)
    \begin{equation}
        A(\lambda;{\bf t}) = 
        \Lambda^r \lambda^k + \cdots,
    \end{equation}
where $\Lambda = \Lambda(\lambda)$ is the matrix
    \begin{equation}
        \Lambda(\lambda) := 
            \begin{pmatrix}
                0 & 0 & \cdots & 0 & 0 & \lambda\\
                1 & 0 & \cdots & 0 & 0 & 0\\
                0 & 1 & \cdots & 0 & 0 & 0\\
                \vdots & \vdots & \ddots & \vdots & \vdots & \vdots\\
                0 & 0 & \cdots & 1 & 0 & 0\\
                0 & 0 & \cdots & 0 & 1 & 0
            \end{pmatrix}.
    \end{equation}
In particular, it is apparent that the leading coefficient of $A(\lambda;{\bf t})$ is \textit{not} diagonalizable. If we perform a gauge transformation $\lambda = \zeta^{q}$, $\Psi = g_q\Phi$, then the transformed connection matrices have the following properties:
\begin{prop}
    Under the change of gauge $\lambda = \zeta^{q}$, $\Psi = g_q\Phi$, the matrix $\Phi$ satisfies the differential equations
        \begin{equation}\label{general-Y-system}
            \frac{\partial \Phi}{\partial \zeta} = \hat{A}(\zeta;{\bf t}) \Phi(\zeta;{\bf t}),\qquad\qquad \frac{\partial \Phi}{\partial t_{\ell}} = \hat{B}_{\ell}(\zeta;{\bf t}) \Phi(\zeta;{\bf t}),
        \end{equation}
    for $\ell = 1,..,p+q-1, \ell \mod q \not \equiv 0$, where the matrices $\hat{A}(\zeta;{\bf t})$, $\hat{B}_{\ell}(\zeta;{\bf t})$ are given by
        \begin{align}
            \hat{A}(\zeta;{\bf t}) = q\zeta^{q-1}\tilde{A}(\zeta^q;{\bf t}),\\
            \hat{B}_{\ell}(\zeta;{\bf t}) = g_q^{-1}B_{\ell}(\zeta^q;{\bf t})g_q,
        \end{align}
    where $\tilde{A}(\lambda;{\bf t}) = g_q^{-1} A(\lambda;{\bf t})g_q - g_q^{-1}\frac{d g_q}{d\lambda}.$
\end{prop}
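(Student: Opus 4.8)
The plan is to prove this by a direct gauge‑transformation computation, resting on two elementary facts: first, that the matrix $g_q(\lambda) = \lambda^{\Delta_q}\mathcal{U}_q$ is independent of the isomonodromic times ${\bf t}$ (because $\Delta_q$ and $\mathcal{U}_q$ are constant matrices); and second, the chain rule for the substitution $\lambda = \zeta^q$, under which $\frac{d\lambda}{d\zeta} = q\zeta^{q-1}$. Nothing beyond this is needed; the content of the proposition is purely a bookkeeping identity among connection matrices, inherited from the fact that $\Psi$ solves the compatible system \eqref{Psi-formal-equations}.

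First I would handle the $\lambda$-equation. Substituting $\Psi = g_q\Phi$ into $\partial_\lambda \Psi = A(\lambda;{\bf t})\Psi$ and applying the product rule gives $\frac{dg_q}{d\lambda}\Phi + g_q\,\partial_\lambda\Phi = A g_q \Phi$, hence $\partial_\lambda\Phi = \bigl(g_q^{-1}A g_q - g_q^{-1}\frac{dg_q}{d\lambda}\bigr)\Phi = \tilde A(\lambda;{\bf t})\Phi$, which is exactly the definition of $\tilde A$ in the statement. Performing the change of variable $\lambda = \zeta^q$, the chain rule yields $\partial_\zeta\Phi = \frac{d\lambda}{d\zeta}\,\partial_\lambda\Phi = q\zeta^{q-1}\tilde A(\zeta^q;{\bf t})\Phi$, so that $\hat A(\zeta;{\bf t}) = q\zeta^{q-1}\tilde A(\zeta^q;{\bf t})$, as asserted. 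I would note in passing that this is consistent with, and generalizes, the identity $\mathcal{L}(\xi) = 3\xi^2\tilde L(\xi^3)$ established for the case $q=3$ earlier in the paper.

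Next I would handle the $t_\ell$-equations. Since $\partial_{t_\ell} g_q = 0$, substituting $\Psi = g_q\Phi$ into $\partial_{t_\ell}\Psi = B_\ell(\lambda;{\bf t})\Psi$ gives simply $g_q\,\partial_{t_\ell}\Phi = B_\ell g_q\Phi$, so $\partial_{t_\ell}\Phi = g_q^{-1}B_\ell(\lambda;{\bf t})g_q\,\Phi$. Because the $t_\ell$-derivative is taken at fixed $\lambda$ (equivalently, at fixed $\zeta$), the subsequent substitution $\lambda = \zeta^q$ affects only the explicit $\lambda$-dependence of $B_\ell$ and introduces no extra term, giving $\hat B_\ell(\zeta;{\bf t}) = g_q^{-1}B_\ell(\zeta^q;{\bf t})g_q$. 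The index range $\ell = 1,\dots,p+q-1$ with $\ell \bmod q \not\equiv 0$ is unchanged, since it is inherited verbatim from \eqref{Psi-formal-equations}.

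There is essentially no serious obstacle here; the entire argument is a short manipulation. The only point worth a word of caution is that $g_q(\lambda)$ is multivalued, carrying the jump $g_{q,+} = g_{q,-}\mathcal{S}_q$ across $(-\infty,0]$ recorded in the lemma above; but every manipulation used is a purely local identity among the connection matrices (on the cut plane, or on the universal cover), so the multivaluedness plays no role in establishing the formulas for $\hat A$ and $\hat B_\ell$. One could additionally observe that $\hat A(\zeta;{\bf t})$ and $\hat B_\ell(\zeta;{\bf t})$ turn out to be \emph{polynomial} in $\zeta$ — a consequence of the explicit form of $g_q$ together with the symmetry constraint $\Psi_k = \omega_q^{-k}\mathcal{S}_q^{\dagger}\Psi_k\mathcal{S}_q$ on the coefficients of the formal solution — but since polynomiality is not part of the assertion, I would leave that remark to the surrounding discussion.
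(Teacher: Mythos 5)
Your proof is correct and follows exactly the route the paper intends: the paper omits the proof, remarking only that it is the direct analog of the $q=3$ computation in \S 3 (the identity $\mathcal{L}(\xi)=3\xi^{2}\tilde L(\xi^{3})$ with $\tilde L = g^{-1}Lg - g^{-1}\frac{dg}{d\lambda}$), which is precisely the product-rule/chain-rule bookkeeping you carry out, together with the observation that $g_q$ is ${\bf t}$-independent for the $t_\ell$-equations. Nothing further is needed.
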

This proposition is a direct analog of the calculations in \S3, and so we omit the proof. We want to emphasize the following facts about these matrices:
    \begin{enumerate}
        \item For $\zeta\to \infty$, $A(\zeta;{\bf t})$ has asymptotics
            \begin{equation}
                A(\zeta;{\bf t}) = q
                \begin{psmallmatrix}
                    1 & 0 & \hdots\\
                    0 & \omega_q^{p} & \hdots\\
                    \vdots & \vdots & \ddots \\
                \end{psmallmatrix}\zeta^{p+q-1} + \OO(\zeta^{p+q-2}),
            \end{equation}
        so we have indeed `regularized' the singular point at infinity: the leading term is \textit{diagonal}.
            
        \item The matrix $A(\zeta;{\bf t})$ has a first order pole at $\zeta = 0$, which arises from the term $g_q^{-1}\frac{d g_q}{d\lambda}$. Note that this term does not depend on the coefficients of the matrix $A(\lambda;{\bf t})$, and thus can be computed explicitly:
            \begin{equation}
                \lim_{\zeta \to 0} \zeta \hat{A}(\zeta;{\bf t}) = -q \mathcal{U}_q^{-1} \Delta_q \mathcal{U}_q,
            \end{equation}
        where $\Delta_q$ was the diagonal matrix from before (cf. Equation \eqref{Delta-def}). We have that $q(\Delta_q)_{jj} - q(\Delta_q)_{ii} = j-i$, and
        \begin{equation*}
            \max_{i,j} q|(\Delta_q)_{jj} - (\Delta_q)_{ii}| = q-1.
        \end{equation*}
        Crucially, we observe that this singularity is resonant.
        \item Due to the form of $g_q(\lambda(\zeta))$, the matrices $\hat{B}_{\ell}(\zeta;{\bf t})$ develop poles of order at most $q-1$ at $\zeta = 0$.
        \item The zero-curvature equations hold in the $\zeta$-gauge as well (this is just the trivial observation that the zero-curvature equations hold, independent of the choice of coordinate system).
    \end{enumerate}
Thus, the above isomonodromic system has analogous properties (and complications) as the Painlev\'{e} I system in \cite{JMU1} and the $(3,4)$ string equation discussed in this work. To conclude this section, let us summarize our main assumptions about the system we will be studying.
    \begin{itemize}
        \item \textbf{Assumption 1.} We are given a matrix-valued formal series $\Psi(\lambda;{\bf t})$ of the form \eqref{Psi-formal-expansion}.
        \item \textbf{Assumption 2.} $\Psi(\lambda;{\bf t})$ satisfies the differentials equations \eqref{Psi-formal-equations}, for
        \textit{polynomial} matrices $A(\lambda;{\bf t})$, $B(\lambda;{\bf t})$.
        \item \textbf{Assumption 3.} The zero-curvature equations $\frac{\partial A}{\partial t_{\ell}} - \frac{\partial B_{\ell}}{\partial \lambda} + [A,B_{\ell}] = 0$, $\frac{\partial B_{r}}{\partial t_{\ell}} - \frac{\partial B_{\ell}}{\partial t_r} + [B_r,B_{\ell}] = 0$, hold.
    \end{itemize}
One can see that the system we are studying in the present work emerges when we specialize to $q=3,p=4$; after an appropriate rescaling of variables, 
one also sees that the above system agrees with the $k^{th}$ member of the Painlev\'{e} I hierarchy upon specializing $q=2,p=2k+1$.

\subsection{Modification of $\omega_{JMU}$.}\label{tau-modification}
Before proceeding to discuss our modification of the $\tau$-differential, let us clarify why there is a need for such a modification.
First, if we start with a connection matrix $A(\lambda;{\bf t})$ whose leading term is not diagonalizable, then the standard definition
given by Jimbo, Miwa and Ueno fails to hold. One can then attempt to transform into a gauge which resolves this problem, as we have 
discussed. In this new gauge, the situation can be treated by \cite{JMU1} at $\zeta = \infty$. However, one finds that a new problem
arises at $\zeta = 0$: a resonant Fuchsian singularity emerges, which again brings us out of the context of the work of \cite{JMU1}. In
the literature for Painlev\'{e} I \cite{LR,IP}, this problem is typically surmounted by simply ignoring any contributions from the 
resonant singularity, and one is able to proceed without further complications. However, in the present situation (and also
the situation we outlined in the previous subsection), we are not afforded this luxury. Indeed, if we transform into the $\zeta$-gauge
and directly apply the definition of \cite{JMU1} to the connection \eqref{L-zeta}, simply ignoring the contribution from the
resonant singularity, we find that
    \begin{align}
        {\bf d }\,\boldsymbol{\omega}_{JMU} &= \left(U^3+3V^2+\frac{1}{4}(U')^2-\frac{1}{2}UU'' + \frac{5}{2}t_5\left(\frac{1}{3}U''-U^2\right) + 2x\right)dt_5dt_1 \nonumber\\
        &+ \left(U''V-3U^2V+2t_2U+\frac{25}{3}t_5^2V+\frac{10}{3}t_5t_2\right)dt_5dt_2 \neq 0.
    \end{align}
Of course, one can simply add to $\boldsymbol{\omega}_{JMU}$ the differential
    \begin{equation}
        \alpha = -\frac{1}{3} \left(\frac{1}{6}U''' - UU'\right) d t_{5} = -\frac{1}{3} \frac{\partial V}{\partial t_{2}}dt_{5},
    \end{equation}
whose differential is precisely $-{\bf d }\,\boldsymbol{\omega}_{JMU}$, so that this new, \textit{modified} differential is indeed closed. However, it is
not obvious where this term arises from, or how to treat closely related systems apart from ad-hoc analysis. 

The aim of this subsection is to provide a general definition of a modified $\tau$-differential $\hat{\boldsymbol{\omega}}_{JMU}$ for systems of the form discussed in the previous subsection, which has the following properties:
    \begin{enumerate}
        \item The modified differential is closed: ${\bf d }\,\hat{\boldsymbol{\omega}}_{JMU} = 0$,
        \item When there are no resonant Fuchsian singularities, the modified differential coincides with the definition given in \cite{JMU1}: $\hat{\boldsymbol{\omega}}_{JMU} = \boldsymbol{\omega}_{JMU}$.
        \item The modified differential is \textit{gauge-invariant}: if we replace the formal series at infinity $\Psi(\lambda;{\bf t})$ by 
        $\mathfrak{h}({\bf t})\Psi(\lambda;{\bf t})$, where $\mathfrak{h}({\bf t})$ is an upper-triangular matrix with $1$'s on the diagonal, the 
        modified differential does not change its value.
    \end{enumerate}
With this in mind, we define the \textit{modified $\tau$-differential} to be
\begin{defn}\label{modified-tau-differential-definition}
    \begin{equation}
        \hat{\boldsymbol{\omega}}_{JMU} = \sum_{\ell}\left(\left\langle A(\lambda;{\bf t}) \frac{\partial \mathfrak{G}}{\partial t_{\ell}} \mathfrak{G}^{-1}\right\rangle - \left\langle\frac{\Delta_q }{\lambda} \frac{\partial \mathfrak{G}}{\partial t_{\ell}} \mathfrak{G}^{-1}\right\rangle \right)dt_{\ell},
    \end{equation}
where $\Delta_q$ is as defined in Equation \eqref{Delta-def}.
Equivalently, expressed in terms of local quantities in the $\xi$-gauge,
    \begin{equation}
        \hat{\boldsymbol{\omega}}_{JMU} = q\sum_{\ell}\left\langle \hat{A}(\zeta;{\bf t}) \frac{\partial S}{\partial t_{\ell}} S^{-1}\right\rangle dt_{\ell},
    \end{equation}
where $S(\zeta;{\bf t}) = \mathbb{I} + \frac{\Psi_1({\bf t})}{\zeta} + \frac{\Psi_2({\bf t})}{\zeta^2} + \OO(\zeta^{-3})$, where
the residue is now taken at $\zeta = \infty$.
\end{defn}
Let us remark that both definitions indeed make sense, in that the terms inside the brackets $\langle\cdot\rangle$ are formal
Laurent series in $\lambda$ (respectively, $\zeta$). This requires no commentary in the latter case. In the former case, this is slightly 
more subtle: note that $\frac{\partial \mathfrak{G}}{\partial t_{\ell}}$ and $\mathfrak{G}$ both have jumps only on the left. Since the jump matrix for $\mathfrak{G}$ is constant, we see
that the ratio $\frac{\partial \mathfrak{G}}{\partial t_{\ell}} \mathfrak{G}^{-1}$ is single-valued near infinity, and thus admits a Laurent series expansion there. 

Our first important observation is that the above differential is indeed gauge-invariant: if one replaces $\mathfrak{G}(\lambda;{\bf t})$ in the right
hand side of \eqref{modified-tau-differential-definition} with $\mathfrak{h}({\bf t})\mathfrak{G}(\lambda;{\bf t})$, where $\mathfrak{h}({\bf t})$ is an
upper-triangular matrix with $1$'s on the diagonal, the left hand side remains 
unchanged.

Our second important observation is that this definition agrees with the definition given in \cite{JMU1} in the case when the resonant
Fuchsian singularity at the origin vanishes (equivalently, when the leading term of $A(\lambda;{\bf t})$ is diagonalizable). Recall that,
if $A(\lambda;{\bf t})$ is a polynomial in $\lambda$ with diagonalizable leading term, then we can write a formal series solution to
the differential equation $\frac{\partial \Psi}{\partial \lambda} = A(\lambda;{\bf t})\Psi$ as
    \begin{equation*}
        \Psi(\lambda;{\bf t}) = \underbrace{\left[\mathbb{I} + \frac{\Psi_1({\bf t})}{\lambda} + \frac{\Psi_2({\bf t})}{\lambda^2} + \OO(\lambda^{-3})\right]}_{\mathfrak{G}(\lambda;{\bf t})} e^{\Theta(\lambda;{\bf t})},
    \end{equation*}
where $\Theta$ is a diagonal matrix whose entries are \textit{polynomials} in $\lambda$. The definition of the Jimbo-Miwa-Ueno $\tau$-differential is then
    \begin{equation}
        \boldsymbol{\omega}_{JMU} := -\sum_{\ell} \left\langle \mathfrak{G}^{-1}\frac{d \mathfrak{G}}{d\lambda} \frac{\partial \Theta}{\partial t_{\ell}}\right\rangle dt_{\ell}.
    \end{equation}
An alternative, equivalent expression given later (cf. \cite{ILP,IP}) is
    \begin{equation*}
        \boldsymbol{\omega}_{JMU} = \sum_{\ell} \left\langle A(\lambda;{\bf t})\frac{\partial \mathfrak{G}}{\partial t_{\ell}}\mathfrak{G}^{-1}\right\rangle dt_{\ell}.
    \end{equation*}
Comparing this definition to our definition of $\hat{\boldsymbol{\omega}}_{JMU}$, we see that $\hat{\boldsymbol{\omega}}_{JMU}$ indeed reduces to $\boldsymbol{\omega}_{JMU}$
when we are in a standard case.

It remains to see that ${\bf d }\, \hat{\boldsymbol{\omega}}_{JMU} = 0$. In order to establish this, we work in the $\zeta$-gauge; transferring results back into the $\lambda$-gauge is a matter of using the change of variables formula for $\langle\cdot\rangle$. We establish this through a sequence of lemmas.

    \begin{lemma}\label{new-tau-lem-0}
        \begin{equation}
            \hat{\boldsymbol{\omega}}_{JMU} = \boldsymbol{\omega}_{JMU} + \sum_{\ell}\left\langle \hat{A}(\zeta;{\bf t}) \hat{B}_{\ell}(\zeta;{\bf t})\right\rangle dt_{\ell},
        \end{equation}
    where the residue here is taken at $\zeta = \infty$.
    \end{lemma}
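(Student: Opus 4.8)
The plan is to pass to the $\zeta$-gauge and reduce everything to residue calculus for $\langle\cdot\rangle$ at $\zeta=\infty$. Write the formal solution of \eqref{general-Y-system} at $\zeta=\infty$ as $\Phi = S(\zeta;{\bf t})\,e^{\Theta(\zeta;{\bf t})}$ with $S = \mathbb{I} + \Psi_1\zeta^{-1} + \OO(\zeta^{-2})$ and $\Theta$ diagonal and polynomial in $\zeta$. Since $\partial_\zeta\Phi = \hat A\Phi$ and $\partial_{t_\ell}\Phi = \hat B_\ell\Phi$, one has, as identities of formal Laurent series at $\zeta=\infty$,
\[
\hat A = \partial_\zeta S\cdot S^{-1} + S\,\partial_\zeta\Theta\cdot S^{-1},\qquad \hat B_\ell = \partial_{t_\ell}S\cdot S^{-1} + S\,\partial_{t_\ell}\Theta\cdot S^{-1};
\]
these are all that is needed, because $\langle\cdot\rangle$ at $\zeta=\infty$ only sees the Laurent expansion there — in particular the first-order pole of $\hat A$ and the higher-order poles of the $\hat B_\ell$ at $\zeta=0$ play no role.

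First I would rewrite $\hat{\boldsymbol\omega}_{JMU}$ in the $\zeta$-gauge via the change-of-variable rule for $\langle\cdot\rangle$, so that its $dt_\ell$-component is $\langle \hat A\,\partial_{t_\ell}S\cdot S^{-1}\rangle$, residue at $\zeta=\infty$. Substituting the first decomposition, the term $\langle\partial_\zeta S\cdot S^{-1}\,\partial_{t_\ell}S\cdot S^{-1}\rangle$ is $\OO(\zeta^{-3})$ and hence vanishes, while cyclicity turns the other term into $\langle\partial_\zeta\Theta\cdot S^{-1}\partial_{t_\ell}S\rangle$; thus the $dt_\ell$-component of $\hat{\boldsymbol\omega}_{JMU}$ equals $\langle\partial_\zeta\Theta\cdot S^{-1}\partial_{t_\ell}S\rangle$. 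On the other hand, $\boldsymbol\omega_{JMU}$ — the naive JMU differential, i.e.\ the contribution of the formal solution $S\,e^\Theta$ at $\zeta=\infty$ with the resonant point $\zeta=0$ omitted — has $dt_\ell$-component $-\langle S^{-1}\partial_\zeta S\cdot\partial_{t_\ell}\Theta\rangle$.

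Next I would expand $\langle\hat A\hat B_\ell\rangle$ using both decompositions. Of the four resulting terms: $\langle\partial_\zeta S\cdot S^{-1}\,\partial_{t_\ell}S\cdot S^{-1}\rangle$ is $\OO(\zeta^{-3})$, hence $0$; $\langle S\,\partial_\zeta\Theta\,S^{-1}\cdot S\,\partial_{t_\ell}\Theta\,S^{-1}\rangle = \langle\partial_\zeta\Theta\cdot\partial_{t_\ell}\Theta\rangle$ is a polynomial, hence $0$; and the remaining two give, after cyclicity,
\[
\langle\hat A\hat B_\ell\rangle = \langle S^{-1}\partial_\zeta S\cdot\partial_{t_\ell}\Theta\rangle + \langle\partial_\zeta\Theta\cdot S^{-1}\partial_{t_\ell}S\rangle .
\]
Adding the $dt_\ell$-component of $\boldsymbol\omega_{JMU}$ cancels the first summand and leaves exactly $\langle\partial_\zeta\Theta\cdot S^{-1}\partial_{t_\ell}S\rangle$, i.e.\ the $dt_\ell$-component of $\hat{\boldsymbol\omega}_{JMU}$. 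Summing over $\ell$ yields $\hat{\boldsymbol\omega}_{JMU} = \boldsymbol\omega_{JMU} + \sum_\ell\langle\hat A\hat B_\ell\rangle\,dt_\ell$.

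The only delicate point — the ``main obstacle,'' such as it is — is the bookkeeping of which residues vanish: one uses $S = \mathbb{I}+\OO(\zeta^{-1})$ to see that $\partial_\zeta S\cdot S^{-1} = \OO(\zeta^{-2})$ and $\partial_{t_\ell}S\cdot S^{-1}=\OO(\zeta^{-1})$, so that a product of two such factors carries no $\zeta^{-1}$ term, and that $\langle\cdot\rangle$ annihilates polynomials, while being careful \emph{not} to discard the mixed terms $\langle S^{-1}\partial_\zeta S\cdot\partial_{t_\ell}\Theta\rangle$ and $\langle\partial_\zeta\Theta\cdot S^{-1}\partial_{t_\ell}S\rangle$, which are genuine (generically nonzero) residues since $\partial_\zeta\Theta$ and $\partial_{t_\ell}\Theta$ have positive degree in $\zeta$. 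One should also track the signs and overall normalization when translating the $\lambda$-gauge definition of $\hat{\boldsymbol\omega}_{JMU}$ (and of $\boldsymbol\omega_{JMU}$) into the $\zeta$-gauge through the change-of-variable rule for $\langle\cdot\rangle$, but this is purely formal and shared by all three terms of the identity.
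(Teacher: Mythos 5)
Your proof is correct and follows essentially the same route as the paper: both use the decompositions $\hat A = \partial_\zeta S\,S^{-1}+S\Theta_\zeta S^{-1}$, $\hat B_\ell = \partial_{t_\ell}S\,S^{-1}+S\Theta_\ell S^{-1}$, cyclicity, and the facts that polynomials and terms of order $\OO(\zeta^{-2})$ or lower have no residue at $\zeta=\infty$. The only cosmetic difference is bookkeeping: you discard the $\OO(\zeta^{-3})$ cross term and match components of both sides, while the paper keeps that (vanishing) term so as to recombine two summands back into $\langle\hat A\,\partial_{t_\ell}S\,S^{-1}\rangle$ directly.
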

\begin{proof}
    The proof mimics the calculation that $\hat{\boldsymbol{\omega}}_{JMU} = \boldsymbol{\omega}_{JMU}$ in the case where the matrices $\hat{A}(\zeta;{\bf t})$, $\hat{B}_{\ell}(\zeta;{\bf t})$ are all polynomials; we must take care to make sure that every step pushes through. Let us expand the expression
        \begin{equation*}
            \sum_{\ell} \langle \hat{A}(\zeta;{\bf t})\hat{B}_{\ell}(\zeta;{\bf t})\rangle dt_{\ell};
        \end{equation*}
    note that when the Lax matrices are polynomials, this expression vanishes identically.
    Now, since $\Phi(\zeta;{\bf t}) = S(\zeta;{\bf t})e^{\Theta(\zeta;{\bf t})}$, we can rearrange the identities
    $\frac{\partial \Phi}{\partial \zeta} = \hat{A}\Phi$, $\frac{\partial \Phi}{\partial t_{\ell}} = \hat{B}_{\ell}\Phi$ to read
    $\hat{A} = S \Theta_{\zeta} S^{-1} + \frac{\partial S}{\partial \zeta} S^{-1}$, $\hat{B}_{\ell} = S \Theta_{\ell} S^{-1} + \frac{\partial S}{\partial t_{\ell}} S^{-1}$. Inserting these expressions into our previous identity, we find that
        \begin{align*}
        \sum_{\ell} \langle \hat{A}(\zeta;{\bf t})\hat{B}_{\ell}(\zeta;{\bf t})\rangle dt_{\ell} &= \sum_a\langle \hat{A} \hat{B}_{\ell}\rangle dt_{\ell} = \sum_{\ell}\left\langle (S\Theta_{\zeta} S^{-1} + \frac{\partial S}{\partial \zeta} S^{-1}) (S\Theta_{\ell} S^{-1} + \frac{\partial S}{\partial t_{\ell}} S^{-1})\right\rangle dt_{\ell}\\
        &=\sum_{\ell} \left[ \underbrace{\left\langle\Theta_\zeta\Theta_{\ell}\right\rangle}_{\textit{polynom.}} + \left\langle \frac{\partial S}{\partial \zeta} S^{-1} \frac{\partial S}{\partial t_{\ell}} S^{-1}\right\rangle + \underbrace{\left\langle S^{-1}\frac{\partial S}{\partial \zeta} \Theta_{\ell}\right\rangle}_{-[\boldsymbol{\omega}_{JMU}]_{\ell}}
        + \left\langle \Theta_{\zeta} S^{-1} \frac{\partial S}{\partial t_{\ell}}\right\rangle\right] dt_{\ell}\\
        &= -\boldsymbol{\omega}_{JMU} + \sum_{\ell} \left[\left\langle \frac{\partial S}{\partial \zeta} S^{-1} \frac{\partial S}{\partial t_{\ell}} S^{-1}\right\rangle +\left\langle \Theta_{\zeta} S^{-1} \frac{\partial S}{\partial t_{\ell}} \right\rangle \right]dt_{\ell}\\
        &= -\boldsymbol{\omega}_{JMU} + \sum_{\ell} \left\langle  \left(S\Theta_{\zeta} S^{-1} + \frac{\partial S}{\partial \zeta} S^{-1}\right)\frac{\partial S}{\partial t_{\ell}} S^{-1}\right\rangle dt_{\ell}\\
        &=-\boldsymbol{\omega}_{JMU} + \sum_{\ell}\left\langle \hat{A}(\zeta;{\bf t}) \frac{\partial S}{\partial t_{\ell}} S^{-1}\right\rangle dt_{\ell}.
    \end{align*}
\end{proof}

\begin{lemma}\label{new-tau-lem-1}
    For the system on $\Phi(\zeta;{\bf t})$, 
        \begin{equation}
            {\bf d }\,\boldsymbol{\omega}_{JMU} = \sum \left\langle\frac{\partial \hat{B}_a}{\partial \zeta} \hat{B}_b\right\rangle dt_a\wedge dt_b.
        \end{equation}
    \textit{(Note that, since $\hat{B}_a$ are no longer polynomials in the $\zeta$-gauge, the expression on the right hand side does not necessarily vanish).}
\end{lemma}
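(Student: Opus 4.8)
The plan is to compute ${\bf d}\,\boldsymbol{\omega}_{JMU}$ in the $\zeta$-gauge directly, replaying the classical Jimbo--Miwa--Ueno closedness computation step by step while retaining the contributions that cease to vanish once the connection acquires a pole at $\zeta = 0$. Write $\Phi = S e^{\Theta}$ with $S(\zeta;{\bf t}) = \mathbb{I} + \Psi_1/\zeta + \Psi_2/\zeta^2 + \cdots$ (a genuine power series in $1/\zeta$ at $\zeta=\infty$, since the substitution $\lambda = \zeta^q$ turns the $\lambda^{-k/q}$ into integer powers), so that $\hat{A} = S_\zeta S^{-1} + S\,\Theta_\zeta S^{-1}$ and $\hat{B}_a = S_{t_a} S^{-1} + S\,\Theta_a S^{-1}$, where the subscripts on $\Theta$ denote partial derivatives and $\Theta$ is diagonal and polynomial in $\zeta$. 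From the proof of Lemma \ref{new-tau-lem-0} we have $[\boldsymbol{\omega}_{JMU}]_\ell = -\langle S^{-1} S_\zeta\,\Theta_\ell \rangle$, the residue taken at $\zeta = \infty$; since $S$ is single-valued there, every intermediate quantity appearing below ($S^{-1}S_\zeta$, $S^{-1}S_{t_\ell}$, $S^{-1}\hat{A}S$, etc.) has a Laurent expansion at $\zeta = \infty$ and $\langle\cdot\rangle$ is well-defined on it.

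First I would differentiate $[\boldsymbol{\omega}_{JMU}]_b$ with respect to $t_a$ and antisymmetrise in $a\leftrightarrow b$. The term in which the $t_a$-derivative falls on $\Theta_b$ drops, since $\partial_{t_a}\Theta_b = \partial_{t_b}\Theta_a$ (symmetry of the mixed partials of the diagonal $\Theta$). The surviving piece $-\langle \partial_{t_a}(S^{-1}S_\zeta)\,\Theta_b\rangle$ I would rewrite with the Maurer--Cartan identity $\partial_{t_a}(S^{-1}S_\zeta) = \partial_\zeta(S^{-1}S_{t_a}) + [\,S^{-1}S_\zeta,\ S^{-1}S_{t_a}\,]$, then move the $\zeta$-derivative by integration by parts ($\langle X'Y\rangle = -\langle XY'\rangle$) and the commutator by ad-invariance ($\langle A[B,C]\rangle = \langle [A,B]C\rangle$). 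Setting $u := S^{-1}S_\zeta = S^{-1}\hat{A}S - \Theta_\zeta$ and $v_a := S^{-1}S_{t_a} = S^{-1}\hat{B}_a S - \Theta_a$, I substitute these throughout and invoke the zero-curvature equations $\partial_{t_a}\hat{A} = \partial_\zeta\hat{B}_a - [\hat{A},\hat{B}_a]$ and $\partial_{t_a}\hat{B}_b - \partial_{t_b}\hat{B}_a = [\hat{B}_a,\hat{B}_b]$.

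This produces a finite sum of brackets of products of $S^{\pm1}$, $\hat{A}$, $\hat{B}_a$, $\hat{B}_b$ and the diagonal polynomials $\Theta_\zeta,\Theta_a,\Theta_b$. Using cyclicity I would conjugate the $S^{\pm1}$'s away from every term that is a pure commutator or a total $\zeta$-derivative and then discard each term that has become the residue at $\zeta=\infty$ of a \emph{polynomial} in $\zeta$. The crucial point is that one must \emph{not} discard quantities built from $\hat{A}$ or $\hat{B}_a$ alone, since these now carry poles at the origin (first order for $\hat{A}$, order $\le q-1$ for $\hat{B}_a$): the cancellations that survive are exactly those that survive classically, except that the combination $\langle (\partial_\zeta\hat{B}_a)\hat{B}_b\rangle - \langle (\partial_\zeta\hat{B}_b)\hat{B}_a\rangle$ no longer collapses to a polynomial and therefore remains. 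Since $\langle (\partial_\zeta\hat{B}_a)\hat{B}_b\rangle$ is already antisymmetric in $a\leftrightarrow b$ (cyclicity followed by integration by parts), a short bookkeeping step assembles this surviving term into $\sum\langle(\partial_\zeta\hat{B}_a)\hat{B}_b\rangle\, dt_a\wedge dt_b$, which is the claim.

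The main obstacle is precisely this substitution-and-cancellation step: one has to carry out the standard Jimbo--Miwa--Ueno manipulation (as in \cite{JMU1,ILP,IP}) carefully enough to pinpoint which vanishings rely on $\hat{A}$ and $\hat{B}_a$ being polynomial — i.e. on the bracket of a polynomial having no residue at infinity — and to keep exactly those terms, so that the non-trivial remainder is correctly isolated as $\langle(\partial_\zeta\hat{B}_a)\hat{B}_b\rangle$. A secondary point is consistency in the repeated use of the Maurer--Cartan and zero-curvature identities together with the three properties (cyclicity, integration by parts, ad-invariance) of $\langle\cdot\rangle$, and checking that each intermediate expression really is a Laurent series at $\zeta=\infty$ so that $\langle\cdot\rangle$ applies; this is guaranteed by the single-valuedness of $S$ noted above (equivalently, in the original $\lambda$-gauge, by the symmetry relation $\Psi_k = \omega_q^{-k}\mathcal{S}_q^{-1}\Psi_k\mathcal{S}_q$).
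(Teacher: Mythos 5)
Your route is essentially the paper's: work in the $\zeta$-gauge with $\omega_a=-\langle S^{-1}S_\zeta\,\Theta_a\rangle$, differentiate, antisymmetrize, kill the $\partial_{t_a}\Theta_b$ terms by symmetry of mixed partials, and use the deformation equations together with cyclicity, integration by parts and ad-invariance of $\langle\cdot\rangle$ to isolate the residue that no longer vanishes because $\hat{A}$, $\hat{B}_\ell$ have poles at $\zeta=0$. The paper substitutes $S_{t_b}=\hat{B}_bS-S\Theta_b$ directly rather than passing through the Maurer--Cartan identity for $S^{-1}dS$, but that is a cosmetic difference.

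The one place where your sketch is loose is exactly the step you defer. After the first round of manipulations one is \emph{not} left with the antisymmetric pair $\langle\partial_\zeta\hat{B}_a\,\hat{B}_b\rangle-\langle\partial_\zeta\hat{B}_b\,\hat{B}_a\rangle$ ``failing to vanish'' (that combination equals $2\langle\partial_\zeta\hat{B}_a\,\hat{B}_b\rangle$ and would give twice the stated answer); what survives are the mixed terms $\langle\partial_\zeta\hat{B}_a\,S_{t_b}S^{-1}\rangle-\langle\partial_\zeta\hat{B}_b\,S_{t_a}S^{-1}\rangle$, which are neither polynomial residues nor pure commutators or total derivatives, so your ``conjugate $S$ away and discard polynomials'' bookkeeping does not dispose of them. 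One further identity is needed: expanding
\begin{equation*}
0=\langle\Theta_{b,\zeta}\Theta_a\rangle=\bigl\langle\partial_\zeta\bigl(S\Theta_bS^{-1}\bigr)\,S\Theta_aS^{-1}\bigr\rangle+\bigl\langle S^{-1}S_\zeta\,[\Theta_a,\Theta_b]\bigr\rangle
\end{equation*}
with $S\Theta_aS^{-1}=\hat{B}_a-S_{t_a}S^{-1}$, using $[\Theta_a,\Theta_b]=0$, integration by parts, and the fact that $\partial_\zeta(S_{t_b}S^{-1})S_{t_a}S^{-1}=\OO(\zeta^{-3})$ is residueless, converts the mixed combination into a \emph{single} copy $\langle\partial_\zeta\hat{B}_b\,\hat{B}_a\rangle=-\langle\partial_\zeta\hat{B}_a\,\hat{B}_b\rangle$, which is how the paper lands on the stated coefficient. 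Your toolkit suffices to reproduce this, but as written the plan both mislocates where the surviving term comes from and risks a factor-of-two error, so you should carry this step out explicitly rather than appeal to ``the classical cancellations survive.''
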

\begin{proof}
    Let $\omega_a$ denote the coefficient of $dt_a$ in $\boldsymbol{\omega}_{JMU}$. We shall first calculate $\frac{\partial \omega_a}{\partial t_b}$. By direct calculation,
        \begin{align*}
           \frac{\partial \omega_a}{\partial t_b} &= \left\langle S^{-1} \frac{\partial S}{\partial t_b}S^{-1} \frac{\partial S}{\partial \zeta}\Theta_a \right\rangle 
                                                    - \left\langle S^{-1} \frac{\partial^2 S}{\partial t_b\partial \zeta}\Theta_a\right\rangle 
                                                    - \left\langle S^{-1} \frac{\partial S}{\partial \zeta}\frac{\partial \Theta_a}{\partial t_b}\right\rangle.
        \end{align*}
    Now, the equation $\frac{\partial \Phi}{\partial t_b} = \hat{B}_b(\zeta;{\bf t})\Phi$ implies that $\frac{\partial S}{\partial t_b} = \hat{B}_b S - S\Theta_b$. 
    So, we can rewrite the above as
        \begin{align*}
            \frac{\partial \omega_a}{\partial t_b} &= \left\langle S^{-1}\hat{B}_b \frac{\partial S}{\partial \zeta}\Theta_a \right\rangle
            -\left\langle \Theta_b S^{-1} \frac{\partial S}{\partial \zeta}\Theta_a \right\rangle -
            \left\langle S^{-1} \frac{\partial}{\partial \zeta} \left[\hat{B}_b S - S\Theta_b\right]\Theta_a\right\rangle - \left\langle S^{-1} \frac{\partial S}{\partial \zeta}\frac{\partial \Theta_a}{\partial t_b}\right\rangle\\
            &=-\left\langle \Theta_b S^{-1} \frac{\partial S}{\partial \zeta}\Theta_a \right\rangle - \left\langle S^{-1} \frac{\partial \hat{B}_b}{\partial \zeta} S\Theta_a \right\rangle + \left\langle S^{-1}\frac{\partial S}{\partial \zeta}\Theta_b\Theta_a \right\rangle + \left\langle \frac{\partial \Theta_b}{\partial \zeta}\Theta_a \right\rangle - \left\langle S^{-1} \frac{\partial S}{\partial \zeta}\frac{\partial \Theta_a}{\partial t_b}\right\rangle.
        \end{align*}
    Using the identity
        \begin{equation}\label{useful-identity}
            \left\langle \frac{\partial \Theta_b}{\partial \zeta}\Theta_a \right\rangle = \left\langle \frac{\partial }{\partial \zeta}\left(S\Theta_b S^{-1}\right)S\Theta_a S^{-1}\right\rangle + \left\langle S^{-1}\frac{\partial S}{\partial \zeta} [\Theta_a,\Theta_b]\right\rangle,
        \end{equation}
    cyclicity, and representing $S\Theta_a S^{-1} = \hat{B}_a - \frac{\partial S}{\partial t_a} S^{-1}$, the above can be arranged to read
        \begin{align*}
            \frac{\partial \omega_a}{\partial t_b} &= -\left\langle S^{-1}\frac{\partial S}{\partial \zeta}\frac{\partial\Theta_a}{\partial t_b}\right\rangle - \left\langle \frac{\partial \hat{B}_b}{\partial \zeta}S\Theta_a S^{-1} \right\rangle +
            \left\langle \frac{\partial}{\partial \zeta}\left(S\Theta_b S^{-1}\right)S\Theta_a S^{-1}\right\rangle\\
            &= -\left\langle S^{-1}\frac{\partial S}{\partial \zeta}\frac{\partial\Theta_a}{\partial t_b}\right\rangle
            - \left\langle \frac{\partial \hat{B}_b}{\partial \zeta}\hat{B}_a \right\rangle + \left\langle \frac{\partial \hat{B}_b}{\partial \zeta}\frac{\partial S}{\partial t_a} S^{-1} \right\rangle + \left\langle \frac{\partial \hat{B}_b}{\partial \zeta} B_a\right\rangle - \left\langle \frac{\partial \hat{B}_b}{\partial \zeta} \frac{\partial S}{\partial t_a}S^{-1}\right\rangle \\
            &- \left\langle \frac{\partial}{\partial \zeta}\left( \frac{\partial S}{\partial t_b}S^{-1} \right)B_a \right\rangle 
            + \left\langle \frac{\partial}{\partial \zeta}\left( \frac{\partial S}{\partial t_b}S^{-1} \right)\frac{\partial S}{\partial t_a}S^{-1}  \right\rangle 
        \end{align*}
    Using integration by parts on the second to last term, we obtain that
        \begin{equation*}
            \frac{\partial \omega_a}{\partial t_b} = -\left\langle S^{-1}\frac{\partial S}{\partial \zeta}\frac{\partial\Theta_a}{\partial t_b}\right\rangle + 
            \left\langle \frac{\partial \hat{B}_a}{\partial \zeta}\frac{\partial S}{\partial t_b}S^{-1} \right\rangle +\left\langle \frac{\partial}{\partial \zeta}\left( \frac{\partial S}{\partial t_b}S^{-1} \right)\frac{\partial S}{\partial t_a}S^{-1}  \right\rangle .
        \end{equation*}
    Now, the argument of the last term is of order $\OO(\zeta^{-2})$, and thus has no residue. So, our final expression for $\frac{\partial \omega_a}{\partial t_b}$ is
        \begin{equation*}
            \frac{\partial \omega_a}{\partial t_b} = \left\langle  \frac{\partial \hat{B}_a}{\partial \zeta}\frac{\partial S}{\partial t_b}S^{-1}\right\rangle- \left\langle S^{-1}\frac{\partial S}{\partial \zeta}\frac{\partial\Theta_a}{\partial t_b}\right\rangle.
        \end{equation*}
    Interchanging the roles of $a$ and $b$ allows one to compute $\frac{\partial \omega_b}{\partial t_a}$; the difference of these two quantities is
        \begin{equation*}
            \frac{\partial \omega_a}{\partial t_b} - \frac{\partial \omega_b}{\partial t_a} =\left\langle  \frac{\partial \hat{B}_a}{\partial \zeta}\frac{\partial S}{\partial t_b}S^{-1}\right\rangle- \left\langle  \frac{\partial \hat{B}_b}{\partial \zeta}\frac{\partial S}{\partial t_a}S^{-1}\right\rangle -\left\langle S^{-1}\frac{\partial S}{\partial \zeta}\left(\frac{\partial\Theta_a}{\partial t_b} - \frac{\partial\Theta_b}{\partial t_a}\right)\right\rangle.
        \end{equation*}
    Closedness of $\boldsymbol{\omega}_{JMU}$ is equivalent to the vanishing of the above expression, for all $a,b$. Indeed, it is easy to see that the last term vanishes, 
    by the integrability condition \eqref{Theta-integrable}; it remains to see that the expression
        \begin{equation*}
            \frac{\partial \omega_a}{\partial t_b} - \frac{\partial \omega_b}{\partial t_a} =  
              \left\langle  \frac{\partial \hat{B}_a}{\partial \zeta}\frac{\partial S}{\partial t_b}S^{-1}\right\rangle
              -\left\langle  \frac{\partial \hat{B}_b}{\partial \zeta}\frac{\partial S}{\partial t_a}S^{-1}\right\rangle
        \end{equation*}
    vanishes. Here is the first place where the fact that the matrices $\hat{B}_a$ are \textit{not} polynomials in $\zeta$ comes into play. Again writing the identity 
    \eqref{useful-identity}, and representing $S\Theta_a S^{-1} = \hat{B}_a - \frac{\partial S}{\partial t_a} S^{-1}$, we see that
        \begin{align*}
            \left\langle \frac{\partial \Theta_b}{\partial \zeta}\Theta_a \right\rangle &= \left\langle \frac{\partial \hat{B}_b}{\partial \zeta} \hat{B}_a\right\rangle 
            - \left\langle \frac{\partial \hat{B}_b}{\partial\zeta} \frac{\partial S}{\partial t_a} S^{-1}\right\rangle - \left\langle \frac{\partial}{\partial \zeta}\left( \frac{\partial S}{\partial t_b} S^{-1}\right)\hat{B}_a\right\rangle \\
            &- \left\langle \frac{\partial}{\partial \zeta}\left( \frac{\partial S}{\partial t_b}S^{-1} \right)\frac{\partial S}{\partial t_a}S^{-1}  \right\rangle + \left\langle S^{-1}\frac{\partial S}{\partial \zeta} [\Theta_a,\Theta_b]\right\rangle.
        \end{align*}
    Now, the integrability condition \eqref{Theta-integrable} implies that the last term vanishes; we also have already seen that the second to last term vanishes, as 
    its argument is residueless. Similarly, since the matrices $\Theta_a$ are polynomial in $\zeta$, the argument left hand side is residueless, and thus vanishes. Integrating the third term by parts, we obtain the identity
        \begin{equation*}
            \left\langle \frac{\partial \hat{B}_b}{\partial\zeta} \frac{\partial S}{\partial t_a} S^{-1}\right\rangle - \left\langle \frac{\partial \hat{B}_a}{\partial \zeta} \frac{\partial S}{\partial t_b} S^{-1}\right\rangle = \left\langle \frac{\partial \hat{B}_b}{\partial \zeta} \hat{B}_a\right\rangle= -\left\langle \frac{\partial \hat{B}_a}{\partial \zeta} \hat{B}_b\right\rangle, 
        \end{equation*}
    and so we see that
        \begin{equation*}
            \frac{\partial \omega_a}{\partial t_b} - \frac{\partial \omega_b}{\partial t_a} =  \left\langle \frac{\partial \hat{B}_a}{\partial \zeta} \hat{B}_b\right\rangle .
        \end{equation*}
    If the matrices $\hat{B}_a$ are polynomials, then the right hand side vanishes identically; otherwise, we obtain the expression above.
\end{proof}

\begin{lemma}\label{new-tau-lem-2}
    \begin{equation}
        {\bf d}\,\sum_{\ell}\left\langle \hat{A}(\zeta;{\bf t}) \hat{B}_{\ell}(\zeta;{\bf t})\right\rangle dt_{\ell} = -\sum_{a<b} \left\langle\frac{\partial \hat{B}_a}{\partial \zeta} \hat{B}_b\right\rangle dt_a\wedge dt_b.
    \end{equation}
\end{lemma}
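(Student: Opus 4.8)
The plan is to differentiate directly. Since ${\bf d}$ commutes with $\langle\,\cdot\,\rangle$ (property 3) and $\hat A\hat B_\ell$ is a Laurent polynomial near $\zeta=\infty$, the left-hand side equals
\begin{equation*}
\sum_{a<b}\left(\frac{\partial}{\partial t_a}\langle \hat A\hat B_b\rangle - \frac{\partial}{\partial t_b}\langle \hat A\hat B_a\rangle\right) dt_a\wedge dt_b ,
\end{equation*}
so it suffices to compute the antisymmetrized component $\partial_{t_a}\langle\hat A\hat B_b\rangle - \partial_{t_b}\langle\hat A\hat B_a\rangle$ for each pair $a<b$ and to massage it into $-\langle(\partial_\zeta\hat B_a)\hat B_b\rangle$ using only the zero-curvature equations (Assumption 3) and the algebraic identities for $\langle\,\cdot\,\rangle$ (cyclicity, integration by parts, Ad-invariance).

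Carrying this out, the Leibniz rule gives $\partial_{t_a}\langle\hat A\hat B_b\rangle = \langle(\partial_{t_a}\hat A)\hat B_b\rangle + \langle\hat A\,\partial_{t_a}\hat B_b\rangle$. I would first dispose of the terms containing a time-derivative of $\hat A$: I claim $\langle(\partial_{t_a}\hat A)\hat B_b\rangle = 0$. This is the one place where the specific structure of the $\zeta$-gauge, rather than mere polynomiality, is needed. Writing $\hat A(\zeta) = q\zeta^{q-1}\big(g_q^{-1}Ag_q\big)(\zeta^q) - q\,\zeta^{-1}\,\mathcal U_q^{-1}\Delta_q\mathcal U_q$, the pole term is independent of ${\bf t}$, so $\partial_{t_a}\hat A(\zeta) = q\zeta^{q-1}\big(g_q^{-1}(\partial_{t_a}A)\,g_q\big)(\zeta^q)$, while $\hat B_b(\zeta) = \big(g_q^{-1}B_b\,g_q\big)(\zeta^q)$; since $g_q = \lambda^{\Delta_q}\mathcal U_q$, cyclicity of the trace cancels $\mathcal U_q$ and the diagonal factor $\lambda^{\Delta_q}$, so that $\tr\big[(\partial_{t_a}\hat A)\hat B_b\big] = q\zeta^{q-1}\,\tr\big[(\partial_{t_a}A)\,B_b\big](\zeta^q)$, a polynomial in $\zeta$ by Assumption 2, hence residueless at $\zeta=\infty$ (property 4). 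The same argument gives $\langle(\partial_{t_b}\hat A)\hat B_a\rangle = 0$.

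With these two terms gone, one is left with $\partial_{t_a}\langle\hat A\hat B_b\rangle - \partial_{t_b}\langle\hat A\hat B_a\rangle = \langle\hat A\,(\partial_{t_a}\hat B_b - \partial_{t_b}\hat B_a)\rangle = \langle\hat A\,[\hat B_a,\hat B_b]\rangle$ by the second zero-curvature equation. Applying Ad-invariance ($\langle\hat A[\hat B_a,\hat B_b]\rangle=\langle[\hat A,\hat B_a]\hat B_b\rangle$), then the first zero-curvature equation ($[\hat A,\hat B_a]=\partial_\zeta\hat B_a-\partial_{t_a}\hat A$), and finally the vanishing just established one more time, this collapses to $\langle(\partial_\zeta\hat B_a)\hat B_b\rangle$. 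Reassembling against $dt_a\wedge dt_b$ over $a<b$, and using the integration-by-parts identity $\langle(\partial_\zeta\hat B_b)\hat B_a\rangle=-\langle(\partial_\zeta\hat B_a)\hat B_b\rangle$ to fix the orientation of the antisymmetrization, yields the asserted formula; by design this matches $-{\bf d}\,\boldsymbol\omega_{JMU}$ from Lemma \ref{new-tau-lem-1}.

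The step I expect to be the real obstacle is the vanishing $\langle(\partial_{t_a}\hat A)\hat B_b\rangle = 0$: it is exactly the place where the computation departs from the classical (all-polynomial) JMU case — there $\hat B_b$ is a polynomial and every such bracket is manifestly zero, whereas here $\hat B_b$ carries a pole of order $q-1$ at $\zeta=0$, and one must see that the would-be anomaly still disappears after taking the trace. Everything else is a routine, if somewhat lengthy, manipulation of $\langle\,\cdot\,\rangle$, and combining Lemmas \ref{new-tau-lem-0}, \ref{new-tau-lem-1} with this lemma then gives ${\bf d}\,\hat{\boldsymbol\omega}_{JMU}=0$.
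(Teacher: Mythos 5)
Your proof is correct, and it establishes the same key identity as the paper but organizes the computation differently. The paper substitutes the $\zeta$-zero-curvature relation for $\partial_{t_b}\hat{A}$ at the outset, antisymmetrizes to reach $2\langle\partial_\zeta\hat{B}_b\,\hat{B}_a\rangle+\langle\hat{A}[\hat{B}_a,\hat{B}_b]\rangle$, and then proves $\langle\hat{A}[\hat{B}_a,\hat{B}_b]\rangle=-\langle\partial_\zeta\hat{B}_b\,\hat{B}_a\rangle$ by evaluating both brackets explicitly back in the $\lambda$-gauge, where polynomiality leaves only the residue coming from $g_q^{-1}\frac{dg_q}{d\lambda}$, i.e. the $\Delta_q/\zeta$ term paired with $[B_a(\zeta^q),B_b(\zeta^q)]$. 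You instead isolate the single vanishing statement $\langle(\partial_{t_a}\hat{A})\,\hat{B}_b\rangle=0$ --- proved by the same underlying mechanism (the $g_q$-conjugation cancels under the trace, the pole of $\hat{A}$ is ${\bf t}$-independent, and what remains is a polynomial in $\zeta$, hence residueless at infinity) --- and use it twice, together with the two zero-curvature equations and Ad-invariance; this yields $\langle\hat{A}[\hat{B}_a,\hat{B}_b]\rangle=\langle(\partial_\zeta\hat{B}_a)\hat{B}_b\rangle$ with no explicit residue computation and no factor-of-two bookkeeping, which is arguably cleaner, and your diagnosis that this vanishing is the only genuinely non-classical input is exactly right. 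One bookkeeping remark: what you obtain, $\partial_{t_a}\langle\hat{A}\hat{B}_b\rangle-\partial_{t_b}\langle\hat{A}\hat{B}_a\rangle=\langle(\partial_\zeta\hat{B}_a)\hat{B}_b\rangle=-\langle(\partial_\zeta\hat{B}_b)\hat{B}_a\rangle$, coincides with what the paper's proof yields (its $[{\bf d}\sigma]_{ba}$ is the opposite antisymmetrization), so the overall sign in the displayed statement is purely a question of which ordering of the index pair labels the wedge; the substantive point, which you do verify, is that this $2$-form is exactly the negative of the one produced in Lemma \ref{new-tau-lem-1}, so that ${\bf d}\,\hat{\boldsymbol{\omega}}_{JMU}=0$.
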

\begin{proof}
    Put
        \begin{equation*}
            \sigma :=\sum_{\ell}\left\langle \hat{A}(\zeta;{\bf t}) \hat{B}_{\ell}(\zeta;{\bf t})\right\rangle dt_{\ell},
        \end{equation*}
    and let $\sigma_a:=\langle\hat{A}\hat{B}_a\rangle$ denote the coefficient of $dt_a$ of $\sigma$. We then have that, using the integrability conditions for
    $\hat{B}_{\ell}$, $\hat{A}$,
        \begin{align*}
                \frac{\partial \sigma_a}{\partial t_b} &= \left\langle \frac{\partial \hat{A}}{\partial t_b} \hat{B}_a\right\rangle + \left\langle \hat{A} \frac{\partial \hat{B}_a}{\partial t_b}\right\rangle \\
                &=\left\langle \left(\frac{\partial \hat{B}_b}{\partial \zeta} + [\hat{B}_b,\hat{A}]\right) \hat{B}_a\right\rangle + \left\langle \hat{A} \frac{\partial \hat{B}_a}{\partial t_b}\right\rangle\\
                &=\left\langle \frac{\partial \hat{B}_b}{\partial \zeta}\hat{B}_a\right\rangle + \left\langle \hat{A}[\hat{B}_a,\hat{B}_b]\right\rangle + \left\langle \hat{A} \frac{\partial \hat{B}_a}{\partial t_b}\right\rangle
            \end{align*}
        where in the last equality we have used the Ad-invariance of the bracket. Then, we can compute $[{\bf d }\sigma]_{ba}$ to be
            \begin{align*}
                [{\bf d }\sigma]_{ba} = \frac{\partial \sigma_a}{\partial t_b} - \frac{\partial \sigma_b}{\partial t_a} &= 2 \left\langle \frac{\partial \hat{B}_b}{\partial \zeta}\hat{B}_a\right\rangle + 2\left\langle \hat{A}[\hat{B}_a,\hat{B}_b]\right\rangle + \left\langle \hat{A} \left(\frac{\partial \hat{B}_a}{\partial t_b} - \frac{\partial \hat{B}_b}{\partial t_a}\right)\right\rangle\\
                &=2 \left\langle \frac{\partial \hat{B}_b}{\partial \zeta}\hat{B}_a\right\rangle + \left\langle \hat{A}[\hat{B}_a,\hat{B}_b]\right\rangle.
            \end{align*}
        In the last line, we have again used the integrability conditions for $\hat{B}_{\ell}$, $\hat{A}$. We claim that 
        $\left\langle \hat{A}[\hat{B}_a,\hat{B}_b]\right\rangle = -\left\langle \frac{\partial \hat{B}_b}{\partial \zeta}\hat{B}_a\right\rangle$. 
        On one hand, expanding $\left\langle \hat{A}[\hat{B}_a,\hat{B}_b]\right\rangle$,
            \begin{align*}
                \left\langle \hat{A}[\hat{B}_a,\hat{B}_b]\right\rangle &= \left\langle q\zeta^{q-1}\left(g_q^{-1}A(\zeta^q)g_q - g_q^{-1}\frac{dg_q}{d\lambda}\right)[g_q^{-1}B_a(\zeta^q)g_q,g_q^{-1}B_b(\zeta^q)g_q]\right\rangle\\
                &=\left\langle q\zeta^{q-1}\left(g_q^{-1}A(\zeta^q)g_q - g_q^{-1}\frac{dg_q}{d\lambda}\right)g_q^{-1}[B_a(\zeta^q),B_b(\zeta^q)]g_q\right\rangle\\
                &=\langle q\zeta^{q-1}A(\zeta^q)[B_a(\zeta^q),B_b(\zeta^q)]\rangle -\left\langle \frac{q\mathcal{U}_q\Delta_q \mathcal{U}_q^{-1}}{\zeta}[B_a(\zeta^q),B_b(\zeta^q)]\right\rangle\\
                &=-\left\langle \frac{q\mathcal{U}_q\Delta_q \mathcal{U}_q^{-1}}{\zeta}[B_a(\zeta^q),B_b(\zeta^q)]\right\rangle,
            \end{align*}
        where the last equality follows from the fact that all of the expressions inside the first bracket are polynomials. On the other hand,
        we calculate that
            \begin{equation*}
                \frac{\partial \hat{B}_b}{\partial \zeta} = -g_q^{-1}\frac{dg_q}{d\zeta}g_q^{-1}B_b(\zeta^q)g_q+g_q^{-1}\frac{\partial }{\partial\zeta}B_b(\zeta^q)g_q + g_q^{-1}B_b(\zeta^q) g_q,
            \end{equation*}
        and so 
            \begin{align*}
                \left\langle \frac{\partial \hat{B}_b}{\partial \zeta}\hat{B}_a\right\rangle &= \left\langle \frac{dg_q}{d\zeta}g_q^{-1}[B_a(\zeta^q),B_b(\zeta^q)]\right\rangle + \left\langle \frac{\partial }{\partial \zeta}(B_b(\zeta^q))B_a(\zeta^q)\right\rangle\\
                &= \left\langle \frac{q\mathcal{U}_q\Delta_q \mathcal{U}_q^{-1}}{\zeta}[B_a(\zeta^q),B_b(\zeta^q)]\right\rangle.
            \end{align*}
        It follows that
            \begin{equation}
                [{\bf d }\sigma]_{ba} = 2 \left\langle \frac{\partial \hat{B}_b}{\partial \zeta}\hat{B}_a\right\rangle + \left\langle \hat{A}[\hat{B}_a,\hat{B}_b]\right\rangle = \left\langle \frac{\partial \hat{B}_b}{\partial \zeta}\hat{B}_a\right\rangle;
            \end{equation}
        an integration by parts yields that this is equal to $-\left\langle \frac{\partial \hat{B}_a}{\partial \zeta}\hat{B}_b\right\rangle$. This completes the proof.
\end{proof}

As a result of these lemmas, we obtain as a corollary the result of Theorem \ref{tau-theorem.}:
    \begin{cor} \textit{Proof of Theorem \ref{tau-theorem.}/closedness of the modified $\tau$-differential.} Under the assumptions of Subsection \ref{assumptions-subsection}, and given
    Definition \ref{modified-tau-differential-definition}, 
        \begin{equation}
            {\bf d }\,\hat{\boldsymbol{\omega}}_{JMU} = 0.
        \end{equation}
    \end{cor}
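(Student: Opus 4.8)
The plan is simply to combine Lemmas \ref{new-tau-lem-0}, \ref{new-tau-lem-1} and \ref{new-tau-lem-2}, so that the proof of the corollary is essentially bookkeeping once those are in hand. First I would observe that it suffices to establish closedness in the $\zeta$-gauge: by the change-of-variables property of $\langle\cdot\rangle$ (property 6 in the list above) the $\lambda$-gauge expression in Definition \ref{modified-tau-differential-definition} equals $q\sum_{\ell}\langle\hat{A}(\zeta;{\bf t})\,\partial_{t_{\ell}}S\,S^{-1}\rangle\,dt_{\ell}$, and the exterior differential ${\bf d}$ acts only on the isomonodromic times, hence commutes with the constant rescaling induced by $\lambda=\zeta^{q}$; so ${\bf d}\,\hat{\boldsymbol{\omega}}_{JMU}=0$ in one gauge if and only if it holds in the other. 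Working in the $\zeta$-gauge, Lemma \ref{new-tau-lem-0} supplies the decomposition
$$\hat{\boldsymbol{\omega}}_{JMU} = \boldsymbol{\omega}_{JMU} + \sum_{\ell}\left\langle \hat{A}(\zeta;{\bf t})\,\hat{B}_{\ell}(\zeta;{\bf t})\right\rangle dt_{\ell}.$$

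Next I would apply ${\bf d}$ to this identity and invoke the remaining two lemmas:
$${\bf d}\,\hat{\boldsymbol{\omega}}_{JMU} = {\bf d}\,\boldsymbol{\omega}_{JMU} + {\bf d}\sum_{\ell}\left\langle \hat{A}\hat{B}_{\ell}\right\rangle dt_{\ell} = \sum_{a<b}\left\langle\tfrac{\partial \hat{B}_a}{\partial \zeta}\hat{B}_b\right\rangle dt_a\wedge dt_b - \sum_{a<b}\left\langle\tfrac{\partial \hat{B}_a}{\partial \zeta}\hat{B}_b\right\rangle dt_a\wedge dt_b = 0,$$
the two $2$-form contributions coming from Lemma \ref{new-tau-lem-1} and Lemma \ref{new-tau-lem-2} respectively. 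The one place that requires care is matching these two bilinear expressions: one uses integration by parts ($\langle X'Y\rangle=-\langle XY'\rangle$, property 2) together with cyclicity (property 1) to check that $\langle\partial_{\zeta}\hat{B}_a\,\hat{B}_b\rangle$ is antisymmetric in $a\leftrightarrow b$, which is what pins down the sign and the summation range (over $a<b$ versus over all ordered pairs) consistently between the two lemmas. Transferring the conclusion back to the $\lambda$-gauge is then immediate. For the full statement of Theorem \ref{tau-theorem.}, gauge-invariance was already recorded above: replacing $\mathfrak{G}$ by $\mathfrak{h}({\bf t})\mathfrak{G}$ with $\mathfrak{h}$ unipotent upper-triangular changes neither $\langle A\,\partial_{t_{\ell}}\mathfrak{G}\,\mathfrak{G}^{-1}\rangle$ nor the correction term $\langle\tfrac{\Delta_q}{\lambda}\,\partial_{t_{\ell}}\mathfrak{G}\,\mathfrak{G}^{-1}\rangle$.

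I do not expect a genuine obstacle at the level of the corollary itself: all of the analytic substance has been pushed into Lemmas \ref{new-tau-lem-1} and \ref{new-tau-lem-2}, where the non-polynomiality of the $\hat{B}_{\ell}$ in the $\zeta$-gauge — produced by the $-g_q^{-1}\,dg_q/d\lambda$ term, equivalently the resonant first-order pole of $\hat{A}$ at $\zeta=0$ — generates the residual $2$-forms $\langle\partial_{\zeta}\hat{B}_a\,\hat{B}_b\rangle$ that would vanish identically in the classical polynomial setting of \cite{JMU1}. The subtlety, such as it is, is purely organizational: keeping the orientation conventions in ${\bf d}$ and the index ranges in the two lemmas compatible, so that the contribution of Lemma \ref{new-tau-lem-1} and that of Lemma \ref{new-tau-lem-2} are exact negatives of one another.
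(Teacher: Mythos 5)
Your proposal is correct and follows the paper's own argument exactly: the corollary is obtained by adding Lemmas \ref{new-tau-lem-1} and \ref{new-tau-lem-2} via the decomposition of Lemma \ref{new-tau-lem-0}, with the remaining remarks (gauge transfer, sign/summation bookkeeping) being routine. Nothing further is needed.
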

\begin{proof}
    One simply must add the results of Lemmas \ref{new-tau-lem-1} and \ref{new-tau-lem-2}, in accordance with the fact that
    \begin{equation*}
        \hat{\boldsymbol{\omega}}_{JMU} = \boldsymbol{\omega}_{JMU} + \sum_{\ell}\left\langle \hat{A}(\zeta;{\bf t}) \hat{B}_{\ell}(\zeta;{\bf t})\right\rangle dt_{\ell},
    \end{equation*}
    as per Lemma \ref{new-tau-lem-0}.
\end{proof}

\begin{remark}
    Note that the explicit form of the matrices $\Delta_q,\mathcal{U}_q$ was not so important in the proof of this proposition. The only
    details that mattered were the fact that $g_q(\lambda)$ was of the form $g_q(\lambda)=\lambda^{\Delta_q}\mathcal{U}_q$, and the fact
    that $\Psi(\lambda;{\bf t})$, $\mathfrak{G}(\lambda;{\bf t})$ had jumps only on the right.
\end{remark}

\begin{remark}
    In principle, the above theorem/definition of the $\tau$-differential should follow from the work of Bertola and Mo \cite{BM} on isomonodromic deformations of resonant rational connections. We nevertheless feel our theorem is worth writing down, for the following reasons:
        \begin{itemize}
            \item Although our theorem is less general, the corresponding expression for the modified $\tau$-differential is more manageable,
            \item The expression for the $\tau$-differential in \cite{BM} is in terms of spectral invariants, whereas our expression is in terms of formal residues in the local gauge. This is more in line with the original expression for the $\tau$-differential provided by Jimbo, Miwa, and Ueno \cite{JMU1}, and furthermore is amenable to Deift-Zhou asymptotic analysis.
        \end{itemize}
\end{remark}

\begin{remark}\textit{Irrelevance of modification in the case of Painlev\'{e} I.}
    This construction is unnecessary in the case of the usual Painlev\'{e} I Lax pair, and so the $\tau$-differential as defined by Jimbo, Miwa, and Ueno \cite{JMU2} 
    or Lisovyy and Roussillon \cite{LR} agrees with the one given here. Recall that this Lax pair in the $\zeta$-gauge is given by (cf. \cite{JMU2}, Formula C5, or \cite{LR}, Formulae 2.4a and 2.4b)
        \begin{align*}
            \hat{A}(\zeta;t) &= (4\zeta^4+2q^2+t)\sigma_3 - (2p\zeta+(2\zeta)^{-1})\sigma_1 - (4q\zeta^2+2q^2+t)i\sigma_2,\\
            \hat{B}(\zeta;t) &= (\zeta + q/\zeta)\sigma_3 -iq\zeta^{-1}\sigma_2,
        \end{align*}
    where $\sigma_k$ are the standard Pauli matrices. Here, $q$ solves Painlev\'{e} I, and $p=q'$ (although the calculations we perform now are independent of this fact). Using the fact that $\tr(\sigma_j\sigma_k) = 2\delta_{jk}$, we find that
        \begin{equation*}
            \tr \hat{A}(\zeta;t)\hat{B}(\zeta;t) = 4\zeta(2\zeta^4+2q\zeta^2-q^2+t/2).
        \end{equation*}
   Hence $\langle \hat{A}(\zeta;t)\hat{B}(\zeta;t)\rangle = 0$, and $\hat{\boldsymbol{\omega}}_{JMU} = \boldsymbol{\omega}_{JMU}$ by Lemma \ref{new-tau-lem-0}. However, as we have seen in the rest of the present work, this construction is nontrivial in general. Indeed, one can readily check that for the next member of the Painlev\'{e} I hierarchy (the $(2,5)$ string equation), this contribution is indeed nontrivial.
\end{remark}

By comparing the coefficients of the expression we obtained in the previous proposition, we can show that $\boldsymbol{\omega}_{Okamoto}$ is a constant multiple of the
modified differential $\hat{\boldsymbol{\omega}}_{JMU}$:
    \begin{cor} (Proposition \ref{Okamoto-JMU-equivalence})
        The differentials $\boldsymbol{\omega}_{Okamoto}$, $\hat{\boldsymbol{\omega}}_{JMU}$ are related by
            \begin{equation}
                \boldsymbol{\omega}_{Okamoto} = 2\hat{\boldsymbol{\omega}}_{JMU}.
            \end{equation}
    \end{cor}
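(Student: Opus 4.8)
The plan is to verify the identity componentwise in the basis $dt_1,dt_2,dt_5$ of one-forms on the parameter space, i.e.\ to show that the coefficient of $dt_\ell$ in $\hat{\boldsymbol{\omega}}_{JMU}$ equals $\tfrac12 H_\ell$ for each $\ell\in\{1,2,5\}$, with $H_\ell$ the Hamiltonians of Theorem~\ref{Hamiltonian-Prop}. Since both $\boldsymbol{\omega}_{Okamoto}$ and $\hat{\boldsymbol{\omega}}_{JMU}$ are already known to be closed, it is enough to check this at the level of the differentials. I would work in the $\xi$-gauge, where Definition~\ref{modified-tau-differential-definition} together with Lemma~\ref{new-tau-lem-0} gives $\hat{\boldsymbol{\omega}}_{JMU}=\boldsymbol{\omega}_{JMU}+\sum_\ell\langle\hat A(\xi)\hat B_\ell(\xi)\rangle\,dt_\ell$, with $\hat A=\mathcal{L}$ as in \eqref{L-zeta} and $\hat B_\ell$ the gauge-transformed deformation matrices. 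The first summand is computed from $\boldsymbol{\omega}_{JMU}=-\sum_\ell\langle S^{-1}\partial_\xi S\,\Theta_\ell\rangle\,dt_\ell$ (equivalently $\sum_\ell\langle\mathcal{L}\,\partial_{t_\ell}S\,S^{-1}\rangle\,dt_\ell=\sum_\ell\langle\Theta_\xi\,S^{-1}\partial_{t_\ell}S\rangle\,dt_\ell$ after the cyclicity manipulation used in Lemma~\ref{new-tau-lem-0}), where $S=\mathbb{I}+\Phi_1\xi^{-1}+\Phi_2\xi^{-2}+\cdots$ is the sub-exponential factor and $\Theta_\ell=\partial_{t_\ell}\Theta$ with $\Theta$ as in \eqref{xi-exponents}. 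Each bracket is the coefficient of $\xi^{-1}$ in a trace, hence a finite polynomial expression in the $\Phi_k$.

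The key simplification is that the cyclic symmetry \eqref{coeff-symmetry}, $\Phi_k=\omega^{-k}\mathcal{S}^T\Phi_k\mathcal{S}$, forces $[\Phi_k]_{jj}=\omega^{(1-j)k}[\Phi_k]_{11}$, so every trace $\tr\big(\text{diag}(1,\omega^{\pm1},\omega^{\pm2})\,\Phi_k\big)$ — which is exactly what the diagonal matrix $\Theta_\ell$ produces — collapses to a numerical multiple of $[\Phi_k]_{11}$, and more generally each bracket reduces to a polynomial in the scalars $[\Phi_1]_{11},\dots,[\Phi_m]_{11}$ recorded in Remark~\ref{a-calculations} in terms of $H_1,H_2,H_5$ and $U,V$. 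A degree count in $\xi$ (using $\deg_\xi\Theta_1=1$, $\deg_\xi\Theta_2=2$, $\deg_\xi\Theta_5=5$, while $S^{-1}\partial_\xi S=\OO(\xi^{-2})$) shows that the $dt_1$-component involves only $\Phi_1$, the $dt_2$-component only $\Phi_1,\Phi_2$ (together with $U,V$ via the string equation), and the $dt_5$-component the coefficients up to $\Phi_5$. In particular $H_5$ enters precisely through the term $-\tfrac1{10}H_5$ in $[\Phi_5]_{11}$, and it is this, together with $[\Phi_1]_{11}=-\tfrac12 H_1$ and $[\Phi_2]_{11}=\tfrac18 H_1^2-\tfrac14 H_2$ from \eqref{first-coeff},~\eqref{second-coeff} and the factor $q=3$ built into Definition~\ref{modified-tau-differential-definition}, that produces the overall constant $\tfrac12$.

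For the correction term $\sum_\ell\langle\hat A\hat B_\ell\rangle\,dt_\ell$ I would argue by the same degree count: $\hat A=\mathcal{L}$ has $\xi$-powers in $\{-1,0,1,2,3,4,6\}$ and the $\hat B_\ell$ have a pole of order at most $q-1=2$ at $\xi=0$, so the only products contributing to the $\xi^{-1}$ coefficient pair the polar parts $\xi^{-1},\xi^{-2}$ of $\hat B_\ell$ with the low-order coefficients $\xi^{-1},\xi^{0},\xi^{1}$ of $\mathcal{L}$. This is a finite explicit computation with the matrices in \eqref{L-zeta} and the gauge-transformed deformation matrices, and one finds $\langle\hat A\hat B_1\rangle=\langle\hat A\hat B_2\rangle=0$ and $\langle\hat A\hat B_5\rangle=-\tfrac13\big(\tfrac16 U'''-UU'\big)$, which is exactly the ad-hoc correction $\alpha$ identified earlier in this section; thus only the $dt_5$-component of $\boldsymbol{\omega}_{JMU}$ is modified.

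Finally I would assemble the pieces, use the string equation \eqref{string-equation} and the compatibility (reduced KP) flows to eliminate all higher $x$-derivatives of $U,V$, and compare the resulting polynomials with $\tfrac12 H_1,\tfrac12 H_2,\tfrac12 H_5$ as written explicitly in \eqref{Hamiltonian-nu}--\eqref{Hamiltonian-eta}. I expect the main obstacle to be the $dt_5$-component: it requires the highest-order expansion coefficients (so the full content of Remark~\ref{a-calculations}) and the heaviest use of the equations of motion to bring the expression into the form \eqref{Hamiltonian-eta}, whereas the $dt_1$- and $dt_2$-components are comparatively short. As an alternative to using Remark~\ref{a-calculations}, one could instead read the Hamiltonians off the spectral curve of $\mathcal{L}$ (as in the remark on the spectral curve, cf.\ \cite{BHH,BM}) and compare those spectral invariants directly with the residues defining $\hat{\boldsymbol{\omega}}_{JMU}$.
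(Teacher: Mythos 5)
Your proposal is essentially the paper's own proof: the paper likewise establishes the corollary by direct expansion of $S(\zeta)$ in the regularized gauge, using the cyclic symmetry \eqref{coeff-symmetry} to reduce the residues to differential polynomials in the entries $[\Phi_k]_{11}$ of Remark~\ref{a-calculations} and in $U,V$, and then matching the $dt_1,dt_2,dt_5$-coefficients against the Hamiltonians \eqref{Hamiltonian-nu}--\eqref{Hamiltonian-eta} to read off the factor $2$ (your splitting via Lemma~\ref{new-tau-lem-0} into $\boldsymbol{\omega}_{JMU}$ plus the correction $\sum_\ell\langle\hat{A}\hat{B}_\ell\rangle\,dt_\ell$ is just a reorganization of that same computation). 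One caution: by the proof of Lemma~\ref{new-tau-lem-0}, the quantity $\sum_\ell\langle\mathcal{L}\,\partial_{t_\ell}S\,S^{-1}\rangle\,dt_\ell=\sum_\ell\langle\Theta_{\zeta}\,S^{-1}\partial_{t_\ell}S\rangle\,dt_\ell$ is (up to the normalization of Definition~\ref{modified-tau-differential-definition}) the \emph{modified} differential $\hat{\boldsymbol{\omega}}_{JMU}$, already containing the correction term, so it is not ``equivalent'' to $\boldsymbol{\omega}_{JMU}=-\sum_\ell\langle S^{-1}\partial_\zeta S\,\Theta_\ell\rangle\,dt_\ell$ as your parenthetical suggests; if you computed the first summand from that expression and then added $\sum_\ell\langle\hat{A}\hat{B}_\ell\rangle\,dt_\ell$ separately, the correction would be counted twice.
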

\begin{proof}
    The proof is straightforward, and follows from definitions. Note that $\mathcal{L}$ has terms of degree $7$, and so we must compute 
    $S(\zeta) = \mathbb{I} + \frac{\Phi_1}{\zeta} + \cdots$ to terms of order $\zeta^{-8}$. This calculation can be performed by applying our previous calculations (cf. Remark \ref{a-calculations}); one finds that the coefficients $\hat{\boldsymbol{\omega}}_{t_{5}}$, $\hat{\boldsymbol{\omega}}_{t_{2}}$, and $\hat{\boldsymbol{\omega}}_{x}$
    are differential polynomials in the variables $\{[\Phi_{k}]_{11}\}_{k=1}^7$, and the functions $U,V$. One can match these coefficients
    explicitly to the Hamiltonians from before, up to a proportionality factor of $2$, and so we can identify $\hat{\boldsymbol{\omega}}_{JMU}$ with
    $\boldsymbol{\omega}_{Okamoto}$.
\end{proof}
Consequentially, if we define the corresponding $\tau$-functions by $\tau = e^{\int \omega}$, we see that the $\tau$-functions arising from these definitions are related by $\tau_{Okamoto} = \tau_{JMU}^2$.

\subsection{The $\tau$-function on the extended monodromy data.} \label{tau-extension}
The $\tau$-function also depends intrinsically on the extended monodromy data of the system; in our case, the Stokes parameters $s_1,...,s_6$. It is thus
natural to ask the question \textit{What is the dependence of the $\tau$-function on the extended monodromy data?} Such a question is by no means new, and
has been addressed in the literature before by various sources \cite{Malgrange,Palmer,Bertola1,LR,ILP}. This problem of determining the dependence of the
$\tau$-function on the extended monodromy data has many important applications, one of the main ones being the problem of determining constant factors for
the asymptotics of $\tau$-functions \cite{LR,ILP,IP}. This subsection will be organized as follows: we first introduce the definition of the extended JMU
differential, in the context of the previous section. We then overview some of the main points given in \cite{IP} about the role of the Hamiltonian structure
of Painlev\'{e} equations in the problem of computing constant factors. We also state their conjectures. We conclude by showing that their conjectures hold
in the case of the isomonodromic system associated to the string equation.

Let us first define the extended JMU $\tau$-differential. We work again with the system \eqref{general-Y-system}. Let $\mathcal{T}$ denote the space of isomonodromic deformation parameters of this system\footnote{Actually, one can extend all the definitions naturally to the universal 
covering $\tilde{\mathcal{T}}$ of $\mathcal{T}$, and this is really where we want to define ${\bf d}_{\mathcal{T}}$. However, we will not be going far enough 
into this subject for this distinction to make much of a difference.}, and denote ${\bf d}_{\mathcal{T}}$ the differential in these parameters (note that we 
had previously used the notation ${\bf d}$ for this object). Associated to the system \eqref{general-Y-system} are a number of parameters which we 
refer to collectively as \textit{monodromy data}. For the system \eqref{general-Y-system}, the monodromy data will consist of a number of Stokes 
parameters. We denote these parameters by $\{m_\ell\}$, denote the space of these parameters by $\mathcal{M}$, and denote the differential in these 
parameters as ${\bf d}_{\mathcal{M}}$.
A specification of a solution to the isomonodromy equations (the zero curvature conditions) depend intrinsically on the monodromy data $\{m_{\ell}\}$, and thus the
JMU $\tau$-function also depends on these parameters. One is then led to wonder \text{how} $\tau$ depends on these parameters. This question can essentially
be answered if one can extend the JMU differential from a closed differential on $\mathcal{T}$ to a closed differential on all of 
$\mathcal{T}\times \mathcal{M}$. This can be accomplished through the following steps \cite{ILP}:
    \begin{enumerate}
        \item Define the following $1$-form on $\mathcal{T}\times\mathcal{M}$:
            \begin{equation}\label{omega0-def}
                \boldsymbol{\omega}_{0} := \left\langle \hat{A}(\zeta) {\bf d }_{\mathcal{T}} S(\zeta) S^{-1}(\zeta)\right\rangle + \left\langle \hat{A}(\zeta) {\bf d }_{\mathcal{M}}  S(\zeta) S^{-1}(\zeta)\right\rangle.
            \end{equation}
        This $1$-form obviously has the property that its restriction to $\mathcal{T}$ coincides with the usual JMU $\tau$-differential. Furthermore, one
        can show that 
            \begin{equation}\label{extended-2-form}
                \boldsymbol{\omega}_{0} := \left({\bf d}_{\mathcal{T}} + {\bf d}_{\mathcal{M}}\right)\boldsymbol{\omega}_{0}
            \end{equation}
        is a $2$-form on $\mathcal{M}$ only. In other words, the restriction of $\boldsymbol{\omega}_{0}$ to $\mathcal{T}$ vanishes identically; this is equivalent to 
        the fact that (i.) ${\bf d}_{\mathcal{T}} \boldsymbol{\omega}_{JMU} = 0$, and (ii.) $\boldsymbol{\omega}_{0}$ contains no cross-terms of the form $dt_k\wedge dm_{\ell}$.
        \item By means of asymptotic analysis, one can calculate (at least in principle) $\boldsymbol{\omega}_{0}$ explicitly. Once this expression is obtained, construct
        a $1$-form $\boldsymbol{\omega}_{correction}$ on $\mathcal{M}$ such that $d\boldsymbol{\omega}_{correction} = \Omega$; then, put
            \begin{equation}
                \hat{\omega} := \boldsymbol{\omega}_{0} - \boldsymbol{\omega}_{correction}.
            \end{equation}
        The $1$-form $\hat{\omega}$ then by construction is closed on $\mathcal{T} \times \mathcal{M}$, and its restriction to $\mathcal{T}$ agrees with the
        JMU $\tau$-differential. We are thus justified in calling $\hat{\omega}$ the \textit{extended} $\tau$-differential. Such a differential is of course
        not unique, as our construction of $\boldsymbol{\omega}_{correction}$ is defined only up to the addition of an exact differential on $\mathcal{M}$. 
    \end{enumerate}
With the definition of the extended $\tau$-differential in place, we now proceed to discuss the Hamiltonian aspects of the problem.
In \cite{IP}, the central role of the Hamiltonian structure of Painlev\'{e} equations with regards to the problem of evaluation of constant factors was
demonstrated. Let us briefly overview some of their main philosophical arguments; we will essentially be summarizing Section 2 of \cite{IP}.

Consider a completely integrable Hamiltonian system with Darboux coordinates $\{P_a,Q_a\}$, with Hamiltonians $\{H_k\}$ with respect to the times 
$\{t_k\}$. Denote the parameter space of times by $\mathcal{T}$; suppose the Darboux coordinates depend additionally on a collection of \textit{monodromy parameters} $\{m_{\ell}\}$,
    \begin{equation}
        Q_a = Q_a(t_k,m_{\ell}), \qquad\qquad P_a = P_a(t_k,m_{\ell}), 
    \end{equation}
and denote the parameter space of the monodromy parameters by $\mathcal{M}$.  We define the \textit{classical action differential} on the total space $\mathcal{T} \times \mathcal{M}$:
    \begin{equation}\label{omega-classical}
        \boldsymbol{\omega}_{cla} := \sum_a P_a dQ_a - \sum_k H_k dt_k = \sum_k \left(P_a \frac{\partial Q_a}{\partial t_k} - H_k\right)dt_k +\sum_{\ell}\left(\sum_a P_a \frac{\partial Q_a}{\partial m_{\ell}}\right)dm_{\ell}.
    \end{equation}
The fact that the system is a completely integrable Hamiltonian system implies that the differential is closed in the time parameters. In other words, if we define ${\bf d}_{\mathcal{T}} := \sum_k dt_k\frac{\partial}{\partial t_k}$, then
    \begin{equation}
        {\bf d }_{\mathcal{T}} \left(\boldsymbol{\omega}_{cla}\big|_{\{m_{\ell} = const.\}}\right) = 0.
    \end{equation}
This is nothing but the classical statement that the symplectic form defined by \eqref{symplectic-form} vanishes along the trajectories of the Hamiltonian flows. Note that in many cases, including our own, there is already a connection between the classical action and the isomonodromic $\tau$-function: namely,
we have that $d\log\tau = \sum_k H_k dt_k$, and so the $\tau$-function appears as a ``truncation'' of the classical action integral. If we take the total 
differential (on the whole of $\mathcal{T}\times\mathcal{M}$) of formula \eqref{omega-classical}, we find that
    \begin{equation}
        \left({\bf d}_{\mathcal{T}} + {\bf d}_{\mathcal{M}}\right) \boldsymbol{\omega}_{cla} = \sum_a {\bf d}_{\mathcal{M}} P_a \wedge {\bf d}_{\mathcal{M}} Q_a =:\Omega,
    \end{equation}
which is reminiscent of formula \eqref{extended-2-form}. This observation led Its and Prokhorov to make the following conjectures:
\begin{conjecture} (\cite{IP}.) Suppose the parameter space $\mathcal{T}\times \mathcal{M}$ is equipped with a symplectic structure $\Omega$. Then,
there exists a constant $\gamma \in \CC$ such that
    \begin{equation}
        \boldsymbol{\omega}_{0} = \gamma \Omega,
    \end{equation}
where $\boldsymbol{\omega}_{0}$ is the $2$-form defined by \eqref{extended-2-form}.
\end{conjecture}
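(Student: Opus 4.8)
The plan is to verify the conjecture for the isomonodromic system \eqref{L-connection} of the $(3,4)$ string equation (the case $q=3$, $p=4$), which is the one relevant here, by establishing the sharper identity of Proposition \ref{tau-extended-theorem}, namely $\boldsymbol{\omega}_{0} = \tfrac{1}{2}\boldsymbol{\omega}_{cla} + {\bf d}G$ with $G$ the stated polynomial, and then reading off the conjecture: applying the total differential ${\bf d} = {\bf d}_{\mathcal{T}} + {\bf d}_{\mathcal{M}}$ and using $({\bf d}_{\mathcal{T}}+{\bf d}_{\mathcal{M}})\boldsymbol{\omega}_{cla} = \Omega$ (shown in the discussion following \eqref{omega-classical}) together with ${\bf d}^{2}G = 0$ gives $({\bf d}_{\mathcal{T}}+{\bf d}_{\mathcal{M}})\boldsymbol{\omega}_{0} = \tfrac{1}{2}\Omega$, i.e.\ the conjecture holds with $\gamma = \tfrac{1}{2}$. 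The same manipulation then carries over verbatim to the general $(q,p)$ system of \S\ref{assumptions-subsection}.

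To prove Proposition \ref{tau-extended-theorem} I would pass to the regularized gauge $\lambda = \xi^{3}$, $\Psi = g\Phi$, in which, by the gauge-invariance of Theorem \ref{tau-theorem.}, the extended $1$-form is $\boldsymbol{\omega}_{0} = \langle \hat{A}(\xi)\,{\bf d}S(\xi)\,S^{-1}(\xi)\rangle$ (cf.\ \eqref{omega0-def}), where $\hat{A}(\xi) = \mathcal{L}(\xi) = 3\xi^{2}\tilde{L}(\xi^{3})$ is the regularized connection of \eqref{L-zeta}, $S(\xi) = \mathbb{I} + \Phi_{1}/\xi + \Phi_{2}/\xi^{2} + \cdots$ is the subexponential factor of \eqref{Phi-expansion}, and ${\bf d}={\bf d}_{\mathcal{T}}+{\bf d}_{\mathcal{M}}$; its restriction to $\mathcal{T}$ is the modified differential $\hat{\boldsymbol{\omega}}_{JMU}$ of \S\ref{tau-modification}. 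By Proposition \ref{Hamiltonian-Prop} every point of $\mathcal{T}\times\mathcal{M}$ is faithfully recorded by the canonical data $(Q_{U},Q_{V},Q_{W},P_{U},P_{V},P_{W};t_{1},t_{2},t_{5})$, so it suffices to verify $\boldsymbol{\omega}_{0}-\tfrac{1}{2}\boldsymbol{\omega}_{cla} = {\bf d}G$ coefficient by coefficient in the basis $\{dt_{k},dQ_{a},dP_{a}\}$. The coefficients $\mathcal{L}_{k}$ of \eqref{L-zeta}, equivalently the $\Phi_{k}$, are explicit polynomials in this data (their $(1,1)$-entries are listed in Remark \ref{a-calculations} and the rest follow from the symmetry \eqref{coeff-symmetry}); since $\mathcal{L}$ has degree $7$ at $\infty$, only $\Phi_{1},\dots,\Phi_{8}$ enter. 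Thus the identity reduces to a finite symbolic computation: expand $\hat{A}(\xi)$ and $S(\xi)$, form $\hat{A}\,{\bf d}S\,S^{-1}$, take the residue at $\xi=\infty$, and match.

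The $dt_{k}$-coefficients are essentially in hand: by Proposition \ref{Okamoto-JMU-equivalence}, $\boldsymbol{\omega}_{0}\big|_{\mathcal{T}} = \hat{\boldsymbol{\omega}}_{JMU} = \tfrac{1}{2}\boldsymbol{\omega}_{Okamoto} = \tfrac{1}{2}(H_{5}\,dt_{5}+H_{2}\,dt_{2}+H_{1}\,dt_{1})$, whereas $\boldsymbol{\omega}_{cla}\big|_{\mathcal{T}} = \sum_{k}\big(\sum_{a}P_{a}\,\partial_{t_{k}}Q_{a}-H_{k}\big)\,dt_{k}$. Matching forces $\partial_{t_{k}}G = H_{k} - \tfrac{1}{2}\sum_{a}P_{a}\,\partial_{t_{k}}Q_{a}$, which one checks for the stated $G$ by substituting Hamilton's equations \eqref{Hamilton-eq}, the stronger integrability $\partial_{t_{k}}|_{P,Q}H_{j} = \partial_{t_{k}}H_{j}$, and the weighted-homogeneity (Euler) relation for the $H_{k}$ implied by \eqref{Hamiltonian-Homogeneity}. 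The conceptual reason this works, worth recording, is that $G$ is, up to the normalization $-\tfrac{1}{14}$, the contraction of $\boldsymbol{\omega}_{cla}$ with the weighted Euler vector field of \eqref{Hamiltonian-Homogeneity} (whose two homogeneous pieces, of weights $14$ and $7$, are precisely $3t_{1}H_{1}+\tfrac{5}{2}t_{2}H_{2}+t_{5}H_{5}$ and $P_{U}Q_{U}+\tfrac{3}{2}P_{V}Q_{V}+\tfrac{3}{2}P_{W}Q_{W}$); this also makes the general-$(q,p)$ statement plausible.

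The remaining $dQ_{a}$- and $dP_{a}$-coefficients are the genuinely new part and, I expect, the main obstacle: one must show that the coefficients of $dQ_{a}$ and $dP_{a}$ in $\langle\hat{A}\,{\bf d}_{\mathcal{M}}S\,S^{-1}\rangle$ equal $\tfrac{1}{2}P_{a}+\partial_{Q_{a}}G$ and $\partial_{P_{a}}G$ respectively, where $\partial_{Q_{a}},\partial_{P_{a}}$ act through the polynomial dependence of $\Phi_{1},\dots,\Phi_{8}$ on the canonical variables. In principle this is bookkeeping, but the delicate point is that the contributions of the subleading coefficients $\mathcal{L}_{k}$ and of the resonant first-order pole of $\hat{A}$ at $\xi=0$ must conspire so that the residue collapses exactly onto $\tfrac{1}{2}\sum_{a}P_{a}\,dQ_{a}+{\bf d}G$; this is safest confirmed with the accompanying symbolic worksheets. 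A shortcut worth running in parallel: $\boldsymbol{\omega}_{0}-\tfrac{1}{2}\boldsymbol{\omega}_{cla}$ is automatically closed on $\mathcal{T}\times\mathcal{M}$ --- its ${\bf d}_{\mathcal{T}}$-part vanishes by closedness of $\boldsymbol{\omega}_{Okamoto}$ (equivalently of $\hat{\boldsymbol{\omega}}_{JMU}$, Theorem \ref{tau-theorem.}) and of $\boldsymbol{\omega}_{cla}\big|_{\mathcal{T}}$, and its mixed $dt_{k}\wedge dm_{\ell}$ terms vanish by the argument of \cite{ILP} recalled below \eqref{omega0-def} --- so the only real content of Proposition \ref{tau-extended-theorem} is the value of the primitive $G$, which one may then pin down along a convenient slice and propagate by closedness. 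Either way, once Proposition \ref{tau-extended-theorem} is in hand, the conjecture follows with $\gamma=\tfrac{1}{2}$ as in the first paragraph.
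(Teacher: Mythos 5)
Your main route is essentially the paper's own proof: the conjecture is verified for this system by establishing Proposition \ref{tau-extended-theorem} through a direct residue computation of \eqref{omega0-def} in the regularized $\xi$-gauge, with the canonical variables $(Q_a,P_a)$ used as coordinates on $\mathcal{M}$ and the expansion coefficients $\Phi_k$ computed explicitly (the paper notes the $\mathcal{S}$-symmetry lets one stop at order $\zeta^{-7}$ rather than $\zeta^{-8}$), after which $\gamma=\tfrac12$ follows exactly as in your first paragraph; like you, the paper defers the tedious matching of the $dQ_a$, $dP_a$ coefficients to symbolic computation, and your Euler-vector-field reading of $G$ is a nice conceptual addition not present there. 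One caution: your proposed ``shortcut'' is circular. Closedness of $\boldsymbol{\omega}_{0}-\tfrac12\boldsymbol{\omega}_{cla}$ on $\mathcal{T}\times\mathcal{M}$ is \emph{not} automatic: its total differential is the purely-$\mathcal{M}$ two-form $({\bf d}_{\mathcal{T}}+{\bf d}_{\mathcal{M}})\boldsymbol{\omega}_{0}-\tfrac12\Omega$, and the vanishing of this difference is precisely the statement of the conjecture you are trying to prove; the arguments you cite only dispose of the ${\bf d}_{\mathcal{T}}$-part and the mixed $dt_k\wedge dm_\ell$ terms. So the $dP_a\wedge dQ_a$ coefficients must genuinely be computed from the residue formula (this is where the paper finds the constant $\tfrac12$), and the shortcut cannot replace that step.
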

\begin{conjecture} (\cite{IP}.) 
    There exists a function $G(P_a,Q_a,t_k)$, rational in the variables $\{P_a\},\{Q_a\}, \{t_k\}$, such that
        \begin{equation}
            \boldsymbol{\omega}_{0} = \gamma \boldsymbol{\omega}_{cla} + d G.
        \end{equation}
\end{conjecture}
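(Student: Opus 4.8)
The plan is to bring the extended differential $\boldsymbol{\omega}_0$ of \eqref{omega0-def} into a transparent closed form and then match it coefficientwise against $\tfrac12\boldsymbol{\omega}_{cla}+dG$. Working in the $\zeta$-gauge, $\boldsymbol{\omega}_0=\langle \hat A(\zeta)\,{\bf d}S(\zeta)\,S^{-1}(\zeta)\rangle$ with ${\bf d}={\bf d}_{\mathcal T}+{\bf d}_{\mathcal M}$ the full differential on $\mathcal T\times\mathcal M$ and $S$ the formal prefactor at $\zeta=\infty$. Since $\Phi=Se^{\Theta}$ solves $\partial_\zeta\Phi=\hat A\Phi$, one has $\hat A=S\Theta_\zeta S^{-1}+(\partial_\zeta S)S^{-1}$ with $\Theta_\zeta:=\partial_\zeta\Theta$ diagonal, so for any parameter direction $\star$, $\langle \hat A\,\partial_\star S\,S^{-1}\rangle=\langle\Theta_\zeta\,S^{-1}\partial_\star S\rangle+\langle(\partial_\zeta S)S^{-1}(\partial_\star S)S^{-1}\rangle$, and the second term is $\OO(\zeta^{-3})$, hence residueless; therefore $\boldsymbol{\omega}_0=\langle\Theta_\zeta\,S^{-1}{\bf d}S\rangle$. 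The structural input making this usable is Proposition \ref{PropA} and Theorem \ref{Hamiltonian-Prop}: the isomonodromic-time/monodromy data is equivalent to the Darboux data $(Q_U,Q_V,Q_W,P_U,P_V,P_W,t_1,t_2,t_5)$, and the coefficients $\Phi_j$ of $S$ are \emph{polynomials} in these nine variables (the $(1,1)$-entries are listed through $j=6$ in Remark \ref{a-calculations}; the rest follow from the $\ZZ_3$-symmetry \eqref{coeff-symmetry}, after replacing every $x$-derivative of $U,V$ by its Darboux expression via Hamilton's equations). Hence $S$ is a function on this $9$-dimensional space and ${\bf d}S=\sum_a(\partial_{Q_a}S)\,dQ_a+\sum_a(\partial_{P_a}S)\,dP_a+\sum_k(\partial_{t_k}S)\,dt_k$.

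Expanding $\boldsymbol{\omega}_0$ accordingly, set $\pi_a:=\langle\Theta_\zeta S^{-1}\partial_{Q_a}S\rangle$, $\rho_a:=\langle\Theta_\zeta S^{-1}\partial_{P_a}S\rangle$ (fixed-$t$ derivatives) for the $dQ_a$- and $dP_a$-coefficients. The $dt_k$-coefficient is then forced: pulling $\boldsymbol{\omega}_0$ back to $\mathcal T$ and using $\boldsymbol{\omega}_0|_{\mathcal T}=\hat{\boldsymbol{\omega}}_{JMU}=\tfrac12\boldsymbol{\omega}_{Okamoto}=\tfrac12\sum_kH_k\,dt_k$ (Proposition \ref{Okamoto-JMU-equivalence}) together with Hamilton's equations pins the $dt_k$-coefficient to $\tfrac12 H_k-\sum_a\pi_a\partial_{P_a}H_k+\sum_a\rho_a\partial_{Q_a}H_k$. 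Comparing with $dG$ and with $\tfrac12\boldsymbol{\omega}_{cla}=\tfrac12\sum_aP_a\,dQ_a-\tfrac12\sum_kH_k\,dt_k$, the Proposition reduces to the six residue identities
\[
\pi_a=\tfrac12 P_a+\partial_{Q_a}G,\qquad \rho_a=\partial_{P_a}G,
\]
plus a $dt_k$-identity; substituting the first two into the latter collapses it to the purely algebraic statement $H_k-\tfrac12\sum_a P_a\partial_{P_a}H_k=\partial_{t_k}G+\{G,H_k\}$.

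This algebraic statement I would dispatch using only weighted homogeneity. From \eqref{Hamiltonian-Homogeneity} the weights are $w(Q_U,Q_V,Q_W,P_U,P_V,P_W,t_1,t_2,t_5)=(2,3,3,5,4,4,6,5,2)$ and $w(H_1,H_2,H_5)=(8,9,12)$, with the structural relations $w(P_a)+w(Q_a)=7$ and $w(t_k)+w(H_k)=14$; one checks $G=\tfrac1{14}\bigl(\sum_k w(t_k)\,t_kH_k-\sum_a w(Q_a)\,P_aQ_a\bigr)$. Writing $E_Q,E_P,E_t$ for the weighted Euler operators and using $\{H_j,H_k\}=0$, $\partial_{t_j}H_k=\partial_{t_k}H_j$, one finds $\{G,H_k\}=-\tfrac12\sum_aP_a\partial_{P_a}H_k+\tfrac1{14}(E_P+E_Q)H_k$ and $\partial_{t_k}G=\tfrac1{14}w(t_k)H_k+\tfrac1{14}E_tH_k$; adding these and invoking Euler's identity $(E_Q+E_P+E_t)H_k=w(H_k)H_k$ with $w(t_k)+w(H_k)=14$ yields exactly $H_k-\tfrac12\sum_aP_a\partial_{P_a}H_k$, as needed.

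The remaining and main task is the six residue identities. For these one must produce $S(\zeta)$ through order $\zeta^{-8}$: continue the recursion of Proposition \ref{RHP-Phi-prop} one step past Remark \ref{a-calculations} to obtain $\Phi_7$, fill in off-diagonal entries from \eqref{coeff-symmetry}, substitute the Darboux expressions for $U',U'',\dots,V',\dots$, then differentiate in each $Q_a,P_a$, multiply by $\Theta_\zeta$ (which carries only the $\zeta$-powers $6,4,1,0$, so only four Laurent coefficients of $S^{-1}\partial_\star S$ actually enter), and read off the residue at $\zeta=\infty$; each $\pi_a,\rho_a$ emerges as an explicit polynomial to be matched against $\tfrac12P_a+\partial_{Q_a}G$ resp.\ $\partial_{P_a}G$ — a finite polynomial check (handled by the accompanying symbolic computation). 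Weighted homogeneity again organizes this: $G$ splits into a weight-$7$ piece $-\tfrac1{14}\sum_a w(Q_a)P_aQ_a$ (pairing with the $\tfrac12\sum P_a\,dQ_a$ of $\tfrac12\boldsymbol{\omega}_{cla}$) and a weight-$14$ piece $\tfrac1{14}\sum_k w(t_k)t_kH_k$, so each $\pi_a,\rho_a$ must likewise be a sum of two weighted-homogeneous parts of the predicted weights, sharply limiting the admissible monomials. The main obstacle is the sheer bookkeeping of this computation; the conceptual points needing care are (i) that one must use the \emph{modified} $\boldsymbol{\omega}_0$ of Section \ref{tau-modification} (the $\Delta_q/\lambda$-subtraction), without which even $\boldsymbol{\omega}_0|_{\mathcal T}$ fails to be closed, and (ii) that the dependence of $S$ on the monodromy parameters is entirely mediated by $(Q_a,P_a)$, so the $dQ_a,dP_a$ terms above genuinely capture the full ${\bf d}_{\mathcal M}$-content. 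Granting the six identities, $\boldsymbol{\omega}_0=\tfrac12\boldsymbol{\omega}_{cla}+dG$ follows; applying ${\bf d}$ gives ${\bf d}\boldsymbol{\omega}_0=\tfrac12\,{\bf d}\boldsymbol{\omega}_{cla}=\tfrac12\Omega$, which is Conjecture 1 of \cite{IP} with $\gamma=\tfrac12$, and the explicit $G$ is Conjecture 2.
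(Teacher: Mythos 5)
Your proposal is correct, and its computational core coincides with the paper's: both use the Darboux variables $(Q_a,P_a,t_k)$ as coordinates on $\mathcal{T}\times\mathcal{M}$, and both ultimately rest on the explicit expansion of $S(\zeta)$ through $\Phi_7$ (with the $\mathcal{S}$-symmetry sparing the order-$\zeta^{-8}$ term), carried out by computer algebra. The organization differs, though. The paper's proof of Proposition \ref{tau-extended-theorem} is purely a posteriori: it evaluates all coefficients of $\boldsymbol{\omega}_{0}$ from \eqref{omega0-def}, checks that the two-form $({\bf d}_{\mathcal{T}}+{\bf d}_{\mathcal{M}})\boldsymbol{\omega}_{0}$ has constant $dP_a\wedge dQ_a$ coefficients with cross terms dictated by $\gamma\,{\bf d}\boldsymbol{\omega}_{cla}$, and then integrates $\boldsymbol{\omega}_{0}-\tfrac12\boldsymbol{\omega}_{cla}$ to discover $G$. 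You instead posit $G$ a priori from weighted homogeneity (your $\tfrac1{14}\bigl(\sum_k w(t_k)t_kH_k-\sum_a w(Q_a)P_aQ_a\bigr)$ is exactly the paper's $G$), dispose of the three $dt_k$-components analytically via the Euler identity combined with $\{H_j,H_k\}=0$, $\partial_{t_j}H_k=\partial_{t_k}H_j$ and $w(t_k)+w(H_k)=14$ (all supplied by Theorem \ref{Hamiltonian-Prop} and the homogeneity remark), and reduce the symbolic verification to the six $dQ_a$-, $dP_a$-residue identities, with Conjecture 1 then following by applying ${\bf d}$. This buys a conceptual explanation of where $G$ comes from and trims the brute force, but it does not remove the heavy series computation, which you, like the paper, delegate to the accompanying symbolic worksheet, so neither route is more self-contained than the other. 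One bookkeeping caution: \eqref{omega0-def} is written without the factor $q$ that appears in the $\zeta$-gauge formula of Definition \ref{modified-tau-differential-definition}, so your identification $\boldsymbol{\omega}_{0}\big|_{\mathcal{T}}=\hat{\boldsymbol{\omega}}_{JMU}=\tfrac12\sum_k H_k\,dt_k$ must be synchronized with whichever normalization the residues are actually computed in (the paper's own proof is loose about these constants); this affects only the overall constant $\gamma$, not the structure of your argument.
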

These conjectures allow one to write a formula for the variation of the $\tau$-function in terms of the monodromy parameters, which in practice is much 
more efficient in application to the evaluation of constant factors than many earlier procedures. If we define the $\tau$-function as
$\log \tau := \int_{C} \hat{\omega}$, where $C \subset \mathcal{T}$ is a `nice' curve in the deformation parameter space,
the formula is (see Remark 3 of \cite{IP})
    \begin{equation}
        \frac{\partial}{\partial m_{\ell}}\log\tau = \sum_a P_a\frac{\partial Q_a}{\partial m_{\ell}}\bigg|_{\partial C} + \frac{\partial G}{\partial m_{\ell}}\bigg|_{\partial C}.
    \end{equation}
In \cite{IP}, the authors were able to verify this conjecture for the classical Painlev\'{e} transcendents. In fact, these conjectures hold in our situation 
as well, as was stated in Proposition \ref{tau-extended-theorem}. We present the proof of this theorem here:
    \begin{proof} (of Proposition \eqref{tau-extended-theorem}.)
        The proof of this proposition is a straightforward, albeit tedious, calculation. since $\mathcal{L}(\zeta)$ is degree $7$ in $\zeta$, one must in principle compute terms
        up to order $\zeta^{-8}$ in the expansion of $S(\zeta)$; however, the symmetry of $\mathcal{L}$, $S$ under conjugation by the matrix $\mathcal{S}$ actually implies one
        must only compute up to terms of  order $\zeta^{-7}$. This calculation involves (cf. the proof of Proposition \ref{RHP-Phi-prop}) determining the off-diagonal terms of
        the matrices $\Phi_k$ up to order $13$. Once one has successfully calculated the coefficients $\Phi_1,...,\Phi_7$, one can use formula \eqref{omega0-def} with the Hamiltonian variables as coordinates on the monodromy manifold $\mathcal{M}$ to compute the coefficients of $\boldsymbol{\omega}_{0}$. Calculating the $dP_U\wedge dQ_U$-coefficient of $({\bf d}_{\mathcal{T}} + {\bf d}_{\mathcal{M}})\boldsymbol{\omega}_{0}$, 
            \begin{equation*}
                \frac{\partial (\boldsymbol{\omega}_{0})_{Q_U}}{\partial P_U} - \frac{\partial (\boldsymbol{\omega}_{0})_{P_U}}{\partial Q_U} = \frac{1}{2}.
            \end{equation*}
        Similarly, the coefficients of the $dP_V\wedge dQ_V$, $dP_W\wedge dQ_W$ terms in $({\bf d}_{\mathcal{T}} + {\bf d}_{\mathcal{M}})\boldsymbol{\omega}_{0}$ are constant, and equal to $\frac{3}{2}$. On the other hand, we have the equalities
            \begin{align*}
                -\frac{3}{2}\frac{\partial H_k}{\partial Q_a} &= \frac{\partial (\boldsymbol{\omega}_{0})_{t_k}}{\partial Q_a} - \frac{\partial (\boldsymbol{\omega}_{0})_{Q_a}}{\partial t_k},\\
                -\frac{3}{2}\frac{\partial H_k}{\partial P_a} &= \frac{\partial (\boldsymbol{\omega}_{0})_{t_k}}{\partial P_a} - \frac{\partial (\boldsymbol{\omega}_{0})_{P_a}}{\partial t_k},
            \end{align*}
        for every $k\in \{1,2,5\}$, $a\in \{U,V,W\}$; all other coefficients of $({\bf d}_{\mathcal{T}} + {\bf d}_{\mathcal{M}})\boldsymbol{\omega}_{0}$ vanish identically
        (Note that we could have also inferred this constant from the relation of $\hat{\boldsymbol{\omega}}_{JMU}$ and $\boldsymbol{\omega}_{Okamoto}$).  
        Subtracting $\frac{1}{2}\boldsymbol{\omega}_{cla}$ from $\boldsymbol{\omega}_{0}$, we obtain the differential
            \begin{equation*}
                dG := \boldsymbol{\omega}_{0} - \frac{1}{2}\boldsymbol{\omega}_{cla}.
            \end{equation*}
        By construction, this differential is closed. Consequentially, it can be integrated up to a function $G = G(Q_U,Q_V,Q_W,P_U,P_V,P_W;t_1,t_2,t_5)$.
        Direct calculation shows that this function is
            \begin{equation*}
                G = \frac{1}{7}\left[3t_1H_1 + \frac{5}{2}t_2 H_2 + t_5H_5 - P_UQ_U - \frac{3}{2}P_VQ_V - \frac{3}{2}P_WQ_W\right],
            \end{equation*}
        as claimed.
    \end{proof}

\section{Discussion and Outlook.}
In summary, we have constructed a Riemann-Hilbert formulation of the $(3,4)$ string equation, which will appear as the model Riemann-Hilbert problem in the local analysis of the multi-critical quartic $2$-matrix model \cite{DHL3}. The string equation is equivalent to a $3+3$-dimensional, completely integrable 
non-autonomous Hamiltonian system. Furthermore, we were able to calculate an appropriate $\tau$-function for this system. Upon extending this $\tau$-function
to the canonical coordinates, we were able to verify Conjectures 1 and 2 of \cite{IP}, lending them further validity.

Aside from the completion of the work \cite{DHL3}, we hope to further investigate the large-parameter asymptotics of the above Riemann-Hilbert problem. This will be accomplished in a forthcoming work \cite{Nathan2}. The work \cite{DHL3} shows that the partition function of the critical two matrix model can be written in terms of $U(t_{5},t_{2},x)$; the behavior of
this solution is described in \cite{Nathan2}.

There is also an additional physical motivation for the study of these asymptotics. As observed by Crnkovi\'{c}, Ginsparg, and Moore \cite{CGM}, there should exist a ``renormalization group flow'' between the multicritical points of the $2$-matrix model. Formally, this observation says that, given a solution \sloppy$U(t_{5},t_{2},x), V(t_{5},t_{2},x)$ of the string equation
\eqref{string-equation}, if we make the scaling
    \begin{equation}
       u(t_{5},t_{2},x) := t_{5}^{2/5} U(t_{5},t_{2},t_{5}^{1/5}x), \qquad\qquad v(t_{5},t_{2},x) := t_{5}^{3/5} V(t_{5},t_{2},t_{5}^{1/5}x),
    \end{equation}
and take a formal limit as $t_{5}\to \infty$, then $v\to 0$, and $u\to \hat{u}(x)$, where $\hat{u}$ solves the Painlev\'{e} I equation, after a rescaling of the variables.
This Riemann-Hilbert problem ``flows'' to a $3\times 3$ version of the Painlev\'{e} I Riemann-Hilbert problem. The associated Lax pair
has appeared in the literature before \cite{JKT}, and this $3\times 3$ problem also seems to appear in the local parametrices of the critical energy, critical 
temperature (but non-critical external field) quartic 2-matrix problem \cite{DHL1,DHL3}. The analysis of this problem and the large-parameter asymptotics of
the Riemann-Hilbert problem described in this paper is the subject of \cite{Nathan2}. We also remark that it would be interesting to see if this degeneration
can be identified using the Hamiltonian formalism, in a similar manner to the $t_2\to 0$ limit discussed in \S2.

The partition function of the $2$-matrix model is identified with the partition function of a particular theory of minimal matter coupled to topological gravity \cite{Kontsevich,Witten1,Witten2}, which counts a class of intersection numbers on the moduli space of Riemann surfaces. This implies that the Riemann-Hilbert problem discussed above could be of use in enumeration of these intersection numbers; we hope to investigate this in the future.

In this work, we essentially gave no analysis of the solutions to the string equation. There are several fundamental questions that should be addressed:
    \begin{itemize}
        \item \textit{Irreducibility of the string equation.} Due to the similar nature of the Riemann-Hilbert problems of the $(3,4)$ string equation and
        the Painlev\'{e} I Riemann-Hilbert problem, it is natural to conjecture that \textit{the string equation admits no solutions in terms of classical
        functions, in the sense of \cite{Okamoto})}. Indeed, there is a procedure (\cite{Umemura}, see also \cite{Umemura2}) by which one can infer the irreducibility of solutions of a given Hamiltonian system. This procedure applies in principle to the string equation; it would be interesting to see if this method can be applied practically.
        \item \textit{The space of initial conditions \& Stokes manifold.} Aside from determining its generic dimension, we provided essentially no 
        analysis of the Stokes manifold associated to the string equation. The Stokes manifolds of the classical Painlev\'{e} equations, 
        in particular PI and PII, have a rich mathematical structure, and carry their own Poisson tensor, as well as an association to certain 
        cluster algebras \cite{LR,BertolaTarricone}. A more complete analysis of this Stokes manifold, as well as an accompanying analysis of the space of 
        initial conditions (cf. \cite{Okamoto} for the equivalent analysis for PII) is certainly needed.
        \item \textit{Evaluation of constant factor in the $\tau$-function}. So far, we have only calculated the $\tau$-differential, and thus the free energy of the multi-critical matrix model up to a multiplicative constant. This problem was first noticed in \cite{Douglas1}, who believed the problem could be resolved by appealing to the general theory of $\tau$-functions. It would be interesting if one could apply the calculations in Section \ref{tau-extension} of this work to this end.
    \end{itemize}

\appendix

\section{Rational reductions of the KP hierarchy and the string equation.}\label{AppendixA}
In this appendix, we overview the derivation of the equation \eqref{string-equation} as the string equation of an appropriate rational reduction of the KP hierarchy; we then recast this equation in matrix form, which sets up the framework for us to later realize the equation as arising from isomonodromy deformation. We do not attempt to give a comprehensive introduction to reductions of the KP hierarchy; for a full introduction, one should consult \cite{Dickey}, for example. This Appendix is meant to be self-contained. 
We remark that what appears in this section can be treated as purely formal computation; we will use what we develop here 
as a objects to be compared to what appears in Section \S2.

\subsection{The basics of KP, rational reductions, and string equations.}
We begin with a list of definitions:
    \begin{itemize}
        \item A \textit{pseudodifferential operator} is an expression of the form $X = \sum_{k\in \ZZ} X_i \partial^i$,
        where the coefficients $X_i$ are functions of $t_1 := x$, and possibly a collection of other variables $\{t_k\}$, 
        $\partial := \frac{\partial}{\partial x}$, and the symbol $\partial^{-1}$ is defined using the generalized Leibniz 
        rule
            \begin{equation*}
                \partial^{-1} \circ f = \sum_{k=0}^{\infty} (-1)^k f^{(k)} \partial^{-k-1} = f\partial^{-1} - f' \partial^{-2} + f''\partial^{-3} + ...
            \end{equation*}
        Note the relation $\partial^{-1} \circ \partial = \partial \circ \partial^{-1} = 1$, the identity operator.
        Such operators are interpreted as acting on functions of $x$.
        \item The \textit{purely differential part} or \textit{principal part} of a pseudodifferential operator $X$ is written $X_+$, and is defined to be
            \begin{equation*}
                X_+ := \sum_{k\geq 0} X_k \partial^k.
            \end{equation*}
        \item The \textit{order} of a pseudodifferential operator $X$ is the largest $k$ such that $X_k \neq 0$; if no such $k$ exists,
        we say the operator is of infinite order. One can interpret the order of $X_+$, with the word \textit{order} standing for the
        usual definition of order of a differential operator.
    \end{itemize}
We now define the \textit{KP operator}
    \begin{equation}
        \mathfrak{L} := \partial + \alpha_1 \partial^{-1} + \alpha_2 \partial^{-2} +  \alpha_3 \partial^{-3} + \cdots,
    \end{equation}
where the $\alpha_k$ are assumed to be functions of $t_1 := x$, and a (possibly infinite) collection of other ``times'' $\{t_k\}$. 
We define operators $\mathcal{A}_k := \mathfrak{L}^k_+$; note that since $\mathfrak{L}$ is of order $1$, the operators $\mathcal{A}_k$ are of order $k$. The \textit{KP hierarchy} is defined by the set of equations
    \begin{equation}
        \left[ \mathfrak{L} - \lambda, \frac{\partial}{\partial t_k} - \mathcal{A}_k\right] = 0, \qquad\qquad k = 1,2, ...
    \end{equation}
with the assumption that the eigenvalue $\lambda$ is independent of the $t_k$'s. It then follows that the flows 
$\frac{\partial}{\partial t_k} - \mathcal{A}_k$ pairwise commute (cf. \cite{Dickey}):
    \begin{equation}
        \left[\frac{\partial}{\partial t_k} - \mathcal{A}_k,\frac{\partial}{\partial t_k} - \mathcal{A}_j\right] = 0,\qquad\qquad k,j = 1,2, ...
    \end{equation}
which implies the integrability of this collection of equations. A \textit{rational reduction} of the KP hierarchy is obtained by requiring that a given power of the KP operator $\mathfrak{L}$ is purely differential, i.e. that
    \begin{equation}
        \mathfrak{L}^q \equiv \mathfrak{L}^q_+ = \mathcal{A}_q,\qquad \textit{or, equivalently,}\qquad \mathfrak{L}^q_- \equiv 0.
    \end{equation}
The resulting hierarchy of equations retains the property of integrability. If we require that $\mathfrak{L}^q$ is purely differential, 
we call the hierarchy the $\text{KdV}_q$ hierarchy (sometimes, this hierarchy is also called the $q^{th}$ Gelfand-Dickey hierarchy). These hierarchies also carry a natural bihamiltonian structure \cite{Adler,DrinfeldSokolov,Dickey}. If $q = 2$, the resulting hierarchy agrees with the well-known KdV hierarchy. For the $\text{KdV}_q$ hierarchy, we define the differential operator $Q$ by
    \begin{equation}
        Q := \mathfrak{L}^q.
    \end{equation}
We sometimes express the original KP operator as $\mathfrak{L} = Q^{1/q}$, when there is no cause for ambiguity.
A \textit{string equation} of the $\text{KdV}_q$ hierarchy is obtained by the requirement that 
    \begin{equation}
        [Q,P] = 1,
    \end{equation}
where the operator $P$ is a polynomial in the operator $Q^{1/q}_+$, with of the form (cf. \cite{DFGZJ}):
    \begin{equation}
        P := \sum_{\substack{k \geq 1\\ k\mod q\not\equiv 0}} \left(1 + \frac{k}{q}\right)t_{k+q} Q^{k/q}_+ = \sum_{\substack{k \geq 1\\k\mod q\not\equiv 0}} \left(1 + \frac{k}{q}\right)t_{k+q} \mathcal{A}_k.
    \end{equation}
the \textit{order} of a string equation is the largest index $k$ such that $c_k := \left(1 + \frac{k}{q}\right)t_{k+q} \neq 0$. 
If the order of the string equation is $p$, we call the string equation the $(q,p)$ string equation. Such equations 
generate additional symmetries of the $\text{KdV}_q$ hierarchy. These symmetries do not commute amongst themselves, but rather satisfy the relations of the
so-called $W_q$-algebras \cite{Dickey}. In particular, the algebra $W_2$ is equivalent to the Virasoro algebra, and the algebra $W_3$ is Zamolodchikov's algebra
\cite{Zamolodchikov}.

\subsection{The $(3,4)$ string equation.}
We now specialize to the case $q=3,p=4$. Consider the operator
    \begin{equation}
        Q := \partial^3 - \frac{3}{2}U\partial - \frac{3}{4}U' + \frac{3}{2}V,
    \end{equation}
where $U,V$ are functions of the variables $t_{5},t_{2}$, and $x$.
We take this operator to be the generator of the $\text{KdV}_3$ hierarchy, and will be interested in the $(3,4)$ string equation.

Let us briefly explain the choice of parametrization of $Q$ (it essentially comes from \cite{Douglas1}, and more generally \cite{DFIZ}). We momentarily denote $Q := Q(x)$ to stress that the variable of differentiation is $x$. Under any 
diffeomorphism $x \to x(\sigma)$, the composition
    \begin{equation}
        \tilde{Q}(\sigma) := \left(\frac{dx}{d\sigma}\right)^2 \circ Q(x(\sigma))\circ \left(\frac{dx}{d\sigma}\right) = \partial_{\sigma}^3 - \frac{3}{2}\tilde{U}\partial_{\sigma} -\frac{3}{4}\tilde{U}_{\sigma} + \frac{3}{2}\tilde{V}
    \end{equation}
retains the form of $Q(x)$, while $U$, $V$ transform as
        \begin{align}
        \tilde{U}(\sigma) &= U(x(\sigma)) \left(\frac{dx}{d\sigma}\right)^2 + 2 \{x,\sigma\},\\
        \tilde{V}(\sigma) &= V(x(\sigma)) \left(\frac{dx}{d\sigma}\right)^3,
    \end{align}
i.e. as an projective connection\footnote{Here, $\{x,\lambda\}$ denotes the Schwarzian derivative of $x$ with respect to $\sigma$: 
$\{x,\sigma\} := \frac{\dddot{x}}{\dot{x}} - \frac{3}{2} \left(\frac{\ddot{x}}{\dot{x} }\right)^2$, where $\dot{x} = \frac{dx}{d\sigma}$. This is the only place we will see the appearance of the Schwarzian derivative in this work; we hope our notation will not cause later confusion when $\{\cdot,\cdot\}$ will represent the Poisson bracket.} 
and as a rank 3 tensor, respectively. The operator $Q$ is then seen to act covariantly from the space of rank $1$ tensors to rank $2$ tensors. At the physical level, this makes consistent our choice of parametrization of the operator $Q$: $U$ will 
act as the classical analog of the stress-energy tensor for the underlying conformal field theory, and $V$ will 
ultimately be responsible for the non-perturbative $\ZZ_2$-symmetry breaking of the model \cite{Douglas1}, i.e. the shift in the magnetic field away from zero (see Subsection \ref{Z2-Symmetry-Breaking} for an interpretation of this statement).

From here on, we will not make any changes of coordinate, and so $\partial = \frac{\partial}{\partial x}$. Now, expanding $Q^{1/3}$ in pseudodifferential operators, we find that
    \begin{align*}
        Q^{1/3} &= \partial - \frac{1}{2} U\partial^{-1} + \frac{1}{2}\left[V + \frac{1}{2}U'\right]\partial^{-2} 
        - \frac{1}{4}\left[\frac{1}{3}U'' + U^2 +2V'\right]\partial^{-3} + \OO(\partial^{-4}),\\
        Q^{2/3} &= \partial^2 - U + \OO(\partial^{-1}),\\
        Q^{4/3} &= \partial^4 -2 U \partial^2 + 2\left[V-U'\right]\partial + \left[V' + \frac{1}{2}U^2 -\frac{5}{6}U''\right] + \OO(\partial^{-1}),\\
        Q^{5/3} &= \partial^5 -\frac{5}{2}U \partial^3 +\frac{5}{2}\left[V-\frac{3}{2}U'\right]\partial^2 
        + \frac{5}{4}\left[U^2-\frac{7}{3}U'' + 2V'\right]\partial\\
        &+ \frac{5}{4}\left[\frac{4}{3}V'' + UU'- \frac{2}{3}U''' - 2UV \right] + \OO(\partial^{-1}).
    \end{align*}
The $(3,4)$ string equation is then given by $[Q,P] = 1$, where
    \begin{equation}\label{P-operator}
        P := Q^{4/3}_+ +\frac{5}{3}t_{5} Q^{2/3}_+ = \partial^4 -\left[2 U -\frac{5}{3}t_{5}\right]\partial^2 + 2\left[V-U'\right]\partial + \left[V' + \frac{1}{2}U^2 -\frac{5}{6}U''-\frac{5}{3}t_{5} U\right].
    \end{equation}
We set the flow $t_4$ along $Q^{1/3}_+$ to $0$, as it can be removed by an overall translation $V \to V + ct_4$; further, we have set the flow $t_7 := \frac{3}{7}$.
By direct calculation, one can verify that
    \begin{prop}
        The $(3,4)$ string equation is equivalent to the following system on $U$, $V$:
        \begin{equation}
            \begin{cases}
                0 = \frac{1}{2}V'' - \frac{3}{2}UV + \frac{5}{2}t_{5} V + t_{2},\\
                0 = \frac{1}{12} U^{(4)} -\frac{3}{4}U''U -\frac{3}{8}(U')^2+\frac{3}{2}V^2 + \frac{1}{2}U^3 - \frac{5}{12}t_{5}\left(3U^2 - U''\right) + x.
            \end{cases}
        \end{equation}
    \end{prop}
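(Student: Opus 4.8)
The plan is to compute the commutator $[Q,P]$ explicitly and to extract the two equations of \eqref{string-equation} from its coefficients.

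First I would use the antisymmetry of the bracket together with the fact that $Q$ commutes with every fractional power of itself, to write
\[
  [Q,P] \;=\; -[P,Q] \;=\; -\Big([Q^{4/3}_+,Q] \;+\; \tfrac53 t_5\,[Q^{2/3}_+,Q]\Big).
\]
Here $Q^{2/3}_+ = \partial^2 - U$ and $Q^{4/3}_+ = \partial^4 - 2U\partial^2 + 2(V-U')\partial + \big(V' + \tfrac12 U^2 - \tfrac56 U''\big)$ are genuine differential operators, obtained by truncating the pseudodifferential expansions of $Q^{2/3}$, $Q^{4/3}$ recorded just above, so each bracket on the right is an ordinary differential operator computable by a finite number of Leibniz-rule applications against $Q = \partial^3 - \tfrac32 U\partial - \tfrac34 U' + \tfrac32 V$. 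Although the naive order of $[Q^{4/3}_+,Q]$ is $6$ (and of $[Q^{2/3}_+,Q]$ is $4$), all terms of order $\ge 2$ cancel: this is precisely the consistency of the $\mathrm{KdV}_3$ hierarchy, since $[Q^{k/3}_+,Q] = \partial_{t_k}Q$ and the flow must preserve the form of $Q$. Hence $[Q,P] = f_1\,\partial + f_0$, with $f_1,f_0$ differential polynomials in $U,V$ whose coefficients are affine in $t_5$; equivalently, $f_1\partial + f_0 = -\big(\partial_\tau Q + \tfrac53 t_5\,\partial_{t_2}Q\big)$, where $\partial_{t_2}$ is the $t_2$-flow of \eqref{U_mu}--\eqref{V_eta} and $\partial_\tau$ is the flow generated by $Q^{4/3}_+$ (obtained from the displayed expansion of $Q^{4/3}$), so that $f_1$ and $f_0$ are simply the corresponding combinations of the flows of $U$ and of $V$.

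The string equation $[Q,P]=1$ now reads $f_1=0$, $f_0=1$. The key observation for the last step is that every $\mathrm{KdV}_3$ flow of $U$ and of $V$ is a total $x$-derivative — this is already manifest in \eqref{U_mu}--\eqref{V_eta} — so $f_1$ and $f_0$ are total $x$-derivatives as well; writing $f_1 = \partial_x g_1$, $f_0 = \partial_x g_0$ with $g_1,g_0$ differential polynomials (no explicit $x$), the equations integrate to $g_1 = \mathrm{const}$ and $g_0 = x + \mathrm{const}$. Matching the explicit $g_1,g_0$ against \eqref{string-equation} — absorbing the additive constant in $g_0$ into a shift of $x$, and \emph{naming} the constant in $g_1$ by $t_2$ — yields exactly the two equations of \eqref{string-equation} (up to the overall sign convention for the constant on the right of $[Q,P]=1$). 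The converse is immediate: differentiating the two equations of \eqref{string-equation} in $x$ recovers $f_1=0$, $f_0=1$, hence $[Q,P]=1$. The one genuinely non-mechanical point is the identification of the integration constant in $g_1$ with the deformation parameter $t_2$ rather than an arbitrary constant; this is dictated by compatibility with the $Q^{2/3}_+$ flow, and is the reason the coefficient of $t_2$ in the first equation of \eqref{string-equation} is $1$, so that $\partial_{t_2}(t_2)=1$ makes that relation preserved along the flow. I expect the main obstacle to be purely bookkeeping: the Leibniz expansion of $[Q^{4/3}_+,Q]$ produces a sizable number of terms that must be collected correctly and then recognized as an exact $x$-derivative before integrating. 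This is most safely carried out with symbolic computation (the accompanying Maple worksheets), and the weighted-homogeneous grading of the later homogeneity proposition — under which $f_1$ and $f_0$ must be homogeneous of the weights of $V''$ and of $U^{(4)}$, respectively — provides a convenient arithmetic check.
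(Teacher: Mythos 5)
Your proposal is correct and takes essentially the same route as the paper, which justifies the proposition only ``by direct calculation'' of $[Q,P]=1$: namely the computation you describe, with the commutator being of order $\le 1$, its two coefficients total $x$-derivatives, and one integration in $x$ producing the displayed system. Your explicit identification of the integration constants with $t_{2}$ and $x$ (via compatibility with the $Q^{2/3}_+$ flow and a shift of $x$) is left implicit in the paper, where it is subsumed in the full package \eqref{string-equation}, \eqref{U_mu}--\eqref{V_eta}, so no gap remains.
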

Since we will not consider any other string equations in what follows, we will refer to the $(3,4)$ string equation as simply the string equation.
We assert that this equation is linearized on the Baker-Akhiezer function $\psi = \psi(\lambda;t_{5},t_{2},x)$:
    \begin{equation}\label{STRING}
        \begin{cases}
            P\psi = \partial_{\lambda} \psi,\\
            Q\psi = \lambda \psi.
        \end{cases}
    \end{equation}
The compatibility of this linear system is equivalent to the string equation \eqref{string-equation}.
This linearization is useful to us, since we can now write the action of the operators $P$, $Q$ as a closed-form system of linear
differential equations on the functions $\psi$, $\psi'$, and $\psi''$. 

\begin{prop}
Define the column vector
$\Psi(\lambda;t_{5},t_{2},x) := \langle \psi'' - \frac{3}{4}U\psi, \psi',\psi\rangle^{T}$. 
Then, the pair of equations on $\psi$ written above are equivalent to the following vector equations:
    \begin{equation}\label{matrix-linear-system}
        \begin{cases}
            \partial_{x} \Psi = \mathcal{Q} \Psi,\\
            \partial_\lambda \Psi = \mathcal{P} \Psi,
        \end{cases}
    \end{equation}
where the matrices $\mathcal{Q}(\lambda;t_{5},t_{2},x)$, $\mathcal{P}(\lambda;t_{5},t_{2},x)$ are given by the expressions
    \begin{align}
        &\mathcal{Q}(\lambda;t_{5},t_{2},x) := 
            E_{13}\lambda + 
            \begin{psmallmatrix}
                0 & \frac{3}{4}U & -\frac{3}{2}V\\
                1 & 0 & \frac{3}{4}U\\
                0 & 1 & 0
            \end{psmallmatrix},\\
        &\mathcal{P}(\lambda;t_{5},t_{2},x)  := 
            E_{13}\lambda^2 + 
                \begin{psmallmatrix}
                    0 & \frac{5}{3}t_{5} + \frac{1}{4}U & -V\\
                    1 & 0 & \frac{5}{3}t_{5}+\frac{1}{4}U\\
                    0 & 1 & 0
                \end{psmallmatrix}\lambda  \label{spectral-lax-operator}\\
             &+  \begin{psmallmatrix}
                    \frac{1}{2}V' -\frac{1}{12}U'' + \frac{1}{8}U^2 -\frac{5}{12}t_{5} U & \frac{1}{12} U''' - \frac{7}{16}UU'-\frac{3}{8}UV +\frac{5}{12}t_{5} U' + t_{2} & \frac{1}{16}(U')^2 -\frac{1}{8}U U'' + \frac{7}{32}U^3 + \frac{3}{4}V^2-\frac{5}{12}t_{5} U^2 + x\\
                    \frac{1}{2}V-\frac{1}{4}U & \frac{1}{6}U'' - \frac{1}{4}U^2 + \frac{5}{6}t_{5} U & -\frac{1}{12} U''' + \frac{7}{16}U U'-\frac{3}{8}UV -\frac{5}{12}t_{5} U' + t_{2}\\
                    \frac{5}{3}t_{5}-\frac{1}{2}U &\frac{1}{2}V+\frac{1}{4}U & -\frac{1}{2}V' -\frac{1}{12}U'' + \frac{1}{8}U^2 -\frac{5}{12}t_{5} U
                 \end{psmallmatrix}. \nonumber
    \end{align}
    The compatibility condition $[Q,P] = 1$ is equivalent to the compatibility condition for the linear system \eqref{matrix-linear-system}:
        \begin{equation*}
            [Q,P] = 1 \Longleftrightarrow \frac{\partial \mathcal{P}}{\partial x } - \frac{\partial \mathcal{Q}}{\partial \lambda} + [\mathcal{P},\mathcal{Q}] = 0.
        \end{equation*}
\end{prop}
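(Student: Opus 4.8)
The statement is to be proved by direct computation; it has two parts, which I would treat in turn. The first is that a scalar function $\psi$ solves the overdetermined pair \eqref{STRING} if and only if the column vector $\Psi = \langle \psi'' - \tfrac34 U\psi,\ \psi',\ \psi\rangle^{T}$ solves the first-order matrix system \eqref{matrix-linear-system}. The second is that the operator identity $[Q,P] = 1$ is equivalent to the zero-curvature equation $\partial_x\mathcal{P} - \partial_\lambda\mathcal{Q} + [\mathcal{P},\mathcal{Q}] = 0$ for the matrices $\mathcal{Q},\mathcal{P}$ in \eqref{spectral-lax-operator}.

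\emph{Equivalence of the two linear systems.} Write $\Psi = (\Psi_1,\Psi_2,\Psi_3)^{T}$ with $\Psi_1 = \psi'' - \tfrac34 U\psi$, $\Psi_2 = \psi'$, $\Psi_3 = \psi$. The $x$-equation $\partial_x\Psi = \mathcal{Q}\Psi$ is nearly formal: its bottom two rows are the tautologies $\partial_x\psi = \psi'$ and $\partial_x\psi' = \psi'' = \Psi_1 + \tfrac34 U\Psi_3$, while its top row, $\partial_x\Psi_1 = \psi''' - \tfrac34 U'\psi - \tfrac34 U\psi'$, becomes $\tfrac34 U\Psi_2 + (\lambda - \tfrac32 V)\Psi_3$ upon substituting $\psi''' = \lambda\psi + \tfrac32 U\psi' + \tfrac34 U'\psi - \tfrac32 V\psi$, which is exactly $Q\psi = \lambda\psi$; comparison with the definition of $\mathcal{Q}$ completes this half. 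For the $\lambda$-equation I would first compute $\partial_\lambda\psi = P\psi$ by letting $P = Q^{4/3}_+ + \tfrac53 t_5 Q^{2/3}_+$ (see \eqref{P-operator}) act on $\psi$ and eliminating $\psi''',\psi^{(4)},\dots$ through $Q\psi = \lambda\psi$ and its $x$-derivatives; this expresses $P\psi$, hence $\partial_\lambda\Psi_3$, as a combination of $\Psi_1,\Psi_2,\Psi_3$ with coefficients polynomial in $\lambda$, and one checks that it reproduces the third row of $\mathcal{P}$. The top two rows of $\partial_\lambda\Psi = \mathcal{P}\Psi$ follow in the same way, by differentiating $\partial_\lambda\psi = P\psi$ once, respectively twice, in $x$, commuting $\partial_x$ with $\partial_\lambda$, and reducing once more modulo $Q\psi = \lambda\psi$; here one also invokes the string equation \eqref{string-equation} itself (and an $x$-derivative of its first member for the top row), which is legitimate because any solution of \eqref{STRING} satisfies \eqref{string-equation} by compatibility, and it is precisely this that collapses the reduced expressions into the stated higher-derivative entries of $\mathcal{P}$. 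Conversely, if $\Psi$ solves \eqref{matrix-linear-system}, put $\psi := \Psi_3$: the three rows of $\partial_x\Psi = \mathcal{Q}\Psi$ force $\Psi_2 = \psi'$, $\Psi_1 = \psi'' - \tfrac34 U\psi$ and $Q\psi = \lambda\psi$, after which the third row of $\partial_\lambda\Psi = \mathcal{P}\Psi$ reads exactly $\partial_\lambda\psi = P\psi$, so $\psi$ solves \eqref{STRING}; for this conclusion the remaining two rows of the $\lambda$-equation are not needed (and are in fact consequences of the rest, via $\partial_x\partial_\lambda = \partial_\lambda\partial_x$).

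\emph{Equivalence of the compatibility conditions.} The quickest route is to observe that each condition is on its own equivalent to \eqref{string-equation}. On the operator side, in $[Q,P]$ every term of positive order in $\partial$ cancels, so $[Q,P]$ is multiplication by a differential polynomial in $U,V$; setting it equal to $1$ gives \eqref{string-equation}, which is the content of the Proposition immediately preceding this one. On the matrix side, $Z := \partial_x\mathcal{P} - \partial_\lambda\mathcal{Q} + [\mathcal{P},\mathcal{Q}]$ is a polynomial in $\lambda$ of degree at most $3$: its $\lambda^3$-coefficient equals $[E_{13},E_{13}] = 0$, and its $\lambda^2$-coefficient reduces to $[E_{13},\,C - M_1]$, where $C$ is the $\lambda$-independent part of $\mathcal{Q}$ and $M_1$ the coefficient of $\lambda$ in $\mathcal{P}$; since the sub-diagonal entries of $C$ and $M_1$ coincide, $C - M_1$ is strictly upper triangular, and $E_{13}$ commutes with every such matrix, so this term vanishes as well. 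The surviving $\lambda^1$- and $\lambda^0$-coefficients of $Z$ are matrices of differential polynomials in $U,V$ that vanish precisely when \eqref{string-equation} holds — a direct $3\times 3$ verification, already implicit in the construction of $\mathcal{P}$ above. Hence $[Q,P] = 1$ and $Z = 0$ are each equivalent to \eqref{string-equation}, and therefore to each other. Alternatively, the equivalence is immediate from the first part, since $\psi \leftrightarrow \Psi$ sets up a bijection between the solution sets of the two overdetermined systems, so one is consistent exactly when the other is.

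\emph{The main obstacle.} There is no conceptual difficulty — the proof is, as claimed, ``direct calculation'' — and the real work is the lengthy, error-prone symbolic algebra in the first part: systematically reducing $\psi^{(4)},\psi^{(5)}$ to the basis $\{\psi,\psi',\psi''\}$ via $Q\psi = \lambda\psi$, invoking \eqref{string-equation} to simplify, and matching the outcome term by term against the rather unwieldy entries of $\mathcal{P}$; and, if one takes the direct route in the second part, expanding the $3\times 3$ matrix $Z$. I would organize the bookkeeping by powers of $\lambda$ and by the basis $\{\psi,\psi',\psi''\}$, and cross-check the intermediate steps with a computer algebra system.
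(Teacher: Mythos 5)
Your proposal is correct and takes essentially the same route as the paper, which treats this proposition as a direct formal computation: reduce $P\psi$ and its $x$-derivatives modulo $Q\psi=\lambda\psi$, invoke \eqref{string-equation} to put the upper rows of the $\lambda$-equation in the stated form, and check the zero-curvature identity degree by degree in $\lambda$ (your observation that the $\lambda^2$ coefficient is $[E_{13},C-M_1]$ with $C-M_1$ strictly upper triangular is exactly the right bookkeeping). One incidental remark: carrying out your reduction gives $\tfrac12 V-\tfrac14 U'$ and $\tfrac12 V+\tfrac14 U'$ in the $(2,1)$ and $(3,2)$ entries of the constant term of $\mathcal{P}$, consistent with $\tfrac14 Q_W$ in \eqref{L-connection}, so the printed $\tfrac14 U$ there is a typographical slip rather than a defect of your argument.
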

\begin{remark}
    \textit{The role of the gauge group.} What will be important to us in the main text is that the eigenvalues of the spectral matrix
    $\mathcal{P}$ have a particular form. This form is fixed if we replace our definition of $\vec{\psi}$ by
        \begin{equation*}
            \vec{\psi} \to \langle\psi'' + a\psi' + b\psi,\psi'+c\psi,\psi\rangle^T,
        \end{equation*}
    where $a,b,c$ are any sufficiently differentiable functions of $(t_5,t_2,x)$. In other words, we can multiply $\vec{psi}$ in the left by any upper
    triangular matrix with $1$'s on the diagonal, and obtain the same results.
    Thus, the factor of $-\frac{3}{4}U\psi$ we added to the first entry of $\vec{\psi}$ 
    is an aesthetic choice, and is not essential. It is only to make the resulting matrices look more symmetric. 
\end{remark}

One also requires that the string equation is compatible with the other flows of the hierarchy; in our situation, we require that 
the string equation is compatible with the $t_2$ and $t_5$ flows. The linearization of these flows on $\psi$ are given
by
    \begin{align}
        \frac{\partial}{\partial t_{2}} \psi &= Q^{2/3}_+\psi = \psi'' - U\psi, \label{mu-KP}\\
        \frac{\partial}{\partial t_{5}} \psi &= Q^{5/3}_+\psi = (\lambda + V)\psi'' + \frac{1}{12}\left(U''-3U^2 - 6V'\right)\psi'\label{eta-KP} \\
        &+ \frac{1}{2}\left(UU'-\frac{1}{6}U'''- UV - 2\lambda U -\frac{5}{3}t_{5} V -\frac{2}{3}t_{2}\right)\psi, \nonumber
    \end{align}
where we have already utilized compatibility of the string equation with the $t_{5}$ flow to reduce the order of the right hand side
from $5$ to $3$. We can similarly write the above two equations in matrix form:
    \begin{prop}
        The equations \eqref{mu-KP}, \eqref{eta-KP} are equivalent to the following pair of matrix equations:
            \begin{align}
                \frac{\partial}{\partial t_{2}} \psi &= Q^{2/3}_+\psi \Longleftrightarrow  
                \frac{\partial \Psi}{\partial t_{2}} = M(\lambda;t_{5},t_{2},x)\Psi,\\
                \frac{\partial}{\partial t_{5}} \psi &= Q^{5/3}_+\psi \Longleftrightarrow  
                \frac{\partial \Psi}{\partial t_{5}} = E(\lambda;t_{5},t_{2},x)\Psi,
            \end{align}
        where the vector $\Psi = \langle \psi'' - \frac{3}{4}U\psi, \psi',\psi\rangle^{T}$, and the matrices 
        $M(\lambda;t_{5},t_{2},x)$, $E(\lambda;t_{5},t_{2},x)$ are defined to be
        \begin{equation}
            M(\lambda;t_{5},t_{2},x) =
                \begin{psmallmatrix}
                    0 & 1 & 0\\
                    0 & 0 & 1\\
                    0 & 0 & 0
                \end{psmallmatrix}\lambda + 
                \begin{psmallmatrix}
                    -\frac{1}{4}U & \frac{1}{4}U - \frac{3}{2}V & \frac{9}{16}U^2 - \frac{1}{4}U''\\
                    0 & \frac{1}{2}U & -\frac{1}{4}U- \frac{3}{2} V\\
                    1 & 0 & -\frac{1}{4}U
                \end{psmallmatrix}.
        \end{equation}
        \begin{align}
            &E(\lambda;t_{5},t_{2},x) = 
                \begin{psmallmatrix}
                    0 & 1 & 0\\
                    0 & 0 & 1\\
                    0 & 0 & 0
                \end{psmallmatrix}\lambda^2
               +\begin{psmallmatrix}
                    -\frac{1}{4}U & \frac{1}{4}U' - \frac{1}{2}V & \frac{5}{16}U^2 - \frac{1}{6}U''\\
                    0 & \frac{1}{2}U & -\frac{1}{4}U' - \frac{1}{2}V\\
                    1 & 0 & -\frac{1}{4}U
                \end{psmallmatrix}\lambda \\
                &+\begin{psmallmatrix}
                    \frac{1}{4}UV - \frac{1}{2}UU'\frac{1}{12}U''' - \frac{5}{6}t_{5} V-\frac{1}{3}t_{2} & e_{12} & e_{13}\\
                    \frac{1}{12}U'' - \frac{1}{4}U^2 + \frac{1}{2}V & \frac{5}{3}t_{5} V -\frac{1}{2}UV + \frac{2}{3}t_{2} & e_{23}\\
                    V & \frac{1}{12}U'' - \frac{1}{4}U^2 -\frac{1}{2}V & \frac{1}{4}UV + \frac{1}{2}UU'-\frac{1}{12}U''' -\frac{5}{6}t_{5} V -\frac{1}{3}t_{2}
                \end{psmallmatrix}, \nonumber
        \end{align}
    where
        \begin{align*}
            e_{12} :&= \frac{1}{8}(U')^2 - \frac{3}{16} U''U -\frac{1}{4}U'V + \frac{1}{8}V'U + \frac{5}{16}U^3 + \frac{5}{12}t_{5}\left(U''-3U^2 + 2V'\right) + x,\\
            e_{23} :&= \frac{1}{8}(U')^2 -\frac{3}{16}U''U +\frac{1}{4}U'V -\frac{1}{8}V'U + \frac{5}{16}U^3+ \frac{5}{12}t_{5}\left(U''-3U^2 - 2V'\right) + x, \\
            e_{13} :&=\frac{1}{8}U''V + \frac{1}{8}U'V' -\frac{9}{16}U^2V + \frac{25}{6}t_{5}^2 V + t_{2} U + \frac{5}{3}t_{5}t_{2}.
         \end{align*}
    \end{prop}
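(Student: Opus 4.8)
The plan is to derive both matrix flows directly from the scalar Lax equations \eqref{mu-KP} and \eqref{eta-KP}, using the observation that for generic fixed $\lambda$ the functions $\psi,\psi',\psi''$ are functionally independent and the spectral equation $Q\psi=\lambda\psi$ of \eqref{STRING} closes the differential module they generate: from $Q=\partial^3-\tfrac32 U\partial-\tfrac34 U'+\tfrac32 V$ one has
\[
    \psi''' = \lambda\psi + \tfrac32 U\psi' + \tfrac34 U'\psi - \tfrac32 V\psi,
\]
and differentiating this in $x$ expresses $\psi^{(4)}$ the same way. Hence every differential polynomial in $\psi$ of order $\le 4$ is uniquely a linear combination of $\psi,\psi',\psi''$ — equivalently, of the entries of $\Psi=\langle\psi''-\tfrac34 U\psi,\,\psi',\,\psi\rangle^T$ — with coefficients polynomial in $\lambda$ and differential-polynomial in $U,V$. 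This module structure is the only structural input; the spatial equation $\partial_x\Psi=\mathcal Q\Psi$ of \eqref{matrix-linear-system} is already in hand and will be used for the converse.

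First I would treat the $t_2$-flow. The bottom entry is immediate, $\partial_{t_2}\psi = Q^{2/3}_+\psi = \psi''-U\psi$, already a combination of the entries of $\Psi$. For the middle entry, commute $\partial_x$ with $\partial_{t_2}$ and reduce: $\partial_{t_2}\psi' = (\psi''-U\psi)' = \psi''' - U\psi' - U'\psi$, then substitute the expression for $\psi'''$ above. For the top entry, write
\[
    \partial_{t_2}\!\left(\psi''-\tfrac34 U\psi\right) = \partial_x^2(\psi''-U\psi) - \tfrac34\big(\partial_{t_2}U\big)\psi - \tfrac34 U(\psi''-U\psi),
\]
substitute the reduced KP flow $\partial_{t_2}U=-2V'$ from \eqref{U_mu}--\eqref{V_eta}, and reduce the $\psi^{(4)}$ term using $Q\psi=\lambda\psi$ and its $x$-derivative. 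Collecting the three expressions as the rows of a $3\times 3$ matrix acting on $\Psi$ must reproduce exactly $M(\lambda;t_5,t_2,x)$.

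The $t_5$-flow is handled identically, starting from \eqref{eta-KP}: its right-hand side has already been brought (via the string equation) to order $\le 2$ in $\psi$, so $\partial_{t_5}\psi$ is directly a combination of $\psi,\psi',\psi''$; one $x$-differentiation plus one reduction gives $\partial_{t_5}\psi'$, and two $x$-differentiations plus the substitution of $\partial_{t_5}U,\partial_{t_5}V$ from \eqref{U_mu}--\eqref{V_eta} and the reductions modulo $Q\psi=\lambda\psi$ give $\partial_{t_5}(\psi''-\tfrac34 U\psi)$; assembling yields $E(\lambda;t_5,t_2,x)$ with the stated $e_{12},e_{23},e_{13}$. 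The converse implications are then formal: if $\Psi$ is assembled from a scalar $\psi$ via the spectral equation together with $\partial_x\Psi=\mathcal Q\Psi$, the bottom component of $\partial_{t_2}\Psi=M\Psi$ (resp.\ $\partial_{t_5}\Psi=E\Psi$) is precisely the scalar equation \eqref{mu-KP} (resp.\ \eqref{eta-KP}), and the other two components follow automatically by applying $\partial_x$ and using $[\partial_x,\partial_{t_k}]=0$ with the spectral equation.

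The argument is a finite algebraic computation, checkable symbolically as in the accompanying worksheets, and requires no new ideas. The one place where care is genuinely needed — and where an error would creep in — is the time-dependent coefficient $-\tfrac34 U$ in the top slot of $\Psi$: one must retain the $\partial_{t_k}U$ contribution and eliminate it using precisely the reduced $U$- and $V$-flows of \eqref{U_mu}--\eqref{V_eta}, and the $x$-derivative reductions of $Q\psi=\lambda\psi$ must be applied consistently, so that the coefficients read off are the unique ones guaranteed by freeness of the rank-$3$ module. Once this bookkeeping is organized, matching against the displayed $M$ and $E$ is routine.
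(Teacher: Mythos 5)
Your proposal is correct and is essentially the computation the paper intends: the paper treats this proposition as a direct calculation, namely reducing $\partial_{t_k}\psi$, $\partial_{t_k}\psi'$, $\partial_{t_k}(\psi''-\tfrac34 U\psi)$ modulo the spectral equation $Q\psi=\lambda\psi$ and substituting the hierarchy flows \eqref{U_mu}--\eqref{V_eta} for the $t_k$-derivatives of $U$ entering through the $-\tfrac34 U$ shift, exactly as you organize it. Your explicit remark that the equivalence is understood modulo the spectral equation and the reduced KP flows, and your observation that the only delicate point is the time-dependent coefficient in the top slot of $\Psi$, are consistent with the paper's framework, so no further changes are needed.
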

\begin{remark}
    Note that all of the matrices $\mathcal{P}, \mathcal{Q}, M$, and $E$ are traceless; this is a consequence of the fact that the generating operator $Q$ has no term of order $\partial^2$.
    We also comment here that in what follows, the matrices $\mathcal{Q}, M$, and $E$ can in fact be seen to arise on their own by requiring isomonodromy for the connection $\partial_{\lambda} - \mathcal{P}$. We present these matrices
    here for comparison to our results later.
\end{remark}
The requirement that all of the above equations are compatible with one another further determines the derivatives of $U(t_{5},t_{2},x)$,
$V(t_{5},t_{2},x)$ with respect to $t_{2}$ and $t_{5}$; this can either be done at the level of the operators $Q^{k/3}_+$, or can be 
performed using the matrices $\mathcal{P}, \mathcal{Q}, M$, and $E$. The result is the following:
    \begin{prop}
        The compatibility of the the operators $\lambda - Q$, $\frac{\partial}{\partial \lambda} - P$, $\frac{\partial}{\partial t_{2}} - Q^{2/3}_+$, $\frac{\partial}{\partial t_{5}} - Q^{5/3}_+$ (equivalently, the compatibility of the corresponding matrix equations)
        is equivalent to the string equation \eqref{string-equation}, and the following PDEs:
        \begin{align}
            \frac{\partial U}{\partial t_{2}} &= -2V', \label{U_mu}\\
            \frac{\partial V}{\partial t_{2}} &= \frac{1}{6}U''' - UU', \label{V_mu}\\
            \frac{\partial U}{\partial t_{5}} &=\frac{\partial}{\partial x}\left[-\frac{1}{6}UU'' + \frac{1}{8}(U')^2 + \frac{1}{4}U^3 - 
            \frac{1}{2}V^2 - \frac{5}{9}t_{5}\left(3U^2-U''\right) + \frac{4}{3}x\right],\label{U_eta}\\
            \frac{\partial V}{\partial t_{5}} &= \frac{\partial}{\partial x}\left[ \frac{1}{12}U''V - \frac{1}{4}U'V' + \frac{5}{16}U^2V - \left(\frac{5}{3}t_{5} + \frac{1}{4}U\right)^2V - t_{2} U\right]\label{V_eta}.
        \end{align}
    \end{prop}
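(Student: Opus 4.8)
The plan is to verify the compatibility conditions at the level of (pseudo)differential operators acting on the Baker--Akhiezer function $\psi$ of \eqref{STRING}, \eqref{mu-KP}, \eqref{eta-KP}, and only at the end transfer the conclusion to the matrix system \eqref{matrix-linear-system} using the equivalences established in the preceding propositions. The first step is to reduce the six pairwise compatibility conditions among the operators $\lambda - Q$, $\partial_{\lambda} - P$, $\partial_{t_{2}} - Q^{2/3}_{+}$, $\partial_{t_{5}} - Q^{5/3}_{+}$ to a short list of operator identities. Since $Q$ does not depend on $\lambda$, one computes $[\,\partial_{\lambda} - P,\ \lambda - Q\,]\psi = (1 + [P,Q])\psi$, so this pair is compatible exactly when $[Q,P] = 1$; likewise $[\,\partial_{t_{j}} - Q^{j/3}_{+},\ \lambda - Q\,]\psi = \bigl([Q^{j/3}_{+}, Q] - \partial_{t_{j}}Q\bigr)\psi$, so compatibility with the $t_{j}$-flow is the $\text{KdV}_{3}$ evolution $\partial_{t_{j}}Q = [Q^{j/3}_{+}, Q]$ for $j = 2, 5$. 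The remaining three conditions --- among the two auxiliary flows, and between each of them and $\partial_{\lambda} - P$ --- reduce to the Zakharov--Shabat identities $\partial_{t_{j}}Q^{k/3}_{+} - \partial_{t_{k}}Q^{j/3}_{+} = [Q^{j/3}_{+}, Q^{k/3}_{+}]$ and $\partial_{t_{j}}P = [Q^{j/3}_{+},P]$, the latter with the explicit $t_{5}$-dependence of $P = Q^{4/3}_{+} + \tfrac{5}{3} t_{5}Q^{2/3}_{+}$ taken into account (so that $\partial_{t_5}P = \partial_{t_5}Q^{4/3}_+ + \tfrac53 Q^{2/3}_+ + \tfrac53 t_5 \partial_{t_5}Q^{2/3}_+$); these are standard consequences of the hierarchy structure (cf. \cite{Dickey}) and, I expect, impose no further constraint on $U,V$ beyond what is already present.

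The second step is the explicit extraction. For $[Q,P] = 1$: writing $[Q,P] = [Q, Q^{4/3}_{+}] + \tfrac{5}{3} t_{5}[Q, Q^{2/3}_{+}]$ and using $[Q, Q^{k/3}] = 0$ to replace each term by $-[Q, Q^{k/3}_{-}]$, one sees that $[Q,P]$ is a differential operator of order at most $1$ (the commutator of the third-order operator $Q$ with a pseudodifferential operator of order $-1$ is differential of order $\le 1$), with coefficients that are differential polynomials in $U, V$ and $t_{5}$ computable from the expansions of $Q^{1/3}, Q^{2/3}, Q^{4/3}$ recorded above (equivalently from the explicit fourth-order operator \eqref{P-operator}). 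Setting this operator equal to the identity --- i.e. requiring the $\partial$-coefficient to vanish and the $\partial^{0}$-coefficient to equal $1$ --- then reproduces, after simplification, the two components of \eqref{string-equation}. For the auxiliary flows: $[Q^{j/3}_{+}, Q]$ is likewise a first-order operator, and matching it against $\partial_{t_{j}}Q = -\tfrac{3}{2}(\partial_{t_{j}}U)\,\partial - \tfrac{3}{4}\partial_{t_{j}}(U') + \tfrac{3}{2}\partial_{t_{j}}V$ reads off $\partial_{t_{j}}U$ from the $\partial$-coefficient and then $\partial_{t_{j}}V$ from the $\partial^{0}$-coefficient, yielding \eqref{U_mu}, \eqref{V_mu}, \eqref{U_eta}, \eqref{V_eta}; in the $j = 5$ case one first uses $Q\psi = \lambda\psi$ to lower the differential order on the right of \eqref{eta-KP} from $5$ to $3$, which is the order reduction already noted there.

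The third step is to pass to the matrix statement: under $\Psi = (\psi'' - \tfrac{3}{4} U\psi,\ \psi',\ \psi)^{T}$ the scalar systems \eqref{STRING}, \eqref{mu-KP}, \eqref{eta-KP} are equivalent to \eqref{matrix-linear-system} together with the $M$- and $E$-equations, so the operator compatibility conditions become the zero-curvature conditions $\partial_{x}\mathcal{P} - \partial_{\lambda}\mathcal{Q} + [\mathcal{P},\mathcal{Q}] = 0$, $\partial_{t_{2}}\mathcal{P} - \partial_{\lambda}M + [\mathcal{P},M] = 0$, $\partial_{t_{5}}\mathcal{P} - \partial_{\lambda}E + [\mathcal{P},E] = 0$, and the pairwise conditions among $\mathcal{Q}, M, E$; alternatively these may be verified directly by expanding in powers of $\lambda$ and matching coefficients, which is the form convenient for symbolic computation. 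I expect the main obstacle to be not any single identity but the combinatorial bookkeeping: one must carry the pseudodifferential (and matrix) expansions to sufficiently high order to see all cancellations, and --- the delicate point --- one must confirm that the $\partial_{\lambda}$-flow generated by $P$ is genuinely consistent with the $t_{2}$- and $t_{5}$-flows, i.e. that the string equation is preserved by the $\text{KdV}_{3}$ evolutions; this hinges on the particular combination $Q^{4/3}_{+} + \tfrac{5}{3} t_{5}Q^{2/3}_{+}$ and on how the explicit time $t_{5}$ enters, and it is there that the precise form of the deformation-time dependence in \eqref{string-equation} and \eqref{U_mu}--\eqref{V_eta} gets fixed.
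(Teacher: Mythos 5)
Your proposal is correct and follows essentially the same route as the paper: the paper's proof is a direct calculation, noting precisely that the compatibility conditions $[\lambda-Q,\partial_{t_2}-Q^{2/3}_+]=0$ and $[\lambda-Q,\partial_{t_5}-Q^{5/3}_+]=0$ suffice to infer \eqref{U_mu}--\eqref{V_eta} (with $[Q,P]=1$ giving \eqref{string-equation} as in the preceding proposition), while the remaining compatibility conditions are consistent and hence redundant — which is exactly your reduction and coefficient-matching scheme.
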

\begin{proof}
    The proof is a direct calculation. We remark only that the compatibility conditions
        \begin{equation*}
            \left[\lambda - Q, \frac{\partial}{\partial t_{2}} - Q^{2/3}_+\right] = 0, \qquad\qquad \left[\lambda - Q,\frac{\partial}{\partial t_{5}} - Q^{5/3}_+\right] = 0
        \end{equation*}
    are enough to infer \eqref{U_mu}--\eqref{V_eta}; the remaining compatibility conditions are consistent with this calculation, and
    thus redundant.
\end{proof}

The above proposition justifies our notation for $' = \frac{\partial}{\partial x}$: all other derivatives can be rewritten in terms of
$\frac{\partial}{\partial x}$. We will sometimes refer to equations \eqref{U_mu}--\eqref{V_eta} above, along with the string equation,
collectively as the string equation, by a slight abuse of language. Also notice that, in the equations \eqref{U_mu}, \eqref{V_mu}, one can eliminate
$V$ to obtain that $U$ satisfies a scaled version of the \textit{Boussinesq equation}, with $t_{2}$ playing the role of the `time' variable \cite{Zakharov}:
    \begin{equation}
        \frac{\partial^2 U}{\partial t_{2}^2} = \frac{\partial}{\partial x} \left[\frac{1}{6}U''' - UU'\right].
    \end{equation}

\section{Construction of Darboux coordinates for string equations.} \label{AppendixB}
In Proposition \ref{Hamiltonian-Prop}, we introduced a set of Darboux coordinates that simply \textit{worked}, but did not explain their origin. For lower 
rank systems, sets of Darboux coordinates have been constructed using various methods \cite{MM,Takasaki,Marchal2}, but adaptation of these methods to the
present setting is a nontrivial task (this is not to say it is impossible, but it is certainly challenging). Here, we describe an algorithm for constructing
sets of Darboux coordinates for string equations of the type described in Appendix \ref{AppendixA}; this is essentially how we constructed the coordinates
described in Proposition \ref{Hamiltonian-Prop}. We have reason to believe that this algorithm can produce sets of Darboux coordinates for all string equations; nevertheless, we were unable to prove that this in general. This algorithm is essentially based on the fact that the Darboux coordinates for the Painlev\'{e}
I hierarchy are homogeneuous functions of $u,u',...$ and the times $t_1,t_3,...$ under an appropriate rescaling; by identifying the `correct' choice of this
homogeneity for the $(q,p)$ string equation, we can essentially uniquely determine the form of any reasonable set of coordinates. Let us proceed to the description of the algorithm.

Consider the operator
        \begin{equation}
            Q = \partial^{q} + u_{2}\partial^{q-2} + \cdots + u_{q},
        \end{equation}
    and the associated $(q,p)$ string equation:
        \begin{equation}\label{pq-string}
            [Q,P] = 1,
        \end{equation}
    where
        \begin{equation}
            P = Q^{p/q}_+ + \sum_{\substack{\ell = 1\\\ell\mod q \not\equiv 0}}^{p-1} \left(1 + \frac{\ell}{q}\right)t_{\ell+q} Q^{\ell/q}_+.
        \end{equation}
    Introduce the algebra
        \begin{equation}
            \mathbb{A}_{(p,q)} := \text{span}_{\CC}\left\{\left(u_b^{(s)}\right)^n t_a^m\mid b=2,...,q;a=1,...,p+q-1; n,s,m\in \ZZ_+\right\},
        \end{equation}
    where the superscript $s$ denotes $s$ differentiations with respect to $t_1$. Further, define the functional $\rho: \mathbb{A}_{(p,q)}\to \ZZ_+$ on monomials
        \begin{equation}
             \rho\left[\left(u_b^{(s)}\right)^n t_a^m\right] = n\left(b+s\right) + m\left(p+q-a\right).
        \end{equation}
    For instance, if we consider the monomial $t_7^2u_3'u_5$ as an element of the ring of differential polynomials arising from the $(5,6)$ string 
    equation, then
        \begin{equation*}
            \rho(t_7^2u_3'u_5)= 2\rho(t_7) + \rho(u_3') + \rho(u_5) = 2(11-7) + (3+1) + (5) = 17.
        \end{equation*}
    The functional $\rho$ then induces a grading on the algebra $\mathbb{A}_{(p,q)}$:
        \begin{equation}
            \mathbb{A}_{(p,q)} = \bigoplus_{\alpha=0}^{\infty} W_{\alpha},
        \end{equation}
    where $W_{\alpha}$ are finite dimensional subspaces
        \begin{equation}
            W_{\alpha} = \text{span }_{\CC} \left\{ \left(u_b^{(s)}\right)^n t_a^m \mid \rho\left[\left(u_b^{(s)}\right)^n t_a^m\right] = \alpha\right\};
        \end{equation}
    each of these spaces is finite dimensional. Furthermore, if we quotient the algebra $\mathbb{A}_{p,q}$ by the string equation \eqref{pq-string}
    and its compatible flows arising from KP, this grading passes to the quotient algebra $\tilde{\mathbb{A}}_{p,q}$, with 
    $\dim \tilde{W}_{\alpha} <\infty$ for any fixed $\alpha\in \ZZ_+$.
    A suitable set of Darboux coordinates must then satisfy the following constraints:
        \begin{enumerate}
            \item If the $(q,p)$ string equation is of order $d_2$ in the function $u_2$, ...., order $d_p$ in the function $u_p$, the map
            carrying the vector of functions $(u_2,...,u_2^{(d_2-1)}, u_3,...,u_3^{(d_3-1)},\cdots,u_p,...u_p^{(d_p-1)})$ to the Darboux coordinates $\{P_a,Q_a\}$ must be invertible.
            \item Any pair of canonical coordinates $P_a$, $Q_a$ must be homogeneous according to the grading induced by $\rho$; furthermore, if $P_a\in V_{\alpha_a}$, $Q_a\in V_{\beta_a}$, then
                \begin{equation}
                    \alpha_a + \beta_a = p+q.
                \end{equation}
        \end{enumerate}
    The last condition implies that the space of all possible such functions is finite-dimensional for any given $(p,q)$, and thus one can algorithmically produce a set of `good' Darboux coordinates using the procedure of section \S2, provided such coordinates exist satisfying these conditions. The
    benefit of this algorithm is that it reduces the problem of finding suitable Darboux coordinates to solving a finite dimensional system of linear
    equations; however, this system is usually overdetermined, so it is not clear from the outset that a solution exists. Nevertheless, this technique seems
    to work in a fair amount of generality, and it would thus be of interest to prove that this procedure always admits a solution.

    \begin{ex} \textit{The $(3,4)$ string equation.}
    Let us see concretely how this works in the case of the $(3,4)$ string equation. Here, the algebra $\mathbb{A}_{4,3}$ is built from the elements
    $U,V, t_1,t_2,$ and $t_5$, which have weights
        \begin{equation}
            \rho[U] = \rho[t_5] = 2, \quad \rho[V] = 3, \quad \rho[t_2] = 5,\quad \rho[t_1] = 6.
        \end{equation}
    The relevant spaces $W_j$ are then (after quotienting by the string equation)
        \begin{align*}
            W_2 &= \text{span }_{\CC}\left\{ U,t_5\right\},\\
            W_3 &= \text{span }_{\CC}\left\{ V,U'\right\},\\
            W_4 &= \text{span }_{\CC}\left\{V',U'',U^2,t_5U,t_5^2\right\},\\
            W_5 &= \text{span }_{\CC}\left\{U''',UV,UU',t_5V,t_5U',t_2\right\}.
        \end{align*}
    Any pair of canonical coordinates $P_a,Q_a$ must be homogeneous, and their degrees must add to $4+3 = 7$. It must also be the case that the map from
    $(U,U',U'',U''',V,V')$ to the Darboux coordinates must be invertible. These two facts imply that one pair of canonical coordinates have weights $2$ and $5$
    respectively, whereas two pairs of coordinates have weights $3$ and $4$, respectively: thus, we can take as an ansatz for these coordinates
        \begin{align*}
            Q_1&:= U + c_1 t_5,\qquad P_1 :=  c_2U'''+c_3UV+c_4UU'+c_5t_5V + c_6t_5U' + c_7t_2,\\
            Q_2&:= V,\qquad \qquad \quad P_2 := c_8V',\\
            Q_3&:= U',\qquad\qquad\quad P_3:=c_9U''+c_{10}U^2+c_{11}t_5U+c_{12}t_5^2.
        \end{align*}
    Here, the constants $c_1,...,c_{12}$ are to be determined. We further remark that we have already made a choice for how the coordinates of 
    weight $2$ split; in general, one should take the ansatz that $\langle Q_2,Q_3\rangle G = \langle U,V'\rangle$, for some invertible matrix $G$.
    We can now re-express the functions $(U,U',U'',U''',V,V')$ in terms of the variables $(Q_1,Q_2,Q_3,P_1,P_2,P_3)$:
        \begin{align*}
            U  &= Q_1-c_1t_5,\qquad V= Q_2,\qquad U'=Q_3,\qquad V'=P_2/c_8,\\
            U'' &= \left(P_3+2(c_1c_{10}-\frac{1}{2}c_{11})t_5Q_1 - c_{10}Q_1^2 + (c_1c_{11}-c_1^2c_{10}-c_{12})t_5^2\right)/c_9,\\
            U'''&= \left(P_1 -c_3Q_1Q_2-c_4Q_1Q_3+(c_1c_3-c_5)t_5Q_2 + (c_1c_4-c_6)t_5Q_3-c_7t_2\right)/c_2.
        \end{align*}
    The above set of coordinates satisfies properties 1 and 2; our claim is that there exists a choice (in fact, possibly many choices) of the parameters
    $\left\{c_{k}\right\}_{k=1}^{12}$ such that the coordinates defined above are indeed canonical. In order to check this, one proceeds by trying to integrate
    these coordinates to a Hamiltonian $H_1$ using Hamilton's equations
        \begin{equation*}
            \frac{\partial Q_j}{\partial x} = \frac{\partial H_1}{\partial P_j} ,\qquad \frac{\partial P_j}{\partial x} = -\frac{\partial H_1}{\partial Q_j},
        \end{equation*}
    $j=1,2,3$. We can calculate the right hand side of these equations as polynomials in $(P_1,P_2,P_3,Q_1,Q_2,Q_3)$ with coefficients in $\CC[x,t_2,t_5]$. 
    For instance, we have that
        \begin{align*}
            \frac{\partial Q_1}{\partial t_1} &= Q_3,\\
            \frac{\partial Q_2}{\partial t_1} &= P_2/c_8,\\
            \frac{\partial Q_3}{\partial t_1} &= \left((P_3-c_{10}Q_1^2 +2\left(c_1c_{10}-\frac{1}{2}c_{11}\right)t_5Q_1 + (c_1c_{11}-c_{12}-c_1^2c_{10})\right)/c_{9}.
        \end{align*}
    In order for a function $H_1$ to exist, the following differential (which has coefficients which are polynomial in the variables $P_k,Q_k$!)
        \begin{equation}
            \chi_1:= \sum_{k=1}^3\left(\frac{\partial Q_k}{\partial t_1} dP_k - \frac{\partial P_1}{\partial t_1} dQ_k\right)
        \end{equation}
    must be closed: $d\chi_1 = 0$ (here, the differential is taken in the coordinates $P_1,P_2,P_3,Q_1,Q_2,Q_3$). This imposes a set of linear constraints on the constants $\{c_j\}_{j=1}^{12}$; if one can find a choice of $\{c_j\}$ for which the above differential is closed, then one has constructed a set of
    suitable Darboux coordinates, since $\chi_1$ can be integrated to a function $H_1$, the Hamiltonian for the flow along $t_1$. In principle one must also check that the differentials $\chi_2$, $\chi_5$ arising from the other flows $t_2,t_5$ are also closed; in practice, it seems that these constraints come for free once the system $d\chi_1 = 0$ is solved. More constraints on the remaining coefficients can be imposed by requiring the stronger condition that $\frac{d}{dt_k}H_j = \frac{\partial}{\partial t_k} H_j$, and that the Hamiltonians have equal mixed partials. For the present example, one finds that
        \begin{equation}
            c_2 = -c_9,\quad c_3 = 0,\quad c_4 = 9c_9+2c_{10} \quad c_5 = 0, \quad c_8 = 12c_9, \quad c_{11} = c_6 + 5c_9,
        \end{equation}
    with the remaining parameters $c_1,c_6,c_7,c_9,c_{10},c_{12}$ left free. The Darboux coordinates we chose in the introduction correspond to setting
        \begin{equation}
            c_1=-\frac{4}{3}, \quad c_{6}=-\frac{7}{12}, \quad c_{9}=\frac{1}{12}, \quad c_{10}=0, \quad c_{12} = \frac{7}{18}.
        \end{equation}
    \end{ex}

In the above example, one can easily check by hand all of the computations; however, if $(p,q)$ are too large, then pen-to-paper calculations become less feasible. Nevertheless, it is straightforward to implement this algorithm in a symbolic computations language such as Maple or Mathematica. Indeed, we were
able to compute a suitable set of Darboux coordinates for the $(3,5)$ string equation (which appears in a multicritical case of the external source model in random matrix theory) using these ideas in Maple, which we will gladly share with the interested reader upon request.

\printbibliography

@unpublished{DHL1,
author = {M. Duits and N. Hayford and S.-Y. Lee},
title = {The Ising Model Coupled to 2D Gravity: Genus Zero Partition Function},
year = {2023},
note = {arXiv:2405.03259}
}

@unpublished{DHL3,
author = {M. Duits and N. Hayford and S.-Y. Lee},
title = {The Ising Model Coupled to 2D Gravity: Critical Partition Function},
year = {2023},
note = {arXiv preprint}
}

@unpublished{Marchal1,
author = {O. Marchal and N. Orantin and M. Alameddine},
title = {Hamiltonian representation of isomonodromic deformations of general rational connections on $\mathfrak{g}\mathfrak{l}_2(\CC)$},
year = {2022},
note = {arXiv:2212.04833}
}

@unpublished{Marchal2,
author = {M. Alameddine and O. Marchal},
title = {Hamiltonian representation of isomonodromic deformations of twisted rational connections: The Painlev\'{e} 1 hierarchy},
year = {2023},
note = {arXiv:2302.13905}
}

@unpublished{Marchal3,
    author = {M. Alameddine and O. Marchal},
    title = {Explicit Hamiltonian representations of meromorphic connections and duality from different perspectives: a case study},
    year = {2024},
    note = {arXiv:2406.19187}
}

@article{Bertola-Marchal,
title = {The partition function of the two-matrix model as an isomonodromic $\tau$ function},
author = {M. Bertola and O. Marchal},
journal = {J. Math. Phys.},
volume = {50},
number={1},
pages={013529},
year={2009}
}

@article{JMU1,
title = {Monodromy preserving deformation of linear ordinary differential equations with rational coefficients: I. General theory and $\tau$-function},
author = {M. Jimbo and T. Miwa and K. Ueno},
journal = {Physica D},
volume = {2},
number = {2},
year = {1981},
pages = {306--352}
}

@article{JMU2,
title = {Monodromy perserving deformation of linear ordinary differential equations with rational coefficients. II},
author = {M. Jimbo and T. Miwa},
journal = {Physica D},
volume = {2},
number = {3},
year = {1981},
pages = {407--448}
}

@book{FIKN,
title={Painlev\'{e} transcendents: the Riemann-Hilbert approach},
 author = {A.S. Fokas and A.R. Its and A.A. Kapaev and V.Y. Novokshenov},
publisher = {American Mathematical Society},
year = {2006}
}

@article{ILP,
author = {A.R. Its and O. Lisovyy and A. Prokhorov},
title = {Monodromy dependence and connection formulae for isomonodromic tau functions},
journal = {Duke Math. J.},
volume = {167},
number = {7},
year = {2018},
pages = {1347–-1432}
}

@article{LR,
author = {O. Lisovyy and J. Roussillon},
title = {On the connection problem for Painlev\'{e} I},
journal = {J. Phys. A: Math. Theor.},
volume = {50},
year = {2017}
}

@article{IP,
author = {A.R. Its and A. Prokhorov},
title = {On some Hamiltonian properties of the isomonodromic tau functions},
journal = {Rev. Math. Phys.},
volume = {30},
number = {7},
year = {2018}
}

@article{Kazakov1,
title={Ising Model on a Dynamical Planar Lattice: Exact Solution},
author={V.A. Kazakov},
journal={Phys. Lett. A},
volume={119},
number={3},
year={1986},
pages={140--144}
}

@article{Kazakov2,
title = {The Ising Model on a Random Planar Lattice: The Structure of the Phase Transition and the Exact Critical Exponents},
author = {D.V. Boulatov and V.A. Kazakov},
journal = {Phys. Lett. B},
volume={186},
number={3},
year={1986},
pages={379--384}
}

@article{Kazakov3,
title = {The Ising model coupled to 2D gravity. A nonperturbative analysis},
author = {E. Br\'{e}zin and M.R. Douglas and V.A. Kazakov and S.H. Shenker},
journal = {Phys. Lett. B},
volume = {237},
number = {1},
pages = {43--46},
year = {1990}
}

@article{Douglas1,
title={Strings in Less Than One Dimension and Generalized KdV Hierarchies},
author={M.R. Douglas},
journal={Phys. Lett. B},
volume={238},
number={2--4},
year={1990},
pages={176--180}
}

@article{CGM,
title={The Ising model, the Yang-Lee edge singularity, and 2D quantum gravity},
author={C. Crnkovi\'{c} and  P. Ginsparg and G. Moore},
journal={Phys. Lett. B},
volume={237},
number={2},
year={1990},
pages={196--201}
}

@article{GGPZ,
title={(p,q)-String Actions},
author={P. Ginsparg and M. Goulian and M.R. Plesser and J. Zinn-Justin},
journal={Nucl. Phys. B},
volume={342},
year={1990},
pages={539-563}
}

@article{DFGZJ,
    author = {P. Di Francesco and P. Ginsparg and J. Zinn-Justin},
    title = {2D Gravity and Random Matrices},
    journal = {Phys. Rept.},
    volume = {254},
    number = {1},
    year = {1995},
    pages = {1--133},
    note = {\href{https://arxiv.org/abs/hep-th/9306153}{arXiv:hepth/9306153}}
}

@article{DFIZ,
author = {P. Di Francesco and C. Itzykson and J.-B. Zuber},
title = {Classical W-Algebras},
journal = {Comm. Math. Phys},
volume = {140},
year = {1991},
pages = {543--567}
}

@article{DK0,
title={Painlev\'{e} I asymptotics for orthogonal polynomials with respect to a varying quartic weight},
author={M. Duits and A.B.J. Kuijlaars},
journal={Nonlinearity},
volume={19},
year={2006},
pages={2211--2245}
}

@article{BleherDeano,
title = {Painlev\'{e} I Double Scaling Limit in the Cubic Random Matrix Model},
author = {P. Bleher and A. Dea\~{n}o},
journal = {Random Matrices Theory Appl.},
volume = {5},
number = {2},
year = {2016}
}

@book{Dickey,
title={Soliton Equations and Hamiltonian Systems},
author={L.A. Dickey},
publisher={World Scientific},
collection={Advanced Series in Mathematical Physics},
volume={26},
edition={2},
year={2003}
}

@article{Kapaev0,
title = {Asymptotic behavior of the solutions of the Painlev\'{e} equation of the first kind},
author = {A.A. Kapaev},
journal = {Differ. Uravn.},
volume = {24},
number = {10},
year = {1988},
pages = {1684--1695}
}

@article{Kapaev,
title={Quasi-linear Stokes phenomenon for the Painlev\'{e} first equation},
author={A.A. Kapaev},
journal={J. Phys. A},
volume={37},
year={2004},
pages={11149–-11167}
}

@article{Kapaev-Kitaev,
title={Connection formulae for the first Painlev\'{e} transcendent in the complex domain},
author={A.A. Kapaev and A.V. Kitaev},
journal={Lett. Math. Phys.},
volume={27},
year={1993},
pages={243-252}
}

@article{BHH,
title={Hamiltonian structure of rational isomonodromic deformation systems},
author={M. Bertola and J. Harnad and J. Hurtubise},
journal={J. Math. Phys.},
volume={64},
year={2023},
}

@article{BM,
title = {Isomonodromic deformation of resonant rational connections},
author = {M. Bertola and M.Y. Mo},
journal = {Int. Math. Res. Pap.},
volume = {11},
year = {2005}
}

@book{Wasow,
title = {Asymptotic expansions for ordinary differential equations},
author = {W. Wasow},
publisher = {Dover, New York},
year = {2002}
}

@article{Bertola1,
title = {The dependence on the monodromy data of the isomonodromic tau function},
author = {M. Bertola},
journal = {Comm. Math. Phys.},
volume = {294},
year = {2010},
pages = {539--579}
}

@article{Takasaki,
author = {K. Takasaki},
title = {Hamiltonian Structure of PI Hierarchy},
journal = {SIGMA},
volume = {3},
number = {42},
year = {2007}
}

@article{MM,
author = {M. Mazzocco and M-Y. Mo},
title = {The Hamiltonian structure of the second Painlev\'{e} hierarchy},
journal = {Nonlinearity},
volume = {20},
year = {2007},
pages = {2845--2882}
}

@Inbook{Okamoto,
author={K. Okamoto},
editor={R. Conte},
title={The Hamiltonians Associated to the Painlev\'{e} Equations},
bookTitle={The Painlev\'{e} Property: One Century Later},
year={1999},
publisher={Springer New York},
pages={735--787}
}

@article{Okamoto2,
author = {K. Okamoto},
title = {On the $\tau$-function of the Painlev\'{e} equations},
journal = {Physica D},
volume = {2},
number = {3},
year = {1981},
pages = {525--535}
}

@article{Okamoto3,
author ={K. Okamoto},
title = {Polynomial Hamiltonians associated with Painlev\'{e} equations, I},
journal = {Proc. Japan Acad. Ser. A Math. Sci.},
volume = {56},
number = {6},
year = {1980},
pages = {264--268}
}

@article{Kontsevich,
author = {M. Kontsevich},
title = {Intersection Theory on the Moduli Space of Curves and the Matrix Airy Function},
journal = {Comm. Math. Phys.},
volume = {147},
year = {1992},
pages = {1--23}
}

@article{Witten1,
author = {E. Witten},
title = {Two dimensional gravity and intersection theory on moduli space},
journal = {Surveys in Diff. Geom.},
volume = {1},
year = {1991},
pages = {243--310}
}

@article{Witten2,
author = {E. Witten},
title = {The N matrix model and gauged WZW models},
journal = {Nucl. Phys. B},
volume = {371},
year = {1992},
pages = {191--245}
}

@article{Palmer,
author = {J. Palmer},
title = {Zeros of the Jimbo, Miwa, Ueno tau function},
journal = {J. Math. Phys.},
volume = {40},
number = {12},
year = {1999},
pages = {6638--6681}
}

@Inbook{Malgrange,
author = {B. Malgrange},
title = {Sur les d\'{e}formations isomonodromiques. I. Singularit\'{e}s r\'{e}guli\`{e}res},
bookTitle = {Mathematics and physics (Paris, 1979/1982)},
volume = {37},
publisher = {Birkh\"{a}user Boston},
year = {1983}
}

@article{Adler,
author = {M. Adler},
title = {On a trace functional for formal pseudo-differential operators and the symplectic structure of the Korteweg-De Vries type equations},
journal = {Invent. Math.},
volume = {50},
number = {3},
year = {1979},
pages = {219–-248}
}

@article{DrinfeldSokolov,
author = {V.G. Drinfeld and V.V. Sokolov},
title = {Lie Algebras and Equations of Korteweg-De Vries Type},
journal = {Soviet J. Math.},
volume = {30},
year = {1985},
pages = {1975--2036}
}

@article{Zakharov,
author = {V.E. Zakharov},
title = {On stochastization of one-dimensional chains of nonlinear oscillators},
journal = {Zh. Eksp. Teor. Fiz},
volume = {65},
year = {1973},
pages = {219--225}
}

@article{JKT,
title = {On the linearization of the first and second Painlev\'{e} equations},
author = {N. Joshi and A. V. Kitaev and P. A. Treharne},
journal = {J. Phys. A},
volume = {42},
number = {5},
year = {2009}
}

@incollection{Umemura,
title = {On the Irreducibility of the First Differential Equation of Painlev\'{e}},
editor = {H. Hijikata and H. Hironaka and M. Maruyama and H. Matsumura and M. Miyanishi and T. Oda and K. Ueno},
booktitle = {Algebraic Geometry and Commutative Algebra},
publisher = {Academic Press},
pages = {771--789},
year = {1988},
author = {H. Umemura},
}

@article{Umemura2,
title = {Second Proof of the Irreducibility of the First Differential Equation of Painlev\'{e}},
author = {H. Umemura},
journal = {Nagoya Math. J.},
volume = {117},
year = {1990},
pages = {125--171}
}

@article{BertolaTarricone,
author = {M. Bertola and S. Tarricone},
title = {Stokes Manifolds and Cluster Algebras},
journal = {Comm. Math. Phys.},
volume = {390},
year = {2022},
pages = {1413--1457}
}

@article{BEH,
author = {M. Bertola and B. Eynard and J. Harnad},
title = {Partition functions for matrix models and isomonodromic tau functions},
journal = {J. Phys. A},
volume = {36},
year = {2003},
pages = {3067--3083}
}

@article{DR,
author = {J. Dou\c{c}ot and G. Rembado},
title = {Topology of Irregular Isomonodromy Times on a Fixed Pointed Curve},
journal = {Transform. Groups},
volume = {29},
number = {2},
year = {2024}
}

@article{Zamolodchikov,
title = {Infinite additional symmetries in two-dimensional conformal quantum field theory},
author = {A.B. Zamolodchikov},
journal = {Theor. and Math. Physics},
volume = {65},
year = {1985},
pages = {1205--1213}
}

@unpublished{Nathan2,
author = {N. Hayford},
title = {Asymptotic Properties of a Special Solution to the (3,4) String Equation},
year = {2025},
note = {Forthcoming work}
}
\end{document}